\newcommand{\T}{^\top}
\newcommand{\sgn}{\mathop{\mathrm{sign}}}
\newcommand*{\rank}{\operatorname{rank}}
\renewcommand*{\PP}{\mathbb{P}}
\newcommand*{\var}{\operatorname{Var}}
\newcommand*{\argsup}{\operatornamewithlimits{argsup}}
\newcommand*{\col}{\operatorname{col}}
\begin{document}

\title{\huge A New Perspective on Debiasing Linear Regressions}
\author{Yufei Yi \and Matey Neykov}
\date{Department of Statistics \& Data Science\\ Carnegie Mellon University}

\maketitle{}

\begin{abstract}
In this paper, we propose an abstract procedure for debiasing constrained or regularized potentially high-dimensional linear models. It is elementary to show that the proposed procedure can produce $\frac{1}{\sqrt{n}}$-confidence intervals for individual coordinates (or even bounded contrasts) in models with unknown covariance, provided that the covariance has bounded spectrum. While the proof of the statistical guarantees of our procedure is simple, its implementation requires more care due to the complexity of the optimization programs we need to solve. We spend the bulk of this paper giving examples in which the proposed algorithm can be implemented in practice. One fairly general class of instances which are amenable to applications of our procedure include convex constrained least squares. We are able to translate the procedure to an abstract algorithm over this class of models, and we give concrete examples where efficient polynomial time methods for debiasing exist. Those include the constrained version of the group LASSO, regression under monotone constraints, regression with positive monotone constraints and non-negative least squares. We also demonstrate that our method can debias Minkowski gauge selectors such as the ones proposed by \cite{cai2016geometric} under a certain condition. This solves an open problem posed by \cite{cai2016geometric} on how to debias such selectors when the covariance is unknown. In addition, we show that our abstract procedure can be applied to efficiently debias group LASSO, SLOPE and square-root SLOPE, among other popular regularized procedures under certain assumptions. We provide thorough simulation results in support of our theoretical findings.
\end{abstract}

\section{Introduction}

Linear regression is a pillar in statistics. Due to its simplicity and interpretability, it is possibly the most widely known and used statistical modeling and estimation technique both within and outside the field of statistics. The amount of literature on linear regression is vast, and ever-growing. In addition, with the big data boom, high-dimensional regression has steadily become an indispensable tool in practice, and has been in the focus of statisticians and practitioners for the past number of years. By far the most widely used estimator for the linear model is the ordinary least squares estimator (OLS). Unfortunately, OLS does not allow the practitioner to build in prior knowledge on the coefficients of interest. However, prior knowledge, e.g. sparsity, can be crucial for performing reasonable estimation especially in modern large datasets like genome-wide association studies where the number of samples can be smaller than the number of covariates. Incorporating prior knowledge (in a frequentist sense) may come at a price --- it is not immediately obvious how to perform inference since the resulting estimator might not have a closed form, in contrast to the OLS, and in addition the estimated coefficients are likely biased. In this paper we tackle questions of this flavor: we suggest an abstract procedure which can perform inference for certain estimators in linear models which are ``non-OLS'', such as some convex constraint least squares estimators and some reguralized estimators such as the Sorted L-One Penalized Estimator (SLOPE) and square-root SLOPE.

As we mentioned, parameter estimation in high-dimensional statistical models typically requires solving a regularized (or constrained) optimization problem. Regularization is necessitated in order to help fight the curse of dimensionality. Since the resulting estimators are non-linear, it is difficult to directly characterize their limiting distributions. A notable exception where asymptotic results have been obtained for regularized estimators, is the LASSO estimator \cite{tibshirani1996regression} (and more generally the so called Bridge estimators) see \cite{Knight2000Asymptotics}; however, importantly, these asymptotic results are valid in the fixed dimensional setting and not in the high-dimensional setting, and moreover, are difficult to apply to draw inference or construct confidence intervals since the limiting distribution is not pivotal. This underscores that performing statistical inference is non-trivial in the high-dimensional setting. In a low-dimensional setting (where the need for regularization is less apparent), one can use large sample theory on an unregularized estimator (such as the OLS) to get an asymptotic result \citep{van2000asymptotic}. Even in low-dimensional settings however, if one chooses to use a constrained likelihood or more generally a constrained $M$-estimator, e.g., the asymptotic distribution may be highly non-trivial \cite{chernoff1954distribution, self1987asymptotic, geyer1994asymptotics}. A high-dimensional setting only exacerbates this issue, since as we mentioned, it necessitates the regularization. 

In high-dimensional models, one is often interested in one of three directions: oracle inequalities \citep{bunea2007sparsity, van2008high, bickel2009simultaneous}, variable selection \citep{meinshausen2006high, zhao2006model, fan2008sure}, and statistical inference \citep{van2014asymptotically, neykov2018unified, feng2019high}. The latter reference list is far from complete and we refer the reader to the excellent books by \cite{buhlmann2011statistics} and \cite{wainwright2019high} for a full introduction to high-dimensional statistics. Since in this paper we focus on the inference direction, below we review in depth only articles which are related to this direction. 

At first, the efforts of statisticians were naturally devoted to enable performing inference in the high-dimensional linear model, as it has ubiquitous applications in a variety of fields such as statistical gentics, bioinformatics, econometrics, finance, among many others. For instance, high-dimensional problems have been recently recognized in signal processing \citep{lustig2008compressed}, genetics \citep{peng2010regularized} and collaborative filtering \citep{koren2009matrix}. Early approaches of high-dimensional statistical inference were based on variable selection consistency \citep{wasserman2009high, meinshausen2010stability, shah2013variable}, which only works for sparse signal vectors. Specifically, the estimator is computed on the oracle set only, so the statistical inference is reduced to a low-dimensional setting. A limitation of this approach is that the variable selection consistency requires the magnitude of all non-zero coefficients to be greater than a threshold \citep{wainwright2009information, zhang2010nearly}, which may be unrealistic in many applications. The above reasoning motivated various approaches for deriving tractable and pivotal distributions for high-dimensional models which can be used to construct confidence intervals and draw inferences for individual coefficients. While there are approaches which consider a \textit{conditional hypothesis test} of the coefficients from a LASSO  \cite[among others]{lockhart2014significance, lee2014exact, lee2016exact}, in this paper we follow a line of work initiated by \cite{zhang2014confidence, van2014asymptotically, javanmard2014confidence, belloni2014inference, belloni2015uniform} where it was proposed how to correct the LASSO estimate (often called debiasing) in order to achieve asymptotic normality on individual coefficients. These works spurred a lot of follow-ups including \cite[among others]{ning2017general, jankova2015confidence, neykov2018unified, javanmard2018debiasing, jankova2018semiparametric}. Until recently, the majority of debiasing methods focused exclusively on $\ell_1$ penalized (generalized) linear models. Of note there is a recent exception which can handle more general penalties than the $\ell_1$ \citep{bellec2019second}. A notable limitation of this work however is that this debiasing scheme works only in the asymptotic regime $p/n \rightarrow \gamma$ for some constant $\gamma$, and furthermore it requires the knowledge of the covariance matrix $\bSigma$ of the predictors. Further, some other more recent works integrate a degrees-of-freedom adjustment to the debiasing procedure \citep{bellec2019biasing, celentano2020lasso}. This is something that we do not exploit in the current work, although we think there may be some promising connection between this idea and our algorithm. Finally we would like mention the work of \cite{bradic2018testability, zhu2018linear} which studies how one can perform inference in linear models where sparsity may be absent. This is related to our work in the sense that some models which we consider, like the monotone regression, are non-sparse. However, there is a big difference in the settings in that the algorithms given in \cite{bradic2018testability, zhu2018linear} work without having to respect the prior knowledge that the coefficients are monotone, e.g.

In this paper, we propose an abstract debiasing procedure for some regularized or constrained linear models. We illustrate that our procedure is applicable to convex constrained least squares with unknown covariance, in cases when the convex constraint set $K$ has a simple geometric structure. In addition, we demonstrate that our approach can successfully debias SLOPE and square-root SLOPE under the assumption that we have a known upper bound on the sparsity of the signal. Our debiasing approach relies on solving a cascade of two optimization problems. The first optimization restricts the initial coefficient estimator to have a small tangent cone, which is used to facilitate the second optimization program.
The second optimization is inspired by the work of \cite{javanmard2014confidence}. Specifically, the constraint set of this convex program is designed in such a way so that any feasible solution can be used for debiasing. Next the objective function is selected to minimize the variance of the limiting distribution of debiased estimator. 
Our second optimization uses a newly-designed constraint set in comparison with the LASSO debiasing approach from \cite[Algorithm 1]{javanmard2014confidence}. In the case of convex constrained least squares for example, our debiasing constraint is designed to respect the geometry of the constraint set $K$, which turns out to be the key for generalizing the debiasing from $\ell_1$-regularized problems to general constraint problems.
 
\subsection{Notation and Definitions}
Here we introduce some notation and concepts which will be used throughout the paper. Given a set $T\subset\mathbb{R}^p$, define its Gaussian complexity as 
\begin{align*}
    w(T)& = \mathbb{E}\sup\limits_{\xb\in T}\,\,\langle \gb,\xb\rangle, \quad\text{where}\,\,\gb \sim \mathcal{N}(0,\Ib_p).
\end{align*}
$w(T)$ is the expectation of maximum magnitude of the canonical Gaussian process on $T$. The Gaussian complexity is a basic geometric property of $T$. It measures the size of $T$ and is related to the metric entropy of $T$ \citep[Theorem~8.1.13]{vershynin2018high}. In additition to $w(T)$, throughout the paper we denote with $\overline w(T)$ any known (and ideally easily computable and as small as possible) upper bound of $w(T)$, i.e., $\overline w(T)$ satisfies:
\begin{align}\label{gaussian:width:upper:bound}
    w(T) \leq \overline w(T).
\end{align}
Next we formalize the concept of a tangent cone which is frequently used in optimization. The tangent cone of a convex set $K\subset\mathbb{R}^p$ at $\xb\in K$ consists of all the possible directions from which a sequence in $K$ can converge to $\xb$. It is defined as 
\begin{align*}
    \mathcal{T}_K(\xb) & = \{ t(\vb-\xb)\,\,:\, t\geq 0, \vb\in K\}. 
\end{align*}
The projection of a vector $\vb \in \mathbb{R}^p$ onto a convex set $K\subset\mathbb{R}^p$ is defined as 
\begin{align*}
    \Pi_K(\vb) &= \arg\min\limits_{\xb\in K} \|\vb-\xb\|,
\end{align*}
where here and throughout we will use $\|\cdot\|$ as a shorthand for the Euclidean norm $\|\cdot\|_2$. Furthermore let $\|\cdot\|_{\operatorname{op}}$ denote the operator norm of a matrix. In addition we will also use $\wedge$ and $\vee$ as a shorthand for $\min$ and $\max$ of two numbers respectively, and $[n] = \{1,\ldots, n\}$ for an integer $n \in \NN$. We also make use of standard asymptotic notation: we write $X_n = o_p(1)$ if $\PP(|X_n| > \epsilon) \rightarrow 0$ for all $\epsilon > 0$, and $X_n = O_p(1)$ if for any $\epsilon >0$ there exists an $M > 0$ and a finite $N > 0$ such that $\PP(|X_n| > M) < \epsilon$ for all $n > N$. We write $X_n = o_p(a_n)$ if $X_n/a_n = o_p(1)$, and $X_n = O_p(a_n)$ if $X_n/a_n = O_p(1)$ for some non-zero sequence $\{a_n\}$. Furthermore, given two non-negative sequences $\{a_n\}, \{b_n\}$ we write $a_n = O(b_n)$ (or $a_n \lesssim b_n$) if there exists a constant $C < \infty$ such that for all $n > N$ for some $N \in \mathbb{N}$,  $a_n \leq C b_n$, $a_n = o(b_n)$ if $a_n/b_n \rightarrow 0$, and $a_n \asymp b_n$ if there exists positive constants $c$ and $C$ such that $c < a_n/b_n < C$. Finally, throughout the paper we will use $\eb^{(j)}$ to denote a vector with $0$ entries except on the $j$-th position where $\eb^{(j)}$ has an entry $1$. 

\subsection{Problem Formulation}
Suppose that we are given $n$ i.i.d. observations from a linear model
\begin{align}
Y_i = \bX_i\T \bbeta^* + \varepsilon_i, i \in [n],
\label{gaussian_model}
\end{align}
where the predictors $\bX_i$ are also considered i.i.d. and random. For simplicity we assume that every observation $\bX_i$ is zero-mean (i.e. the covariates are centered). This can always be achieved at the price of splitting the data evenly and subtracting the $Y_i$ and $\bX_i$ values from the first half from those values of the second half (this not only ensures that $\bX_i$ will be zero-mean but also preserves other subsequent assumptions that we make on the data). In addition we will require that $\bX_i$ is a sub-Gaussian random variable (see Definition \ref{sub:gaussian:def}) with covariance $\bSigma$. Furthermore, for the most part of the manuscript we will assume that $\varepsilon_i \sim N(0,\sigma^2)$ in order to simplify the presentation. In Section \ref{subGaussian:noise:section} we elaborate on a slight modification of our procedure, inspired by \cite{javanmard2014confidence}, that can handle general sub-Gaussian noise. Additionally, for the most part we require that $\varepsilon_i$ are independent of $\bX_i$.

Suppose now that instead of fitting OLS to \eqref{gaussian_model}, a practitioner fits a reguralized or constrained least squares estimator. An example where such a situation may arise is when the practitioner has prior knowledge that $\bbeta^* \in K$ for some fixed and known convex set $K$. In such a setting the practitioner may opt for outputting the following natural estimate of $\bbeta^*$:
\begin{align}
\hat \bbeta = \argmin_{\bbeta \in K} n^{-1}\sum_{i \in [n]}(Y_i - \bX_i\T \bbeta)^2. 
\label{constrained_ls}
\end{align}
In addition, especially in settings when $p \gg n$ and an assumption on the sparsity of $\bbeta^*$ is appropriate, the practitioner may opt for running a regularized procedure such as LASSO \citep{tibshirani1996regression}, SLOPE \citep{bogdan2015slope} or square-root SLOPE \citep{stucky2017sharp}. Unlike the OLS, constraints or reguralizations incur bias on $\hat\bbeta$, and make the limiting distribution of $\hat\bbeta$ complicated. Thus performing statistical inference on $\hat\bbeta$ becomes non-straightforward. 

The goal of the present paper is to develop what became known as debiasing techniques for $\hat \bbeta$ in such scenarios. In particular we would like to construct confidence intervals for any bounded contrast of $\bbeta^*$ (i.e. $\bgamma\T \bbeta^*$ with $\|\bgamma\| < B < \infty$) --- using a non-OLS pilot estimator $\hat \bbeta$ of $\bbeta^*$ in \eqref{gaussian_model} --- in a high-dimensional setting. It is worthy to mention that the majority of previous works on debiasing focus exclusively on debiasing $\ell_1$-penalized regression. There are some exceptions such as \cite{bellec2019second}, but their setting is substantially different from the present work. 

The algorithm proposed in this paper is capable of debiasing any estimator $\hat \bbeta$ which can be used to produce the following quantities:
\begin{itemize}
    \item An estimator $\vb$ of a vector sufficiently close to $\bbeta^*$ (or ideally $\bbeta^*$ itself) in the $\ell_2$ sense.
    \item A convex set $K$ such that $\vb, \bbeta^* \in K$ (here $K$ may be given or may be constructed from $\hat \bbeta$).
    \item $\vb$ is a boundary point in $K$ such that the tangent cone of $K$ at $\vb$ is sufficiently small.
\end{itemize}
We will make use of sample splitting to produce $\vb$ and $K$ from $\hat \bbeta$ on one half of the sample, and estimate a projection direction used in the debiasing on the other half. For more detailed information on our abstract procedure refer to Section \ref{algo:sec}.

Finally we mention that our debiasing procedure does not require prior knowledge of the inverse population covariance matrix --- $\bSigma^{-1}$ --- which is known to make inference easier \citep{javanmard2018debiasing, bellec2019second}.

\subsection{Paper Organization}
The paper is structured as follows. Section \ref{algo:sec} describes our abstract debiasing procedure and shows how the program from the second step can be solved with subgradient descent. Section \ref{asymp:sec} proves the main theorem of the paper and provides a confidence interval construction. Section \ref{cvs_ls:sec} is dedicated to convex constrained least squares, where we formally describe how one can solve step 1 of our abstract debiasing procedure in such a setting. In Section \ref{minkowski:section} we show how our algorithm can be used to debias a Minkowski gauge selector with unknown covariance. Section \ref{minkowski:gauge:regularization:sec} considers Minkowski gauge regularization: it shows a new result for this type of optimization program when the design is Gaussian, and shows how one can debias the regularized group LASSO. Section \ref{slope_sqrtslope:sec} discusses applications to SLOPE and square-root SLOPE. Section \ref{subGaussian:noise:section} contains an extension to non-Gaussian noise. Section \ref{simulations:sec} illustrates our results with some numerical studies and finally in Section \ref{discussion:section} we give a brief discussion. All technical proofs are deferred to the supplement.

\section{The Debiasing Algorithm}
\label{algo:sec}
In this section we propose an optimization-based Algorithm \ref{algo_unknowncov} as a general procedure to debias an individual coordinate, as well as any contrast of $\bbeta^*$ using a non-OLS estimator $\hat\bbeta$. Then in Section \ref{lambda_pick:subsec} and \ref{subgrad:subsec} we provide details for how to solve the optimization problem in step 2 of the proposed Algorithm \ref{algo_unknowncov}.

\subsection{The Debiasing Algorithm}
For simplicity of the presentation, and without loss of generality we will assume that we are given $2n$ samples from model \eqref{gaussian_model}. If the actual number of samples is odd we can simply drop one sample. We randomly split the data set $(\Xb, \bY)$ where $\Xb = (\bX_1, \ldots, \bX_{2n})\T$, $\bY = (Y_1, \ldots, Y_{2n})\T$ into two equally-sized partitions $(\overline\Xb, \overline\bY)$ and $(\tilde\Xb, \tilde\bY)$.  The first half $(\overline\Xb, \overline\bY)$ is used to obtain an estimator $\hat\bbeta$ of the true coefficient $\bbeta^*$, and then is used to obtain $\vb$ and $K$. The second half $(\tilde\Xb, \tilde\bY)$ is used to construct the debiased $\hat\bbeta_d$ based on $\vb$ and $K$.

Step 1 of Algorithm \ref{algo_unknowncov} uses the first half of the data to construct a vector $\vb$ which is close to $\bbeta^*$ in $\ell_2$-distance, and a convex set $K$ which has a small tangent cone at $\vb$. In all of our examples to follow, such a construction uses a pilot estimator $\hat \bbeta$ which can be a constrained or reguralized estimator. We therefore view our procedure as a procedure for debiasing the pilot vector $\hat \bbeta$, but in principle one may bypass estimating $\hat \bbeta$ and may use the first half of the data to directly find $\vb$ and $K$ obeying the desired properties. 

Next we solve an optimization program (see step 2 of Algorithm \ref{algo_unknowncov}) to get an auxiliary vector $\hat\bmeta$ which is used in the final debiasing formula as a proxy to the $j$-th row of $\bSigma^{-1}$. In fact, as implied by Theorem \ref{debiase_formula_applicable_unkowncov}, any feasible point of the optimization program in step 2 would successfully produce an asymptotically normal debiased estimator. In other words, the limiting distribution of $\sqrt{n}(\hat\bbeta_d^{(j)} - \bbeta^{*(j)})$ would be a zero-mean Gaussian random variable, but its variance might be large. To achieve a small variance for the limiting distribution, we pick the objective function in the optimization of step 2 to minimize such a variance, which is inspired by \cite[Algorithm 1]{javanmard2014confidence}. The following Algorithm \ref{algo_unknowncov} summarizes our debiasing procedure.
\begin{algorithm}
\caption{Debiasing the $j$\textsuperscript{th} Coordinate of A Non-Ordinary Least Squares Estimator}
\label{algo_unknowncov}
\begin{algorithmic}
\STATE \textbf{Input:} Two equal size partitions of the data $(\overline\Xb, \overline\bY)$ and $(\tilde\Xb, \tilde\bY)$.

\STATE \textbf{Initialize:} Empirical Gram matrix of the second partition $\hat\bSigma=\frac{1}{n}\tilde\Xb\T\tilde\Xb$.

\begin{enumerate}
    \item \label{step_1}
    Using the first data split find a convex set $K$ and a vector $\vb$ on the boundary of $K$, such that: $\vb, \bbeta^* \in K$ with high probability, and $\overline w(\cT_K(\vb)\cap \mathbb{S}^{p-1})\|\vb-\bbeta^*\|=o_p(1)$.
    \item \label{step_2}
    The debiased $j$\textsuperscript{th} coefficient $\hat\bbeta_{d}^{(j)} \leftarrow \eb^{(j)\top}\vb + n^{-1}\hat\bmeta\T \tilde\Xb\T (\tilde\bY - \tilde\Xb\vb)$, where $\hat\bmeta$ is computed by \\
    \begin{align}\hat\bmeta \leftarrow \argmin_{\bmeta}\, \|\hat\bSigma^{\frac{1}{2}}\bmeta\| \mbox{   subject to  } \displaystyle\sup_{\ub \in \cT_K(\vb) \cap \mathbb{S}^{p-1}} |(\bmeta\T \hat \bSigma - \eb^{(j)\top}) \ub|  \leq \frac{\rho\overline w(\cT_K(\vb)\cap\mathbb{S}^{p-1})}{\sqrt{n}},\label{opt_step3}
    \end{align}
    for some sufficiently large tuning parameter $\rho>0$.
\end{enumerate}
\end{algorithmic}
\end{algorithm}

\begin{remark}
Several remarks regarding Algorithm \ref{algo_unknowncov} are in order. First we comment on step 1. One may wonder how to construct a set $K$ and vector $\vb$ with the desired properties, and if that is even possible. While it is hard to answer this without having a concrete example at hand, we will give a couple of comments. The set $K$ may be naturally given to the practitioner --- for example it may be the constraint set if the practitioner is solving convex constrained least squares. On the other hand, a set $K$ could be constructed via the vector $\hat \bbeta$. If $\hat \bbeta$ for instance is known to satisfy $\|\hat \bbeta - \bbeta^*\| \leq b(n,p, \bbeta^*)$ for some explicitly quantifiable upper bound $b(n,p,\bbeta^*)$ one may start the construction of $K$ based on the Euclidean ball around $\hat \bbeta$ with radius $b(n,p,\bbeta^*)$ (for more details on approach this we refer to Section \ref{slope_sqrtslope:sec} where we build a convex set $K$ for the SLOPE and square-root SLOPE estimators). The vector $\vb$ on the other hand should be selected to respect the geometry of $K$ and will likely have to possess additional properties (e.g. sparsity or other adequate restrictions which make the tangent cone at it small). We provide a detailed process of finding $\vb$ for each type of estimator $\hat\bbeta$ in our examples; see Sections \ref{cvs_ls:sec} --- \ref{slope_sqrtslope:sec}. 
 
We now comment on the condition $\overline w(\cT_K(\vb)\cap \mathbb{S}^{p-1})\|\vb-\bbeta^*\|=o_p(1)$ required in step 1. Intuitively, we need $\|\vb-\bbeta^*\|$ to be small because in the final step the debiased estimator $\hat\bbeta_d$ is constructed from $\vb$. A small upper bound on the Gaussian complexity of the tangent cone $\overline w(T_K(\vb)\cap\mathbb{S}^{p-1})$ is needed to guarantee fast convergence rate of the debiased estimator $\hat\bbeta_d$, and fast computation of the optimization in step 2. 

Finally we comment on step 2. Step 2 of our abstract procedure is reminiscent of previous ideas on debiasing which attempt to estimate the inverse covariance (aka precision) matrix along a direction of interest. We stress on the fact that our proposal is distinct from previous works however, and even in the ``classical'' example of LASSO will produce a distinct projection direction $\hat \bmeta$. In addition, we mention that if one is interested in performing inference on general bounded contrasts of $\bbeta^*$, i.e., $\bgamma\T \bbeta^*$ for some $\|\bgamma\| \leq B$ with a finite $B$, step 2 can be readily modified by changing $\hat \bmeta$ to
\begin{align*}\hat\bmeta \leftarrow \argmin_{\bmeta}\, \|\hat\bSigma^{\frac{1}{2}}\bmeta\| \mbox{   subject to  } \displaystyle\sup_{\ub \in \cT_K(\vb) \cap \mathbb{S}^{p-1}} |(\bmeta\T \hat \bSigma - \bgamma\T) \ub|  \leq  \frac{\rho\overline w(\cT_K(\vb)\cap\mathbb{S}^{p-1})}{\sqrt{n}}.
\end{align*}
For simplicity of presentation we stick to our formulation with $\eb^{(j)}$ but all of our proofs and results can be easily modified to the more general setting described above by changing $\eb^{(j)}$ to $\bgamma$.
\end{remark}

In the next two subsections, we address two questions regarding the optimization \eqref{opt_step3} of step 2 of Algorithm \ref{algo_unknowncov}. The first question is whether the constraint in \eqref{opt_step3} is empty. In Section \ref{lambda_pick:subsec} we will show that \eqref{opt_step3} is guaranteed to have a feasible point with high probability, and furthermore the interior of such a constraint is not empty if $\rho$ is sufficiently large.

In addition, the above optimization \eqref{opt_step3} can be solved by subgradient descent. An explicit formula of the subgradient is complicated by the unconventional constraint, which makes the program in step 2 a semi-infinite program. See \cite{hettich1993semi} for details about semi-infinite programming. Section \ref{subgrad:subsec} gives out the explicit formula of the subgradient, and proves the convergence of such a subgradient descent method.

\subsection{Studying the Constraint Set of Step 2}
\label{lambda_pick:subsec}
We begin by showing that $\bmeta = \bSigma^{-1}\eb^{(j)}$ is a feasible point of the optimization \eqref{opt_step3}. In fact, the right hand side of the constraint in \eqref{opt_step3} --- $\frac{\rho\overline w(\cT_K(\vb)\cap\mathbb{S}^{p-1})}{\sqrt{n}}$ --- is inspired by analyzing the magnitude of $\sup_{\ub \in \cT_K(\vb) \cap\mathbb{S}^{p-1}} |(\bmeta\T \hat \bSigma - \eb^{(j)\top}) \ub|$ when evaluated at $\bmeta = \bSigma^{-1}\eb^{(j)}$. The intuition is that $\hat\bmeta$ is a proxy of $\bSigma^{-1}\eb^{(j)}$. This idea is of course standard and central in all previous debiasing works, but the challenge in our setting is to analyze the empirical process $\sup_{\ub \in \cT_K(\vb) \cap\mathbb{S}^{p-1}} |(\bmeta\T \hat \bSigma - \eb^{(j)\top}) \ub|$ at $\bmeta=\bSigma^{-1}\eb^{(j)}$. To this end we will use a powerful result due to \cite{mendelson2016upper}. Before we state our result we first formally define sub-Gaussian random vectors.
\begin{definition}\label{sub:gaussian:def} A vector $\bZ \in \RR^p$ is called sub-Gaussian if there exists a constant $C \in \RR_+$ such that for any unit vector $\wb \in \mathbb{S}^{p-1}$ and any $\lambda \in \RR$, $\EE \exp(\lambda (\bZ - \EE \bZ)\T\wb) \leq \exp(\lambda^2 C)$.
\end{definition}

\begin{lemma}
Suppose that $\Xb=(\bX_1,...,\bX_n)\T$ where every observation $\bX_i$ is a zero-mean sub-Gaussian random variable with covariance matrix $\bSigma$. Furthermore, suppose the eigenvalues of $\bSigma$ are bounded from above and below, in the sense that there exist absolute constants $0< c < C$ such that $c < \lambda_{\min}(\bSigma) \leq \lambda_{\max}(\bSigma) < C$. Let $\hat\bSigma = \frac{1}{n}\Xb\T\Xb$ be the empirical Gram matrix. Suppose that the upper bound $\overline w(\cT_K(\vb) \cap \mathbb{S}^{p-1})$ is chosen so that $\overline w(\cT_K(\vb) \cap \mathbb{S}^{p-1}) \rightarrow \infty$ as $n\rightarrow\infty$. Then for $\bmeta\T = \eb^{(j)\top}\bSigma^{-1}$, with probability converging to one (see the proof for a precise expression) we have
\begin{align*}
    \sup_{\ub \in \cT_K(\vb) \cap\mathbb{S}^{p-1}} |(\bmeta\T \hat \bSigma - \eb^{(j)\top}) \ub| \lesssim  \frac{\overline w(\cT_K(\vb)\cap\mathbb{S}^{p-1})}{\sqrt{n}}.
\end{align*}
\label{feasible_point}
\end{lemma}

\begin{remark} Lemma \ref{feasible_point} requires that $\overline w(\cT_K(\vb) \cap \mathbb{S}^{p-1}) \rightarrow \infty$. Since $\vb$ is random, it is convenient to assume this holds for all $\vb$. If one knows an upper bound on $w(\cT_K(\vb) \cap \mathbb{S}^{p-1}) \leq u(\cT_K(\vb) \cap \mathbb{S}^{p-1})$ for all vectors $\vb$, obtaining a diverging upper bound is simple: just take $\overline w(\cT_K(\vb) \cap \mathbb{S}^{p-1}) = u(\cT_K(\vb) \cap \mathbb{S}^{p-1}) \vee a_n$ for any slowly diverging sequence $a_n$. For future reference we will always assume that $\overline w(\cT_K(\vb) \cap \mathbb{S}^{p-1})$ is constructed in such a way, and we do not explicitly mention the term ``$\vee a_n$'' later on. In addition we will implicitly be assuming that we have $\overline w(\cT_K(\vb) \cap \mathbb{S}^{p-1}) \rightarrow \infty$, and we will omit stating this assumption sometimes. It may be confusing why this assumption is required since intuitively one would want to obtain as tight bound to $w(\cT_K(\vb) \cap \mathbb{S}^{p-1})$ as possible. The reason for this is because the ``converging to one'' probability needs this condition. In most practical applications (as can be seen by our examples below) the quantity $w(\cT_K(\vb) \cap \mathbb{S}^{p-1})$ will naturally be diverging so there is no need to add additional terms $\vee a_n$. However, in cases where this is needed $a_n$ should be chosen as small as possible, such as $a_n = \log \log n$ e.g.  

Note that in the result of Lemma \ref{feasible_point}, $\cT_K(\vb) \cap \mathbb{S}^{p-1}$ can be substituted by a general compact set in $\mathbb{R}^p$ since the proof of Lemma \ref{feasible_point} does not rely on the the fact that $\cT_K(\vb)$ is a cone. Here we stated the lemma with $\cT_K(\vb) \cap  \mathbb{S}^{p-1}$ because this is the only set of interest for us. Also, the result of Lemma \ref{feasible_point} still holds if $\eb^{(j)}$ is replaced by any other unit norm vector, which supports the generalization of Algorithm \ref{algo_unknowncov} to debias a linear combination of coordinates. See also Remark \ref{debias_linear_comb}.
\end{remark}

The following Corollary proves that the constraint of \eqref{opt_step3} has a non-empty interior. It is a sufficient condition for the convergence of the subgradient descent in the next section. 
\begin{corollary}[\textbf{Non-empty Interior of the Constraint}]
Under the same assumptions of Lemma \ref{feasible_point} the set 
\begin{align*}
Q = \bigg\{\bmeta:\, \sup_{\ub \in \cT_K(\vb) \cap\mathbb{S}^{p-1}} |(\bmeta\T \hat \bSigma - \eb^{(j)\top}) \ub| \leq \frac{\rho\overline w(\cT_K(\vb)\cap\mathbb{S}^{p-1})}{\sqrt{n}} \bigg\},
\end{align*}
has a non-empty interior with high probability for sufficiently large $\rho$.
\label{nonempty_int_Q}
\end{corollary}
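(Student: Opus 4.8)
The plan is to observe that $Q$ is a sublevel set of the convex, continuous function
\begin{align*}
f(\bmeta) = \sup_{\ub \in \cT_K(\vb) \cap \mathbb{S}^{p-1}} |(\bmeta\T \hat\bSigma - \eb^{(j)\top})\ub|,
\end{align*}
namely $Q = \{\bmeta : f(\bmeta) \le c\}$ with $c = \rho\,\overline w(\cT_K(\vb)\cap\mathbb{S}^{p-1})/\sqrt n$. To exhibit a non-empty interior it then suffices to produce a single \emph{strictly} feasible point $\bmeta_0$ (one with $f(\bmeta_0) < c$) and to invoke continuity of $f$ so as to enclose $\bmeta_0$ in a small ball lying entirely in $Q$.

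First I would take the candidate supplied by Theorem \ref{feasible_point}, namely $\bmeta_0 = \bSigma^{-1}\eb^{(j)}$. That theorem gives, on an event of probability tending to one, a finite constant $C$ (the implicit constant hidden in the $\lesssim$) such that $f(\bmeta_0) \le C\,\overline w(\cT_K(\vb)\cap\mathbb{S}^{p-1})/\sqrt n$. Choosing $\rho$ strictly larger than $C$ — this is precisely the meaning of ``sufficiently large $\rho$'' — yields on the same event the strict inequality $f(\bmeta_0) \le C\,\overline w/\sqrt n < \rho\,\overline w/\sqrt n = c$, so $\bmeta_0$ is strictly feasible.

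Next I would establish that $f$ is Lipschitz. By the reverse triangle inequality, for any $\ub \in \mathbb{S}^{p-1}$,
\begin{align*}
\big| |(\bmeta_1\T\hat\bSigma - \eb^{(j)\top})\ub| - |(\bmeta_2\T\hat\bSigma - \eb^{(j)\top})\ub| \big| \le |((\bmeta_1 - \bmeta_2)\T\hat\bSigma)\ub| \le \|\hat\bSigma(\bmeta_1 - \bmeta_2)\|,
\end{align*}
using symmetry of $\hat\bSigma$ and $\|\ub\|=1$; taking the supremum over $\ub \in \cT_K(\vb)\cap\mathbb{S}^{p-1}$ gives $|f(\bmeta_1) - f(\bmeta_2)| \le \|\hat\bSigma\|_{\operatorname{op}}\|\bmeta_1 - \bmeta_2\|$. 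Under the eigenvalue assumptions $\|\hat\bSigma\|_{\operatorname{op}}$ is finite (and bounded with high probability since $\hat\bSigma$ concentrates around $\bSigma$), so $f$ has a finite Lipschitz constant $L = \|\hat\bSigma\|_{\operatorname{op}}$. Consequently the open ball of radius $\delta = (c - f(\bmeta_0))/L > 0$ around $\bmeta_0$ satisfies $f(\bmeta) \le f(\bmeta_0) + L\delta = c$ throughout, hence lies in $Q$. This exhibits $\bmeta_0$ as an interior point and completes the argument.

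The only points genuinely requiring care are quantitative rather than conceptual: pinning down the constant $C$ from Theorem \ref{feasible_point} so that ``$\rho$ sufficiently large'' is made explicit (any $\rho > C$ works on the high-probability event), and confirming that the Lipschitz constant $L = \|\hat\bSigma\|_{\operatorname{op}}$ is finite on that same event — which is immediate from the boundedness of the spectrum of $\bSigma$ and the concentration of $\hat\bSigma$. Everything else is a direct consequence of the constant-factor slack already guaranteed by Theorem \ref{feasible_point} together with the continuity of the constraint function, so I do not anticipate any serious obstacle here.
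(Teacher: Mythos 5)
Your proposal is correct and follows essentially the same route as the paper: both take $\bSigma^{-1}\eb^{(j)}$ as a strictly feasible point (using the constant-factor slack from Theorem \ref{feasible_point} to define ``sufficiently large $\rho$'') and then show a small ball around it stays in $Q$ by bounding the perturbation of the constraint function. The paper writes this perturbation bound as $\frac{\delta}{n}\sup_{\bx}\|\Xb\bx\|\sup_{\ub}\|\Xb\ub\|$ and controls it via the operator norm of a Gaussian (or bounded) design, which is exactly your Lipschitz constant $\|\hat\bSigma\|_{\operatorname{op}}$ in different notation.
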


\subsection{Solving the Optimization Problem \texorpdfstring{\eqref{opt_step3}}{} by Subgradient Descent}
\label{subgrad:subsec}
We will now explain how to solve the optimization program \eqref{opt_step3} by subgradient descent for constrained optimization. We implicitly assume in this section that the projection $\Pi_{\cT_K(\vb)}$ can be computed in a reasonable time. This may not always hold in practice due to the fact that both the set $K$ and estimator $\vb$ are random variables and depend on the first sample split. However we note that in all of our examples to be considered (see Sections \ref{cvs_ls:sec} --- \ref{slope_sqrtslope:sec}) this projection is indeed feasible and can be computed fast. In addition finding a projection on a convex set is always a convex optimization problem, which can be solved in principle. Define
\begin{align}\label{psi:eta:Q:notation}
    \psi(\bmeta) = \sup_{\ub \in \cT_K(\vb) \cap\mathbb{S}^{p-1}} |(\bmeta\T \hat \bSigma - \eb^{(j)\top}) \ub| - \frac{\rho\overline w(\cT_K(\vb)\cap\mathbb{S}^{p-1})}{\sqrt{n}}.
\end{align}
The constraint in \eqref{opt_step3} can be written as $Q = \{\bmeta:\, \psi(\bmeta)\leq0\}$. According to \cite[Section 7]{boyd2003subgradient}, the subgradient descent moves towards the optima by generating a sequence $\{\bmeta_n\}$ as
\begin{align}
\label{sg_step}
    \bmeta_{n+1} = \bmeta_n - h_{n}\gb_n,
\end{align}
where $h_n$ is the step size, and $\gb_n$ is the gradient of the objective function $f(\bmeta)=\|\hat\bSigma^{\frac{1}{2}}\bmeta\|$ if $\bmeta_n \in Q$; otherwise is a subgradient of the constraint function $\psi(\bmeta)$ if $\bmeta_n \notin Q$. Put
\begin{align*}
    \phi_0(\bmeta) = \frac{\Pi_{\cT_K(\vb)}(\hat\bSigma\bmeta - \eb^{(j)})}{\|\Pi_{\cT_K(\vb)}(\hat\bSigma\bmeta - \eb^{(j)})\|}, \quad \phi_1(\bmeta) = \frac{\Pi_{-\cT_K(\vb)}(\hat\bSigma\bmeta - \eb^{(j)})}{\|\Pi_{-\cT_K(\vb)}(\hat\bSigma\bmeta - \eb^{(j)})\|},\footnotemark
\end{align*}
\footnotetext{We understand $0/0 = 0$ in case when the projection is the zero vector.}
where $-\cT_K(\vb) = \{-\bbeta : \bbeta \in \cT_K(\vb)\}$. Lemma \ref{sg_psi} below, shows that the explicit form of $\gb_n$ is given by: 
\begin{align}
\label{sg_expression}
    \gb_n = \begin{cases}
    \hat\bSigma \bmeta_n/\|\hat\bSigma^{\frac{1}{2}}\bmeta_n\| & \text{, if } \bmeta_n \in Q\\
    \hat\bSigma\phi_{\one\{(\bmeta_n\T \hat \bSigma - \eb^{(j)\top}) (\phi_0(\bmeta_n) - \phi_1(\bmeta_n)) < 0\}}(\bmeta_n)
    & \text{, if } \bmeta_n \notin Q.
    \end{cases}
\end{align}
It is clear that the first expression in \eqref{sg_expression} for $\bmeta_n \in Q$ is the gradient of the objective function $f(\bmeta)=\|\hat\bSigma^{\frac{1}{2}}\bmeta\|$ at $\bmeta_n$ when $\bmeta_n \neq 0$. If $\bmeta_n$ turns out to be $0$, $\bg_n$ can be taken as $\hat \bSigma^{1/2}\wb$ for any unit vector $\wb$. However, if $\bmeta_n = 0$ is a feasible point, it is necessarily an optimal value so that the algorithm should terminate. In Lemma \ref{sg_psi} we show that the second expression in \eqref{sg_expression} is a subgradient of $\psi(\bmeta)$ at $\bmeta_n$ when $\bmeta_n \notin Q$.
\begin{lemma}
\label{sg_psi}
For $\bmeta_n \notin Q$, the expression of $\gb_n$ at \eqref{sg_expression} is a subgradient of $\psi(\bmeta)$ at $\bmeta_n$.
\end{lemma}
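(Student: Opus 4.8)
The plan is to read a subgradient off the definition of $\psi$ in \eqref{psi:eta:Q:notation} directly, exploiting that $\psi$ is a supremum of functions that are \emph{affine} in $\bmeta$. Since the additive constant $\frac{\rho\overline w(\cT_K(\vb)\cap\mathbb{S}^{p-1})}{\sqrt n}$ does not affect the subdifferential, it suffices to produce a subgradient of $g(\bmeta):=\sup_{\ub\in\cT_K(\vb)\cap\mathbb{S}^{p-1}}|(\bmeta\T\hat\bSigma-\eb^{(j)\top})\ub|$. Using that $\hat\bSigma$ is symmetric I would write $(\bmeta\T\hat\bSigma-\eb^{(j)\top})\ub=\langle\hat\bSigma\bmeta-\eb^{(j)},\ub\rangle$, and then eliminate the absolute value by splitting the supremum: because $|\langle\cdot,\ub\rangle|=\max(\langle\cdot,\ub\rangle,\langle\cdot,-\ub\rangle)$ and $-\ub$ ranges over $-\cT_K(\vb)$ as $\ub$ ranges over $\cT_K(\vb)$, one obtains
\begin{align*}
 g(\bmeta)=\max\Big(\sup_{\ub\in\cT_K(\vb)\cap\mathbb{S}^{p-1}}\langle\hat\bSigma\bmeta-\eb^{(j)},\ub\rangle,\ \sup_{\ub\in-\cT_K(\vb)\cap\mathbb{S}^{p-1}}\langle\hat\bSigma\bmeta-\eb^{(j)},\ub\rangle\Big).
\end{align*}
This reduces the task to two support-function maximizations, each over the intersection of a convex cone with the sphere.

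The next step is the geometric identity that, for a closed convex cone $C$ and any $\wb\in\RR^p$, we have $\sup_{\ub\in C\cap\mathbb{S}^{p-1}}\langle\wb,\ub\rangle=\|\Pi_C(\wb)\|$ whenever $\Pi_C(\wb)\ne0$, with maximizer $\Pi_C(\wb)/\|\Pi_C(\wb)\|$. I would prove this from the variational characterization of the projection onto a convex cone: $\Pi_C(\wb)\in C$, $\langle\wb-\Pi_C(\wb),\Pi_C(\wb)\rangle=0$, and $\langle\wb-\Pi_C(\wb),\ub\rangle\le0$ for every $\ub\in C$. For unit $\ub\in C$ this yields $\langle\wb,\ub\rangle=\langle\Pi_C(\wb),\ub\rangle+\langle\wb-\Pi_C(\wb),\ub\rangle\le\|\Pi_C(\wb)\|$ by Cauchy--Schwarz, with equality at $\ub=\Pi_C(\wb)/\|\Pi_C(\wb)\|$. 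Applying this with $C=\cT_K(\vb)$ and with $C=-\cT_K(\vb)$, for the vector $\hat\bSigma\bmeta_n-\eb^{(j)}$, identifies $\phi_0(\bmeta_n)$ and $\phi_1(\bmeta_n)$ as the respective maximizers, so the two suprema above equal $\langle\hat\bSigma\bmeta_n-\eb^{(j)},\phi_0(\bmeta_n)\rangle$ and $\langle\hat\bSigma\bmeta_n-\eb^{(j)},\phi_1(\bmeta_n)\rangle$. Consequently the overall maximizer $\ub^\star$ of $g$ at $\bmeta_n$ is $\phi_1(\bmeta_n)$ exactly when $(\bmeta_n\T\hat\bSigma-\eb^{(j)\top})(\phi_0(\bmeta_n)-\phi_1(\bmeta_n))<0$ and $\phi_0(\bmeta_n)$ otherwise, which is precisely the selection made by the indicator in \eqref{sg_expression}. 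Since $\bmeta_n\notin Q$ forces $\psi(\bmeta_n)>0$, hence $g(\bmeta_n)>0$, the selected projection is nonzero and the chosen $\phi$ is well defined.

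Finally I would conclude with the elementary max-formula for subgradients, which here requires nothing beyond the subgradient inequality itself. For the selected maximizer $\ub^\star=\phi_{\one\{(\bmeta_n\T\hat\bSigma-\eb^{(j)\top})(\phi_0(\bmeta_n)-\phi_1(\bmeta_n))<0\}}(\bmeta_n)$, the affine map $\bmeta\mapsto\langle\hat\bSigma\bmeta-\eb^{(j)},\ub^\star\rangle$ minorizes $g$ pointwise and agrees with $g$ at $\bmeta_n$; by symmetry of $\hat\bSigma$ its constant gradient is $\hat\bSigma\ub^\star$. Hence for every $\bmeta$,
\begin{align*}
 g(\bmeta)\ge\langle\hat\bSigma\bmeta-\eb^{(j)},\ub^\star\rangle=g(\bmeta_n)+\langle\hat\bSigma\ub^\star,\bmeta-\bmeta_n\rangle,
\end{align*}
which is exactly the defining inequality for $\hat\bSigma\ub^\star=\gb_n$ to be a subgradient of $g$, and therefore of $\psi$, at $\bmeta_n$. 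The main obstacle I anticipate is the cone-projection identity of the second step together with the bookkeeping of degenerate cases in which one of $\Pi_{\cT_K(\vb)}(\hat\bSigma\bmeta_n-\eb^{(j)})$ or $\Pi_{-\cT_K(\vb)}(\hat\bSigma\bmeta_n-\eb^{(j)})$ vanishes; these are controlled by $g(\bmeta_n)>0$ (guaranteed by $\bmeta_n\notin Q$) and a careful reading of the indicator, so that a vanishing—and hence undefined—$\phi$ is never the one selected. The absolute-value split and the concluding max-formula are then routine.
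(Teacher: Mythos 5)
Your proposal is correct and follows essentially the same route as the paper's proof: the same split of the absolute value into a maximum of two support functions over $\cT_K(\vb)$ and $-\cT_K(\vb)$, the same cone-projection identity (which the paper isolates as Lemma \ref{project_argsup}), and the same selection of the maximizing branch via the indicator. Your only cosmetic deviation is that you verify the subgradient inequality directly through the touching affine minorant rather than invoking the pointwise-maximum rule, which amounts to the same computation.
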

We observe that if one can compute $\Pi_{\cT_K(\vb)}$, one can clearly compute 
\begin{align}\label{neg:cone:calc}
    \Pi_{-\cT_K(\vb)}(\xb) = -\argmin_{\wb \in \cT_K(\vb)}\|\wb - (-\xb)\| = -\Pi_{\cT_K(\vb)}(-\xb),
\end{align}
We provide Algorithm \ref{algo_step3} as a summary of solving \eqref{opt_step3}, assuming $\Pi_{\cT_K(\vb)}$ is computable in a reasonable time. In Sections \ref{cvs_ls:sec} --- \ref{slope_sqrtslope:sec} we will see that such a projection $\Pi_{\cT_K(\vb)}$ can be obtained efficiently for some specific convex cones with a simple structure.

\begin{algorithm}
\caption{Solve the Optimization \eqref{opt_step3} in Step \ref{step_2} of Algorithm \ref{algo_unknowncov}}
\label{algo_step3}
\begin{algorithmic}
\STATE \textbf{Input:} The convex set $K$, the vector $\vb$ from step 2, empirical Gram matrix of the second partition $\hat\bSigma=\frac{1}{n}\tilde\Xb\T\tilde\Xb$.\\
\STATE \textbf{Initialize:} $\bmeta_1$\\
\STATE Run until some convergence criteria is satisfied:\\
\STATE\hspace*{\algorithmicindent} Compute $P_{+} \leftarrow\Pi_{\cT_{K}(\vb)}(\hat\bSigma\bmeta_{n} - \eb^{(j)})$, $P_{-} \leftarrow\Pi_{-\cT_{K}(\vb)}(\hat\bSigma\bmeta_{n} - \eb^{(j)})$.\\
\STATE\hspace*{\algorithmicindent} \algorithmicif{ $ \max\{\|P_+\|, \|P_-\|\}\leq \frac{\rho\overline w(\cT_K(\vb)\cap\mathbb{S}^{p-1})}{\sqrt{n}}$}\\
\STATE\hspace*{\algorithmicindent}\hspace*{\algorithmicindent} \algorithmicif{\,\,\ $\|\hat\bSigma^{\frac{1}{2}}\bmeta_{n}\| \leq \|\hat\bSigma^{\frac{1}{2}}\bmeta_{out}\|$:} $\bmeta_{out}\leftarrow \bmeta_{n}$\\
\STATE\hspace*{\algorithmicindent}\hspace*{\algorithmicindent} $\bmeta_{n+1} \leftarrow \bmeta_{n} - h_n\frac{\hat\bSigma \bmeta_{n}}{\|\hat\bSigma^{\frac{1}{2}} \bmeta_{n}\|}$\\
\STATE\hspace*{\algorithmicindent} \algorithmicelse:\\
\STATE\hspace*{\algorithmicindent}\hspace*{\algorithmicindent} $\phi_0(\bmeta_{n}) \leftarrow P_+ \,/\, \|P_+\|$\\
\STATE\hspace*{\algorithmicindent}\hspace*{\algorithmicindent} $\phi_1(\bmeta_{n}) \leftarrow P_- \,/\, \|P_-\|$.\\
\STATE\hspace*{\algorithmicindent}\hspace*{\algorithmicindent} $\bmeta_{n+1} \leftarrow \bmeta_{n} - h_n \hat\bSigma\phi_{\one\{(\bmeta_{n}\T \hat \bSigma - \eb^{(j)\top}) (\phi_0(\bmeta_{n-1}) - \phi_1(\bmeta_{n-1})) < 0\}}(\bmeta_{n})$\\
\STATE $\hat\bmeta \leftarrow \bmeta_{out}$.
\end{algorithmic}
\end{algorithm} 
We note that the condition 
\begin{align*}
    \max\{\|P_+\|, \|P_-\|\}\leq \frac{\rho\overline w(\cT_K(\vb)\cap\mathbb{S}^{p-1})}{\sqrt{n}},
\end{align*}
used in Algorithm \ref{algo_step3} is equivalent to checking feasibility, i.e., checking
\begin{align*}
    \psi(\bmeta_n) \leq 0,
\end{align*}
since $\langle\Pi_{\cT_{K}(\vb)}(\hat\bSigma\bmeta_{n} - \eb^{(j)}),\hat\bSigma\bmeta_{n} - \eb^{(j)}\rangle  = \|\Pi_{\cT_{K}(\vb)}(\hat\bSigma\bmeta_{n} - \eb^{(j)})\|^2$ as can be seen from Lemma \ref{project_argsup} in the supplementary material. In practice, one would like to pick $\rho$ sufficiently large so that there exists a feasible point, yet not overly large since it is important for the theory $\rho$ to remain bounded. 

Let $\bmeta^* = \argmin_{\bmeta\in Q} \|\hat\bSigma^{\frac{1}{2}}\bmeta\|$ be the constrained minima of \eqref{opt_step3}. It is proved in Lemma \ref{cvg_sbgrad} that there exists a subsequence of $\{\bmeta_n\}$ in \eqref{sg_step} converging to $\bmeta^*$, and it takes $n=O(1/\epsilon^2)$ iterations to get an $\epsilon$-suboptimal solution, i.e. $\|\hat \bSigma^{\frac{1}{2}}\bmeta_n\|-\|\hat \bSigma^{\frac{1}{2}}\bmeta^*\|\leq\epsilon$. Therefore the subgradient descent is an appropriate method for solving program \eqref{opt_step3}. As we mentioned earlier the constraint in the optimization program \eqref{opt_step3} is unconventional since the $\sup$ can be regarded as infinite number of constraints. Such programs are called semi-infinite programs. The proof of Lemma \ref{cvg_sbgrad} is inspired by \cite[Section 7]{boyd2003subgradient} which is suitable for unconventional constraints. For completeness we also mention that  \cite{polyak1967general} was the first to prove the convergence of subgradient descent with rather general constraints.
\begin{lemma}[Convergence of subgradient descent]
For any bounded starting point $\bmeta_1$, one can construct a sequence $\{\bmeta_n\}$ by \eqref{sg_step}, \eqref{sg_expression}. As detailed in Algorithm \ref{algo_step3} at every step of the iteration, we record the best candidate found so far as
\begin{align*}
    \bmeta_n^{best} = \argmin_{\bmeta_i} \big\{\|\hat\bSigma^{\frac{1}{2}}\bmeta_i\|\,\big|\,\bmeta_i\in Q,\,i\in[n]\big\}.
\end{align*}
Let $\bmeta^*$ achieve the minima of \eqref{opt_step3} and $h_n$ be the step size of the subgradient descent. Suppose we run Algorithm \ref{algo_step3} for $k$ iterations. Then for some absolute constants $C_1, C_2$, 
\begin{align*}
    \epsilon: = \|\hat\bSigma^{\frac{1}{2}}\bmeta_k^{best}\| - \|\hat\bSigma^{\frac{1}{2}}\bmeta^*\| \lesssim \frac{C_1^2 + C_2^2\sum_{n=1}^k h_n^2}{\sum_{n=1}^k h_n}.
\end{align*}
\label{cvg_sbgrad}
\end{lemma}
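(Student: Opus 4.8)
The plan is to follow the classical convergence analysis of the constrained subgradient method as in \cite[Section 7]{boyd2003subgradient}, tracking the squared distance to the optimum $\bmeta^*$. The starting point is the identity
\[
\|\bmeta_{n+1} - \bmeta^*\|^2 = \|\bmeta_n - \bmeta^*\|^2 - 2 h_n \langle \gb_n, \bmeta_n - \bmeta^*\rangle + h_n^2 \|\gb_n\|^2,
\]
which is immediate from the update \eqref{sg_step}. First I would dispose of the two routine ingredients: a uniform bound $\|\gb_n\| \le C_2$ and a bound $\|\bmeta_1 - \bmeta^*\| \le C_1$. For the former, when $\bmeta_n \in Q$ we have $\gb_n = \hat\bSigma\bmeta_n/\|\hat\bSigma^{1/2}\bmeta_n\|$, so $\|\gb_n\| \le \|\hat\bSigma^{1/2}\|_{\operatorname{op}}$; when $\bmeta_n \notin Q$ the vector $\gb_n$ is $\hat\bSigma$ applied to one of the unit vectors $\phi_0(\bmeta_n),\phi_1(\bmeta_n)$, so $\|\gb_n\| \le \|\hat\bSigma\|_{\operatorname{op}}$. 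Hence $C_2 = \|\hat\bSigma\|_{\operatorname{op}} \vee \|\hat\bSigma^{1/2}\|_{\operatorname{op}}$ works, and it is bounded with high probability since the spectrum of $\bSigma$ is bounded and $\hat\bSigma$ concentrates around $\bSigma$. For the latter, $\bmeta_1$ is bounded by hypothesis and $\bmeta^*$ has bounded norm (e.g.\ because $\bSigma^{-1}\eb^{(j)}$ is feasible with high probability by Theorem \ref{feasible_point}, and the objective controls $\|\hat\bSigma^{1/2}\bmeta^*\|$), so $C_1 := \|\bmeta_1 - \bmeta^*\|$ is an absolute constant.

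The heart of the argument is lower bounding the cross term $\langle \gb_n, \bmeta_n - \bmeta^*\rangle$ in each regime. When $\bmeta_n \in Q$, $\gb_n$ is the gradient of the convex objective $f(\bmeta) = \|\hat\bSigma^{1/2}\bmeta\|$ (a direct computation gives $\nabla f(\bmeta) = \hat\bSigma\bmeta/\|\hat\bSigma^{1/2}\bmeta\|$), so convexity yields $\langle \gb_n, \bmeta_n - \bmeta^*\rangle \ge f(\bmeta_n) - f(\bmeta^*)$. When $\bmeta_n \notin Q$, Lemma \ref{sg_psi} guarantees that $\gb_n$ is a subgradient of the convex constraint function $\psi$, and since $\bmeta^*$ is feasible we have $\psi(\bmeta^*) \le 0 < \psi(\bmeta_n)$; convexity then gives $\langle \gb_n, \bmeta_n - \bmeta^*\rangle \ge \psi(\bmeta_n) - \psi(\bmeta^*) > 0$. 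The key structural consequence is that the cross terms from infeasible iterates are strictly positive, so after telescoping the identity from $n=1$ to $k$ they only help and may be dropped. Writing $I = \{n \le k : \bmeta_n \in Q\}$ for the feasible indices and using $\|\bmeta_{k+1}-\bmeta^*\|^2 \ge 0$, I obtain
\[
2\sum_{n \in I} h_n\big(f(\bmeta_n) - f(\bmeta^*)\big) \le \|\bmeta_1 - \bmeta^*\|^2 + \sum_{n=1}^k h_n^2 \|\gb_n\|^2 \le C_1^2 + C_2^2 \sum_{n=1}^k h_n^2.
\]

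Since $\bmeta_k^{best}$ minimizes $f$ over the feasible iterates, $f(\bmeta_n) - f(\bmeta^*) \ge \epsilon \ge 0$ for every $n \in I$, which factors out to give $\epsilon \le (C_1^2 + C_2^2\sum_{n=1}^k h_n^2)/(2\sum_{n\in I} h_n)$; convergence of a subsequence to $\bmeta^*$ then follows by taking a non-summable but square-summable step size (e.g.\ $h_n \downarrow 0$ with $\sum h_n = \infty$, $\sum h_n^2 < \infty$), which sends the right-hand side to zero. I expect the main obstacle to be the bookkeeping around the infeasible iterates: one must invoke Lemma \ref{sg_psi} to certify that $\gb_n$ is genuinely a subgradient of $\psi$ (not merely a descent direction) so that convexity applies, and then argue that the discarded infeasible contributions $2h_n\psi(\bmeta_n)$ are nonnegative — this is exactly what leaves the feasible-index sum $\sum_{n\in I} h_n$ in the denominator. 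Matching this to the stated $\sum_{n=1}^k h_n$ is the one remaining gap: since $\sum_{n\in I}h_n \le \sum_{n=1}^k h_n$, one should either record the sharper bound with $\sum_{n\in I}h_n$, or note that discarding $2h_n\psi(\bmeta_n)$ throws away precisely the progress the infeasible steps make toward feasibility, so that under the above step-size regime the two sums are of the same order and the $O(1/\epsilon^2)$ iteration complexity is preserved.
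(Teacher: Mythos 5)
Your setup (the expansion of $\|\bmeta_{n+1}-\bmeta^*\|^2$, the bounds $C_1,C_2$, and the convexity inequalities for the cross term in the two regimes) matches the paper's, but the final step has a genuine gap that you yourself flag and do not close. Dropping the infeasible cross terms because they are ``strictly positive'' leaves you with $\epsilon \le (C_1^2 + C_2^2\sum_{n=1}^k h_n^2)/(2\sum_{n\in I}h_n)$ where $I$ is the set of \emph{feasible} iterations only. This is not the stated bound: $\psi(\bmeta_n)-\psi(\bmeta^*)$ can be arbitrarily close to zero when $\bmeta_n$ is barely infeasible and $\bmeta^*$ sits on the boundary of $Q$, so there is no uniform lower bound on the discarded terms, and nothing prevents the algorithm from spending almost all of its step-size mass outside $Q$. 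Your proposed repair --- that the two sums are ``of the same order'' under the usual step-size regime --- is an assertion, not an argument; in the extreme case $I=\varnothing$ the bound is vacuous.

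The paper closes exactly this gap with Polyak's auxiliary-point device, and this is where Corollary \ref{nonempty_int_Q} is actually consumed. Fix $\epsilon>0$ and let $k$ be the last iteration at which the best feasible value is still $\epsilon$-suboptimal. Take a strictly feasible $\bmeta^{sf}$ with $\psi(\bmeta^{sf})<0$ (this exists precisely because $Q$ has non-empty interior) and set $\tilde\bmeta=(1-\theta)\bmeta^*+\theta\bmeta^{sf}$ with $\theta=\min\{1,(\epsilon/2)/(f(\bmeta^{sf})-f(\bmeta^*))\}$, so that $f(\tilde\bmeta)\le f(\bmeta^*)+\epsilon/2$ and $\psi(\tilde\bmeta)\le -c$ with $c=-\theta\psi(\bmeta^{sf})\asymp\epsilon$. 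Measuring distance to $\tilde\bmeta$ rather than to $\bmeta^*$, the feasible iterations before $k$ give a cross term at least $f(\bmeta_n)-f(\tilde\bmeta)\ge\epsilon/2$ and the infeasible ones give at least $\psi(\bmeta_n)-\psi(\tilde\bmeta)\ge c$; hence \emph{every} iteration satisfies $\|\bmeta_{n+1}-\tilde\bmeta\|^2\le\|\bmeta_n-\tilde\bmeta\|^2-h_n\delta+h_n^2\|\gb_n\|^2$ with $\delta=\min\{\epsilon,2c\}\asymp\epsilon$, and telescoping yields the claimed bound with the full sum $\sum_{n=1}^k h_n$ in the denominator. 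If you want to salvage your route, you must import this idea (or an equivalent quantitative control on the infeasible iterations); as written the argument proves only the weaker statement with $\sum_{n\in I}h_n$.
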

For $h_n$ satisfying $\sum_{n=0}^{+\infty}h_n = +\infty$ and $\sum_{n=0}^{+\infty}h_n^2 = o(\sum_{n=0}^{+\infty}h_n)$, we have $\epsilon\rightarrow0$ so that $\lim_{n\rightarrow\infty}\|\hat\bSigma^{\frac{1}{2}}\bmeta_n^{best}\| = \|\hat\bSigma^{\frac{1}{2}}\bmeta^*\|$, which implies the convergence of the subgradient descent. Moreover, different choices of the step size $h_n$ give different convergence rates. For example, if $h_n=1/\sqrt{n}$, the convergence rate is nearly quadratic as $k=O(\log^2{k}/\epsilon^2)$; if $h_n=h \asymp 1/\sqrt{k}$ is a fixed small constant, the exact quadratic convergence rate $k=O(1/\epsilon^2)$ is achieved (although the algorithm does not converge to the target if ran for infinitely many iterations in this case).

\section{Asymptotic Distribution and Confidence Interval of the Debiased Estimator}
\label{asymp:sec}
In this section we derive the limiting distribution of the debiased estimator obtained by Algorithm \ref{algo_unknowncov}. We then construct a confidence interval using a consistent estimator of $\sigma$ --- the standard deviation of the noise $\varepsilon$. The following Theorem \ref{debiase_formula_applicable_unkowncov} shows that Algorithm \ref{algo_unknowncov} successfully debiases the $j$-th coordinate of an estimator of $\bbeta^*$ given model \eqref{gaussian_model}, when the population covariance matrix $\bSigma$ has bounded spectrum. 
\begin{theorem}
\label{debiase_formula_applicable_unkowncov}
Consider a linear model in \eqref{gaussian_model} with Gaussian errors $\varepsilon_i \sim N(0,\sigma^2)$. Suppose the eigenvalues of $\bSigma$ are bounded from both above and below.
Then, under the assumptions of Lemma \ref{feasible_point} and $\overline w(\cT_K(\vb)\cap \mathbb{S}^{p-1})\|\vb-\bbeta^*\|=o_p(1)$, the debiased $j$\textsuperscript{th} coefficient $\hat\bbeta_{d}^{(j)}$ obtained by Algorithm \ref{algo_unknowncov} is conditionally asymptotically normal with mean equal to $\bbeta^{*(j)}$. In particular, if $Z_j = \frac{1}{\sqrt{n}}\hat\bmeta\T \tilde\Xb\T\bvarepsilon$, we have
{\footnotesize
\begin{align*}
    \sqrt{n}(\hat\bbeta_d^{(j)} - \bbeta^{*(j)}) = Z_j + \Delta_j,\quad
    Z_j|\overline\Xb, \overline\bY, \tilde\Xb \sim N(0, \sigma^2\hat\bmeta\T\hat\bSigma\hat\bmeta),\quad
    \Delta_j = \sqrt{n}(\hat \bmeta\T \hat \bSigma - \eb^{(j)\top})(\bbeta^* - \vb),
\end{align*}
}
and $\Delta_j = o_p(1)$.
\end{theorem}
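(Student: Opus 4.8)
The plan is to start from the debiasing formula in step 2 of Algorithm \ref{algo_unknowncov} and substitute the model \eqref{gaussian_model} restricted to the second partition, namely $\tilde\bY = \tilde\Xb\bbeta^* + \bvarepsilon$. Writing $\tilde\bY - \tilde\Xb\vb = \tilde\Xb(\bbeta^* - \vb) + \bvarepsilon$ and recalling that $\hat\bSigma = \frac{1}{n}\tilde\Xb\T\tilde\Xb$, the formula becomes $\hat\bbeta_d^{(j)} = \eb^{(j)\top}\vb + \hat\bmeta\T\hat\bSigma(\bbeta^*-\vb) + n^{-1}\hat\bmeta\T\tilde\Xb\T\bvarepsilon$. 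Subtracting $\bbeta^{*(j)} = \eb^{(j)\top}\bbeta^*$, multiplying by $\sqrt{n}$, and grouping the two deterministic terms yields $\sqrt{n}(\hat\bbeta_d^{(j)} - \bbeta^{*(j)}) = \Delta_j + Z_j$ with $\Delta_j = \sqrt{n}(\hat\bmeta\T\hat\bSigma - \eb^{(j)\top})(\bbeta^* - \vb)$ and $Z_j = n^{-1/2}\hat\bmeta\T\tilde\Xb\T\bvarepsilon$, which is exactly the claimed decomposition. This algebraic step is routine.

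Next I would establish the conditional law of $Z_j$. Conditional on $(\overline\Xb, \overline\bY, \tilde\Xb)$, the set $K$, the vector $\vb$, the bound $\overline w(\cT_K(\vb)\cap\mathbb{S}^{p-1})$, the Gram matrix $\hat\bSigma$, and hence the optimizer $\hat\bmeta$ of \eqref{opt_step3} are all deterministic, so $Z_j$ is a fixed linear functional of $\bvarepsilon \sim N(0,\sigma^2\Ib_n)$, which remains independent of the conditioning since the second-partition noise is independent of both halves of the design and of the first-partition response. A linear image of a Gaussian is Gaussian with mean $0$ and variance $n^{-1}\hat\bmeta\T\tilde\Xb\T(\sigma^2\Ib_n)\tilde\Xb\hat\bmeta = \sigma^2\hat\bmeta\T\hat\bSigma\hat\bmeta$, giving the stated conditional distribution.

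The main work is showing $\Delta_j = o_p(1)$, and the key observation is that $\bbeta^* - \vb$ lies in the tangent cone $\cT_K(\vb)$. On the high-probability event from step 1 on which $\bbeta^*, \vb \in K$, taking $t = 1$ and $\wb = \bbeta^*$ in the definition $\cT_K(\vb) = \{t(\wb - \vb): t\geq 0,\ \wb \in K\}$ shows $\bbeta^* - \vb \in \cT_K(\vb)$. Assuming $\vb \neq \bbeta^*$ (otherwise $\Delta_j = 0$ trivially), I would normalize $\ub = (\bbeta^* - \vb)/\|\bbeta^* - \vb\| \in \cT_K(\vb)\cap\mathbb{S}^{p-1}$ and factor out the length to get $|\Delta_j| = \sqrt{n}\,\|\bbeta^* - \vb\|\,|(\hat\bmeta\T\hat\bSigma - \eb^{(j)\top})\ub|$. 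Since the feasible set of \eqref{opt_step3} is non-empty with probability tending to one by Corollary \ref{nonempty_int_Q}, the minimizer $\hat\bmeta$ is well-defined and feasible, so the supremum constraint bounds $|(\hat\bmeta\T\hat\bSigma - \eb^{(j)\top})\ub| \leq \rho\,\overline w(\cT_K(\vb)\cap\mathbb{S}^{p-1})/\sqrt{n}$.

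Combining these two bounds yields $|\Delta_j| \leq \rho\,\overline w(\cT_K(\vb)\cap\mathbb{S}^{p-1})\,\|\bbeta^* - \vb\|$, and because $\rho$ is a fixed constant the right-hand side is $o_p(1)$ by the defining property of step 1 of Algorithm \ref{algo_unknowncov}. The only real obstacle is bookkeeping over the high-probability events --- that $\bbeta^* \in K$ and that the feasible set is non-empty so $\hat\bmeta$ is feasible; intersecting these events, each of probability converging to one, gives $\Delta_j = o_p(1)$ and hence the asymptotic normality claim, completing the proof.
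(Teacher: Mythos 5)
Your proposal is correct and follows essentially the same route as the paper's proof: the same algebraic decomposition into $Z_j + \Delta_j$, the same conditioning argument for the Gaussian law of $Z_j$, and the same bound $|\Delta_j| \leq \rho\,\overline w(\cT_K(\vb)\cap\mathbb{S}^{p-1})\,\|\vb-\bbeta^*\|$ obtained by normalizing $\bbeta^*-\vb$ into $\cT_K(\vb)\cap\mathbb{S}^{p-1}$ and invoking the feasibility constraint of \eqref{opt_step3}. Your extra bookkeeping (the $\vb=\bbeta^*$ case and the explicit appeal to Corollary \ref{nonempty_int_Q} for feasibility) is sound but not needed beyond what the paper already records.
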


\begin{remark}
\label{debias_linear_comb}
We will reiterate that our debiasing procedure works for a linear combination of coordinates (i.e. a contrast). It is not hard to see from the proof of Lemma \ref{feasible_point} and Theorem \ref{debiase_formula_applicable_unkowncov} that if we replace $\eb^{(j)}$ by any bounded in Euclidean norm vector, the same results will also hold. In terms of implementation, to debias a contrast, one simply needs to replace $\eb^{(j)}$ by the relevant vector with bounded norm in step \ref{step_2}.
\end{remark}

\begin{remark}
For simplicity of exposition the above theorem assumes that the errors are Gaussian. Our procedure also works with non-Gaussian errors using a modification similar in spirit to the one proposed in \cite[Section 4]{javanmard2014confidence}. Details will be given in Section \ref{algo_unknowncov:subgaussian:noise}.
\end{remark}

\begin{remark} In this remark we explain the validity of our procedure if one is interested in testing multiple coordinates simultaneously. Suppose $S \subset [p]$ is a set of coordinates of interest. Let $\hat \Mb$ represent the matrix whose rows are vectors $\hat \bmeta_j\T$ for $j \in S$ which are obtained via solving \eqref{opt_step3} for $j \in S$. We then have
{\footnotesize
\begin{align*}
\sqrt{n}(\hat \bbeta_d^{S} - \bbeta^{*S}) = \bZ + \bDelta,\quad
    \bZ|\overline\Xb, \overline\bY, \tilde\Xb \sim N(0, \sigma^2\hat\Mb\hat\bSigma\hat \Mb\T),\quad
    \bDelta = \sqrt{n}(\hat \Mb \hat \bSigma - \Eb^{S})(\bbeta^* - \vb),
\end{align*}
}where $\|\bDelta\|_{\infty} = o_p(1)$, $\Eb^S$ collects all vectors $\eb^{(j)\top}$ in its rows, and super-indexing $\bbeta^{*S}$ and $\hat \bbeta^S_d$ by the set $S$ collects all coordinates belonging to the set $S$, and $\hat\bbeta_{d}^{(j)} = \eb^{(j)\top}\vb + n^{-1}\hat\bmeta_j\T \tilde\Xb\T (\tilde\bY - \tilde\Xb\vb)$ for $j \in S$. Here we require $C'|S|\log^4 n/n + |S| \exp(-C''_{\bSigma}\overline w(\cT_K(\vb) \cap \mathbb{S}^{p-1})) = o(1)$, where $C', C''_{\bSigma}$ are absolute constants ($C''_{\bSigma}$ may depend on $\bSigma$'s spectrum). This latter condition is needed since we want to ensure that all points $\eb^{(j)\top}\bSigma^{-1}$ will be feasible points in the program \eqref{opt_step3} with high probability (see also the proof of Lemma \ref{feasible_point}). 
\end{remark}

\subsection{Confidence Intervals}
Based on Theorem \ref{debiase_formula_applicable_unkowncov}, a $(1-\alpha)$-level confidence interval of $\bbeta^{*(j)}$ can be constructed as 
\begin{align}
\label{ci}
\bigg( \hat\bbeta_d^{(j)}-z_{\frac{\alpha}{2}}\sigma\frac{\|\hat\bSigma^{\frac{1}{2}}\hat\bmeta\|}{\sqrt{n}}, \hat\bbeta_d^{(j)}+z_{\frac{\alpha}{2}}\sigma\frac{\|\hat\bSigma^{\frac{1}{2}}\hat\bmeta\|}{\sqrt{n}} \bigg).
\end{align}
Usually the variance of the noise $\sigma$ is unknown. Thus the need for consistent estimation of $\sigma$ arises. In order to estimate $\sigma$ we assume there exists an estimator $\hat \bbeta$ which does well in terms of mean squared prediction error (see Theorem \ref{sigma_hat_rate} for the precise assumption on $\hat \bbeta$). We use only the first half of the data to estimate $\sigma$ with $\hat\sigma = \sqrt{\frac{1}{n}\sum_{i\in[n]} (Y_i - \bX_i\T \hat \bbeta)^2}$. Alternatively, for this step one could estimate $\hat \sigma$ using the entire data set, since we do not need sample splitting when we estimate $\sigma$ (we only need a consistent estimator). The following Theorem \ref{sigma_hat_rate} proves the consistency of such an estimator of $\sigma$. Theorem \ref{sigma_hat_rate} does not require the noise to be Gaussian, and even sub-Gaussian. It only assumes the existence of a $6$-th moment.
\begin{theorem}
\label{sigma_hat_rate}
Let $\hat\sigma = \sqrt{\frac{1}{n}\sum_{i \in [n]} (Y_i - \bX_i\T \hat \bbeta)^2}$. Suppose $\EE \varepsilon^6 < +\infty$, and that the eigenvalues of $\bSigma$ are bounded from above and below. Let $\hat\bbeta$ be an estimator of $\bbeta^*$ such that with probability converging to $1$ we have $\|\Xb(\hat \bbeta - \bbeta^*)\|\lesssim \sigma\delta$ for some $\delta=o(\sqrt{n})$. Then with probability converging to $1 - e^{-\delta^2/2}$,  we have
\begin{align*}
    |\hat\sigma^2-\sigma^2| \lesssim \frac{\bigg(\sqrt{\Var(\varepsilon_i^2)}\vee\sigma^2\bigg)\,\delta}{\sqrt{n}}.
\end{align*}
\end{theorem}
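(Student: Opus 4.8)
The plan is to expand the squared residuals and separate the purely stochastic fluctuation of the noise from the prediction error of the pilot, which is controlled deterministically on a high-probability event. First I would write the $i$-th residual as $Y_i - \bX_i\T\hat\bbeta = \varepsilon_i - \bX_i\T(\hat\bbeta - \bbeta^*)$. Collecting these into vectors and setting $\wb := \Xb(\hat\bbeta - \bbeta^*)$ and $\bvarepsilon = (\varepsilon_1,\dots,\varepsilon_n)\T$, we get $\hat\sigma^2 = \tfrac1n\|\bvarepsilon - \wb\|^2$, and expanding the square yields
\begin{align*}
\hat\sigma^2 - \sigma^2 = \underbrace{\Big(\tfrac1n\sum_{i\in[n]}\varepsilon_i^2 - \sigma^2\Big)}_{T_1} \;-\; \underbrace{\tfrac2n\langle\bvarepsilon,\wb\rangle}_{T_2} \;+\; \underbrace{\tfrac1n\|\wb\|^2}_{T_3}.
\end{align*}
It then suffices to bound $T_1, T_2, T_3$ separately and add.

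Then I would dispatch the two error terms using only the prediction bound. On the event $A = \{\|\wb\| \lesssim \sigma\delta\}$, which holds with probability tending to one by hypothesis, $T_3 = \tfrac1n\|\wb\|^2 \lesssim \tfrac{\sigma^2\delta^2}{n} = \tfrac{\sigma^2\delta}{\sqrt n}\cdot\tfrac{\delta}{\sqrt n}$, which is of strictly smaller order than $\tfrac{\sigma^2\delta}{\sqrt n}$ because $\delta = o(\sqrt n)$. For the cross term, Cauchy--Schwarz gives $|T_2| \leq \tfrac2n\|\bvarepsilon\|\,\|\wb\| \lesssim \tfrac{\sigma\delta}{n}\|\bvarepsilon\|$ on $A$; since $\|\bvarepsilon\|^2 = n(\sigma^2 + T_1)$ and $T_1 = o(1)$ on the good event isolated in the next step, $\|\bvarepsilon\| \lesssim \sqrt n\,\sigma$, whence $|T_2| \lesssim \tfrac{\sigma^2\delta}{\sqrt n}$. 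I emphasize that this uses Cauchy--Schwarz rather than any distributional fact about $\langle\bvarepsilon,\wb\rangle$, since $\wb$ and $\bvarepsilon$ are dependent ($\hat\bbeta$ is fit on the very data defining the residuals, as $\sigma$ is estimated without sample splitting).

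Finally the leading term $T_1 = \tfrac1n\sum_{i\in[n]}(\varepsilon_i^2 - \sigma^2)$ is a centered average of i.i.d.\ summands. The sixth-moment hypothesis is used exactly here: $\EE\varepsilon^6 < \infty$ yields both a finite variance $\var(\varepsilon^2) = \EE\varepsilon^4 - \sigma^4$ and a finite third absolute moment $\EE|\varepsilon^2 - \sigma^2|^3$, which is the input required by the Berry--Esseen theorem. Applying Berry--Esseen to the standardized sum $\tfrac{\sqrt n\,T_1}{\sqrt{\var(\varepsilon^2)}}$ and then the Gaussian tail bound $2\big(1-\Phi(\delta)\big) \leq e^{-\delta^2/2}$ gives
\begin{align*}
\PP\Big(|T_1| > \tfrac{\sqrt{\var(\varepsilon^2)}\,\delta}{\sqrt n}\Big) \;\leq\; 2\big(1-\Phi(\delta)\big) + \tfrac{C}{\sqrt n} \;\leq\; e^{-\delta^2/2} + o(1),
\end{align*}
for an absolute constant $C$ depending on the third moment of $\varepsilon^2$, so $|T_1| \lesssim \tfrac{\sqrt{\var(\varepsilon^2)}\,\delta}{\sqrt n}$ with probability converging to $1 - e^{-\delta^2/2}$. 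Intersecting this event with $A$ and summing the three bounds gives $|\hat\sigma^2 - \sigma^2| \lesssim \tfrac{(\sqrt{\var(\varepsilon^2)}\vee\sigma^2)\,\delta}{\sqrt n}$, as claimed.

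I expect the genuine difficulty to lie entirely in the probabilistic bookkeeping rather than the algebra. The first subtlety is the dependence between $\hat\bbeta$ and the residual data: because there is no sample splitting for $\sigma$, I cannot treat $\langle\bvarepsilon,\wb\rangle$ as a Gaussian (or sub-Gaussian) linear functional, and must instead absorb everything into $\|\wb\|$ via Cauchy--Schwarz, relying solely on the assumed rate for the prediction error. The second, and I think the sharper, point is matching the precise tail $1 - e^{-\delta^2/2}$ rather than the cruder $1 - \delta^{-2}$ that a plain Chebyshev bound would produce; this is what forces a rate-of-convergence form of the CLT, and the delicate part is checking that the $O(n^{-1/2})$ Berry--Esseen remainder is harmless (it is swallowed by the ``converging to'' in the statement) and that the sixth-moment assumption is precisely strong enough to license it.
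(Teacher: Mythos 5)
Your proof is correct and follows essentially the same route as the paper's: both decompose $\hat\sigma^2-\sigma^2$ into the centered noise average (handled via Berry--Esseen under the sixth-moment assumption, plus the Gaussian tail bound to get the $e^{-\delta^2/2}$ term), the cross term (handled by Cauchy--Schwarz together with $\|\bvarepsilon\|/\sqrt{n}\lesssim\sigma$), and the squared prediction error (absorbed using $\delta=o(\sqrt n)$). The only cosmetic difference is that the paper certifies $\|\bvarepsilon\|/\sqrt n\lesssim\sigma$ by a separate Chebyshev bound rather than by recycling the bound on $T_1$; both work.
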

In the above since $\delta$ can be taken such that $\delta \rightarrow \infty$ as $n \rightarrow \infty$ (as long as $\delta = o(\sqrt{n})$), the result shows that $\hat \sigma$ is consistent. Note that the assumption $\|\Xb(\hat \bbeta - \bbeta^*)\|\lesssim\sigma\delta$ is achieved by many estimators. For example, \cite[Lemma A.1]{neykov2019gaussian} implies that convex constrained least squares estimators satisfy this condition assuming certain rate conditions on the growth of the Gaussian complexity of the tangent cone with respect to the sample size; \cite[Corollary 6.2]{bellec2018slope} and \cite[Corollary 6.2]{derumigny2018improved} imply that it holds for SLOPE and square-root SLOPE assuming that the vector is sufficiently sparse with respect to the sample size. The explicit order of $\delta$ for those cases can be found in Lemma \ref{sigma_hat_cvs_ls} and Lemma \ref{sigma_hat_slope} when we consider applying our general procedure to some special cases. In the case when $\hat \sigma$ is consistent, it follows by Slutsky's theorem that $\sigma$ in the confidence interval in \eqref{ci} can be substituted with $\hat\sigma$:

\begin{align}
\label{ci:sigma:est}
\bigg( \hat\bbeta_d^{(j)}-z_{\frac{\alpha}{2}}\hat \sigma\frac{\|\hat\bSigma^{\frac{1}{2}}\hat\bmeta\|}{\sqrt{n}}, \hat\bbeta_d^{(j)}+z_{\frac{\alpha}{2}}\hat\sigma\frac{\|\hat\bSigma^{\frac{1}{2}}\hat\bmeta\|}{\sqrt{n}} \bigg).
\end{align}

In the following Section \ref{cvs_ls:sec} and Section \ref{slope_sqrtslope:sec}, we discuss in details how to implement the debiasing procedure Algorithm \ref{algo_unknowncov} for some commonly used estimators including monotone regression, positive monotone regression, LASSO, SLOPE and square-root SLOPE. More concretely, the next section, Section \ref{cvs_ls:sec} is dedicated to convex constrained least squares, while Section \ref{slope_sqrtslope:sec} discusses an application to SLOPE and square-root SLOPE.

\section{Convex Constrained Least Squares}
\label{cvs_ls:sec}
In this section we are interested in the estimator \eqref{constrained_ls} which we mentioned in the introduction section. Clearly this estimator is a form of constrained least squares, where the practitioner has knowledge that the true coefficient $\bbeta^*$  belongs to a convex set $K$. Assuming that least squares is a reasonable criteria to estimate $\bbeta^*$, the practitioner further imposes a restriction that $\hat \bbeta \in K$. Similarly to how LASSO biases the coefficients by shrinking them towards zero, imposing a constraint on $\hat \bbeta$ also biases the coefficients and standard inference methods do not work even in the low-dimensional setting. This motivates us to debias individual coordinates or contrasts of the estimator $\hat \bbeta$. In this section, we will assume that $\bX_i \sim N(0,\bSigma)$. The sole reason why we require this, is that there are known estimation and in-sample prediction guarantees for the performance of $\hat \bbeta$ given in \cite{neykov2019gaussian} which require the same condition. We do anticipate that at least some of those results may be generalized to broader distributional settings, as suggested by the works of \cite{genzel2020generic, li2015geometric}, but this is out of the scope of the present paper.

Since a set $K$ with the property $\bbeta^* \in K$ is given, it is natural to try and use that knowledge in our abstract debiasing procedure. In particular, we will use $K$ as the convex set required in step 1 and step 2 of Algorithm \ref{algo_unknowncov}. It remains to construct a vector $\vb \in K$ which obeys the requirements of step 1. We now provide such a construction. We claim that the solution of the following optimization program
\begin{align}
\label{step2_cvs_ls}
    \vb := \argmin_{\wb \in K} \|\hat \bbeta - \wb\| + \frac{\overline w(\cT_K(\wb)\cap\mathbb{S}^{p-1})}{\sqrt{n}},
\end{align} 
would satisfy the properties required of $\vb$. We now give a high level intuition why such $\vb$ is worth considering. Recall that the condition $\overline w(\cT_K(\vb)\cap \mathbb{S}^{p-1})\|\vb-\bbeta^*\|=o_p(1)$ in step 1 of Algorithm \ref{algo_unknowncov}. This condition will be met if both $\overline w(\cT_K(\vb)\cap \mathbb{S}^{p-1})$ and $\|\vb-\bbeta^*\|$ are ``small''. Suppose there exists a vector $\vb'$ such that $\|\vb' - \bbeta^*\|$ is small, and in addition $\vb'$ has a ``small'' tangent cone, in the sense that $\frac{\overline w(\cT_K(\vb')\cap\mathbb{S}^{p-1})}{\sqrt{n}}$ is small. By the definition of $\vb$ it follows that 
\begin{align*} 
    \|\hat \bbeta - \vb\| + \frac{\overline w(\cT_K(\vb)\cap\mathbb{S}^{p-1})}{\sqrt{n}} \leq  \|\hat \bbeta - \vb'\| + \frac{\overline w(\cT_K(\vb')\cap\mathbb{S}^{p-1})}{\sqrt{n}}.
\end{align*} 
Therefore both terms $ \|\hat \bbeta - \vb\| $ and $\frac{\overline w(\cT_K(\vb)\cap\mathbb{S}^{p-1})}{\sqrt{n}}$ are ``small''.  By the triangle inequality $\|\vb - \bbeta^*\| \leq \|\hat \bbeta - \vb\| +  \|\hat \bbeta - \bbeta^*\|$. Finally we know by a result of \citep[see Corollary 2.7]{neykov2019gaussian} that $\|\hat \bbeta - \bbeta^*\|$ is ``small''. This implies that $\|\vb-\bbeta^*\|$ is ``small''. Theorem \ref{main_rst_cvs_ls} makes the above intuition precise and proves why the solution of program \eqref{step2_cvs_ls} satisfies the condition needed in step \ref{step_1}. 
\begin{theorem}
\label{main_rst_cvs_ls}
Consider the same setting as Theorem \ref{debiase_formula_applicable_unkowncov}, and further assume that $\bX_i\sim N(0, \bSigma)$. Suppose there exists $\vb'\in K$ such that $\|\vb'-\bbeta^*\|^2=o(1/\sqrt{n})$, and the tangent cone of $K$ at $\vb'$ has a simple structure such that $\overline w^2(\cT_K(\vb')\cap\mathbb{S}^{p-1})=o(\sqrt{n})$ and $\overline w^2(\cT_K(\vb')\cap\mathbb{S}^{p-1}) \rightarrow \infty$. Then for $\hat\bbeta$ being the constrained least squares estimator obtained via \eqref{constrained_ls}, the solution $\vb$ of \eqref{step2_cvs_ls} satisfies the condition needed in step 1 of Algorithm \ref{algo_unknowncov} with probability converging to $1$ asymptotically.
\end{theorem}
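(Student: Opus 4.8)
The plan is to exploit the optimality of $\vb$ in the program \eqref{step2_cvs_ls} by comparing its objective value against that of the witness vector $\vb'$, and to read off \emph{two} separate consequences from the single resulting inequality: a bound on the distance $\|\hat\bbeta - \vb\|$ and a bound on the tangent-cone width $\overline w(\cT_K(\vb)\cap\mathbb{S}^{p-1})$. The guiding arithmetic is that the step~1 requirement $\overline w(\cT_K(\vb)\cap\mathbb{S}^{p-1})\|\vb - \bbeta^*\| = o_p(1)$ factors as the product of a quantity of order $n^{1/4}$ and one of order $n^{-1/4}$, so it suffices to show the width is $o_p(n^{1/4})$ and the distance is $o_p(n^{-1/4})$. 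The two hypotheses $\|\vb' - \bbeta^*\|^2 = o(1/\sqrt n)$ and $\overline w^2(\cT_K(\vb')\cap\mathbb{S}^{p-1}) = o(\sqrt n)$ are exactly the $n^{-1/4}$-scale statements for the witness, which is what makes the comparison go through.

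First I would write the comparison inequality: since $\vb$ minimizes the objective over $\wb \in K$ and $\vb' \in K$,
\begin{align*}
\|\hat\bbeta - \vb\| + \frac{\overline w(\cT_K(\vb)\cap\mathbb{S}^{p-1})}{\sqrt n} \le \|\hat\bbeta - \vb'\| + \frac{\overline w(\cT_K(\vb')\cap\mathbb{S}^{p-1})}{\sqrt n}.
\end{align*}
I would then bound the right-hand side termwise. By the triangle inequality $\|\hat\bbeta - \vb'\| \le \|\hat\bbeta - \bbeta^*\| + \|\bbeta^* - \vb'\|$, where $\|\bbeta^* - \vb'\| = o(n^{-1/4})$ by assumption; the hypothesis $\overline w^2(\cT_K(\vb')\cap\mathbb{S}^{p-1}) = o(\sqrt n)$ gives $\overline w(\cT_K(\vb')\cap\mathbb{S}^{p-1}) = o(n^{1/4})$, so after dividing by $\sqrt n$ this term is also $o(n^{-1/4})$; and the estimation guarantee for constrained least squares (Corollary~2.7 of \cite{neykov2019gaussian}) furnishes $\|\hat\bbeta - \bbeta^*\| = o_p(n^{-1/4})$. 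Hence the whole right-hand side is $o_p(n^{-1/4})$. Because both terms on the left are non-negative, each is individually dominated by this bound, yielding $\|\hat\bbeta - \vb\| = o_p(n^{-1/4})$ and $\overline w(\cT_K(\vb)\cap\mathbb{S}^{p-1}) = o_p(n^{1/4})$. A last triangle inequality gives $\|\vb - \bbeta^*\| \le \|\hat\bbeta - \vb\| + \|\hat\bbeta - \bbeta^*\| = o_p(n^{-1/4})$, and multiplying the two surviving estimates produces $\overline w(\cT_K(\vb)\cap\mathbb{S}^{p-1})\|\vb - \bbeta^*\| = o_p(n^{1/4})\cdot o_p(n^{-1/4}) = o_p(1)$, which is precisely the step~1 condition.

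The main obstacle is the single external ingredient $\|\hat\bbeta - \bbeta^*\| = o_p(n^{-1/4})$. The natural constrained-least-squares rate is of order $\sigma\, w(\cT_K(\bbeta^*)\cap\mathbb{S}^{p-1})/\sqrt n$, since $\hat\bbeta - \bbeta^* \in \cT_K(\bbeta^*)$; obtaining $o_p(n^{-1/4})$ therefore requires the cone at $\bbeta^*$ itself — not merely at the witness $\vb'$ — to obey $w^2(\cT_K(\bbeta^*)\cap\mathbb{S}^{p-1}) = o(\sqrt n)$, and the care lies in extracting this from Corollary~2.7 under the stated hypotheses (in the concrete examples of this section $\bbeta^*$ has a directly controllable cone, so this becomes a bookkeeping check). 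The remaining delicacy is purely probabilistic: the estimation bound, the feasibility $\vb \in K$, and the event $\bbeta^* \in K$ each hold only with probability tending to one, so I would work on their intersection, which still has probability tending to one, after which all manipulations above are elementary and deterministic.
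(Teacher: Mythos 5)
Your skeleton is exactly the paper's: compare the objective of $\vb$ against the witness $\vb'$, split the right-hand side with the triangle inequality, conclude that both $\|\hat\bbeta-\vb\|$ and $\overline w(\cT_K(\vb)\cap\mathbb{S}^{p-1})/\sqrt n$ are individually small, and multiply the two $n^{\pm 1/4}$-scale estimates. All of that is fine. The one place where your argument does not close is the external ingredient $\|\hat\bbeta-\bbeta^*\| = o_p(n^{-1/4})$, and the resolution you sketch for it would not prove the theorem as stated. You propose to obtain this from the ``natural'' constrained-least-squares rate $\sigma\, w(\cT_K(\bbeta^*)\cap\mathbb{S}^{p-1})/\sqrt n$, which forces you to assume $w^2(\cT_K(\bbeta^*)\cap\mathbb{S}^{p-1}) = o(\sqrt n)$. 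The theorem makes no such assumption: it only controls the tangent cone at the \emph{witness} $\vb'$, and the surrounding remarks make clear this is deliberate --- e.g.\ in the monotone-cone application $\bbeta^*$ may be strictly increasing, in which case $\cT_{M^p}(\bbeta^*) = \RR^p$ and $w(\cT_K(\bbeta^*)\cap\mathbb{S}^{p-1}) \asymp \sqrt p$, which can vastly exceed $n^{1/4}$. Deferring this to a ``bookkeeping check'' in the concrete examples proves a weaker theorem than the one claimed.

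The paper closes this gap with an \emph{oracle inequality} for the constrained least squares estimator (its Lemma on the analogue of Corollary~2.6 of \cite{neykov2019gaussian}, not Corollary~2.7): for \emph{any} $\vb'\in K$, with high probability
\begin{align*}
\|\bSigma^{1/2}(\hat\bbeta-\bbeta^*)\| \lesssim \|\bSigma^{1/2}(\vb'-\bbeta^*)\| + \frac{\sigma\, w(\bSigma^{1/2}\cT_K(\vb')\cap\mathbb{S}^{p-1})}{\sqrt n},
\end{align*}
so the estimation error is controlled by the distance to the witness and the cone width \emph{at the witness}, never at $\bbeta^*$. Together with the bounded spectrum of $\bSigma$ (which lets one pass between $w(\bSigma^{1/2}\cT_K(\vb')\cap\mathbb{S}^{p-1})$ and $\overline w(\cT_K(\vb')\cap\mathbb{S}^{p-1})$ up to constants), the two hypotheses of the theorem give exactly $\|\hat\bbeta-\bbeta^*\| = o_p(n^{-1/4})$. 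If you replace your invocation of the $\bbeta^*$-cone rate by this oracle form, the rest of your argument goes through verbatim and coincides with the paper's proof.
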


\begin{remark}
Some comments are in order. The existence of a vector $\vb'$ which is close to $\bbeta^*$, with a sufficiently small tangent cone is natural. If $\vb' = \bbeta^*$, this condition requires that $\bbeta^*$ has a simple structure; otherwise when $\vb' \neq \bbeta^*$ it does not require that $\bbeta^*$ has a simple structure, as long as it is close enough to a vector $\vb'$ with a simple structure. This enables consistent estimation of $\bbeta^*$ in high-dimensional settings. As an example (for a case when $\vb' = \bbeta^*$) consider the set $K = \{\bbeta: \|\bbeta\|_1 \leq \|\bbeta^*\|_1\}$ which is the LASSO constraint. Requiring that $\bbeta^*$ has a cone with small Gaussian complexity is equivalent to imposing a sparsity assumption on $\bbeta^*$. 

In addition, notice that the vector $\vb'$ in Theorem \ref{main_rst_cvs_ls} is not necessarily the same as the vector $\vb$ found by \eqref{step2_cvs_ls}. However, it may be useful to think that the vector $\vb$ is attempting to estimate $\vb'$ (although this intuition too is not necessarily precise). The existence of $\vb'$ guarantees that we can find a ``useful'' $\vb$ by \eqref{step2_cvs_ls} in step \ref{step_1}. After we find the desired $\vb$, one can compute the auxiliary vector $\hat\bmeta$ in step \ref{step_2} based on $\vb$ and $K$, and then use $\hat\bmeta$ to construct the debiased estimator $\hat\bbeta_d$ and the confidence interval as \eqref{ci} or \eqref{ci:sigma:est}. 
\end{remark}

Of course, in practice, in order to construct the confidence interval \eqref{ci:sigma:est} we need to estimate $\sigma$. As discussed in Lemma \ref{sigma_hat_cvs_ls} below, consistent estimation of $\sigma$ is possible in the convex constrained least squares case. 
\begin{lemma}
\label{sigma_hat_cvs_ls}
Consider the same setting as Theorem \ref{main_rst_cvs_ls} where $\hat\bbeta$ is a convex constrained least squares estimator. Then Theorem \ref{sigma_hat_rate}, applies with
\begin{align*}
    \delta \asymp \frac{\sqrt{n}}{\sigma}\|\vb'-\bbeta^*\| + \overline w(\cT_K(\vb')\cap\mathbb{S}^{p-1}),
\end{align*}
where $\delta = o(\sqrt{n})$ as required.
\end{lemma}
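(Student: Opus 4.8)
The plan is to verify the single hypothesis needed to invoke Theorem~\ref{sigma_hat_rate} --- that the convex constrained least squares estimator $\hat\bbeta$ obeys an in-sample prediction bound of the form $\|\Xb(\hat\bbeta-\bbeta^*)\|\lesssim\sigma\delta$ with probability tending to one --- and then to read off the claimed value of $\delta$ and confirm $\delta=o(\sqrt n)$.

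First I would appeal to the in-sample prediction guarantee for convex constrained least squares established in \cite{neykov2019gaussian} (the bound underlying \cite[Corollary~2.7]{neykov2019gaussian} and \cite[Lemma~A.1]{neykov2019gaussian}, both cited after Theorem~\ref{sigma_hat_rate}). Because $\bX_i\sim N(0,\bSigma)$ with bounded spectrum and the noise is Gaussian, this supplies an oracle-type inequality: for the reference point $\vb'\in K$ furnished by Theorem~\ref{main_rst_cvs_ls}, with probability converging to one,
\begin{align*}
    \|\Xb(\hat\bbeta-\bbeta^*)\| \lesssim \sigma\,\overline w(\cT_K(\vb')\cap\mathbb{S}^{p-1}) + \sqrt{n}\,\|\vb'-\bbeta^*\|.
\end{align*}
This is exactly of the form $\sigma\delta$ with $\delta\asymp\frac{\sqrt n}{\sigma}\|\vb'-\bbeta^*\|+\overline w(\cT_K(\vb')\cap\mathbb{S}^{p-1})$, which identifies $\delta$ as claimed.

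Next I would check the rate $\delta=o(\sqrt n)$ directly from the hypotheses of Theorem~\ref{main_rst_cvs_ls}. Since $\|\vb'-\bbeta^*\|^2=o(1/\sqrt n)$ gives $\|\vb'-\bbeta^*\|=o(n^{-1/4})$, the first term obeys $\frac{\sqrt n}{\sigma}\|\vb'-\bbeta^*\|=o(n^{1/4})$; and $\overline w^2(\cT_K(\vb')\cap\mathbb{S}^{p-1})=o(\sqrt n)$ gives the second term as $o(n^{1/4})$. Hence $\delta=o(n^{1/4})=o(\sqrt n)$. The moment requirement $\EE\varepsilon^6<\infty$ is automatic under the Gaussian noise assumption inherited from Theorem~\ref{debiase_formula_applicable_unkowncov}, so Theorem~\ref{sigma_hat_rate} applies verbatim and yields consistency of $\hat\sigma$.

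The hard part is obtaining the oracle inequality of the first step in precisely the displayed form. The subtlety is that $\bbeta^*$ itself need not possess a small tangent cone --- this is the very reason for introducing $\vb'$ --- so one cannot invoke a prediction bound localized at $\bbeta^*$. Instead the argument must be routed through the surrogate $\vb'$: start from the basic inequality $\|\bY-\Xb\hat\bbeta\|\le\|\bY-\Xb\vb'\|$, use $\hat\bbeta-\vb'\in\cT_K(\vb')$, control the cross term $\langle\bvarepsilon,\Xb(\hat\bbeta-\vb')\rangle$ by the Gaussian complexity of $\cT_K(\vb')\cap\mathbb{S}^{p-1}$ via a localized empirical process bound, using the bounded spectrum of $\bSigma$ to relate $\|\Xb\ub\|$ to $\sqrt n\,\|\ub\|$ uniformly over the tangent cone, and finally absorb the approximation cost $\|\Xb(\vb'-\bbeta^*)\|\asymp\sqrt n\,\|\vb'-\bbeta^*\|$ through the triangle inequality. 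Confirming that \cite{neykov2019gaussian} delivers a bound phrased in terms of $\vb'$ rather than $\bbeta^*$, with the concentration holding with probability tending to one, is where the genuine work lies; the rate verification is then immediate.
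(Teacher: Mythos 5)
Your proposal is correct and follows essentially the same route as the paper: the paper's proof invokes its intermediate display \eqref{bound_X(beta-v)} from the proof of Lemma \ref{corollary2.6} (its self-contained version of the in-sample prediction bound from \cite{neykov2019gaussian}, localized at $\vb'$ via Lemma \ref{a1}, which holds for any $\vb\in K$), bounds $\|\bvarepsilon\|/\sqrt{n}$ by Chebyshev, controls $\|\Xb(\vb'-\bbeta^*)\|/\sqrt{n}\lesssim\|\vb'-\bbeta^*\|$ by Gordon's escape through mesh, and finishes with the triangle inequality exactly as you sketch. Your rate check $\delta=o(n^{1/4})=o(\sqrt n)$ matches the paper's, and the "hard part" you flag is precisely what the paper's appendix already supplies.
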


Our debiasing algorithm does not require the population covariance matrix $\bSigma$ to be known as long as it has bounded spectrum. Can one do better if one is given knowledge of $\bSigma$? It is known \citep{javanmard2018debiasing} that with prior knowledge of $\bSigma$, the LASSO estimator $\hat \bbeta$ can be debiased with the following formula:
\begin{align}
    \hat\bbeta_d = \hat\bbeta + n^{-1}\bSigma^{-1} \tilde\Xb\T (\tilde\bY - \tilde\Xb\hat \bbeta).
    \label{debiase_formula_split:main:text}
\end{align}
What is more, \cite{javanmard2018debiasing} show that when the design is Gaussian the requirement for the debiasing procedure to work with known $\bSigma$ is much weaker compared to the requirement with unknown $\bSigma$. See also \cite{bellec2019biasing} for a sharpened version of this result. In fact \cite{javanmard2018debiasing} also show that the same debiased estimator works without sample splitting under more stringent assumptions, but this is out of the scope of the present paper. Lemma \ref{debiase_formula_applicable} will show that the debiasing formula in \eqref{debiase_formula_split:main:text} also works for any convex constrained least squares estimator under proper conditions. Afterwards we will compare the conditions needed to successfully debias a convex constrained least squares estimator $\hat\bbeta$ for the known and unknown $\bSigma$ cases. Similarly to the LASSO case, without the knowledge of $\bSigma^{-1}$, we impose more stringent assumptions on the structure of tangent cones of the parameter space $K$.
\begin{lemma}
\label{debiase_formula_applicable}
Consider a linear model as in \eqref{gaussian_model} with Gaussian errors $\varepsilon_i \sim N(0,\sigma^2)$. Further assume that $\bX_i\sim N(0, \bSigma)$. Let $\{a_n\}_{n = 1}^{\infty}$ be any slowly diverging sequence with $n$, and let $\vb'\in K$, be a vector such that $\|\vb'-\bbeta^*\|\,a_n = o(1)$, $\overline w(\cT_K(\vb')\cap\mathbb{S}^{p-1})a_n=o(\sqrt{n})$ and $\overline w(\cT_K(\vb')\cap\mathbb{S}^{p-1}) \rightarrow \infty$. Let $\hat\bbeta$ be a convex constrained least squares estimator obtained by \eqref{constrained_ls} on the first half of the data. The debiased $j$\textsuperscript{th} coefficient $\hat\bbeta_{d}^{(j)}$ obtained by \eqref{debiase_formula_split:main:text} is conditionally asymptotically normal with mean equal to $\bbeta^{*(j)}$. In particular, let $Z = \frac{1}{\sqrt{n}}\bSigma^{-1} \tilde\Xb\T\bvarepsilon$, and $\hat\bSigma = \frac{1}{n}\tilde\Xb\T\tilde\Xb$ be the empirical Gram matrix of the second half, we have
\begin{align}\label{asymp:normality:convex:least:squares}
    \sqrt{n}(\hat\bbeta_d^{(j)} - \bbeta^{*(j)}) = & Z^{(j)} + \Delta^{(j)},\quad
    Z^{(j)}|\tilde\Xb \sim N(0, \sigma^2\eb^{(j)\top}\bSigma^{-1}\hat\bSigma\bSigma^{-1}\eb^{(j)}),\nonumber\\
    ~~~~ & \Delta^{(j)} = \sqrt{n}(\eb^{(j)\top}\bSigma^{-1} \hat\bSigma - \eb^{(j)\top})(\bbeta^* - \hat\bbeta),
\end{align}
and $\Delta^{(j)} = o_p(1)$.
\end{lemma}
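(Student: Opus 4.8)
The plan is to reduce the statement to the single estimate $\Delta^{(j)} = o_p(1)$, since the algebraic identity and the conditional Gaussianity of $Z^{(j)}$ are essentially bookkeeping. First I would substitute the second-half model $\tilde\bY = \tilde\Xb\bbeta^* + \bvarepsilon$ into the debiasing formula \eqref{debiase_formula_split:main:text} and use $\hat\bSigma = n^{-1}\tilde\Xb\T\tilde\Xb$ to obtain
\begin{align*}
\hat\bbeta_d - \bbeta^* = (\Ib - \bSigma^{-1}\hat\bSigma)(\hat\bbeta - \bbeta^*) + n^{-1}\bSigma^{-1}\tilde\Xb\T\bvarepsilon .
\end{align*}
Extracting the $j$-th coordinate and multiplying by $\sqrt{n}$ immediately produces $Z^{(j)} = n^{-1/2}\eb^{(j)\top}\bSigma^{-1}\tilde\Xb\T\bvarepsilon$ and $\Delta^{(j)} = \sqrt{n}(\eb^{(j)\top}\bSigma^{-1}\hat\bSigma - \eb^{(j)\top})(\bbeta^* - \hat\bbeta)$, matching \eqref{asymp:normality:convex:least:squares}. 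For the distribution of $Z^{(j)}$, I would note that $\hat\bbeta$ and $\tilde\Xb$ are independent of the second-half noise $\bvarepsilon \sim N(0,\sigma^2\Ib_n)$, so conditionally on $\tilde\Xb$ the quantity $Z^{(j)}$ is a fixed linear functional of a Gaussian vector, hence exactly $N\!\big(0,\ \sigma^2 n^{-1}\eb^{(j)\top}\bSigma^{-1}\tilde\Xb\T\tilde\Xb\bSigma^{-1}\eb^{(j)}\big) = N\!\big(0,\ \sigma^2\eb^{(j)\top}\bSigma^{-1}\hat\bSigma\bSigma^{-1}\eb^{(j)}\big)$. This step is routine.

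The substance is the control of $\Delta^{(j)}$. Writing $\bmeta = \bSigma^{-1}\eb^{(j)}$ (the feasible point of Theorem \ref{feasible_point}), and using $\bSigma = \bSigma\T$, I would rewrite $\eb^{(j)\top}\bSigma^{-1}\hat\bSigma - \eb^{(j)\top} = \bmeta\T(\hat\bSigma - \bSigma)$, so that $\Delta^{(j)} = \sqrt{n}\,\bmeta\T(\hat\bSigma - \bSigma)(\bbeta^* - \hat\bbeta)$. The key structural observation is that $\hat\bbeta$ depends only on the first data split while $\hat\bSigma$ depends only on the second; conditioning on the first split therefore makes $\bbeta^* - \hat\bbeta$ a \emph{deterministic} vector, and $\Delta^{(j)}$ becomes a fixed bilinear form $\sqrt{n}\,\bmeta\T(\hat\bSigma - \bSigma)(\bbeta^*-\hat\bbeta)$ in the second-half design, with $\|\bmeta\|$ bounded by $1/\lambda_{\min}(\bSigma)$ because the spectrum of $\bSigma$ is bounded.

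The main obstacle, and the crux of the argument, is the concentration of this bilinear form. Expanding it as an average $n^{-1}\sum_i \big[(\bmeta\T\tilde\Xb_i)(\tilde\Xb_i\T(\bbeta^*-\hat\bbeta)) - \bmeta\T\bSigma(\bbeta^*-\hat\bbeta)\big]$, each summand is a centered product of two jointly Gaussian (or bounded) variables, hence sub-exponential with parameter $\lesssim \|\bmeta\|\,\|\bbeta^*-\hat\bbeta\|$. Applying Bernstein's inequality conditionally on the first split, and taking the number of standard deviations to be the slowly diverging $a_n$ (so that $a_n \ll \sqrt{n}$ keeps us in the sub-Gaussian regime while $e^{-c a_n^2}\to 0$), I would obtain that with probability converging to one $|\Delta^{(j)}| \lesssim \|\bmeta\|\,\|\bbeta^*-\hat\bbeta\|\, a_n \lesssim \|\hat\bbeta - \bbeta^*\|\, a_n$. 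I would then invoke the constrained least squares guarantee of \cite[Corollary 2.7]{neykov2019gaussian} (the same rate underlying Lemma \ref{sigma_hat_cvs_ls}), which gives, with probability tending to one, $\|\hat\bbeta - \bbeta^*\| \lesssim \|\vb'-\bbeta^*\| + \sigma\,\overline w(\cT_K(\vb')\cap\mathbb{S}^{p-1})/\sqrt{n}$. Combining the two bounds yields $|\Delta^{(j)}| \lesssim \|\vb'-\bbeta^*\|\,a_n + \overline w(\cT_K(\vb')\cap\mathbb{S}^{p-1})\,a_n/\sqrt{n} = o(1)$, which is exactly where the two hypotheses $\|\vb'-\bbeta^*\|\,a_n = o(1)$ and $\overline w(\cT_K(\vb')\cap\mathbb{S}^{p-1})\,a_n = o(\sqrt{n})$ are consumed.

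I would emphasize in the write-up why this works under hypotheses much weaker than those of Theorem \ref{main_rst_cvs_ls}: because sample splitting renders $\bbeta^*-\hat\bbeta$ deterministic given the first half, the knowledge of $\bSigma^{-1}$ lets us use a \emph{pointwise} concentration of the bilinear form in a fixed direction, costing only the single factor $a_n$, rather than a uniform empirical process bound over $\cT_K(\vb')\cap\mathbb{S}^{p-1}$ à la Theorem \ref{feasible_point}, which would instead cost an extra factor of $\overline w(\cT_K(\vb')\cap\mathbb{S}^{p-1})$ and force the stronger requirement $\overline w^2(\cT_K(\vb')\cap\mathbb{S}^{p-1}) = o(\sqrt{n})$. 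This is the precise mechanism behind the claim in the text that debiasing with known $\bSigma$ requires weaker conditions.
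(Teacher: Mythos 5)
Your proposal is correct and follows essentially the same route as the paper: the identical decomposition into $Z^{(j)}+\Delta^{(j)}$, the conditional Gaussianity of $Z^{(j)}$ given $\tilde\Xb$, and the control of $\Delta^{(j)}$ by conditioning on the first split so that $\bbeta^*-\hat\bbeta$ is deterministic, bounding the summands $\eb^{(j)\top}\bSigma^{-1}\tilde\bX_i\tilde\bX_i\T\ub-u_j$ as centered sub-exponential variables, applying Bernstein at the $a_n$-deviation level, and then invoking the constrained least squares rate (Lemma \ref{corollary2.6}) to convert $\|\hat\bbeta-\bbeta^*\|a_n$ into $o(1)$ via the two stated hypotheses. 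The paper merely makes explicit the integration over the high-probability event $\|\bSigma^{\frac{1}{2}}(\hat\bbeta-\bbeta^*)\|\leq\theta$ that your sketch handles implicitly.
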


Suppose $\bbeta^* = \vb'$ is $s$-sparse and $K = \{\bbeta : \|\bbeta\|_1 \leq \|\bbeta^*\|_1\}$. The condition $\overline w(\cT_K(\vb')\cap\mathbb{S}^{p-1})a_n= \overline w(\cT_K(\vb')\cap\mathbb{S}^{p-1})a_n = o(\sqrt{n})$ in Lemma \ref{debiase_formula_applicable} is in fact a condition on the sparsity $s$. The Gaussian complexity of the tangent cone $\cT_K(\bbeta^*)$ can be evaluated in terms of the sparsity $s$ as \citep[Proposition 3.10]{chandrasekaran2012convex}
\begin{align}
    \overline w(\cT_K(\bbeta^*)\cap\mathbb{S}^{p-1}) = O\bigg(\sqrt{s\log\frac{ep}{s}}\bigg).
\label{lasso_gw}
\end{align}
Thus if $s$ doesn't scale with $n, p$ we have $s=o(n/ (a_n^2\log p) )$. If one selects $a_n=\sqrt{\log p}$, the condition in Lemma \ref{debiase_formula_applicable} becomes $s=o(n/ (\log p)^2)$ (assuming $p \rightarrow \infty$ as $n \rightarrow \infty$), which matches the condition needed in debiasing the regularized LASSO for the known covariance case \citep{javanmard2018debiasing}. Assuming $a_n = \sqrt{\log p}$ is convenient since in this case by tracking the proof of Lemma \ref{debiase_formula_applicable} and applying the union bound one may claim that \eqref{asymp:normality:convex:least:squares} holds for all $j \in [p]$, which is precisely the setting of \citep{javanmard2018debiasing}.  

The condition $\overline w^2(\cT_K(\vb')\cap\mathbb{S}^{p-1})=o(\sqrt{n})$ needed in Theorem \ref{main_rst_cvs_ls} is more stringent than the condition $\overline w(\cT_K(\vb')\cap\mathbb{S}^{p-1})a_n=o(\sqrt{n})$ in Lemma \ref{debiase_formula_applicable}, which can be viewed as the price we pay for having an unknown covariance. On an important note, presently we do not have corresponding lower bounds showing that these conditions are also necessary.
We may say however that in the case when $K = \{\bbeta : \|\bbeta\|_1 \leq \|\bbeta^*\|_1\}$ the condition $\overline w^2(\cT_K(\vb')\cap\mathbb{S}^{p-1})=o(\sqrt{n})$ reduces to a condition on the sparsity parameter $s$ by \eqref{lasso_gw}. The equivalent condition in terms of $s$ is $s \log ep/s = o(\sqrt{n})$ which matches the assumption needed in debiasing the regularized LASSO for the unknown covariance case \citep{cai2017confidence, javanmard2018debiasing}. Finally we would like to add that we do show lower bounds on the confidence interval lengths in Appendix \ref{lower:bounds:appendix} of the supplement, and further one additional example on non-negative least squares can be found in Appendix \ref{non:negative:least:squares} of the supplement.

\subsection{Monotone Cone Regression}
\label{case_mnt}
Consider the case where the true coefficient $\bbeta^*$ is in a monotone cone parameter space $M^p$ in $\RR^p$ defined as
$$M^p = \{(\beta_1,\ldots,\beta_p)\T\in\RR^p: \beta_1 \leq \beta_2 \leq \ldots \leq \beta_p\}.$$
Notice that $M^p$ is convex. Moreover, the set of monotone vectors with $l$ constant pieces is defined as \citep{gao2017minimax}
\begin{align*}
    M_l^p = \big\{ & (\beta_1,\ldots,\beta_p)\T\in\RR^p: \text{there exist }\{a_j\}^l_{j=0}\text{ and }\{u_j\}^l_{j=0}\text{ such that} \\
    & 0 = a_0 \leq a_1 \leq \ldots \leq a_l = p,\\
    & u_1 \leq u_2 \leq \ldots \leq u_l,\text{ and }\beta_i = u_j\text{ for all }i\in(a_{j-1}, a_j]
    \big\}.
\end{align*}
Given the prior knowledge $\bbeta^*\in M^p$, the constrained least squares estimator $\hat\bbeta$ in \eqref{constrained_ls} can be solved by incorporating isotonic regression in projected gradient descent. 

To find the desired vector $\vb$ in step \ref{step_1}, we solve \eqref{step2_cvs_ls} with $\overline w(\cT_{M^p}(\vb)\cap\mathbb{S}^{p-1}) = \sqrt{l\log(ep/l)}$. The latter is a legitimate upper bound on the Gaussian complexity of the tangent cone, as the result in \cite[(1.19), (1.22), Proposition 3.1]{bellec2018sharp} shows that for a monotone cone $M^p\in\RR^p$, the complexity of the tangent cone at any vector $\vb$ comprised of $l$ constant pieces has an explicit upper bound $w(\cT_{M^p}(\vb)\cap\mathbb{S}^{p-1}) \leq \sqrt{l\log(ep/l)}$. Thus the optimization problem \eqref{step2_cvs_ls} can be simplified to
\begin{align}
\label{mnt_step2}
    \argmin_{\vb\in M_l^p}\|\hat \bbeta - \vb\| + \sqrt{\frac{l}{n}\log\frac{ep}{l}}.
\end{align}
For a fixed $l$, the term $\sqrt{(l/n)\log(ep/l)}$ is constant for all $\vb\in M_l^p$. Thus in each $M_l^p$, the solution of $\argmin_{\vb\in M^p_l}\|\hat \bbeta - \vb\| + \sqrt{(l/n)\log(ep/l)}$ should minimize $\|\hat\bbeta - \vb\|$, which is exactly the projection of $\hat\bbeta$ to $M_l^p$, denoted as $\Pi_{M_l^p}(\hat\bbeta)$. Let $p'$ be the number of constant pieces in $\hat\bbeta$, where $p'\leq p$. 
The optimization problem \eqref{mnt_step2} can be converted to an optimization problem over finitely many candidates. Define
$$\hat l = \argmin_{l\in[1,p']}\|\hat \bbeta - \Pi_{M^p_{l}}(\hat\bbeta)\| + \sqrt{\frac{l}{n}\log\frac{ep}{l}}.$$
Since there is no point in looking for values of $l > p'$ as this will only increase the loss function (compared to when $l = p'$), the desired $\vb$ in \eqref{mnt_step2} is exactly $\Pi_{M^p_{\hat l}}(\hat\bbeta)$. There is an efficient projection algorithm of $\hat\bbeta$ to $M^p_l$ as proposed by \cite[Algorithm 1]{gao2017minimax} which takes $O(p'^3)$ time to compute all projections for $l\in[1,p']$.

Once $\vb$ is obtained, we solve the optimization program \eqref{opt_step3} using subgradient descent as in Algorithm \ref{algo_step3}. The final piece of the puzzle is to show how to calculate the projections $\Pi_{\cT_{M^p}(\vb)}(\cdot)$ and $\Pi_{-\cT_{M^p}(\vb)}(\cdot)$. We compute them by decomposing $\cT_{M^p}(\vb)$. Since $\vb$ is $\hat l$ piece-wise monotone, the tangent cone of $M^p$ at $\vb$ can be decomposed as \citep[Proposition 3.1]{bellec2018sharp}
\begin{align*}
    \cT_{M^p}(\vb) = M^{p_1} \times M^{p_2} \times \ldots \times M^{p_{\hat l}},
\end{align*}
where each $p_i$ is the length of each constant piece of $\vb$, and $p_1+\ldots+p_{\hat l}=p$. Thus for any vector $\ub = (u_1, u_2, \ldots, u_p)\T \in\RR^p$, the projection of $\ub$ to $\cT_{M^p}(\vb)$ is \citep[Equation B.2]{amelunxen2014living}
\begin{align}
    \Pi_{\cT_{M^p}(\vb)}(\ub) = \bigg(&\Pi_{M^{p_1}}\big( (u_1, \ldots, u_{p_1}) \big)\T, \Pi_{M^{p_2}}\big( (u_{p_1+1}, \ldots, u_{p_1+p_2}) \big)\T ,\nonumber\\ &\ldots , \Pi_{M^{p_{\hat l}}}\big( (u_{p_1+\ldots + p_{\hat{l}-1}+1}, \ldots, u_p) \big)\T\bigg)\T,
\label{proj_mnt_tgcone}
\end{align}
noting that projections into a monotone cone, as in \eqref{proj_mnt_tgcone} can be efficiently implemented via the PAVA algorithm for isotonic regression \cite[see e.g.]{robertson1988order}. Once we have computed $\hat\bmeta$, we can debias $\hat\bbeta$ using the formula in step \ref{step_2}. The entire procedure to get a debiased estimation $\hat\beta_d^{(j)}$ for monotone cone regression is summarized in Algorithm \ref{algo_mnt}.

\begin{remark}\label{remark:monotone:cone}
We remark that thanks to Theorem \ref{main_rst_cvs_ls}, $\bbeta^*$ need not be piecewise constant. In fact, by Lemma 2 of \cite{bellec2015sharp} we know that any vector $\bbeta^* \in M^p$, can be approximated within $\|\bbeta^* - \vb'\| \leq \frac{\bbeta^{*(p)} - \bbeta^{*(1)}}{2k}$ by a vector $\vb'$ consisting of at most $k$ constant pieces. 

So long as $\bbeta^{*(p)} - \bbeta^{*(1)}$ is bounded, it suffices that $p$ is such that we can select $k \gg n^{1/4}$ with $k \log p/k = o(\sqrt{n})$, and the regression with signal $\bbeta^*$ can be debiased.
\end{remark}
\begin{algorithm}[ht]
\caption{Debias the $j$\textsuperscript{th} Coefficient for Monotone Cone Regression}
\label{algo_mnt}
\begin{algorithmic}
\STATE\textbf{Input:} Two equal size partitions $(\overline\Xb, \overline\bY)$ and $(\tilde\Xb, \tilde\bY)$, $\hat\bbeta$ obtained by projected gradient descent with isotonic regression.\\
\STATE\textbf{Initialize:} Empirical Gram matrix of the second partition $\hat\bSigma=\frac{1}{n}\tilde\Xb\T\tilde\Xb$.
\begin{enumerate}
    \item Solve $\hat l \leftarrow \argmin_{l\in[1,p']}\|\hat \bbeta - \Pi_{M^p_{l}}(\hat\bbeta)\| + \sqrt{\frac{l}{n}\log\frac{ep}{l}}$.\\
    $\vb \leftarrow \Pi_{M^p_{\hat l}}(\hat\bbeta)$.
    \item Run Algorithm \ref{algo_step3}. Compute $\Pi_{\cT_{M^p}(\vb)}(\cdot)$ by isotonic regression (PAVA) with \eqref{proj_mnt_tgcone}. For $\Pi_{-\cT_{M^p}(\vb)}(\cdot)$ use \eqref{neg:cone:calc}. 
    The debiased $j$\textsuperscript{th} coefficient equals $\hat\bbeta_{d}^{(j)} \leftarrow \vb^{(j)} + n^{-1}\hat\bmeta\T \tilde\Xb\T (\tilde\bY - \tilde\Xb\vb)$.
\end{enumerate}
\end{algorithmic}
\end{algorithm}

\subsection{Positive Monotone Cone Regression}
\label{case_pos_mnt}
Based on the analysis in Section \ref{case_mnt} for the monotone cone $M^p$, we can analogously develop the debiasing technique when the true coefficient is inside of a positive monotone cone defined as
$$M^{p+} =  \{(\beta_1,\ldots,\beta_p)\T\in\RR^p: 0 \leq \beta_1 \leq \beta_2 \leq \ldots \leq \beta_p\}.$$
The algorithm to debias the $j$\textsuperscript{th} coefficient in positive monotone cone regression is the same as Algorithm \ref{algo_mnt} except for some minor modifications.
Specifically, $\hat\bbeta$ can also be obtained by projected gradient descent, but such a projection onto a positive monotone cone is done by fitting an isotonic regression followed by assigning zeros to all the negative coordinates \citep{nemeth2012project}. The procedure of finding $\vb$ in step \ref{step_1} is the same as the monotone cone case. This is so since $\hat\bbeta$ is always positive and the algorithm in \cite{gao2017minimax} computes the projections of $\hat\bbeta$ onto $M_l^p$ by further averaging itself, all the projections automatically belong to the positive monotone cone.
For step \ref{step_2}, we need to project a vector $\ub = (u_1,\ldots,u_p)\T\in\RR^p$ onto $\cT_{M^{p+}}(\vb)$---the tangent cone of the positive monotone cone $M^{p+}$ at $\vb$. By Proposition \ref{decomp_tan_pos}, $\cT_{M^{p+}}(\vb)$ can be decomposed into Cartesian products of a positive monotone cone and several other monotone cones. Thus the projection onto $\cT_{M^{p+}}(\vb)$ can be computed as a Cartesian product of the projection onto every component.

\begin{proposition}
\label{decomp_tan_pos}
Suppose $\vb\in M^{p+}$ has $l$ constant pieces, and the length of each constant piece is $p_i$ for $i\in[l]$. If the first constant piece consists of zeros, the tangent cone of $M^{p+}$ at $\vb$ can be decomposed as
$$\cT_{M^{p+}}(\vb) = M^{p_1+} \times M^{p_2} \times \ldots \times M^{p_{l}},$$
otherwise it is
$$\cT_{M^{p+}}(\vb) = M^{p_1} \times M^{p_2} \times \ldots \times M^{p_{l}}.$$
\end{proposition}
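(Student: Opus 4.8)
The plan is to exploit the fact that $M^{p+}$ is a polyhedral cone differing from the ordinary monotone cone $M^p$ by exactly one extra half-space, namely the sign constraint $\beta_1 \geq 0$, and then to reduce the computation of $\cT_{M^{p+}}(\vb)$ to the already-known decomposition of $\cT_{M^p}(\vb)$ recorded in \cite[Proposition 3.1]{bellec2018sharp}. Concretely, I would write $M^{p+} = \{\bbeta : \beta_1 \geq 0,\ \beta_{i+1} - \beta_i \geq 0 \text{ for } i \in [p-1]\}$, so that $M^{p+}$ is cut out by finitely many linear inequalities $\ab_k\T \bbeta \geq 0$, and keep track of which of these are tight at $\vb$.

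First I would establish the active-constraint description of the tangent cone directly from the paper's definition. For $\vb$ in a polyhedron $\{\bbeta : \ab_k\T \bbeta \geq 0\}$, the radial cone $\{t(\wb - \vb) : t \geq 0,\ \wb \in M^{p+}\}$ coincides with $\{\db : \ab_k\T \db \geq 0 \text{ for every active } k\}$, where constraint $k$ is active when $\ab_k\T \vb = 0$. The $\subseteq$ inclusion is immediate by substituting $\wb = \vb + \db/t$ into the active constraints; for $\supseteq$, a direction satisfying all active constraints keeps $\vb + s\db$ feasible for all sufficiently small $s > 0$, since the finitely many inactive constraints stay strict for small $s$ while the active ones are preserved. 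In particular the radial cone is already closed, so no closure is needed and it equals the tangent cone. This is the one genuinely delicate step; everything afterward is bookkeeping.

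Next I would read off which constraints are active at $\vb$ from its block structure. Labeling the $l$ constant pieces with strictly increasing values $u_1 < \cdots < u_l$ and lengths $p_1,\ldots,p_l$, a difference constraint $\beta_{i+1} - \beta_i \geq 0$ is active precisely when coordinates $i, i+1$ lie in the same piece and is inactive across a jump between consecutive pieces, while the sign constraint $\beta_1 \geq 0$ is active iff $u_1 = 0$. Since the between-piece constraints are inactive, the surviving active constraints never couple coordinates of different blocks, which is exactly what produces the Cartesian product: each block contributes only its within-block monotonicity $d_{a_{i-1}+1} \leq \cdots \leq d_{a_i}$, i.e. a factor $M^{p_i}$.

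Finally I would split into the two cases via the extra constraint. If $u_1 > 0$, the sign constraint is inactive, the active constraints of $M^{p+}$ at $\vb$ are exactly those of $M^p$, and hence $\cT_{M^{p+}}(\vb) = \cT_{M^p}(\vb) = M^{p_1} \times \cdots \times M^{p_l}$ by \cite[Proposition 3.1]{bellec2018sharp}. If $u_1 = 0$, the only additional active constraint is $d_1 \geq 0$, so $\cT_{M^{p+}}(\vb) = \cT_{M^p}(\vb) \cap \{\db : d_1 \geq 0\}$; this extra inequality touches only the first block and thus upgrades its factor from $M^{p_1}$ (directions $d_1 \leq \cdots \leq d_{p_1}$) to $M^{p_1+}$ (directions $0 \leq d_1 \leq \cdots \leq d_{p_1}$), leaving the remaining factors untouched and giving $M^{p_1+} \times M^{p_2} \times \cdots \times M^{p_l}$. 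If one prefers to avoid reproving the characterization of the second paragraph, the same two cases can instead be argued by localization: when $u_1 > 0$ the sets $M^{p+}$ and $M^p$ coincide in a neighborhood of $\vb$ so their tangent cones agree, and when $u_1 = 0$ one simply intersects the known decomposition with the single half-space $\{d_1 \geq 0\}$.
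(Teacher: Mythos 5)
Your proof is correct, but it follows a genuinely different route from the paper's. You invoke the general active-constraint description of the tangent cone of a polyhedron: since $M^{p+}$ is cut out by the finitely many inequalities $\beta_1 \geq 0$ and $\beta_{i+1}-\beta_i \geq 0$, the tangent cone at $\vb$ is the set of directions satisfying the constraints that are tight at $\vb$, and the case split is just whether the sign constraint is tight (first piece zero) or not. The paper instead argues directly from the definition $\cT_{M^{p+}}(\vb) = \{\ub - t\vb : t \geq 0,\ \ub \in M^{p+}\}$, proving both inclusions by hand: the forward inclusion by observing that within-block monotonicity (and positivity, when the first block is zero) is preserved, and the reverse inclusion by an explicit construction — given any $\xb$ in the claimed product cone, it exhibits a concrete $t = 2\|\xb\|_{\infty}/(\epsilon_1 \wedge \epsilon_2)$ (with $\epsilon_1$ the smallest jump between pieces and $\epsilon_2$ tied to the smallest nonzero entry of $\vb$) for which $\xb + t\vb \in M^{p+}$. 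Your approach buys generality and brevity: the only real work is the standard lemma that for polyhedra the radial cone needs no closure, after which everything is bookkeeping of which constraints are active, and the same argument would handle any polyhedral cone. The paper's approach buys self-containedness: it never needs the polyhedral tangent-cone characterization, at the cost of a slightly fiddly choice of $t$ and a separate verification in each case. One small point to make explicit in your write-up: "constant pieces" must be read as maximal blocks, so that consecutive block values are strictly increasing and the between-block difference constraints are genuinely inactive; this is implicit in the paper's construction as well.
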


\begin{remark}
Similarly to the monotone cone case, the $\bbeta^*$ vector need not be piecewise constant. See Remark \ref{remark:monotone:cone}.
\end{remark}

\subsection{Constrained Group LASSO}
\label{lasso:sec}
The next example is a constrained group LASSO problem, which includes constrained LASSO as a special case. Importantly, this section will serve as a building block to our algorithm which debiases SLOPE and square-root SLOPE (see Section \ref{slope_sqrtslope:sec} below). Let us assume that the support of the vector $\bbeta^*$ is split into disjoint fixed groups $G \in \cG$, where the cardinality $|\cG| = M$, where each group $G \in \cG$ satisfies $|G| \leq B$ for some constant $B$. Define the norm $\|\bbeta\|_{2,1} = \sum_{G \in \cG} \|\bbeta_G\|$, where $\bbeta_G$ is the restriction of $\bbeta$ on the group $G$. Suppose that the convex set is $K = \{\bbeta: \|\bbeta\|_{2,1} \leq \|\bbeta^*\|_{2,1}\}$. Here we assume a prior knowledge of $\|\bbeta^*\|_{2,1}$. This is a common assumption in works analyzing the constrained version of LASSO e.g. \citep[see e.g.]{thrampoulidis2014simple, wainwright2019high}. In Section \ref{debias:group:lasso:regularied} we debias a regularized version of the group LASSO which does not need this requirement. Clearly, when the groups $G$ are singletons, the constrained group LASSO reduces to the constrained LASSO, so this is a more general example.

We now follow the outline of Algorithm \ref{algo_unknowncov} to debias the constrained group LASSO estimator. In step \ref{step_1}, in order to solve \eqref{step2_cvs_ls}, we use 
\begin{align*}
    \overline w(\cT_K(\vb)\cap\mathbb{S}^{p-1}) = \sqrt{s B + s (\sqrt{2\log(M - s)} + \sqrt{B})^2},
\end{align*}
\citep[see][Theorem 3.1]{rao2011tight}, where $s$ is the number of non-zero groups in $\vb$. 
Let $\vb_s$ be the projection of $\hat\bbeta$ onto the set of $s$-group sparse vectors with $2,1$-norm equal to $\|\bbeta^*\|_{2.1}$. The optimization \eqref{step2_cvs_ls} reduces to an optimization with finite candidates
\begin{align*}
    \hat s \leftarrow \argmin_{s\in[1, \|\hat\beta\|_0]}\|\hat \bbeta - \vb_s\| + \sqrt{\frac{s B + s (\sqrt{2\log(M - s)} + \sqrt{B})^2}{n}},
\end{align*}
and we find the output of step \ref{step_1} by choosing $\vb=\vb_{\hat s}$. 
According to Lemma \ref{beta_proj_onto_s}, the computation of the projection $\vb_s$ has a complexity $O(s)$ (after the groups of $\hat \bbeta$ have been ordered by magnitude), by greedily taking the largest $s$ groups of $\hat \bbeta$ and distributing the remaining of the 2,1-norm equally across the $s$-groups.  
\begin{lemma}
\label{beta_proj_onto_s}
Let $S$ be the set of indices of the $s$ largest in magnitude (in $\ell_2$ norm) groups of $\hat \bbeta$, and $\Lambda\geq0$ be a constant. Let $\vb_s$ be the projection of $\hat\bbeta$ onto the set T = $\{\bbeta:\, \|\bbeta\|_{2,1} = \Lambda \text{ and } \|\bbeta\|_0 = s\}$. Then $\vb_s$ satisfies
\begin{align*}
    \vb_{s,G} = \begin{cases}
    0, & \text{ if } G\notin S\\
    \hat \bbeta_{G}+\frac{\bbeta_{G}}{\|\bbeta_{G}\|}\,\frac{\Lambda - \sum_{G'\in S}\|\hat \bbeta_{G'}\|}{s}, & \text{ if } G\in S.
    \end{cases}
\end{align*}
\end{lemma}

In the above, ties in ordering the of the groups of $\hat \bbeta$ in magnitude can be broken arbitrarily. Once we obtain the vector $\vb$ in step \ref{step_1}, the projection onto the tangent cone $\cT_K(\vb)$ needed in step \ref{step_2} can be done efficiently by first finding the projection onto its polar cone --- the normal cone at $\vb$ with respect to the set $K$
\begin{align*}
    \cN_K(\vb) = \{ \gb:\, \langle \gb, \vb'-\vb\rangle\leq 0\,\,,\,\vb'\in K\}.
\end{align*}
Then the projection $\Pi_{\cT_K(\vb)}(\zb)=\zb-\Pi_{\cN_K(\vb)}(\zb)$ by applying Moreau's decomposition \citep{moreau1962decomposition}. Let $S$ be the set of non-zero coordinates of $\vb$. For the set $K$ equal to the $2,1$ ball with radius $\|\bbeta^*\|_{2,1}$, the normal cone has an explicit form. Using the result in the beginning of Section 3 of \cite{rao2011tight} we know that the normal cone for the group lasso is given by
\begin{align*}
    \cN_{K}(\vb) = \bigg\{\zb \in \RR^p: \zb_G = t \frac{{\vb_{G}}}{\|\vb_G\|} \forall \mbox{ active }G , \|\zb_G\| \leq t, \forall \mbox{ inactive } G \bigg\}.
\end{align*}
Based on the expression above, the projection of a vector $\zb$ onto the normal cone $\cN_K(\vb)$ can be converted to a one-dimensional convex optimization program with an auxiliary parameter $t$ which can be solved by golden section search, e.g. \citep{kiefer1953sequential}
\begin{align}\label{proj_tan1}
    \hat t = \argmin_{t \in [0, \max_{G \in \cG} \|\zb_G\|]}\sum_{G \mbox{ \tiny active}} \bigg\|\zb_G - t \frac{\vb_G}{\|\vb_G\|}\bigg\|^2 + \sum_{G \mbox{ \tiny inactive}} (\|\zb_G\| - t)_+^2,
\end{align}
where $(x)_+ = \max\{x, 0\}$. The search interval of $t$ has an upper bound $\max_{G \in \cG} \|\zb_G\|$ since the objective function will have a larger value for all $t>\max_{G \in \cG} \|\zb_G\|$ compared with $t=\max_{G \in \cG} \|\zb_G\|$. Once $\hat t$ is obtained the projection onto $\cN_K(\vb)$ is
\begin{align}
    \Pi_{\mathcal{N}_{K}(\vb)}(\zb) & = \begin{cases}
    \hat t\frac{\vb_G}{\|\vb_G\|}, & \text{ if } G\in S\\
    \frac{\zb_G}{\|\zb_G\|}(\hat{t}\wedge \|\zb_G\|), & \text{ if } G\notin S,
    \end{cases}
    \label{proj_tan2}
\end{align}
where $S$ denotes the set of all active groups. We remark that golden section search can get arbitrarily close to the optimal value, which is good from a computational standpoint. If one would like to obtain the exact solution (which is desirable for theoretical purposes), one can order all $\|\zb_G\|$ values and look for $t$ in between them. Each problem is a constrained quadratic polynomial so it is easy to optimize. This approach will solve \eqref{proj_tan1} precisely. 

A summary of the debiasing procedure specific for the constrained group LASSO estimator is given in Algorithm \ref{algo_lasso}. 

\begin{algorithm}[ht]
\caption{Debias the $j$\textsuperscript{th} Coefficient for constrained group LASSO}
\label{algo_lasso}
\begin{algorithmic}
\STATE\textbf{Input:} Two equal size partitions $(\overline\Xb, \overline\bY)$ and $(\tilde\Xb, \tilde\bY)$; $\hat\bbeta$ obtained by solving a constrained group LASSO problem. $K = \{\bbeta: \|\bbeta\|_{2,1} \leq \|\bbeta^*\|_{2,1}\}$.\\
\STATE\textbf{Initialize:} Empirical Gram matrix of the second partition $\hat\bSigma=\frac{1}{n}\tilde\Xb\T\tilde\Xb$.
\begin{enumerate}
    \item Solve $\hat s \leftarrow \argmin_{s\in[1,\|\hat\bbeta\|_0]}\|\hat \bbeta - \vb_s\| + C\sqrt{\frac{s B + s (\sqrt{2\log(M - s)} + \sqrt{B})^2}{n}}$. For each $s$, the projection $\vb_s$ is computed according to Lemma \ref{beta_proj_onto_s}.\\
    $\vb \leftarrow \vb_{\hat s}$;
    
    \item Run Algorithm \ref{algo_step3}. Compute $\Pi_{\cN_{K}(\vb)}(\cdot)$ by \eqref{proj_tan1}, \eqref{proj_tan2}, and apply Moreau's decomposition to get $\Pi_{\cT_K(\vb)}(\cdot)$. For $\Pi_{-\cT_K(\vb)}(\cdot)$ use \eqref{neg:cone:calc}. 
    The debiased $j$\textsuperscript{th} coefficient $\hat\bbeta_{d}^{(j)} \leftarrow \vb^{(j)} + n^{-1}\hat\bmeta\T \tilde\Xb\T (\tilde\bY - \tilde\Xb\vb)$.
\end{enumerate}
\end{algorithmic}
\end{algorithm}


\section{Minkowski Gauge Selectors}\label{minkowski:section}

In this section we first revisit a result of \cite{cai2016geometric} on Minkowski gauge minimization in a linear model. We will then show that our algorithm can be used to answer an open question left in the discussion section of \cite{cai2016geometric}, namely how can one handle inference in the unknown covariance case. We will first begin with reviewing general notation and definitions. 

Let $K' \subset \RR^p$  be a given fixed symmetric convex body such that $\mathbf{0} \in K'$ is an interior point. Without loss of generality we will suppose that $K'$ is a subset of the  $\ell_\infty$ unit ball. Consider solving the following program
\begin{align*}
    \min \rho_{K'}(\bbeta), \mbox{ s.t. } \rho_{K'}^*\bigg(n^{-1}\sum_{i \in [n]} [Y_i \bX_i - \bX_i \bX_i\T \bbeta]\bigg) \leq \lambda,
\end{align*}
where $\rho_{K'}(\bbeta)$ is the Minkowski gauge of $K'$, i.e., 
\begin{align*}
    \rho_{K'}(\bbeta) = \inf\{r \in \RR : r > 0, \bbeta \in r K'\},
\end{align*}
and $\rho_{K'}^*$ is defined as $\rho_{K'}^*(\vb) = \sup_{\ab: \rho_{K'}(\ab) \leq 1} \vb \T \ab$, and $\lambda > 0$ is a tuning parameter.

Here the true model is \eqref{gaussian_model}, where we remind the reader that we are assuming $\varepsilon_i$ are independent of $\bX_i$. Let us define the tangent cone $\overline \cT_{K'}(\bbeta) := \{t(\hb - \bbeta): t\geq 0, \rho_{K'}(\hb) \leq  \rho_{K'}(\bbeta)\}$. In terms of the previous notation we have the identity $\overline \cT_{K'}(\bbeta) = \cT_{K}(\bbeta)$ for $K = \rho_{K'}(\bbeta)K'$. We need additionally to define the so called ``asphericity ratio'' (see (2.11) of \cite{cai2016geometric}):
\begin{align*}
    \gamma_{K'}(\bbeta^*) = \sup\bigg\{\frac{\rho_{K'}(\hb)}{\|\hb\|} : \hb \in \overline \cT_{K'}(\bbeta^*) \bigg\}.
\end{align*}
Following \cite{cai2016geometric} we will assume further that $\bX_i \sim N(0, \bSigma)$, where we will additionally assume that $\bSigma$ has bounded spectrum (both from above and below). 

We now restate part of Corollary 1 of \cite{cai2016geometric}.
\begin{proposition}[Corollary 1 \cite{cai2016geometric}]\label{thm:minkowski:gauge} Set $\lambda \asymp \frac{w(\Xb K')}{n}$ and suppose that $n \gtrsim \|\bSigma^{\frac{1}{2}}\|_{op}\times \allowbreak \|\bSigma^{-\frac{1}{2}}\|_{op}(w(\cT_{K'}(\bbeta^*) \cap \mathbb{S}^{p-1}) + 1)$. In addition let $\varepsilon_i \sim N(0,\sigma^2)$. Then, with high probability (converging to $1$ asymptotically)
\begin{align*}
    \|\hat \bbeta - \bbeta^*\|\lesssim \sigma \frac{\gamma_{K'}(\bbeta^*)w(\Xb K')}{n},
\end{align*}
where, as usual, $\lesssim$ and $\gtrsim$ omit absolute constants.
\end{proposition}

\begin{remark}The original statement of \cite{cai2016geometric} imposes the condition $n \gtrsim w(\bSigma^{\frac{1}{2}}\cT_{K'}(\bbeta^*) \cap \mathbb{B}_2^{p})$ instead of $n \gtrsim \|\bSigma^{\frac{1}{2}}\|_{op} \|\bSigma^{-\frac{1}{2}}\|_{op}(w(\cT_{K'}(\bbeta^*) \cap \mathbb{S}^{p-1}) + 1)$. However,  by Remark 1.7 of \cite{plan2016generalized} we have $  w(\bSigma^{\frac{1}{2}}\cT_{K'}(\bbeta^*) \cap \mathbb{B}_2^{p})\leq \|\bSigma^{\frac{1}{2}}\|_{op} \|\bSigma^{-\frac{1}{2}}\|_{op}  w(\cT_{K'}(\bbeta^*) \cap \mathbb{B}_2^{p})$ while by the proof of Proposition 10.2 of \cite{amelunxen2014living} we know ${w}(\cT_K(\bbeta^*) \cap \mathbb{S}^{p-1})^2 \leq {w}(\cT_K(\bbeta^*) \cap \mathbb{B}_2^p)^2 \leq {w}(\cT_K(\bbeta^*) \cap \mathbb{S}^{p-1})^2 + 1$ which shows that the condition we required implies that of \cite{cai2016geometric}. 
\end{remark}

We now suggest a way to find the set $K$ and a vector $\vb$ for the program above. We will assume a known upper bound on the asphericity ratio of $\bbeta^*$. Concretely, suppose $\bar \gamma_{K'}(\bbeta)$ is a known and ideally easily calculable upper bound on $\gamma_{K'}(\bbeta)$. Such bound exists often times as the examples in \cite{cai2016geometric} show. We assume we have a known upper bound on $\bar \gamma_{K'}(\bbeta^*) \leq \bar s$ such that $\bar s^2 \overline w(\Xb K') \overline w(K')  = o(n)$. We propose to solve the following optimization
\begin{align}\label{find:v:minkowski}
    \max \rho_{K'}(\vb), \mbox{ s.t. }\nonumber\\
    \|\vb - \hat \bbeta\| \lesssim \bigg(\frac{\bar s \overline w(\Xb K') }{n}\bigg) \bigg(\frac{n}{\bar s^2 \overline w(\Xb K') \overline w(K')}\bigg)^{\gamma}\\
    \bar \gamma_{K'}(\vb) \leq \bar s.\nonumber
\end{align}
for some small $1 > \gamma > 0$. We have the following result
\begin{lemma}\label{find:v:minkowski:lemma}
Suppose $\bar s^2 \overline w(\Xb K') \overline w(K')  = o(n)$. Let the set $K = \rho_{K'}(\vb)K'$. Then under the assumptions of Proposition \ref{thm:minkowski:gauge}, $\bbeta^* \in K$, and in fact $\bbeta^*$ is a feasible point of program \eqref{find:v:minkowski} for large enough $n$. Furthermore, the solution to program \eqref{find:v:minkowski} satisfies
\begin{align*}
    \|\vb - \bbeta^*\| \overline w(\overline \cT_{K'}(\vb) \cap \mathbb{S}^{p-1}) = o_p(1),
\end{align*}
where $ \overline w(\overline \cT_{K'}(\vb) \cap \mathbb{S}^{p-1}) = \bar \gamma_{K'}(\vb) w(K')$ is an upper bound of $w(\overline \cT_{K'}(\vb) \cap \mathbb{S}^{p-1})$.
\end{lemma}

    
We will illustrate the abstract procedures above with an example on trace regression. 

\subsection{Trace Regression}

In this subsection we consider trace regression, where $\bbeta^* \in \RR^{p_1 \times p_2}$ is a low-rank matrix. Here we observe
\begin{align*}
    Y_i = \bX_i\T \operatorname{vec}(\bbeta^*) + \varepsilon_i = \tr(\Xb_i\T \bbeta^*) + \varepsilon_i,
\end{align*}
where $\Xb_i \in \RR^{p_1 \times p_2}$ is a matrix form of $\bX_i \in \RR^{p_1p_2}$. Here the set $K' = \{\bbeta \in \RR^{p_1 \times p_2}: \|\bbeta\|_* \leq 1\}$, where $\|\cdot\|_*$ denotes the nuclear norm (i.e. the sum of its singular values). By a calculation in Section 3.5.2 of \cite{cai2016geometric}, we know that $\gamma_{K'}(\bbeta^*) \leq 2 \sqrt{2r}$, where $r$ is the rank of $\bbeta^*$. In addition, by a calculation in the same section, \cite{cai2016geometric} have argued that $w(\Xb K')/\sqrt{n} \leq \sqrt{p_1 + p_2}$. The extreme points of the set $K'$ are rank one matrices $\ub \vb\T $ where $\|\ub\| = \|\vb\| = 1$. It follows that for a standard Gaussian matrix $\Gb \in \RR^{p_1 \times p_2}$ the maximizer of $\tr(\Gb \ub \vb\T) = \sigma_{\max}(\Gb)$ is the largest singular value, which is known to be of the order of $\sqrt{p_1} + \sqrt{p_2} \leq \sqrt{2(p_1 + p_2)}$ in expectation \cite[see Theorem 5.32]{Vershynin2012Introduction}. Hence $w(K') \leq \sqrt{2(p_1 + p_2)}$. Thus, to clarify, according to the condition $\bar s^2 \overline w(\Xb K') \overline w(K')  = o(n)$ in Lemma \ref{find:v:minkowski:lemma} we require that $(p_1 + p_2)r = o(\sqrt{n})$ in order for us to be able to debias $\bbeta^*$.

We will now focus on how to solve the optimization program given in the end of the last section in this example. Suppose we know an upper bound $\bar r$ of the rank of $\bbeta^*$. Then the final constraint can be substituted with $\rank(\vb) \leq \bar r$. Now suppose, that the SVD of $\hat \bbeta$ is given by $\hat \bbeta = \hat \Ub \hat \bLambda \hat \Vb\T$, where $\hat \bLambda \in \RR^{\hat r \times \hat r}$, $\hat \Ub \in \RR^{p_1 \times \hat r}$ and $\hat \Vb \in \RR^{p_2 \times \hat r}$. We have the following 
\begin{lemma}\label{lemma:find:v:trace:reg}
By expanding the orthogonal bases from the SVD of $\hat \bbeta$, write $\hat \bbeta = \tilde \Ub \tilde \bLambda \tilde \Vb\T$, where $\tilde \bLambda \in \RR^{\min(p_1, p_2) \times \min(p_1, p_2)}$, $\tilde \Ub \in \RR^{p_1 \times \min(p_1, p_2)}$ and $\tilde \Vb \in \RR^{p_2 \times \min(p_1, p_2)}$, and $\tilde \bLambda$ has the same positive entries as $\hat \bLambda$ and additional zeros, while $\tilde \Ub\T \tilde \Ub = \tilde \Vb\T \tilde \Vb= \mathbb{I}$. Assume that the diagonal entries of $\tilde \bLambda$ are ordered in a decreasing manner. The solution to program \eqref{find:v:minkowski} is given by:
\begin{align*}
    \vb = \sum_{i = 1}^{\bar r} \bigg(\tilde \lambda_i + \sqrt{\frac{U - \sum_{i > \bar r} \tilde \lambda_i^2}{\bar r}}  \bigg)\tilde\ub_i \tilde \vb_i\T,
\end{align*}
where $U = \bigg(\frac{2\sqrt{2\bar r} \sqrt{p_1 + p_2}}{\sqrt{n}}\bigg) \bigg(\frac{\sqrt{n}}{(2\sqrt{2\bar r})^2 \sqrt{2} (p_1 + p_2)}\bigg)^{\gamma}$.
\end{lemma}
\begin{remark}
The reader may notice that Lemma \ref{lemma:find:v:trace:reg} is implicitly assuming that $U > \sum_{i > \bar r} \tilde\lambda_i^2$. From Lemma \ref{find:v:minkowski:lemma} we know that for sufficiently large $n$, $\bbeta^*$ is a feasible point of program \eqref{find:v:minkowski}, and $\operatorname{rank}(\bbeta^*) \leq \overline r$. By the proof of Lemma \ref{lemma:find:v:trace:reg} it is simple to see that when $\bbeta^*$ is feasible, it follows that $U \geq \sum_{i > r} \tilde \lambda_i^2 \geq \sum_{i > \overline r} \tilde \lambda_i^2$, where we denoted the rank of $\bbeta^*$ with $r$.
\end{remark}

Next on the agenda is to find the projection onto $\overline \cT_{K'}(\vb)$. According to \cite{chandrasekaran2012convex}, the normal cone to the tangent cone $\overline \cT_{K'}(\vb)$ is given by the following formula
\begin{align*}
    \overline \cN_{K'}(\vb) = \{t \Ub \Vb\T + \Wb: \Wb\T \Ub = 0, \Wb \Vb = 0, \sigma_{\max}(\Wb) \leq t, t \geq 0\},
\end{align*}
where $\vb = \Ub \bLambda \Vb\T$ is the SVD of $\vb$, and $\sigma_{\max}(\Wb)$ denotes the largest singular value of $\Wb$. Let $\Ab \in \RR^{p_1\times p_2}$  be a matrix that we wish to project on $\overline \cT_{K'}(\vb)$. By Moreau's decomposition we know that if we project $\Ab$ onto $\overline \cN_{K'}(\vb)$ and the resulting matrix is $\Bb$ then the projection onto $\overline \cT_{K'}(\vb)$ is given by $\Ab - \Bb$. To this end we need to solve the following problem
\begin{align*}
    \min \|\Ab - \Bb\|_F^2, \mbox{s.t. } \Bb \in \overline  \cN_{K'}(\vb).
\end{align*}

Let $\Pb_{\Vb^\perp}$ and $\Pb_{\Ub^\perp}$ be the projections on the perp column spaces of $\Vb \in \RR^{p_2 \times r}$ and $\Ub \in \RR^{p_1 \times r}$. We have the following result.

\begin{proposition}\label{trace:reg:projection:proposition} Solve the following optimization problem by PAVA
\begin{align*}
    (\hat t, \hat \sigma_1,\ldots, \hat \sigma_{\min(p_1,p_2) - r}) & = \argmin -2 t \tr(\Vb\T \Ab\T \Ub) - 2\sum_{i = 1}^{\min(p_1,p_2) - r}\bar\lambda_i \sigma_i + t^2 r + \sum_{i = 1}^{\min(p_1,p_2) - r} \sigma_i^2,\\
    & ~~~~~ \mbox{s.t. } t \geq \sigma_1 \geq \sigma_2 \geq \ldots \geq \sigma_{\min(p_1,p_2) - r},
\end{align*}
where $\bar \lambda_i$ form a decreasing sequence and are given by 
\begin{align*}
    \Pb_{\Ub^{\perp}}\Ab \Pb_{\Vb^{\perp}} = \sum_{i = 1}^{\min(p_1, p_2) - r} \bar \lambda_i \bar \ub_i \bar \vb_i\T.
\end{align*}
Then $\Bb$ is given by
\begin{align*}
    \Bb = \hat t \Ub \Vb\T + \sum_{i = 1}^{\min(p_1, p_2) - r} \hat \sigma_i \bar \ub_i \bar \vb_i\T.
\end{align*}
\end{proposition}

A summary of the debiasing procedure specific for trace regression is given in Algorithm \ref{algo_trace}. 


    
\begin{algorithm}[ht]
\caption{Debias the $j$\textsuperscript{th} coefficient for trace regression.}
\label{algo_trace}
\begin{algorithmic}
\STATE\textbf{Input:} Two equal size partitions $(\overline\Xb, \overline\bY)$ and $(\tilde\Xb, \tilde\bY)$ (here these partitions are $n \times p_1p_2$ matrices); $\hat\bbeta$ obtained by solving a Minkowski gauge minimization for trace regression. $K = \rho_{K'}(\vb) K'$ where $K'$ is the set of matrices with nuclear norm $\leq 1$.\\
\STATE\textbf{Initialize:} Empirical Gram matrix of the second partition $\hat\bSigma=\frac{1}{n}\tilde\Xb\T\tilde\Xb$.
\begin{enumerate}
    \item Use Lemma \ref{lemma:find:v:trace:reg} to output $\vb$, and the set $K = \rho_{K'}(\vb) K'$.
    
    \item Run Algorithm \ref{algo_step3}. Compute $\Pi_{\overline \cN_{K'}(\vb)}(\cdot) = \Pi_{ \cN_{K}(\vb)}(\cdot)$ by Proposition \ref{trace:reg:projection:proposition}, and apply Moreau's decomposition to get $\Pi_{\overline \cT_{K'}(\vb)}(\cdot) = \Pi_{ \cT_{K}(\vb)}(\cdot)$. For $\Pi_{-\overline \cT_{K'}(\vb)}(\cdot) = \Pi_{- \cT_{K}(\vb)}(\cdot)$ use \eqref{neg:cone:calc}. 
    The debiased $j$\textsuperscript{th} coefficient $\hat\bbeta_{d}^{(j)} \leftarrow \operatorname{vec}(\vb)^{(j)} + n^{-1}\hat\bmeta\T \tilde\Xb\T (\tilde\bY - \tilde\Xb\operatorname{vec}(\vb))$. (here $\hat \bmeta$ is in vectorized form).
\end{enumerate}
\end{algorithmic}
\end{algorithm}

Finally, we should mention that works on debiasing trace regression are scarce in the literature. We would like to single out a couple of papers \citep{xia2021statistical, carpentier2018adaptive} which handle this problem. The methods proposed in these works differ significantly with our algorithm above. The work of \cite{carpentier2018adaptive} requires that the covariates satisfy the restricted isometry property which is arguably more stringent than our assumption on the covariance of the data, while \cite{xia2021statistical} require a certain condition on the SVD of $\bbeta^*$ which we do not need.

\section{Minkowski Gauge Regularization}\label{minkowski:gauge:regularization:sec}

In this section we present a new result regarding regularized least squares with a Minkowski gauge of a convex set. We then, show how one can construct the set $K$ and the vector $\vb$ in a specific instance of this setting --- the (regularized) group LASSO. Similarly to the previous section let $K'$ be a convex body containing $\mathbf{0}$ as an interior point, and $\rho_{K'}(\bbeta)$ be its associated Minkowski gauge. If one has a linear model, as in \eqref{gaussian_model}, the set $K'$ may incorporate some prior knowledge of the parameter $\bbeta^*$ and a natural procedure which has been previously proposed \cite[see equation (22) of ][e.g.]{chandrasekaran2012convex} is
\begin{align*}
\hat \bbeta_{\lambda} := \argmin_{\bbeta} n^{-1}\sum_{i \in [n]} (Y_i - \bX_i\T \bbeta)^2 + \lambda \rho_{K'}(\bbeta),
\end{align*}
for some tuning parameter $\lambda \geq 0$. One special example of this is the regularized version of LASSO e.g., where the set $K'$ is the cross-polytope. We now have the following result. 
\begin{theorem}\label{funamental:theorem:minkowski:reg} Let us observe $n$ i.i.d. samples from model \eqref{gaussian_model}, where $\bX_i \sim N(0,\bSigma)$, where $\bSigma$ has a bounded spectrum in the sense that there exist absolute constants $c,C > 0$ such that $c \leq \lambda_{\min}(\bSigma) \leq \lambda_{\max}(\bSigma) \leq C$. For any $\wb \in \RR^p$ there exists a $\lambda^*_{\wb}$ such that for any $\lambda \geq \lambda^*_{\wb}$: $\rho_{K'}(\hat \bbeta_{\lambda}) \leq \rho_{K'}(\wb)$. Suppose now that $\wb$ is such that $\overline w(\overline \cT_{K'}(\wb) \cap \mathbb{S}^{p-1}) \rightarrow \infty$ and $\overline w(\overline \cT_{K'}(\wb) \cap \mathbb{S}^{p-1}) = o(\sqrt{n})$. For any value of $\lambda \geq \lambda_{\wb}^*$ with high probability (converging to $1$ asymptotically) we have 
\begin{align*}
\|\hat \bbeta_{\lambda} - \bbeta^*\| \lesssim \|\bbeta^* - \wb\| + \frac{\sigma \overline w(\overline \cT_{K'}(\wb) \cap \mathbb{S}^{p-1})}{\sqrt{n}} + \lambda s(\wb),
\end{align*}
where 
\begin{align*}
s(\wb) = \sup_{\wb' : \rho_{K'}(\wb') \leq \rho_{K'}(\wb)} \frac{\rho_{K'}(\wb) - \rho_{K'}(\wb')}{\|\wb - \wb'\|}.
\end{align*}
In addition, define the quantity
\begin{align*}
s(\wb, r) = \sup_{\wb' : \rho_{K'}(\wb') \leq \rho_{K'}(\wb), \|\wb - \wb'\| \leq r} \rho_{K'}(\wb) - \rho_{K'}(\wb').
\end{align*}
Then with high probability $\lambda_{\wb}^* \lesssim  \frac{ \big(\|\bbeta^* - \wb\| + \frac{\sigma \overline w(\overline \cT_{K'}(\wb) \cap \mathbb{S}^{p-1})}{\sqrt{n}}\big)^2}{s(\wb, r')}$, where $r' = c_0  \big(\|\bbeta^* - \wb\| + \frac{\sigma \overline w(\overline \cT_{K'}(\wb) \cap \mathbb{S}^{p-1})}{\sqrt{n}}\big)$ for some small constant $c_0 > 0$.
\end{theorem}

Before we proceed with an example of how to use Theorem \ref{funamental:theorem:minkowski:reg} we end this section with a quick remark about the quantity $s(\wb)$. Suppose $K'$ is balanced so that the Minkowski gauge is a semi-norm. In such cases it is simple to see by the definition of $s(\wb)$ that $s(\wb) \leq \gamma_{K'}(\wb)$, where $\gamma_{K'}(\wb)$ is the asphericity ratio as defined in Section \ref{minkowski:section}. 

\subsection{Example: Debiasing the regularized Group LASSO}\label{debias:group:lasso:regularied}

To illustrate the usefulness of Theorem \ref{funamental:theorem:minkowski:reg} we will now consider an example --- a regularized group LASSO estimator. Consider the set 
\begin{align}\label{group:LASSO:set}
K' =\{\bbeta: \|\bbeta\|_{2,1} \leq 1\},
\end{align} where the $\|\cdot\|_{2,1}$ is as defined in Section \ref{lasso:sec}. We will also be using the same notation as in Section \ref{lasso:sec}. We have the following lemma.

\begin{lemma}\label{lemma:group:lasso:svr} For the group LASSO we have $s(\wb) \leq \sqrt{s}$, where $s$ denotes the number of active groups in $\wb$. Furthermore, suppose that $\min_{G \in S} \|\wb_G\| \geq h$, where $S$ denotes the set of active groups. Then $s(\wb, r) \geq s h$, for $r = \sqrt{s h^2}$. 
\end{lemma}

For brevity set 
\begin{align*}
H(s) := \sqrt{\frac{ B + (\sqrt{2\log(M - s)} + \sqrt{B})^2}{n}},
\end{align*}
where we remind the reader that $B$ denotes the maximum number of elements within a group. Now observe that for any vector $\bbeta^*$, by zeroing out the smallest in $\|\cdot\|$ norm groups of coefficients, one can find a vector $\bar \wb$ such that $\|\bbeta^* - \bar \wb\| + \sigma\frac{\overline w(\overline \cT_{K'}(\bar \wb) \cap \mathbb{S}^{p-1})}{\sqrt{n}} = \|\bbeta^* - \bar \wb\| +\sigma \sqrt{s}H(s)$ is minimal, where $s$ denotes the number of active groups in $\bar \wb$. Now we can construct $\wb$ from $\bar \wb$ by adding signal strength to each non-zero group in $\bar \wb$ in the following way $\wb_G = \bar \wb_G + \frac{\bar \wb_G}{\|\bar \wb_G\|}c\sigma H(s)$ for all $G \in S$ where $S$ denotes the acrive groups of $\bar \wb$. Clearly, by the triangle inequality, such $\wb$ satisfies the property $\|\bbeta^* -  \wb\| + \sigma \sqrt{s}H(s) \asymp \|\bbeta^* -  \bar \wb\| + \sigma \sqrt{s}H(s)$. By Lemma \ref{lemma:group:lasso:svr} with $h = c_0(\|\bbeta^* -  \wb\|/\sqrt{s} + \sigma H(s))$, we now know that $s(\wb) \leq \sqrt{s}$, and $s(\wb,r) \geq c_0(\|\bbeta^* -  \wb\|\sqrt{s}  + s \sigma H(s))$, and therefore $\lambda_{\wb}^* \lesssim \|\bbeta^* -  \wb\|/\sqrt{s}+ \sigma H(s)$, while the rate for the group LASSO estimators will become 
\begin{align*}
\|\hat \bbeta - \bbeta^*\| \lesssim  \|\bbeta^* - \wb\| + \sigma\sqrt{s} H(s) + \lambda \sqrt{s},
\end{align*}
for $\lambda \geq \lambda_{\wb}^*$. This guarantee for the group LASSO appears to be novel in the literature (albeit it is valid in the Gaussian design model only) --- the novelty being that $s$ here is not the group sparsity of $\bbeta^*$ but of the vector $\wb$. We note that when $\lambda$ is selected of the (optimal) order of $\|\bbeta^* -  \wb\|/\sqrt{s} + \sigma H(s)$, this bound is at least as good as $\frac{\overline w(\overline \cT_{K'}(\bbeta^*)\cap \mathbb{S}^{p-1})}{\sqrt{n}} \asymp\sigma \sqrt{s_{\bbeta^*}} H(s_{\bbeta^*})$ where $s_{\bbeta^*}$ is the number of active groups in $\bbeta^*$ (this is so by the construction of $\wb$). Finally, we would like to remark that the group LASSO example cannot be tackled using a Minkowski gauge selector as in \cite{cai2016geometric} as their bounds using the asphericity ratio and the Gaussian width of the set $\Xb K'$ lead to sub-optimal guarantees. 

We now turn our attention to how one can construct the set $K$ and the vector $\vb$, required for the debiasing algorithm. The set $K$ will be selected so that it is proportional to the set $K'$ from the Minkowski gauge. If one has knowledge on the true coefficient $\rho_{K'}(\bbeta^*)$ the set $K$ can simply be selected as $\rho_{K'}(\bbeta^*) K'$. Then $\vb$ can be constructed as in Section \ref{lasso:sec}. We now present an alternative route which only assumes an upper bound on the number of active groups in $\bbeta^*$. Suppose $s_{\bbeta^*} \leq \bar s$, where $\bar s$ is a known integer.  We propose to solve the following program
\begin{align}\label{find:v:reg:group:LASSO}
\max_{\vb} \|\vb\|_{2,1}, \mbox{ s.t. } \nonumber \\
\|\hat \bbeta - \vb\| \leq C \sqrt{\frac{\bar s \log M + \bar s B}{n}},\\
 \mbox{ at most $\bar s$ groups of $\vb$ are active}.\nonumber
\end{align}
Clearly, if the constant $C$ is big enough (assuming $\sigma$ is a constant which does not scale with $n$), $\bbeta^*$ will be a feasible point and therefore it will follow that $\|\vb\|_{2,1} \geq \|\bbeta^*\|_{2,1}$ which is sufficient so that $K = \|\vb\|_{2,1}K'$ contains the point $\bbeta^*$. In addition the requirement $\overline w(\overline \cT_{K'}(\vb) \cap \mathbb{S}^{p-1}) \|\vb - \bbeta^*\| \lesssim  \sqrt{\bar s \log( M-s) + \bar s B} C \sqrt{\frac{\bar s \log M  + \bar s B}{n}} = o(1)$ assuming that $C \frac{\bar s \log M  + \bar s B}{\sqrt{n}} = o(1)$. Hence one way to ensure $C$ is big enough is to select $C = (\frac{\sqrt{n}}{\bar s \log M  + \bar s B})^{\gamma}$ for some $\gamma < 1$ assuming $\frac{\bar s \log M  + \bar s B}{\sqrt{n}} = o(1)$. 

We now give the algorithm to solve the above optimization program which may seem combinatorial at first glance. Sort $\|\hat \bbeta_{G}\|_2$ for $G \in \cG$ in decreasing order, and take the top $\bar s$ groups, breaking ties arbitrarily. Then $\vb$ is given by 
\begin{align}\label{vb:formula}
\vb_{G_{\#i}} = \hat \bbeta_{G_{\#i}} + \frac{\hat \bbeta_{G_{\#i}}}{\|\hat \bbeta_{G_{\#i}}\|}\frac{\sqrt{C^2\frac{\bar s \log M + \bar s B}{n} - \sum_{i = \bar s+1}^M \|\bbeta_{G_{\#i}}\|^2}}{\sqrt{\bar s}},
\end{align}
where $\hat \bbeta_{G_{\#i}}$ denotes the $i$-th largest in $\|\cdot\|$ group, and where $i$ ranges in the set $[\bar s]$. Since the proof is almost identical to the one of Lemma \ref{slope_unknownl1_step2_sol} to follow it is omitted. The final thing is how to project onto the set $\overline \cT_{K'}(\vb) = \cT_{K}(\vb)$ for $K = \|\vb\|_{2,1}K'$. However, note that this algorithm was already given in Section \ref{lasso:sec}. We summarize the debiasing for the regularized group LASSO in Algorithm \ref{algo_lasso_regularized} below.



    
\begin{algorithm}[ht]
\caption{Debias the $j$\textsuperscript{th} Coefficient for regularized group LASSO}
\label{algo_lasso_regularized}
\begin{algorithmic}
\STATE\textbf{Input:}  Two equal size partitions $(\overline\Xb, \overline\bY)$ and $(\tilde\Xb, \tilde\bY)$; $\hat\bbeta$ obtained by solving a constrained group LASSO problem. 
\STATE\textbf{Initialize:} Empirical Gram matrix of the second partition $\hat\bSigma=\frac{1}{n}\tilde\Xb\T\tilde\Xb$.
\begin{enumerate}
    \item Solve optimization program \eqref{find:v:reg:group:LASSO} with \eqref{vb:formula}. The set $K = \|\vb\|_{2,1}K'$ where $K'$ is given by \eqref{group:LASSO:set}.
    
    \item Run Algorithm \ref{algo_step3}. Compute $\Pi_{\overline \cN_{K'}(\vb)}(\cdot)$ by \eqref{proj_tan1}, \eqref{proj_tan2}, and apply Moreau's decomposition to get $\Pi_{\overline \cT_{K'}(\vb)}(\cdot)$. For $\Pi_{-\overline \cT_{K'}(\vb)}(\cdot)$ use \eqref{neg:cone:calc}. 
    The debiased $j$\textsuperscript{th} coefficient $\hat\bbeta_{d}^{(j)} \leftarrow \vb^{(j)} + n^{-1}\hat\bmeta\T \tilde\Xb\T (\tilde\bY - \tilde\Xb\vb)$.
\end{enumerate}
\end{algorithmic}
\end{algorithm}

We conclude this section by mentioning that there are very few works debiasing the group LASSO estimator that we are aware of \citep{mitra2016benefit,bellec2019second, cai2022sparse}. All of these papers have very different strategies: \cite{mitra2016benefit} debias the scaled group LASSO, \cite{bellec2019second} work only in the regime $p/n \leq \gamma$ for some fixed constant $\gamma \in (0,\infty)$, while \cite{cai2022sparse} debias the sparse group lasso which has both $\|\cdot\|_1$ and $\|\cdot\|_{2,1}$ penalties included.

Our final remark in this section is that it is also not too hard to see that Theorem \ref{funamental:theorem:minkowski:reg} leads to rate optimal bounds for the case of trace regression which can then be debiased with our algorithm. For brevity (and since we already saw an example of this in the preceding section) we will omit these calculations.

\section{SLOPE and Square-Root SLOPE}
\label{slope_sqrtslope:sec}
In this section we show how our debiasing scheme can be used in SLOPE and square-root SLOPE estimators. It is worthwhile to also mention that even though this section is dedicated to SLOPE and square-root SLOPE, the same debiasing procedure also works for some other types of estimators whose error rate $\|\hat\bbeta-\bbeta^*\|$ is tractable. Examples include the LASSO penalized version \citep{tibshirani1996regression}, MCP \citep{zhang2010nearly}, SCAD penalized estimator \citep{fan2001variable}, elastic net \citep{zou2005regularization} etc.

SLOPE was first proposed by \cite{bogdan2015slope} as
\begin{align}
    \hat\bbeta = \argmin_{\bbeta\in\mathbb{R}^p}\frac{1}{n}\|\overline\bY-\overline{\Xb}\bbeta\|^2 + \lambda_1|\bbeta_{\#1}| + \lambda_2|\bbeta_{\#2}| + \ldots + \lambda_p|\bbeta_{\#p}|,
\label{slope_opt}
\end{align}
where $\lambda_1\geq\lambda_2\geq\ldots\geq\lambda_p$, and $|\bbeta_{\#1}|\geq|\bbeta_{\#2}|\geq\ldots\geq|\bbeta_{\#p}|$ are the entries of $\bbeta$ sorted in a decreasing order in terms of their absolute value. Let $A\geq2(4+\sqrt{2})$ be a constant. According to \cite[Corollary 6.2]{bellec2018slope}, if one picks
\begin{align}\label{lambdaj:def}
    \lambda_i=A\sigma\sqrt{\frac{\log(2p/i)}{n}}, i\in[p],
\end{align}
the SLOPE estimator achieves the optimal error rate:
\begin{align}
    \|\hat \bbeta - \bbeta^*\| \leq \overline{C}\sigma \sqrt{\frac{s \log (2ep/s)}{n}},
\label{slope_est_error}
\end{align}
where $\overline{C}>0$ is a constant and $s$ is the number of non-zero coordinates in $\bbeta^*$.

The square-root SLOPE \citep{stucky2017sharp} is introduced to alleviate the restriction of knowing $\sigma$ while still achieving the optimal rate \eqref{slope_est_error}. It estimates $\sigma$ and $\bbeta$ simultaneously:
\begin{align}
    (\hat{\bbeta}, \hat{\sigma}) \in \argmin_{\bbeta\in\mathbb{R}^p, \sigma>0} \sigma + \frac{1}{n\sigma}\|\overline\bY-\overline{\Xb}\bbeta\|^2 + \lambda_1|\bbeta_{\#1}| + \lambda_2|\bbeta_{\#2}| + \ldots + \lambda_p|\bbeta_{\#p}|.
\label{sqrt_slope_opt}
\end{align}
Let $A'\geq4(4+\sqrt{2})$ be a constant. \cite[Theorem 6.1]{derumigny2018improved} shows that if the constraint parameters are picked as
\begin{align}\label{lambdaj:square:root:slope:def}
    \lambda_i=A'\sqrt{\frac{\log(2p/i)}{n}},\,\,i\in[p],
\end{align}
the square-root SLOPE will achieve the optimal rate \eqref{slope_est_error}.

We will now suggest two ways to debias both the SLOPE and square-root SLOPE estimator $\hat \bbeta$. The first assumes knowledge on $\|\bbeta^*\|_1$, while the second assumes knowledge of an upper bound on sparsity $\|\bbeta^*\|_0\leq s^u$.

First, suppose that we know $\|\bbeta^*\|_1$ and $\bbeta^*$ is $s$-sparse, but $s$ is not necessarily known. Then the approaches of both SLOPE and square-root SLOPE are identical to how we debias the constrained LASSO problem in Section \ref{lasso:sec}, since the convex set $K=\{\bbeta: \|\bbeta\|_1 \leq \|\bbeta^*\|_1\}$ can be used in the same manner as in the constrained LASSO case. In step \ref{step_1} we find a $\vb=\vb_s$ such that $\|\hat \bbeta - \vb_s\| + \sqrt{s\log(ep/s)/n}$ is the smallest given that $\vb_s$ is $s$-sparse and $\|\vb_s\|_1 = \|\bbeta^*\|_1$. Next we solve step \ref{step_2} with such a vector $\vb$ and a convex set $K$.

Second, we consider the case when $\|\bbeta^*\|_1$ is unknown, but an upper bound on sparsity $\|\bbeta^*\|_0\leq s^u$ is available. In this case we do not have a prior knowledge of the convex parameter space $K$ in which $\bbeta^*$ belongs to. Instead we will construct $K$ and the vector $\vb$ required in step \ref{step_1} ``from scratch''. To find a vector $\vb$ which satisfies the condition in step \ref{step_1}, we propose to solve the following optimization program
\begin{align}
    \argmax \|\vb\|_1, \mbox{ s.t. } \|\vb - \hat \bbeta\| \leq C \sqrt{\frac{s^u \log(2ep/s^u)}{n}} \text{ and } \|\vb\|_0 \leq s^u,
\label{slope_unknownl1_step2}
\end{align}
for a sufficiently large constant $C$. Since the function $s \mapsto s\log(2ep/s)$ is increasing in $s$, $\bbeta^*$ is a feasible point when $C$ is sufficiently large. Theorem \ref{main_rst_slope} proves that the solution $\vb$ of the above optimization program \eqref{slope_unknownl1_step2} satisfies the condition in step \ref{step_1} with the set 
\begin{align}
\label{slope_k}
    K=\{\bbeta: \|\bbeta\|_1 \leq \|\vb\|_1\}.
\end{align}
Notice that since $\bbeta^*$ is a feasible point of \eqref{slope_unknownl1_step2} with a proper choice of $C$, $\vb$ also satisfies $\|\vb\|_1 \geq \|\bbeta^*\|_1$ which implies $\bbeta^* \in K$. In order for us to state our next result we need to give a definition from \citep[Page 10]{bellec2018slope}.

\begin{definition}[Weighted Restricted Eigenvalue (WRE) condition]
For a design matrix $\Xb \in \RR^{n \times p}$ satisfying $\|\Xb \eb^{(j)}\| \leq \sqrt{n}$ for all $j \in [p]$ define
\begin{align*}
    \vartheta(s, c_0) = \min_{\bdelta\in\{\bdelta:\sum_{j=1}^p\lambda_j|\delta_{\#j}|\leq (1+c_0)\|\bdelta\|\sqrt{\sum_{j=1}^s\lambda_j^2}\}, \bdelta\neq\mathbf{0}}\,\,\frac{1}{\sqrt{n}} \frac{\|\Xb\bdelta\|}{\|\bdelta\|},
\end{align*}
where $\lambda_j$ are given in \eqref{lambdaj:def} (or equivalently in \eqref{lambdaj:square:root:slope:def}). A design matrix $\Xb$ as above is said to satisfy WRE if $\vartheta(s, c_0) > 0$.
\end{definition}
The next theorem will condition on the event that $\overline \Xb$ (the design matrix from the first split of the data) satisfies the WRE for $s^u$ and $c_0 = 3$ for SLOPE and $c_0 = 20$ for square-root SLOPE.

\begin{theorem}
\label{main_rst_slope}
Consider the same setting as Theorem \ref{debiase_formula_applicable_unkowncov}. Suppose $\|\bbeta^*\|_0\leq s^u$. Condition on the event that the matrix $\overline \Xb$ satisfies the WRE with $\vartheta^* := \vartheta(s^u, 3)$ for SLOPE and $\vartheta^* := \vartheta(s^u, 20)$ for square-root SLOPE. With a proper choice of $C \gtrsim \frac{\sigma}{\vartheta^*}$ satisfying $C \frac{s^u\log (ep/s^u)}{\sqrt{n}}=o(1)$, for $\hat\bbeta$ as a SLOPE estimator obtained via \eqref{slope_opt} or a square-root SLOPE estimator obtained via \eqref{sqrt_slope_opt}, the solution $\vb$ of \eqref{slope_unknownl1_step2}, and the set $K = \{\bbeta: \|\bbeta\|_1 \leq \|\vb\|_1\}$ satisfy the condition needed in step 1 of Algorithm \ref{algo_unknowncov}.
\end{theorem}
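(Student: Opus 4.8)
The plan is to verify the two requirements imposed on step \ref{step_1} of Algorithm \ref{algo_unknowncov}, namely that $\vb, \bbeta^* \in K$ with high probability and that $\overline w(\cT_K(\vb)\cap\mathbb{S}^{p-1})\|\vb-\bbeta^*\| = o_p(1)$. The membership $\vb \in K$ is immediate, since $K = \{\bbeta : \|\bbeta\|_1 \leq \|\vb\|_1\}$ contains $\vb$ on its boundary; everything else reduces to showing that $\bbeta^*$ is feasible for the maximization \eqref{slope_unknownl1_step2} and to separately controlling the two factors of the product.

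First I would establish that $\bbeta^*$ is a feasible point of \eqref{slope_unknownl1_step2}. The sparsity constraint $\|\bbeta^*\|_0 \leq s^u$ holds by assumption, so only the proximity constraint needs work. Here I would condition on the WRE event for $\overline\Xb$ and invoke the SLOPE (resp.\ square-root SLOPE) error bound \eqref{slope_est_error}, whose constant $\overline{C}$ scales like $1/\vartheta^*$. Writing $s := \|\bbeta^*\|_0 \leq s^u$ and using the monotonicity of $s \mapsto s\log(2ep/s)$ on $s \leq p$, this gives $\|\hat\bbeta - \bbeta^*\| \leq \overline{C}\sigma\sqrt{s^u\log(2ep/s^u)/n}$ with probability tending to one over the noise. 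Choosing $C \gtrsim \sigma/\vartheta^*$ makes this at most $C\sqrt{s^u\log(2ep/s^u)/n}$, so $\bbeta^*$ satisfies the proximity constraint and is feasible. Since $\vb$ maximizes the $\ell_1$ norm over the feasible set, $\|\vb\|_1 \geq \|\bbeta^*\|_1$, which yields $\bbeta^* \in K$.

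Next I would bound each factor. Because $\|\vb\|_0 \leq s^u$ and $\vb$ lies on the boundary of the $\ell_1$ ball $K$, the width bound \eqref{lasso_gw} (applied with $\vb$ in place of $\bbeta^*$, using monotonicity in the sparsity) gives $\overline w(\cT_K(\vb)\cap\mathbb{S}^{p-1}) = O(\sqrt{s^u\log(ep/s^u)})$, which diverges under the paper's $\vee a_n$ convention. For the estimation error, the triangle inequality together with the proximity constraint and \eqref{slope_est_error} yields $\|\vb - \bbeta^*\| \leq \|\vb - \hat\bbeta\| + \|\hat\bbeta - \bbeta^*\| \lesssim \sqrt{s^u\log(2ep/s^u)/n}$. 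Multiplying the two bounds gives
\begin{align*}
    \overline w(\cT_K(\vb)\cap\mathbb{S}^{p-1})\,\|\vb - \bbeta^*\| \lesssim \frac{s^u\log(2ep/s^u)}{\sqrt{n}},
\end{align*}
which is $o(1)$ precisely by the standing assumption $C\,s^u\log(ep/s^u)/\sqrt{n} = o(1)$, the two expressions agreeing up to the constant factor $C \asymp \sigma/\vartheta^*$. This delivers the product condition with probability converging to one.

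The main obstacle is the feasibility step: guaranteeing that $\bbeta^*$ satisfies the proximity constraint uniformly for both SLOPE and square-root SLOPE. This requires carefully propagating the WRE constant $\vartheta^*$ through \eqref{slope_est_error} so that the data-independent choice $C \gtrsim \sigma/\vartheta^*$ dominates $\overline{C}\sigma$, and observing that square-root SLOPE, although scale-free in its construction, still enjoys the same rate with the same $\sigma$ factor after conditioning on WRE. A secondary point requiring care is that $K$ is itself random through $\vb$, so the width bound must hold uniformly over all admissible $s^u$-sparse boundary points rather than at a single fixed vector; this is exactly what \eqref{lasso_gw} provides, as it depends on $\vb$ only through its sparsity.
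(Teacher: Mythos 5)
Your proposal is correct and follows essentially the same route as the paper's proof: feasibility of $\bbeta^*$ in \eqref{slope_unknownl1_step2} via the WRE-conditional SLOPE/square-root SLOPE rate with constant $\asymp \sigma/\vartheta^*$ (hence $\|\vb\|_1 \geq \|\bbeta^*\|_1$ and $\bbeta^* \in K$), the triangle inequality for $\|\vb - \bbeta^*\|$, the width bound $\overline w(\cT_K(\vb)\cap\mathbb{S}^{p-1}) \lesssim \sqrt{s^u\log(ep/s^u)}$ from the sparsity of $\vb$, and the product being $o_p(1)$ under $C\,s^u\log(ep/s^u)/\sqrt{n} = o(1)$. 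The only cosmetic difference is that the paper carries the factor $C$ explicitly through the final product bound, which you fold into the implied constant but correctly account for when invoking the standing assumption.
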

\begin{remark}\label{slope:remark:important}
We now comment on the condition that $\overline \Xb$ satisfies the WRE with $ \vartheta(s^u, c_0)$ for $c_0 = 3$ or $c_0 = 20$. By Theorem 8.3 of \cite{bellec2018slope} we know that for a large class of data generating mechanisms (including Gaussian and bounded mean-zero $\bX_i$ for $i \in [n]$) if $\bSigma$ has bounded from below by $\kappa > 0$ eigenvalue, and in addition $\max_{i} \bSigma_{ii} \leq \frac{1}{2}$ then if $n \gtrsim \frac{(1 + c_0)^2}{\kappa^2} s^u\log(2ep/s^u)$ the matrix $\overline{\Xb}$ will satisfy WRE with $s^u$ and $c_0$ with $\vartheta(s^u, c_0) = \kappa/\sqrt{2}$ with high probability. It follows that when $\sigma$ is fixed, $C \gtrsim \frac{\sqrt{2}\sigma}{\kappa}$ suffices to meet the requirements in Theorem \ref{main_rst_slope}. This is surely satisfied if one picks $C \gg 1$. Below we give an example of such a choice for $C$.
\end{remark}


From the proof of Theorem \ref{main_rst_slope} it becomes evident that in principle, we can select any small enough $C > \overline{C}\sigma$ in \eqref{slope_est_error} since that will ensure that $\bbeta^*$ is a feasible point in \eqref{slope_unknownl1_step2}. One might directly analyze an upper bound on $\overline{C}$ according to the high probability upper bounds on $\|\hat \bbeta - \bbeta^*\|$ given in \cite[Corollary 6.2]{bellec2018slope} and \cite[Corollary 6.2]{derumigny2018improved}. However, such an upper bound on $\overline{C}$ requires finding weighted restricted eigenvalues and may not be easily computable. 
Here we suggest an alternative way to obtain a slightly larger $C$ for the debiasing purpose. This is possible under the assumptions of Remark \ref{slope:remark:important}. We claim that $C$ can be picked as
\begin{align}
    C \sim \Big( \frac{\sqrt{n}}{s^u\log(ep/s^u)} \Big)^{\gamma}\text{ where }0< \gamma < 1 \text{ is a small number.}
\label{slope_c_pick}
\end{align}
In this way, if $s^u=o(\sqrt{n}/\log(ep/s^u))$, the order of $C$ in \eqref{slope_c_pick} is slightly larger than the constant in \eqref{slope_est_error} which is $O(1)$ (assuming $\sigma = O(1)$) under the assumptions of Remark \ref{slope:remark:important}. Thus $\bbeta^*$ is guaranteed to be a feasible point of \eqref{slope_unknownl1_step2}. At the same time, $C$ is only moderately large so that $\overline w(\cT_K(\vb)\cap \mathbb{S}^{p-1})\|\vb-\bbeta^*\|=o_p(1)$ still holds under the same assumptions as in Theorem \ref{main_rst_slope}. This is because in the proof of Theorem \ref{main_rst_slope} we establish that with high probability
\begin{align*}
    \overline w(\cT_K(\vb)\cap \mathbb{S}^{p-1})\|\vb-\bbeta^*\| & \lesssim C\frac{s^u\log(ep/s^u)}{\sqrt{n}} \sim \bigg(\frac{s^u\log(ep/s^u)}{\sqrt{n}}\bigg)^{1-\gamma} = o(1).
\end{align*}
After picking a proper $C$, there are no obstacles to compute a $\vb$ in step \ref{step_1} since the optimization program \eqref{slope_unknownl1_step2} actually has an analytical solution as shown in Lemma \ref{slope_unknownl1_step2_sol}. 
\begin{lemma}
\label{slope_unknownl1_step2_sol}
The solution of \eqref{slope_unknownl1_step2} is
\begin{align*}
    \vb_{\#i} = \begin{cases}
    \hat \bbeta_{\#i} + \sign(\hat \bbeta_{\#i}) c, & i = 1,\ldots, s^u\\
    0, & \text{otherwise}.
    \end{cases}
\end{align*}
where $c=\sqrt{\bigg(C^2\frac{s^u \log 2ep/s^u}{n} - \sum_{i = s^u + 1}^p \hat \bbeta_{\#i}^2\bigg)/s^u}$, and ties in $\hat \bbeta_{\# i}$ are broken arbitrarliy, and with a slight abuse of notation we assign the same index for $\vb_{\#i}$ in $\vb$ as $\hat \bbeta_{\#i}$ has in $\hat \bbeta$.
\end{lemma}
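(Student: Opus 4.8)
The plan is to solve the maximization in \eqref{slope_unknownl1_step2} in two stages: first fix the support of $\vb$ and solve the resulting Euclidean-ball-constrained problem in closed form, then optimize over the choice of support. Throughout write $r := C\sqrt{\frac{s^u\log(2ep/s^u)}{n}}$ for the radius of the $\ell_2$ constraint, so that $r^2 = C^2\frac{s^u\log(2ep/s^u)}{n}$. Since the program is feasible (for sufficiently large $C$ the vector $\bbeta^*$ is feasible, as noted before Theorem \ref{main_rst_slope}), and since any candidate with $\|\vb\|_0 < s^u$ may have its support padded to cardinality exactly $s^u$ without changing feasibility or objective, I would first reduce to maximizing over vectors supported on a fixed index set $S$ with $|S| = s^u$. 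Moreover, because $\|\vb\|_1$ strictly increases as we push any supported coordinate away from zero, the $\ell_2$ constraint is active at any maximizer, so I may impose $\|\vb - \hat\bbeta\| = r$.

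Fixing such a support $S$, the constraint forces $\vb_{\#i} = 0$ for $i \notin S$, contributing $\sum_{i\notin S}\hat\bbeta_{\#i}^2$ to $\|\vb - \hat\bbeta\|^2$, so the remaining budget on $S$ is $\rho_S^2 := r^2 - \sum_{i\notin S}\hat\bbeta_{\#i}^2 \ge 0$. Writing $\delta_{\#i} = \vb_{\#i} - \hat\bbeta_{\#i}$, for any fixed $|\delta_{\#i}|$ the magnitude $|\vb_{\#i}| = |\hat\bbeta_{\#i} + \delta_{\#i}|$ is maximized by aligning $\delta_{\#i}$ with $\sgn(\hat\bbeta_{\#i})$, giving $|\vb_{\#i}| = |\hat\bbeta_{\#i}| + |\delta_{\#i}|$. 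The on-support problem then reduces to maximizing $\sum_{i\in S}|\delta_{\#i}|$ subject to $\sum_{i\in S}\delta_{\#i}^2 \le \rho_S^2$, and Cauchy--Schwarz yields $\sum_{i\in S}|\delta_{\#i}| \le \sqrt{s^u}\,\rho_S$, with equality precisely when all $|\delta_{\#i}|$ equal $c := \rho_S/\sqrt{s^u}$. Hence the optimal on-support vector is $\vb_{\#i} = \hat\bbeta_{\#i} + \sgn(\hat\bbeta_{\#i})\,c$, with optimal value $\sum_{i\in S}|\hat\bbeta_{\#i}| + \sqrt{s^u}\,\rho_S$.

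It remains to optimize over $S$, which I expect to be the crux. Using $\sum_{i\notin S}\hat\bbeta_{\#i}^2 = \|\hat\bbeta\|^2 - \sum_{i\in S}\hat\bbeta_{\#i}^2$, the quantity to maximize is
\[
f(S) = \sum_{i\in S}|\hat\bbeta_{\#i}| + \sqrt{s^u\Big(r^2 - \|\hat\bbeta\|^2 + \sum_{i\in S}\hat\bbeta_{\#i}^2\Big)}.
\]
I would treat this by an exchange argument: if $S$ is not the set of the $s^u$ largest-in-magnitude coordinates of $\hat\bbeta$, then swapping some $i \in S$ for a $j \notin S$ with $|\hat\bbeta_{\#j}| \ge |\hat\bbeta_{\#i}|$ increases both $\sum_{i\in S}|\hat\bbeta_{\#i}|$ and $\sum_{i\in S}\hat\bbeta_{\#i}^2$ at once; since $f$ is nondecreasing in each of these two quantities (the square root being increasing), $f(S)$ does not decrease. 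Iterating the swaps shows $f$ is maximized by the indices $\#1,\ldots,\#s^u$ of the $s^u$ largest $|\hat\bbeta_{\#i}|$. For this $S$ one has $\rho_S^2 = r^2 - \sum_{i=s^u+1}^p\hat\bbeta_{\#i}^2$, whence $c = \sqrt{\big(C^2\frac{s^u\log(2ep/s^u)}{n} - \sum_{i=s^u+1}^p\hat\bbeta_{\#i}^2\big)/s^u}$, matching the claimed constant, and substituting $\vb_{\#i} = \hat\bbeta_{\#i} + \sgn(\hat\bbeta_{\#i})c$ for $i \le s^u$ and $\vb_{\#i}=0$ otherwise completes the proof. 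The only care needed is the simultaneous monotonicity of $f$ in the two support-dependent sums, which is exactly what makes the greedy top-$s^u$ choice optimal.
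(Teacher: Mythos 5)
Your proposal is correct and follows essentially the same route as the paper's proof: fix the support, show the optimal perturbation adds the same increment $c$ to each supported coordinate in the direction of $\sign(\hat\bbeta_{\#i})$, and then argue the top-$s^u$ magnitudes give the optimal support. The only cosmetic differences are that you derive the equal-increment structure via Cauchy--Schwarz where the paper uses Lagrange multipliers, and your exchange argument makes explicit the simultaneous-monotonicity step that the paper asserts in one line.
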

Notice that $C$ should be selected so that we are able to compute $c$ as a positive real number, hence it should satisfy $C \geq \sqrt{\frac{n\sum_{i = s^u + 1}^p \hat \bbeta_{\#i}^2}{s^u \log (2ep/s^u)}}$.
%
Observe that this does not imply that $C$ is ``too large''. From \eqref{slope_opt} we know that $\sqrt{\|\hat \bbeta_{S_*} - \bbeta^*_{S_*}\|^2 + \|\hat \bbeta_{S_*^c}\|^2} \leq \overline{C}\sigma \sqrt{\frac{s \log (2ep/s)}{n}}$, where $S_*$ denotes the support of $\bbeta^*$. Next since $s^u \geq s$ it follows that $\sqrt{\sum_{i = s^u + 1}^p \hat \bbeta_{\#i}^2} \leq \|\hat \bbeta_{S_*^c}\|$, which shows that if $C > \overline{C}\sigma$ the condition will be met.
After one finds $\vb$ in step \ref{step_1}, one can compute the auxiliary vector $\hat\bmeta$ in step \ref{step_2} based on $\vb$ and $K=\{\bbeta: \|\bbeta\|_1 \leq \|\vb\|_1\}$, and then use $\hat\bmeta$ to construct the debiased estimator $\hat\bbeta_d$ and the confidence interval as \eqref{ci}. When constructing the confidence interval, we estimate $\sigma$ via $\hat\sigma = \sqrt{n^{-1}\sum_{i\in[n]} ( Y_i -  \bX_i\T \hat \bbeta)^2}$ on the first sample split. The following Lemma \ref{sigma_hat_slope} coupled with Theorem \ref{sigma_hat_rate} together show that we are able to get a consistent estimator of $\sigma$.
\begin{lemma}
\label{sigma_hat_slope}
Consider the same setting as Theorem \ref{main_rst_slope} where $\hat\bbeta$ is a SLOPE or square-root SLOPE estimator. Then under the conditions of Remark \ref{slope:remark:important}, Theorem \ref{sigma_hat_rate} applies with
\begin{align*}
    \delta \asymp \sigma \kappa^{-1}\sqrt{s^u \log(2ep/s^u)}.
\end{align*}
\end{lemma}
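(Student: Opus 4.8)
The plan is to verify the single hypothesis that Theorem \ref{sigma_hat_rate} requires, namely a high-probability prediction-error bound $\|\overline\Xb(\hat\bbeta - \bbeta^*)\| \lesssim \sigma\delta$ with $\delta = o(\sqrt n)$, and then simply read off the value of $\delta$. Here $\overline\Xb$ is the first data split, on which both $\hat\bbeta$ and $\hat\sigma = \sqrt{n^{-1}\sum_{i\in[n]}(Y_i - \bX_i\T\hat\bbeta)^2}$ are computed, so the relevant design is exactly the one on whose WRE event we have conditioned. The first step is therefore to record that the WRE holds on $\overline\Xb$ with $\vartheta^* = \kappa/\sqrt 2$: this is precisely the content of Remark \ref{slope:remark:important} (via Theorem 8.3 of \cite{bellec2018slope}), which applies since $\bSigma$ has smallest eigenvalue $\kappa > 0$ and, as carried over from Theorem \ref{main_rst_slope}, $s^u\log(2ep/s^u) = o(\sqrt n)$ forces $n \gtrsim s^u\log(2ep/s^u)$ for large $n$. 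With the WRE in hand, the cited oracle inequalities apply verbatim on the first split.

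Next I would invoke the fast-rate prediction bound. For SLOPE with the weights \eqref{lambdaj:def}, \cite[Corollary 6.2]{bellec2018slope} gives, on the WRE event and with probability tending to one, $\tfrac1n\|\overline\Xb(\hat\bbeta - \bbeta^*)\|^2 \lesssim \tfrac{\sigma^2}{(\vartheta^*)^2}\tfrac{s\log(2ep/s)}{n}$, where $s = \|\bbeta^*\|_0$; the identical statement for square-root SLOPE with the pivotal weights \eqref{lambdaj:square:root:slope:def} is \cite[Corollary 6.2]{derumigny2018improved}, which yields the same rate for the $\hat\bbeta$-component of the joint minimizer in \eqref{sqrt_slope_opt} without knowledge of $\sigma$. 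Taking square roots and substituting $\vartheta^* = \kappa/\sqrt 2$ gives $\|\overline\Xb(\hat\bbeta - \bbeta^*)\| \lesssim \tfrac{\sigma}{\kappa}\sqrt{s\log(2ep/s)}$.

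Since only the upper bound $s \le s^u$ is available, I would then use monotonicity of $t \mapsto t\log(2ep/t)$ to replace $s$ by $s^u$, obtaining $\|\overline\Xb(\hat\bbeta - \bbeta^*)\| \lesssim \tfrac{\sigma}{\kappa}\sqrt{s^u\log(2ep/s^u)}$, which is exactly $\sigma\delta$ with $\delta \asymp \sigma\kappa^{-1}\sqrt{s^u\log(2ep/s^u)}$ (treating $\sigma$ as the fixed constant of Remark \ref{slope:remark:important}). It remains to check $\delta = o(\sqrt n)$: the standing hypothesis $C\,s^u\log(ep/s^u)/\sqrt n = o(1)$ forces $s^u\log(2ep/s^u) = o(\sqrt n)$, so $\delta \asymp \kappa^{-1}\sqrt{s^u\log(2ep/s^u)} = o(n^{1/4}) = o(\sqrt n)$. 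Theorem \ref{sigma_hat_rate} then applies with this $\delta$, delivering consistency of $\hat\sigma$ at the claimed rate.

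The main obstacle is selecting the correct form of the cited corollaries: one must use the \emph{fast-rate prediction} inequality, whose remainder already carries the $(\vartheta^*)^{-2} \asymp \kappa^{-2}$ factor, rather than the $\ell_2$-estimation bound \eqref{slope_est_error}, so that the $\kappa^{-1}$ in the stated $\delta$ emerges directly and no separate restricted-upper-isometry argument for $\tfrac1n\overline\Xb\T\overline\Xb$ over the SLOPE cone is needed (although the bounded spectrum of $\bSigma$ would supply one if desired). A secondary subtlety is the square-root SLOPE case, where $\sigma$ is estimated jointly with $\bbeta$; there one must appeal to the pivotal guarantee of \cite{derumigny2018improved} to conclude that the $\hat\bbeta$-component still obeys the same prediction bound without prior knowledge of $\sigma$.
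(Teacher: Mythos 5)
Your proposal is correct and follows essentially the same route as the paper: both condition on the WRE event for $\overline\Xb$ with $\vartheta^* \geq \kappa/\sqrt{2}$ (via Theorem 8.3 of \cite{bellec2018slope} and Remark \ref{slope:remark:important}), invoke the fast-rate prediction bounds of \cite[Corollary 6.2]{bellec2018slope} and \cite[Corollary 6.2]{derumigny2018improved} to get $\frac{1}{\sqrt{n}}\|\overline\Xb(\hat\bbeta-\bbeta^*)\| \lesssim \frac{\sigma}{\vartheta^*}\sqrt{s^u\log(2ep/s^u)/n}$, and read off $\delta$. Your additional explicit steps (monotonicity of $t\mapsto t\log(2ep/t)$ to pass from $s$ to $s^u$, and the check that $\delta = o(\sqrt{n})$) are details the paper leaves implicit but are consistent with its argument.
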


Lemma \ref{sigma_hat_slope} establishes that it is possible to consistently estimate $\sigma$, and therefore we can construct confidence intervals as in \eqref{ci:sigma:est}. We end this section with two remarks regarding the choice of $s^u$ and what ``classical'' debiasing methods can achieve in the SLOPE, or square-root SLOPE problems. 

\begin{remark}\label{slope:remark:slogp}
Since $s^u \geq s$, assuming $s^u \log (ep/s^u) = o(\sqrt{n})$ implies that $s \log (ep/s) = o(\sqrt{n})$ for the true sparsity $s$. By the work of \cite{cai2017confidence} we know the latter condition is nearly necessary in the case of sparse linear regression with unknown covariance. In fact  \cite{cai2017confidence} show that the length of the confidence interval is $ \gtrsim \max\bigg\{\frac{1}{\sqrt{n}}, \frac{s \log (ep/s)}{n}\bigg\}$. Thus if $s \log (ep/s) = O(\sqrt{n})$ interval length of the order of $\frac{1}{\sqrt{n}}$ is possible. However, in practice it is often assumed that $s \log (ep/s) = o(\sqrt{n})$ in order to achieve an exact asymptotic $(1-\alpha)$-level confidence interval. We now provide some guidance on selecting $s^u$. In principle it is difficult if not impossible to estimate an upper bound on $s$ from the data. However, in order for the debiasing to work we do need $s \log (ep/s) = o(\sqrt{n})$. If the practitioner has prior knowledge on the precise rate  $r_n := \frac{s \log (ep/s)}{\sqrt{n}}$, the practitioner can select any $s^u$ such that $\frac{s^u \log (e p/s^u)}{\sqrt{n}} = \sqrt{r_n}$, e.g. and this will work asymptotically. On the other hand, if information on $r_n$ is not available but it is known that $\frac{s \log ep/s}{\sqrt{n}} = o(1)$, the practitioner may opt for devising a slightly conservative confidence interval, by selecting $s^u$ such that $\frac{s^u \log (ep/s^u)}{\sqrt{n}} = c$ for some small constant $c$. It is not too hard to see that in such a setting, the term $|\Delta_j|$ from Theorem \ref{debiase_formula_applicable_unkowncov} will be asymptotically smaller than 
\begin{align*}
    |\Delta_j| \leq \sqrt{n}\rho c \|\bbeta^* - \vb\|,
\end{align*}
where $\rho$ is the tuning parameter from \eqref{opt_step3} of Algorithm \ref{algo_unknowncov}. Now by the triangle inequality $\|\bbeta^* - \vb\| \leq \|\bbeta^* - \hat \bbeta\| + \|\hat \bbeta - \vb\| \leq 2 C\frac{\sqrt{s^u \log 2ep/s^u}}{\sqrt{n}}$ since $\bbeta^*$ is a feasible point of \eqref{slope_unknownl1_step2}. Set $K := 2C\rho \frac{\sqrt{s^u \log 2ep/s^u}\sqrt{s^u \log ep/s^u}}{\sqrt{n}}$, where $C$ (note that any fixed constant $C$ here will do since $s^u \gg s$). is the constant from \eqref{slope_unknownl1_step2}. Therefore the confidence interval from \eqref{ci:sigma:est} widened by $\pm \frac{K}{\sqrt{n}}$ will be a valid $\frac{1}{\sqrt{n}}$-confidence interval of $\bbeta^{*(j)}$.
\end{remark}

\begin{remark}
Theorem \ref{main_rst_slope} and Remark \ref{slope:remark:slogp} point out that our debiasing algorithm works for SLOPE as long as $s=o(\sqrt{n}/\log ep/s)$. Clearly this is less stringent than the condition $s=o(\sqrt{n}/(\log ep/s)^{3/2})$. Such a condition appears necessary if one opts for applying previous debiasing algorithms and their analysis such as the one proposed by \citep[Algorithm 1]{javanmard2014confidence}. To see why the condition $s=o(\sqrt{n}/(\log e p/s)^{3/2})$ arises, the reader is referred to \cite[eq (9)]{javanmard2018debiasing} which summarizes well the standard argument for the analysis of why debiasing works. It relies on an $\ell_1-\ell_\infty$ H\"{o}lder's inequality. While the SLOPE or square-root SLOPE do not have a direct $\ell_1$ guarantee for their $\hat \bbeta$ estimates, a sub-optimal guarantee may be easily derived from \citep{bellec2018slope, derumigny2018improved}. It is simple to see that
 \begin{align*}
     \sigma \|\hat\bbeta-\bbeta^*\|_1/\sqrt{n}\lesssim\|\hat\bbeta-\bbeta^*\|_*\lesssim \sigma^2 s\log(ep/s)/n,
 \end{align*}
 where $\|\vb\|_* = \sum_{j \in [p]} \lambda_j |\vb_{\#j}|$, where $\lambda_j$ are as in \eqref{lambdaj:def}. In contrast, in the LASSO case one may bound $\|\hat \bbeta - \bbeta^*\|_1 \lesssim \sigma  s \sqrt{\frac{\log(p)}{n}}$ \citep[Section 7]{wainwright2019high}. One can see that SLOPE has an extra $\sqrt{\log (ep/s)}$ factor in the $\ell_1$-bound in comparison with LASSO, hence the extra $\sqrt{\log (ep/s)}$ factor in the condition $s=o(\sqrt{n}/(\log ep/s)^{3/2})$.
\end{remark}

In the following two subsections we give the detailed procedures about how to debias SLOPE and square-root SLOPE estimator, as specific instances of Algorithm \ref{algo_unknowncov}.





\subsection{Debiasing Algorithm for SLOPE}
\label{slope:case}

We start by briefly summarizing how to solve the SLOPE $\hat\bbeta$ in \eqref{slope_opt}. The reader is encouraged to read the full details of the implementation which was first described in \cite{bogdan2015slope}. The SLOPE has a non-differentiable objective function, which can be solved by proximal gradient descent. A detailed introduction of the proximal gradient methods can be found in \cite[Chapter 2]{nesterov2003introductory}. The basic idea is: the objective function in \eqref{slope_opt} can be written as the sum of a convex differentiable function $f_1(\bbeta)=\frac{1}{n}\|\overline\bY-\overline{\Xb}\bbeta\|^2$ and a convex non-differentiable function $f_2(\bbeta)=\lambda_1|\beta_{\#1}| + \lambda_2|\beta_{\#2}| + \ldots + \lambda_p|\beta_{\#p}|$. For a convex optimization program whose objective function can be written as $f(\bbeta)=f_1(\bbeta)+f_2(\bbeta)$, where $f_1$ is differentiable but $f_2$ is not, each step of the proximal gradient method can be written as
\begin{align}
    \bbeta_{n+1} = \operatorname{prox}_{h_n}\Big( \bbeta_n - h_n\nabla f_1(\bbeta_n) \Big),
\label{slope_prox_step}
\end{align}
where $h_n$ is the step size, and $\operatorname{prox}_{h_n}(\cdot)$ is the proximal mapping defined as
\begin{align*}
    \operatorname{prox}_{h}(\xb) = \argmin_{\zb} \frac{1}{2h}\|\xb-\zb\|^2 + f_2(\zb).
\end{align*}
One can see that the proximal mapping in \eqref{slope_prox_step} forces the new candidate $\bbeta_{n+1}$ to stay close to the gradient update of $f_1$, and also makes $f_2$ small. The proximal mapping can be solved with the PAVA algorithm for isotonic regression. See \cite[Algorithm 3]{bogdan2015slope} for details. 

After solving $\hat\bbeta$, we debias it. The vector $\vb$ in step \ref{step_2} can be computed analytically by Lemma \ref{slope_unknownl1_step2_sol} with $C$ picked according to \eqref{slope_c_pick}, and $K$ is constructed as $K=\{\bbeta: \|\bbeta\|_1 \leq \|\vb\|_1\}$. Then in step \ref{step_2} we use $\vb$ and $K$ to get $\hat\bmeta$ via \eqref{opt_step3}. This can be done in the same way as in step 2 of the LASSO version Algorithm \ref{algo_lasso} since in both cases the set $K$ is an $\ell_1$ ball.
\begin{algorithm}[ht]
\caption{Debias the $j$\textsuperscript{th} Coefficient in SLOPE}
\label{algo_slope}
\begin{algorithmic}
\STATE \textbf{Input:} Two equal size partitions $(\overline\Xb, \overline\bY)$ and $(\tilde\Xb, \tilde\bY)$; $\hat\bbeta$ as a SLOPE estimator. $s^u$ upper bound on $s$, $C$ a sufficiently large tuning parameter.\\
\STATE \textbf{Initialize:} Empirical Gram matrix of the second partition $\hat\bSigma=\frac{1}{n}\tilde\Xb\T\tilde\Xb$.
\begin{enumerate}
    \item $c\leftarrow\sqrt{\bigg(C^2\frac{s^u \log 2ep/s^u}{n} - \sum_{i = s^u + 1}^p \hat \bbeta_{\#i}^2\bigg)/s^u}$,  $\vb \leftarrow (0, \ldots, 0)$\\
    Assign $\vb_{\#i} = \hat\bbeta_{\#i} + \sign(\hat \bbeta_{\#i}) c \text{ for } i = 1,\ldots, s^u$
    
    \item Run Algorithm \ref{algo_step3}. Compute $\Pi_{\cN_{K}(\vb)}(\cdot)$ by \eqref{proj_tan1}, \eqref{proj_tan2}, and apply Moreau's decomposition to get $\Pi_{\cT_K(\vb)}(\cdot)$. For $\Pi_{-\cT_K(\vb)}(\cdot)$ use \eqref{neg:cone:calc}\\
    The debiased $j$\textsuperscript{th} coefficient $\hat\bbeta_{d}^{(j)} \leftarrow \vb^{(j)} + n^{-1}\hat\bmeta\T \tilde\Xb\T (\tilde\bY - \tilde\Xb\vb)$.
\end{enumerate}
\end{algorithmic}
\end{algorithm}

\subsection{Debiasing Algorithm for Square-Root SLOPE}
To solve the square-root SLOPE, the joint optimization \eqref{sqrt_slope_opt} can be solved by alternatively minimizing $\bbeta$ and $\sigma$: the minimization in $\bbeta$ is the same as SLOPE in \eqref{slope_opt} with parameters $\hat \sigma\lambda_1,\ldots,\hat \sigma\lambda_p$, and after that setting $\hat \sigma$ to $\hat \sigma = \|\overline\bY-\overline{\Xb}\bbeta\|/\sqrt{n}$. Details can be found in \citep[Algorithm 1]{stucky2017sharp} and \citep[Algorithm 2]{derumigny2018improved}. The debiasing algorithm for square-root SLOPE is the same as Algorithm \ref{algo_slope}.

\section{Non-Gaussian Errors}\label{subGaussian:noise:section}
In this section we modify our Algorithm \ref{algo_unknowncov} to accommodate for sub-Gaussian noise. The modified procedure is presented in Algorithm \ref{algo_unknowncov:subgaussian:noise}. Algorithm \ref{algo_unknowncov:subgaussian:noise} requires an additional condition in step 1, namely $\|\vb-\bbeta^*\|\sqrt{\log n}=o_p(1)$. We view this as a fairly mild assumption, which in most relevant practical cases is dominated by the assumption $\|\vb-\bbeta^*\|\overline w(\cT_K(\vb) \cap \mathbb{S}^{p-1}) = o_p(1)$. In step 2 of Algorithm \ref{algo_unknowncov:subgaussian:noise}, we have added an additional $\ell_{\infty}$ constraint to the optimization. Observe that the modified program in step 2 is still a convex program, and can be solved by subgradient descent as before. 

\begin{algorithm}[ht]
\caption{Debias the $j$\textsuperscript{th} Coordinate of A Non-Ordinary Least Squares Estimator}
\label{algo_unknowncov:subgaussian:noise}
\begin{algorithmic}
\STATE\textbf{Input:} Two equal size partitions $(\overline\Xb, \overline\bY)$ and $(\tilde\Xb, \tilde\bY)$, $\hat\bbeta$ obtained using $(\overline\Xb, \overline\bY)$.\\
\STATE\textbf{Initialize:} Empirical Gram matrix of the second partition $\hat\bSigma=\frac{1}{n}\tilde\Xb\T\tilde\Xb$.
\begin{enumerate}
    \item 
    Using the first data split, find a convex set $K$ and a vector $\vb$, such that: $\vb, \bbeta^* \in K$ with high probability, and $ \|\vb-\bbeta^*\|\max\{\overline w(\cT_K(\vb)\cap \mathbb{S}^{p-1}), \sqrt{\log n}\}=o_p(1)$.
    \item \label{step2_non_gaussian_noise}
    The debiased $j$\textsuperscript{th} coefficient $\hat\bbeta_{d}^{(j)} \leftarrow \eb^{(j)\top}\vb + n^{-1}\hat\bmeta\T \tilde\Xb\T (\tilde\bY - \tilde\Xb\vb)$, where $\hat\bmeta$ is computed by 
    \begin{align}\hat\bmeta \leftarrow \argmin_{\bmeta}\, \|\hat\bSigma^{\frac{1}{2}}\bmeta\| \mbox{   subject to  } \label{optimization:algo6}\\
     \displaystyle\sup_{\ub \in \cT_K(\vb) \cap \mathbb{S}^{p-1}} |(\bmeta\T \hat \bSigma - \eb^{(j)\top}) \ub|  \leq \rho \frac{\overline w(\cT_K(\vb)\cap\mathbb{S}^{p-1})}{\sqrt{n}},\nonumber\\
     \|\tilde\Xb\bmeta\|_{\infty} \leq \rho' \sqrt{\log n},\nonumber
    \end{align}
    for some sufficiently large tuning parameters $\rho>0, \rho'>0$.
\end{enumerate}
\end{algorithmic}
\end{algorithm}

To show that the new optimization program has a feasible point and consequently a non-empty interior, we evaluate the constraints at the point $\bmeta = \bSigma^{-1}\eb^{(j)}$. By using a similar argument to that of \cite[p. 33]{javanmard2014confidence} we are able to show that $\|\tilde\Xb\bSigma^{-1}\eb^{(j)}\|_{\infty}\lesssim\sqrt{\log n}$, and the argument of non-empty interior is similar to how we prove Lemma \ref{nonempty_int_Q}. Details are given in Lemma \ref{non_empty_int_nonGaussian} and its proof.
\begin{lemma}
\label{non_empty_int_nonGaussian}
Suppose that $\tilde\Xb=(\tilde\bX_1,\ldots,\tilde\bX_n)\T$ where every observation $\tilde\bX_i$ is a zero-mean bounded or a zero-mean Gaussian random variable with covariance matrix $\bSigma$, and the eigenvalues of $\bSigma$ are bounded from above and below. For a sufficiently large constant $\rho'>0$, the set 
\begin{align*}
\{\bmeta:\,\,\|\tilde\Xb\bmeta\|_{\infty} \leq \rho' \sqrt{\log n}\} \cap \bigg\{\bmeta:\,\, \displaystyle\sup_{\ub \in \cT_K(\vb) \cap \mathbb{S}^{p-1}} |(\bmeta\T \hat \bSigma - \eb^{(j)\top}) \ub|  \leq \rho \frac{\overline w(\cT_K(\vb)\cap\mathbb{S}^{p-1})}{\sqrt{n}}\bigg\},
\end{align*}
has a non-empty interior.
\end{lemma}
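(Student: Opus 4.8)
The plan is to produce a single point $\bmeta_0$ that lies strictly inside each of the two constraint sets, and then use continuity of the constraint functions to conclude that an entire open ball around $\bmeta_0$ is contained in their intersection. The natural candidate, exactly as in Theorem \ref{feasible_point} and Corollary \ref{nonempty_int_Q}, is the population debiasing direction $\bmeta_0 = \bSigma^{-1}\eb^{(j)}$.

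The second (semi-infinite) constraint is immediate from the earlier results. Writing $g(\bmeta) = \sup_{\ub \in \cT_K(\vb)\cap\mathbb{S}^{p-1}} |(\bmeta\T\hat\bSigma - \eb^{(j)\top})\ub|$, Theorem \ref{feasible_point} gives, with probability converging to one, $g(\bmeta_0) \leq c\,\overline w(\cT_K(\vb)\cap\mathbb{S}^{p-1})/\sqrt{n}$ for some absolute constant $c$; choosing $\rho > c$ makes this a strict inequality. The function $g$ is a supremum of absolute values of affine functions of $\bmeta$, hence convex, and it is $\|\hat\bSigma\|_{\operatorname{op}}$-Lipschitz since $|g(\bmeta_1) - g(\bmeta_2)| \leq \sup_{\|\ub\|=1}|(\bmeta_1 - \bmeta_2)\T\hat\bSigma\ub| \leq \|\hat\bSigma\|_{\operatorname{op}}\|\bmeta_1 - \bmeta_2\|$. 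Continuity of $g$ then upgrades strict feasibility at $\bmeta_0$ to an open neighborhood of $\bmeta_0$ on which the second constraint holds.

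The bulk of the work is the new $\ell_\infty$ constraint, i.e. showing $\|\tilde\Xb\bSigma^{-1}\eb^{(j)}\|_\infty < \rho'\sqrt{\log n}$ with high probability for $\rho'$ large. I would observe that the $i$-th coordinate of $\tilde\Xb\bSigma^{-1}\eb^{(j)}$ is $\tilde\bX_i\T\bSigma^{-1}\eb^{(j)}$, a mean-zero random variable; in the Gaussian case it is exactly $N(0, \eb^{(j)\top}\bSigma^{-1}\eb^{(j)})$, and in the bounded case it is bounded, hence sub-Gaussian. In either case the bounded-spectrum assumption on $\bSigma$ guarantees a variance proxy $\tau^2 = O(1)$ uniform in $n$ and $p$, so the sub-Gaussian maximal inequality gives $\PP(\max_{i\in[n]}|\tilde\bX_i\T\bSigma^{-1}\eb^{(j)}| > t) \leq 2n\exp(-t^2/(2\tau^2))$. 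Setting $t = \rho'\sqrt{\log n}$ produces a bound of order $n^{1-(\rho')^2/(2\tau^2)}$, which tends to zero once $\rho' > \tau\sqrt{2}$; this is precisely the argument of \cite[p.~33]{javanmard2014confidence}. Thus for $\rho'$ large enough $\bmeta_0$ satisfies the $\ell_\infty$ constraint strictly with probability converging to one, and since $\bmeta \mapsto \|\tilde\Xb\bmeta\|_\infty$ is convex and Lipschitz, strict feasibility again yields an open neighborhood of $\bmeta_0$.

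To finish, I would intersect the two open neighborhoods: their intersection is open, contains $\bmeta_0$, and lies in the intersection of the two constraint sets, so the latter has non-empty interior on the event (of probability tending to one) that both strict inequalities hold. I expect the only genuinely nontrivial step to be the $\ell_\infty$ maximal bound, and within it the point requiring care is the uniform control of the variance proxy $\tau^2$ of $\tilde\bX_i\T\bSigma^{-1}\eb^{(j)}$, which is exactly where the assumption that the eigenvalues of $\bSigma$ are bounded above and below enters; everything else is continuity bookkeeping that parallels Corollary \ref{nonempty_int_Q}.
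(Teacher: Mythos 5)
Your proposal is correct and follows essentially the same route as the paper: take $\bmeta_0=\bSigma^{-1}\eb^{(j)}$, use the sub-Gaussianity of each coordinate $\tilde\bX_i\T\bSigma^{-1}\eb^{(j)}$ (with variance proxy controlled by the bounded spectrum of $\bSigma$) plus a maximal inequality to get $\|\tilde\Xb\bmeta_0\|_\infty\lesssim\sqrt{\log n}$, invoke Theorem \ref{feasible_point}/Corollary \ref{nonempty_int_Q} for the semi-infinite constraint, and pass from strict feasibility to an open ball via Lipschitz continuity of both constraint functions. The paper's perturbation argument $\|\tilde\Xb(\bmeta_0+\delta\bx)\|_\infty\le\rho'\sqrt{\log n}+\delta\|\tilde\Xb\bx\|_\infty$ is exactly your continuity bookkeeping made explicit, so there is nothing to add.
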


Solving the optimization in step 2 of Algorithm \ref{algo_unknowncov:subgaussian:noise} is similar to solving the optimization in step 2 of Algorithm \ref{algo_unknowncov}, since both of them are convex programs with inequality constraints. The only difference is that the former has two constraints while the latter has only one. According to \cite[Section 7]{boyd2003subgradient}, the idea of solving optimization with multiple inequality constraints is: if the current point is feasible, subgradient descent is applied to the objective function; if the current point is not feasible, we pick any one of the violated constraints, and apply subgradient descent to it. Define
\begin{align*}
    \psi'(\bmeta) = \|\tilde\Xb\bmeta\|_{\infty}-\rho'\sqrt{\log n}.
\end{align*}
The second constraint in step 2 of Algorithm \ref{algo_unknowncov:subgaussian:noise} can be written as $Q'=\{\bmeta: \psi'(\bmeta)\leq0\}$. To this end we remind the reader of the shorthand notations $\psi(\bmeta)$ from \eqref{psi:eta:Q:notation}, and $Q = \{\bmeta:\, \psi(\bmeta)\leq0\}$. The sequence $\{\bmeta_n\}$ is generated as in \eqref{sg_step}, where $\gb_n$ is the gradient of the objective function if $\bmeta_n\in Q$ and $\bmeta_n\in Q'$; is a subgradient of $\psi(\bmeta_n)$ if $\bmeta_n\notin Q$; otherwise is a subgradient of $\psi'(\bmeta_n)$ if $\bmeta_n\in Q$ and $\bmeta_n\notin Q'$. In the following Lemma \ref{sg_psi_prime} we give the expression of a subgradient of $\psi'(\bmeta_n)$.
\begin{lemma}
\label{sg_psi_prime}
Let $i^*=\argmax\limits_{i\in[n]}|\tilde\bX_i\T\bmeta_n|$. Then $\nabla\psi'(\bmeta_n) = \sgn(\tilde\bX_{i^*}\T\bmeta_n)\tilde\bX_{i^*}$ is a subgradient of $\psi'(\bmeta_n)$.
\end{lemma}
After adding the new constraint $\psi'(\bmeta)\leq0$, Algorithm \ref{algo_step3} is modified to Algorithm \ref{algo_step3_nonGaussian_noise}. In terms of the convergence of Algorithm \ref{algo_step3_nonGaussian_noise}, it also takes $n=O(1/\epsilon^2)$ iterations to get an $\epsilon$-suboptimal solution i.e. $\|\hat \bSigma^{\frac{1}{2}}\bmeta_n\|-\|\hat \bSigma^{\frac{1}{2}}\bmeta^*\|\leq\epsilon$. The proof of Lemma \ref{cvg_sbgrad} will remain unchanged since $\psi'(\bmeta_n)$ is a Lipschitz function of $\bmeta_n$ (since with probability $1$, $\sup_{i \in [n]} \|\tilde \bX_i\| < \infty$).
\begin{algorithm}
\caption{Solve the Optimization in Step 2 of Algorithm \ref{algo_unknowncov:subgaussian:noise}}
\label{algo_step3_nonGaussian_noise}
\begin{algorithmic}
\STATE \textbf{Input:} The convex set $K$, the vector $\vb$ from step 2, empirical Gram matrix of the second partition $\hat\bSigma=\frac{1}{n}\tilde\Xb\T\tilde\Xb$.\\
\STATE \textbf{Initialize:} $\bmeta_1$\\
\STATE Run for sufficiently long time:\\
\STATE \hspace*{\algorithmicindent} Compute $P_{+} \leftarrow\Pi_{\cT_{K}(\vb)}(\hat\bSigma\bmeta_{n} - \eb^{(j)})$, $P_{-} \leftarrow\Pi_{-\cT_{K}(\vb)}(\hat\bSigma\bmeta_{n} - \eb^{(j)})$.\\
\STATE \hspace*{\algorithmicindent} \algorithmicif{ $ \max\{\|P_+\|, \|P_-\|\}\leq \frac{\rho\overline w(\cT_K(\vb)\cap\mathbb{S}^{p-1})}{\sqrt{n}}$ \& $\|\Xb \bmeta_n\|_{\infty}\leq \rho'\sqrt{\log n}$}  \\
\STATE \hspace*{\algorithmicindent}\hspace*{\algorithmicindent} \algorithmicif{\,\,\ $\|\hat\bSigma^{\frac{1}{2}}\bmeta_{n}\| \leq \|\hat\bSigma^{\frac{1}{2}}\bmeta_{out}\|$:} $\bmeta_{out}\leftarrow \bmeta_{n}$\\
\STATE \hspace*{\algorithmicindent}\hspace*{\algorithmicindent} $\bmeta_{n+1} \leftarrow \bmeta_{n} - h_n\frac{\hat\bSigma \bmeta_{n}}{\|\hat\bSigma^{\frac{1}{2}} \bmeta_{n}\|}$\\
\STATE \hspace*{\algorithmicindent} \algorithmicelse \algorithmicif{ $ \max\{\|P_+\|, \|P_-\|\}> \frac{\rho\overline w(\cT_K(\vb)\cap\mathbb{S}^{p-1})}{\sqrt{n}}$}:\\
\STATE \hspace*{\algorithmicindent}\hspace*{\algorithmicindent} $\phi_0(\bmeta_{n}) \leftarrow P_+ \,/\, \|P_+\|$\\
\STATE \hspace*{\algorithmicindent}\hspace*{\algorithmicindent} $\phi_1(\bmeta_{n}) \leftarrow P_- \,/\, \|P_-\|$.\\
\STATE \hspace*{\algorithmicindent}\hspace*{\algorithmicindent} $\bmeta_{n+1} \leftarrow \bmeta_{n} - h_n \hat\bSigma\phi_{\one\{(\bmeta_{n}\T \hat \bSigma - \eb^{(j)\top}) (\phi_0(\bmeta_{n-1}) - \phi_1(\bmeta_{n-1})) < 0\}}(\bmeta_{n})$\\
\STATE \hspace*{\algorithmicindent} \algorithmicelse :\\
\STATE \hspace*{\algorithmicindent}\hspace*{\algorithmicindent} $\bmeta_{n+1} \leftarrow \bmeta_{n} - h_n \sign(\tilde\bX_{i^*}\T\bmeta_n)\tilde\bX_{i^*}$, where $i^*=\argmax\limits_{i\in[n]}|\tilde\bX_i\T\bmeta_n|$\\
\STATE $\hat\bmeta \leftarrow \bmeta_{out}$.
\end{algorithmic}
\end{algorithm} 

We now state a result which establishes the confidence interval for non-Gaussian noise.

\begin{theorem}
\label{non_gaussian}
Consider a linear model in \eqref{gaussian_model} and with sub-Gaussian errors $\varepsilon_i$. Suppose the eigenvalues of $\bSigma$ are bounded from both above and below. Recall that $\hat\bbeta_{d}^{(j)}$ is the debiased $j$\textsuperscript{th} coefficient obtained by Algorithm \ref{algo_unknowncov:subgaussian:noise}. Let $a_n = o(1)$ be any slowly converging to $0$ rate such that $\frac{1}{a_n}=o(\frac{n}{\log n})$, and let $c$ be sufficiently large constant satisfying $c >C'\frac{\sqrt{\log n}/\sqrt{(\|\bbeta^* - \vb\|\sqrt{\log n}) \vee a_n}}{\sqrt{n}} = o_p(1)$, where $C'$ is a universal constant. Then the confidence interval
\begin{align}
\label{ci_non_gaussian_noise}
    \bigg( \hat\bbeta_d^{(j)}-z_{\frac{\alpha}{2}}\frac{\sigma(\|\hat\bSigma^{\frac{1}{2}}\hat\bmeta\|\vee c)}{\sqrt{n}}, \hat\bbeta_d^{(j)}+z_{\frac{\alpha}{2}}\frac{\sigma(\|\hat\bSigma^{\frac{1}{2}}\hat\bmeta\|\vee c)}{\sqrt{n}} \bigg),
\end{align}
contains $\bbeta^*$ with probability at least $1-\alpha$ asymptotically.
\end{theorem}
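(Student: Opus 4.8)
The plan is to reduce everything to the distribution-free algebraic identity already isolated in the Gaussian case and then replace the exact normality of the leading term by a triangular-array central limit theorem. First I would note that the debiasing formula in step \ref{step2_non_gaussian_noise} of Algorithm \ref{algo_unknowncov:subgaussian:noise} yields, exactly as in Theorem \ref{debiase_formula_applicable_unkowncov},
\begin{align*}
    \sqrt{n}\big(\hat\bbeta_d^{(j)} - \bbeta^{*(j)}\big) = Z_j + \Delta_j, \quad Z_j = \tfrac{1}{\sqrt n}\hat\bmeta\T\tilde\Xb\T\bvarepsilon, \quad \Delta_j = \sqrt n\,(\hat\bmeta\T\hat\bSigma - \eb^{(j)\top})(\bbeta^* - \vb),
\end{align*}
since this decomposition is purely algebraic and uses no distributional assumption on the noise. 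Writing $s_n := \|\hat\bSigma^{\frac12}\hat\bmeta\|$, the goal is to dominate the tails of $(Z_j+\Delta_j)/(\sigma(s_n\vee c))$ by those of a standard normal. Here $\hat\bmeta$ exists and $s_n$ is bounded because $\bSigma^{-1}\eb^{(j)}$ is feasible for both constraints of \eqref{optimization:algo6} with high probability --- Theorem \ref{feasible_point} covers the cone constraint and the argument behind Lemma \ref{non_empty_int_nonGaussian} covers the $\ell_\infty$ constraint --- so by optimality $s_n \le \|\hat\bSigma^{\frac12}\bSigma^{-1}\eb^{(j)}\| = O_p(1)$ under the spectral bounds on $\bSigma$.

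Second, I would dispatch the bias $\Delta_j$. Since $\vb,\bbeta^*\in K$ and $K$ is convex, $\bbeta^*-\vb\in\cT_K(\vb)$; dividing by its norm and invoking the first constraint of \eqref{optimization:algo6} gives $|\Delta_j|\le \rho\,\overline w(\cT_K(\vb)\cap\mathbb{S}^{p-1})\,\|\bbeta^*-\vb\|$, which is $o_p(1)$ by the step 1 requirement of Algorithm \ref{algo_unknowncov:subgaussian:noise}. Because the normalization satisfies $s_n\vee c\ge c$, this shows $\Delta_j/(\sigma(s_n\vee c)) = o_p(1)$: the threshold on $c$ is chosen precisely so that $c$ stays above the vanishing bias scale (and, in the form stated via $d_n:=(\|\bbeta^*-\vb\|\sqrt{\log n})\vee a_n$, above the normal-approximation scale below).

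Third, and most substantively, I would establish a conditional central limit theorem for $Z_j$. Conditioning on the first split and on $\tilde\Xb$ fixes the weights $b_i:=\hat\bmeta\T\tilde\bX_i$, so $Z_j=\tfrac{1}{\sqrt n}\sum_{i\in[n]}b_i\varepsilon_i$ is a weighted sum of independent mean-zero sub-Gaussian errors with conditional variance $\sigma^2 s_n^2$. The role of the newly added $\ell_\infty$ constraint is exactly to bound the largest weight, $\max_i|b_i|=\|\tilde\Xb\hat\bmeta\|_\infty\le\rho'\sqrt{\log n}$, after which a Berry--Esseen bound for independent, non-identically-distributed summands gives
\begin{align*}
    \sup_{t\in\RR}\Big|\PP\big(Z_j/(\sigma s_n)\le t \,\big|\, \overline\Xb,\overline\bY,\tilde\Xb\big)-\Phi(t)\Big| \lesssim \frac{\EE|\varepsilon|^3}{\sigma^3}\,\frac{\sqrt{\log n}/\sqrt n}{s_n},
\end{align*}
where the final ratio follows from $\sum_i|b_i/\sqrt n|^3\le (\max_i|b_i|/\sqrt n)\,s_n^2$. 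On the event $\{s_n\ge c\}$ this error is $\lesssim \sqrt{\log n}/(\sqrt n\,c)\to 0$, since the hypotheses (using $d_n\ge a_n$ and $1/a_n=o(n/\log n)$, whence $\sqrt{\log n}/(\sqrt n\sqrt{d_n})=o(1)$) force $c$ to dominate $\sqrt{\log n}/\sqrt n$.

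Finally I would assemble the coverage claim by splitting on the size of $s_n$ and integrating the conditional bounds. On $\{s_n\ge c\}$ the normalization is the true conditional standard deviation, the Berry--Esseen error vanishes, and $\Delta_j/(\sigma s_n)=o_p(1)$, so $(Z_j+\Delta_j)/(\sigma s_n)$ converges conditionally in distribution to a standard normal and the interval \eqref{ci_non_gaussian_noise} attains asymptotic coverage $1-\alpha$. On $\{s_n< c\}$ the normalizer $\sigma c$ strictly exceeds the conditional standard deviation $\sigma s_n$, so the standardized statistic has no heavier tails than $N(0,1)$ and the interval is conservative, contributing coverage at least $1-\alpha$; taking expectations over the conditioning variables combines the two regimes into $\liminf_n \PP(\text{cover}) \ge 1-\alpha$, while $s_n\vee c = O_p(1)$ keeps the width at the $1/\sqrt n$ rate. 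The main obstacle is exactly this small-$s_n$ regime coupled with the \emph{random} normalizer: one must verify that the single truncation level $c$ simultaneously (i) dominates the max-weight scale $\sqrt{\log n}/\sqrt n$ so that the normal approximation is valid whenever $s_n\ge c$, (ii) dominates the bias $\Delta_j$, and (iii) remains $o_p(1)$ so the confidence interval still shrinks like $1/\sqrt n$. Making the conservativeness step in the $\{s_n<c\}$ regime rigorous --- rather than a heuristic variance comparison --- is the delicate point, and the calibrated choice of $c$ in terms of $d_n$ is the device that balances these competing requirements.
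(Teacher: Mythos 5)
Your decomposition, your treatment of the bias term, and your large-$\|\hat\bSigma^{1/2}\hat\bmeta\|$ regime (Berry--Esseen with the max weight controlled by the new $\ell_\infty$ constraint, which is essentially the paper's Lindeberg--Feller argument made quantitative) all match the paper. The genuine gap is exactly where you flag it: the $\{s_n < c\}$ regime. Your claim that ``the normalizer $\sigma c$ strictly exceeds the conditional standard deviation $\sigma s_n$, so the standardized statistic has no heavier tails than $N(0,1)$'' is false for non-Gaussian summands. A random variable with variance at most $1$ can place far more than $\alpha$ of its mass outside $(-z_{\alpha/2}, z_{\alpha/2})$; Chebyshev only gives $\PP(|W|>z_{\alpha/2}) \le 1/z_{\alpha/2}^2$, and sub-Gaussianity of the errors does not rescue this because on $\{s_n<c\}$ the ratio $s_n/c$ can be arbitrarily close to $1$, so the resulting tail bound carries an unfavorable constant depending on the ratio of the $\psi_2$-norm to the standard deviation of $\varepsilon$. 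When $s_n$ is small the Berry--Esseen error $\sqrt{\log n}/(\sqrt{n}\,s_n)$ need not vanish, so there is no normal approximation to fall back on, and variance domination alone cannot close the argument.

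The paper resolves this with a mechanism you are missing: an adaptation of Javanmard--Montanari's Lemma 3.1 (Lemma \ref{jm_lemma3.1} in the paper), which lower-bounds $\|\hat\bSigma^{1/2}\hat\bmeta\|^2$ by $(|u_j^*|-\rho\lambda)^2\mathbbm{1}\{|u_j^*|\ge\rho\lambda\}/(\ub^{*\top}\hat\bSigma\ub^*)$ with $\ub^*=(\bbeta^*-\vb)/\|\bbeta^*-\vb\|$. Combined with a two-sided bound on $\ub^{*\top}\hat\bSigma\ub^*$ (Lemma \ref{bdd_u*Sigu*}), smallness of $\|\hat\bSigma^{1/2}\hat\bmeta\|$ forces $|\bbeta^{*(j)}-\vb^{(j)}| = o_p(1/\sqrt{n})$, and a separate computation shows the correction term $n^{-1}\hat\bmeta\T\tilde\Xb\T(\tilde\bY-\tilde\Xb\vb)$ is also $o_p(1/\sqrt{n})$ in this regime. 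Hence $\hat\bbeta_d^{(j)}-\bbeta^{*(j)}=o_p(1/\sqrt{n})$: the point estimate is super-efficient, and any interval of width $\asymp c/\sqrt{n}$ centered at it covers with probability tending to $1$ --- no distributional statement about $Z_j$ is needed at all. This is the idea required to make the small-$s_n$ case rigorous, and without it (or some substitute) your proof does not establish the coverage claim.
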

It is worthwhile to mention that even though the length of the confidence interval \eqref{ci_non_gaussian_noise} is always of the order $O(1/\sqrt{n})$, when the quantity $\|\hat\bSigma^{\frac{1}{2}}\hat\bmeta\|$ is very small such that
\begin{align*}
    \|\hat\bSigma^{\frac{1}{2}}\hat\bmeta\| \lesssim \frac{\sqrt{\log n}/\sqrt{(\|\bbeta^* - \vb\|\sqrt{\log n}) \vee a_n}}{\sqrt{n}},
\end{align*}
as can be seen from our proof, the debiased estimator $\hat \bbeta_d^{j}$ actually converges faster than the rate $1/\sqrt{n}$. In this case the confidence interval \eqref{ci_non_gaussian_noise} is still valid, but not very efficient. And contrarily if 
\begin{align*}
    \|\hat\bSigma^{\frac{1}{2}}\hat\bmeta\| \gtrsim \frac{\sqrt{\log n}/\sqrt{(\|\bbeta^* - \vb\|\sqrt{\log n}) \vee a_n}}{\sqrt{n}},
\end{align*}
then a Central Limit Theorem applies to $\sqrt{n}(\hat\bbeta_d^{(j)} - \bbeta^{*(j)})$, and the variance would be exactly $\sigma\|\hat\bSigma^{\frac{1}{2}}\hat\bmeta\|$. Thus the confidence interval is tight when $\|\hat\bSigma^{\frac{1}{2}}\hat\bmeta\|\geq c$, and is slightly loose when $\|\hat\bSigma^{\frac{1}{2}}\hat\bmeta\|< c$ since we are using a slightly larger variance.

Finally, we can also consistently estimate $\sigma$ as in Theorem \ref{sigma_hat_rate} whose proof does not rely on the Gaussian assumption on the noise.

\section{Simulations}\label{simulations:sec}
Now we examine the performance of the proposed debiasing procedure for the monotone cone regression, positive monotone cone regression, non-negative least squares, constrained LASSO, SLOPE and square-root SLOPE cases. We pick a single coordinate to debias. In all the experiments of this section, the last coordinate of the signal vector is picked. 

In terms of the construction of true coefficient $\bbeta^*$, for the monotone cone case, $\bbeta^*$ consists of -1 and 1, where the first 70\% coordinates are -1, and the remaining 30\% are 1. For the positive monotone cone case, the true coefficient $\bbeta^*$ consists of 0 and 1, where the first 70\% coordinates are 0, and the remaining 30\% are 1. For the non-negative least squares case (see Appendix \ref{non:negative:least:squares} of the supplement file for details), we generate $\bbeta^*$ such that each coordinate is $\max\{N(0,3), 0\}$. For the LASSO case, $\bbeta^*$ consists of 0 and 1, where the first 99.5\% of the coordinates are 0, and the remaining 0.5\% are 1. For the SLOPE and square-root SLOPE cases, the first 99.5\% of the true $\bbeta^*$ are 0, the remaining coordinates are formed by an increasing series of integers with step size $1$ starting from $1$. 
In terms of the sample size $n$ and dimension $p$, we use $n=100, p=100$ for the monotone cone and positive monotone cone cases. Note that this conforms to our assumption that $w^2(\cT_K(\bbeta^*)\cap\mathbb{S}^{p-1}) = o(\sqrt{n})$ since the vector $\bbeta^*$ is comprised only of 2 constant pieces. For the non-negative least squares case, we pick $n=1000,\,p=50$ in order to make $w^2(\cT_K(\bbeta^*)\cap\mathbb{S}^{p-1})\asymp p$ (see Lemma \ref{width:bound:nonneg:orthant}) approximately comparable to $\sqrt{n}$. For LASSO, SLOPE and square-root SLOPE, we use $n=1000,\, p=1000$. Coupled with the small proportion of non-zero coordinates in $\bbeta^*$ this guarantees that $w^2(\cT_K(\bbeta^*)\cap\mathbb{S}^{p-1})\asymp{s\log(ep/s)}$ is smaller than $\sqrt{n}$, where $s$ denotes the sparsity of $\bbeta^*$.

The predictors $\Xb$ are drawn from a mean-zero Gaussian distribution. In order to verify the compatibility of this debiasing procedure with different types of input data, three different covariance matrices $\bSigma$ are used to generate different Gaussian distributions: an identity matrix, a random matrix with bounded eigenvalues, and a Toeplitz matrix whose $i,j$-th element is $\rho^{|i-j|}$ where $\rho\in(0, 1)$ (we use $\rho = 0.4$).

For each type of the predictor and covariance matrix $\bSigma$, we generate the data $\Xb$, $\bY$, $\bbeta^*$, we obtain the original estimator $\hat\bbeta$, and perform Algorithm \ref{algo_unknowncov} to debias the last coordinate. The experiment is repeated $100$ times. According to Theorem \ref{debiase_formula_applicable_unkowncov}, for any coordinate $j$, the debiased estimator $\hat\bbeta_d^{(j)}$ should satisfy $\sqrt{n}(\hat\bbeta_d^{(j)}-\bbeta^{*(j)})\sim N(0, \sigma^2\hat\bmeta\T\hat\bSigma\hat\bmeta)$, which doesn't necessarily hold for the non-debiased estimator $\hat\bbeta^{(j)}$. In Figure \ref{fig:qq_identity}, we examine the distribution of $\frac{\sqrt{n}(\hat\bbeta_d^{(j)}-\bbeta^{*(j)})}{\hat\sigma\|\hat\bSigma\hat\bmeta\|}$ and $\frac{\hat\bbeta^{(j)}-\bbeta^{*(j)}}{sd(\hat\bbeta^{(j)}-\bbeta^{*(j)})}$ for $j=p$, by plotting them against the standard Gaussian distribution in a Q-Q plot. We can see from those plots that $\frac{\sqrt{n}(\hat\bbeta_d^{(j)}-\bbeta^{*(j)})}{\hat\sigma\|\hat\bSigma\hat\bmeta\|}$ appears pretty close to $N(0,1)$, which is not true for $\frac{\hat\bbeta^{(j)}-\bbeta^{*(j)}}{sd(\hat\bbeta^{(j)}-\bbeta^{*(j)})}$ in terms of both bias and variance difference. It is worth pointing out that for the SLOPE and square-root SLOPE cases, although the undebiased estimators points appear to align well on the Q-Q plot they are not centered at the correct value. Figure \ref{fig:qq_identity} only reports the results in the setting $\bSigma=\Ib$. Similar plots for the bounded eigenvalue and Toeplitz population covariance matrix settings are attached in the Supplement-A file for conciseness.

\begin{figure}
\makebox[\textwidth]{\includegraphics[width=1.3\textwidth]{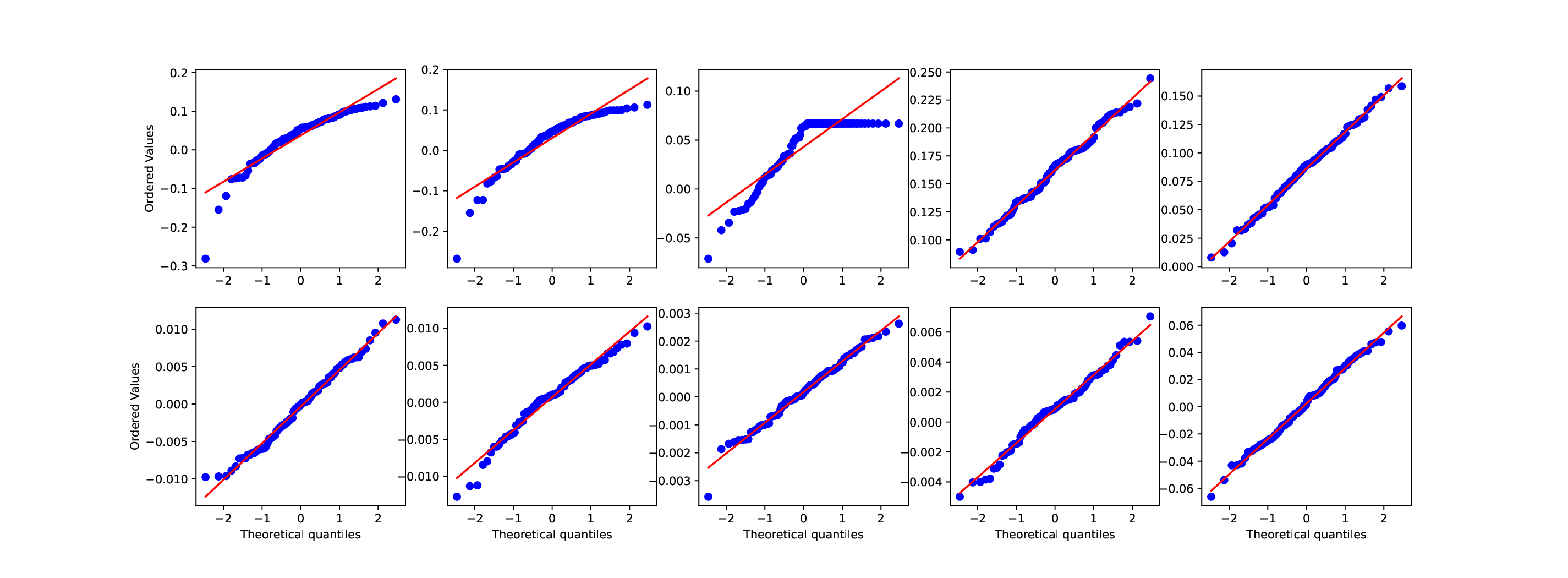}}
\caption{ The Q-Q Plot of $\frac{\sqrt{n}(\hat\bbeta_d^{(j)}-\bbeta^{*(j)})}{\hat\sigma\|\hat\bSigma\hat\bmeta\|}$ and $\frac{\hat\bbeta^{(j)}-\bbeta^{*(j)}}{sd(\hat\bbeta^{(j)}-\bbeta^{*(j)})}$ where $j=p$, against Standard Normal in the Identity Population Matrix Setting. {\footnotesize \textit{The Upper Row: $\frac{\hat\bbeta^{(j)}-\bbeta^{*(j)}}{sd(\hat\bbeta^{(j)}-\bbeta^{*(j)})}$ the scaled Difference between the Undebiased Estimator and the True Coefficient; the Lower Row: $\frac{\sqrt{n}(\hat\bbeta_d^{(j)}-\bbeta^{*(j)})}{\hat\sigma\|\hat\bSigma\hat\bmeta\|}$ the scaled Difference between the Debiased Estimator and the True Coefficient. From Left to Right: Monotone Cone Regression, Positive Monotone Cone Regression, Non-negative Regression, LASSO, SLOPE, Square-root SLOPE.} }}
\label{fig:qq_identity}
\end{figure}

\section{Future Work}\label{discussion:section}

In this paper we proposed a novel abstract procedure for debiasing linear regressions. Our method is able to perform inference for some constrained and regularized problems for which inferential tools were not previously available. 

An interesting further question to explore is whether we can prove lower bounds on confidence intervals obtained in the above way such as the work of \cite{cai2017confidence}. In other words are the conditions $w^2(\cT_K(\vb') \cap \mathbb{S}^{p-1}) = o(\sqrt{n})$ and $\|\vb' -\bbeta^*\|= o(1/\sqrt{n})$ also necessary for the unknown covariance case?

Another open question is debiasing the constrained least squares using \eqref{debiase_formula_split:main:text} in the unknown covariance case but without resorting to sample splitting. Our conjecture is that sample splitting is not required, but a proof of this fact will require carefully isolating the dependency of $\hat \bbeta$ on $\Xb$. For this purpose, it may be necessary to employ a slightly different debiasing scheme as the one undertook by \cite{bellec2019biasing}. 

Furthermore, the question of how can one solve the second optimization program if projecting on $\cT_K(\vb)$ is hard is also interesting. In particular we are curious whether it is possible to apply interior point methods.

Finally, our main procedure requires us to split the data. Inevitably, this results in a loss of efficiency. One way to correct for that is to use a cross-fitted estimator as in \cite{chernozhukov2018double,eftekhari2021inference}. It is unclear to us at the moment whether this strategy will work in our case as the influence functions of the estimators on the two samples may not be independent.

\bibliographystyle{apalike}
\bibliography{yufeibib}       

\begin{thebibliography}{}

\bibitem[Amelunxen et~al., 2014]{amelunxen2014living}
Amelunxen, D., Lotz, M., McCoy, M.~B., and Tropp, J.~A. (2014).
\newblock Living on the edge: Phase transitions in convex programs with random
  data.
\newblock {\em Information and Inference: A Journal of the IMA}, 3(3):224--294.

\bibitem[Bellec, 2018]{bellec2018sharp}
Bellec, P.~C. (2018).
\newblock Sharp oracle inequalities for least squares estimators in shape
  restricted regression.
\newblock {\em The Annals of Statistics}, 46(2):745--780.

\bibitem[Bellec et~al., 2018]{bellec2018slope}
Bellec, P.~C., Lecu{\'e}, G., and Tsybakov, A.~B. (2018).
\newblock Slope meets lasso: improved oracle bounds and optimality.
\newblock {\em The Annals of Statistics}, 46(6B):3603--3642.

\bibitem[Bellec and Tsybakov, 2015]{bellec2015sharp}
Bellec, P.~C. and Tsybakov, A.~B. (2015).
\newblock Sharp oracle bounds for monotone and convex regression through
  aggregation.
\newblock {\em J. Mach. Learn. Res.}, 16:1879--1892.

\bibitem[Bellec and Zhang, 2019a]{bellec2019second}
Bellec, P.~C. and Zhang, C.-H. (2019a).
\newblock De-biasing convex regularized estimators and interval estimation in
  linear models.
\newblock {\em arXiv preprint arXiv:1912.11943}.

\bibitem[Bellec and Zhang, 2019b]{bellec2019biasing}
Bellec, P.~C. and Zhang, C.-H. (2019b).
\newblock De-biasing the lasso with degrees-of-freedom adjustment.
\newblock {\em arXiv preprint arXiv:1902.08885}.

\bibitem[Belloni et~al., 2014]{belloni2014inference}
Belloni, A., Chernozhukov, V., and Hansen, C. (2014).
\newblock Inference on treatment effects after selection among high-dimensional
  controls.
\newblock {\em The Review of Economic Studies}, 81(2):608--650.

\bibitem[Belloni et~al., 2015]{belloni2015uniform}
Belloni, A., Chernozhukov, V., and Kato, K. (2015).
\newblock Uniform post-selection inference for least absolute deviation
  regression and other z-estimation problems.
\newblock {\em Biometrika}, 102(1):77--94.

\bibitem[Bickel et~al., 2009]{bickel2009simultaneous}
Bickel, P.~J., Ritov, Y., and Tsybakov, A.~B. (2009).
\newblock Simultaneous analysis of lasso and dantzig selector.
\newblock {\em The Annals of statistics}, 37(4):1705--1732.

\bibitem[Bogdan et~al., 2015]{bogdan2015slope}
Bogdan, M., Van Den~Berg, E., Sabatti, C., Su, W., and Cand{\`e}s, E.~J.
  (2015).
\newblock Slope---adaptive variable selection via convex optimization.
\newblock {\em The annals of applied statistics}, 9(3):1103.

\bibitem[Boucheron et~al., 2013]{boucheron2013concentration}
Boucheron, S., Lugosi, G., and Massart, P. (2013).
\newblock {\em Concentration inequalities: A nonasymptotic theory of
  independence}.
\newblock Oxford university press.

\bibitem[Boyd et~al., 2003]{boyd2003subgradient}
Boyd, S., Xiao, L., and Mutapcic, A. (2003).
\newblock Subgradient methods.
\newblock {\em lecture notes of EE392o, Stanford University, Autumn Quarter},
  2004:2004--2005.

\bibitem[Bradic et~al., 2018]{bradic2018testability}
Bradic, J., Fan, J., and Zhu, Y. (2018).
\newblock Testability of high-dimensional linear models with non-sparse
  structures.
\newblock {\em arXiv preprint arXiv:1802.09117}.

\bibitem[B{\"u}hlmann and Van De~Geer, 2011]{buhlmann2011statistics}
B{\"u}hlmann, P. and Van De~Geer, S. (2011).
\newblock {\em Statistics for high-dimensional data: methods, theory and
  applications}.
\newblock Springer Science \& Business Media.

\bibitem[Bunea et~al., 2007]{bunea2007sparsity}
Bunea, F., Tsybakov, A., and Wegkamp, M. (2007).
\newblock Sparsity oracle inequalities for the lasso.
\newblock {\em Electronic Journal of Statistics}, 1:169--194.

\bibitem[Cai and Guo, 2017]{cai2017confidence}
Cai, T.~T. and Guo, Z. (2017).
\newblock Confidence intervals for high-dimensional linear regression: Minimax
  rates and adaptivity.
\newblock {\em The Annals of statistics}, 45(2):615--646.

\bibitem[Cai et~al., 2016]{cai2016geometric}
Cai, T.~T., Liang, T., and Rakhlin, A. (2016).
\newblock Geometric inference for general high-dimensional linear inverse
  problems.
\newblock {\em The Annals of Statistics}, 44(4):1536--1563.

\bibitem[Cai et~al., 2022]{cai2022sparse}
Cai, T.~T., Zhang, A.~R., and Zhou, Y. (2022).
\newblock Sparse group lasso: Optimal sample complexity, convergence rate, and
  statistical inference.
\newblock {\em IEEE Transactions on Information Theory}.

\bibitem[Carpentier et~al., 2018]{carpentier2018adaptive}
Carpentier, A., Klopp, O., L{\"o}ffler, M., and Nickl, R. (2018).
\newblock Adaptive confidence sets for matrix completion.
\newblock {\em Bernoulli}, 24(4A):2429--2460.

\bibitem[Celentano et~al., 2020]{celentano2020lasso}
Celentano, M., Montanari, A., and Wei, Y. (2020).
\newblock The lasso with general gaussian designs with applications to
  hypothesis testing.
\newblock {\em arXiv preprint arXiv:2007.13716}.

\bibitem[Chandrasekaran et~al., 2012]{chandrasekaran2012convex}
Chandrasekaran, V., Recht, B., Parrilo, P.~A., and Willsky, A.~S. (2012).
\newblock The convex geometry of linear inverse problems.
\newblock {\em Foundations of Computational mathematics}, 12(6):805--849.

\bibitem[Chernoff, 1954]{chernoff1954distribution}
Chernoff, H. (1954).
\newblock On the distribution of the likelihood ratio.
\newblock {\em The Annals of Mathematical Statistics}, pages 573--578.

\bibitem[Chernozhukov et~al., 2018]{chernozhukov2018double}
Chernozhukov, V., Chetverikov, D., Demirer, M., Duflo, E., Hansen, C., Newey,
  W., and Robins, J. (2018).
\newblock Double/debiased machine learning for treatment and structural
  parameters.

\bibitem[Derumigny, 2018]{derumigny2018improved}
Derumigny, A. (2018).
\newblock Improved bounds for square-root lasso and square-root slope.
\newblock {\em Electronic Journal of Statistics}, 12(1):741--766.

\bibitem[Duchi, 2017]{duchi2017lecture}
Duchi, J. (2017).
\newblock Stanford cs229t, lecture notes: Concentration inequalities and tail
  bounds.
\newblock URL:
  \url{https://web.stanford.edu/class/cs229t/2017/Lectures/concentration-slides.pdf}.

\bibitem[Eftekhari et~al., 2021]{eftekhari2021inference}
Eftekhari, H., Banerjee, M., and Ritov, Y. (2021).
\newblock Inference in high-dimensional single-index models under symmetric
  designs.
\newblock {\em Journal of Machine Learning Research}, 22(27):1--63.

\bibitem[Fan and Li, 2001]{fan2001variable}
Fan, J. and Li, R. (2001).
\newblock Variable selection via nonconcave penalized likelihood and its oracle
  properties.
\newblock {\em Journal of the American statistical Association},
  96(456):1348--1360.

\bibitem[Fan and Lv, 2008]{fan2008sure}
Fan, J. and Lv, J. (2008).
\newblock Sure independence screening for ultrahigh dimensional feature space.
\newblock {\em Journal of the Royal Statistical Society: Series B (Statistical
  Methodology)}, 70(5):849--911.

\bibitem[Feng and Ning, 2019]{feng2019high}
Feng, H. and Ning, Y. (2019).
\newblock High-dimensional mixed graphical model with ordinal data: Parameter
  estimation and statistical inference.
\newblock In {\em The 22nd International Conference on Artificial Intelligence
  and Statistics}, pages 654--663.

\bibitem[Gao et~al., 2017]{gao2017minimax}
Gao, C., Han, F., and Zhang, C.-H. (2017).
\newblock Minimax risk bounds for piecewise constant models.
\newblock {\em arXiv preprint arXiv:1705.06386}.

\bibitem[Genzel and Kipp, 2020]{genzel2020generic}
Genzel, M. and Kipp, C. (2020).
\newblock Generic error bounds for the generalized lasso with sub-exponential
  data.
\newblock {\em arXiv preprint arXiv:2004.05361}.

\bibitem[Geyer, 1994]{geyer1994asymptotics}
Geyer, C.~J. (1994).
\newblock On the asymptotics of constrained $ m $-estimation.
\newblock {\em The Annals of Statistics}, 22(4):1993--2010.

\bibitem[Gordon, 1988]{gordon1988milman}
Gordon, Y. (1988).
\newblock On milman's inequality and random subspaces which escape through a
  mesh in $\mathbb{R}^n$.
\newblock In {\em Geometric Aspects of Functional Analysis}, pages 84--106.
  Springer.

\bibitem[Greene, 2003]{greene2003econometric}
Greene, W.~H. (2003).
\newblock {\em Econometric analysis}.
\newblock Pearson Education India.

\bibitem[Hettich and Kortanek, 1993]{hettich1993semi}
Hettich, R. and Kortanek, K.~O. (1993).
\newblock Semi-infinite programming: theory, methods, and applications.
\newblock {\em SIAM review}, 35(3):380--429.

\bibitem[Jankova and Van De~Geer, 2015]{jankova2015confidence}
Jankova, J. and Van De~Geer, S. (2015).
\newblock Confidence intervals for high-dimensional inverse covariance
  estimation.
\newblock {\em Electronic Journal of Statistics}, 9(1):1205--1229.

\bibitem[Jankova and Van De~Geer, 2018]{jankova2018semiparametric}
Jankova, J. and Van De~Geer, S. (2018).
\newblock Semiparametric efficiency bounds for high-dimensional models.
\newblock {\em Annals of Statistics}, 46(5):2336--2359.

\bibitem[Javanmard and Montanari, 2014]{javanmard2014confidence}
Javanmard, A. and Montanari, A. (2014).
\newblock Confidence intervals and hypothesis testing for high-dimensional
  regression.
\newblock {\em The Journal of Machine Learning Research}, 15(1):2869--2909.

\bibitem[Javanmard and Montanari, 2018]{javanmard2018debiasing}
Javanmard, A. and Montanari, A. (2018).
\newblock Debiasing the lasso: Optimal sample size for gaussian designs.
\newblock {\em The Annals of Statistics}, 46(6A):2593--2622.

\bibitem[Kiefer, 1953]{kiefer1953sequential}
Kiefer, J. (1953).
\newblock Sequential minimax search for a maximum.
\newblock {\em Proceedings of the American mathematical society},
  4(3):502--506.

\bibitem[Knight and Fu, 2000]{Knight2000Asymptotics}
Knight, K. and Fu, W. (2000).
\newblock Asymptotics for lasso-type estimators.
\newblock {\em Annals of statistics}, pages 1356--1378.

\bibitem[Koren et~al., 2009]{koren2009matrix}
Koren, Y., Bell, R., and Volinsky, C. (2009).
\newblock Matrix factorization techniques for recommender systems.
\newblock {\em Computer}, 42(8):30--37.

\bibitem[Laurent and Massart, 2000]{Laurent00adaptive}
Laurent, B. and Massart, P. (2000).
\newblock Adaptive estimation of a quadratic functional by model selection.
\newblock {\em Ann. Stat.}, 28(5):1302--1338.

\bibitem[Lee et~al., 2016]{lee2016exact}
Lee, J.~D., Sun, D.~L., Sun, Y., and Taylor, J.~E. (2016).
\newblock Exact post-selection inference, with application to the lasso.
\newblock {\em Annals of Statistics}, 44(3):907--927.

\bibitem[Lee and Taylor, 2014]{lee2014exact}
Lee, J.~D. and Taylor, J.~E. (2014).
\newblock Exact post model selection inference for marginal screening.
\newblock {\em arXiv preprint arXiv:1402.5596}.

\bibitem[Li et~al., 2015]{li2015geometric}
Li, Y.-H., Hsieh, Y.-P., Zerbib, N., and Cevher, V. (2015).
\newblock A geometric view on constrained m-estimators.
\newblock {\em arXiv preprint arXiv:1506.08163}.

\bibitem[Lockhart et~al., 2014]{lockhart2014significance}
Lockhart, R., Taylor, J., Tibshirani, R.~J., and Tibshirani, R. (2014).
\newblock A significance test for the lasso.
\newblock {\em Annals of statistics}, 42(2):413.

\bibitem[Lustig et~al., 2008]{lustig2008compressed}
Lustig, M., Donoho, D.~L., Santos, J.~M., and Pauly, J.~M. (2008).
\newblock Compressed sensing mri.
\newblock {\em IEEE signal processing magazine}, 25(2):72--82.

\bibitem[L{\"u}tkepohl, 2005]{lutkepohl2005new}
L{\"u}tkepohl, H. (2005).
\newblock {\em New introduction to multiple time series analysis}.
\newblock Springer Science \& Business Media.

\bibitem[Meinshausen and B{\"u}hlmann, 2006]{meinshausen2006high}
Meinshausen, N. and B{\"u}hlmann, P. (2006).
\newblock High-dimensional graphs and variable selection with the lasso.
\newblock {\em The annals of statistics}, 34(3):1436--1462.

\bibitem[Meinshausen and B{\"u}hlmann, 2010]{meinshausen2010stability}
Meinshausen, N. and B{\"u}hlmann, P. (2010).
\newblock Stability selection.
\newblock {\em Journal of the Royal Statistical Society: Series B (Statistical
  Methodology)}, 72(4):417--473.

\bibitem[Mendelson, 2016]{mendelson2016upper}
Mendelson, S. (2016).
\newblock Upper bounds on product and multiplier empirical processes.
\newblock {\em Stochastic Processes and their Applications},
  126(12):3652--3680.

\bibitem[Mitra and Zhang, 2016]{mitra2016benefit}
Mitra, R. and Zhang, C.-H. (2016).
\newblock The benefit of group sparsity in group inference with de-biased
  scaled group lasso.
\newblock {\em Electronic Journal of Statistics}, 10(2):1829--1873.

\bibitem[Moreau, 1962]{moreau1962decomposition}
Moreau, J.~J. (1962).
\newblock Decomposition orthogonale d'un espace hilbertien selon deux cones
  mutuellement polaires.
\newblock In {\em Comptes rendus hebdomadaires des seances de l'Academie des
  sciences}, pages 238--240.

\bibitem[N{\'e}meth and N{\'e}meth, 2012]{nemeth2012project}
N{\'e}meth, A. and N{\'e}meth, S. (2012).
\newblock How to project onto the monotone nonnegative cone using pool adjacent
  violators type algorithms.
\newblock {\em arXiv preprint arXiv:1201.2343}.

\bibitem[Nesterov, 2003]{nesterov2003introductory}
Nesterov, Y. (2003).
\newblock {\em Introductory lectures on convex optimization: A basic course},
  volume~87.
\newblock Springer Science \& Business Media.

\bibitem[Neykov, 2019]{neykov2019gaussian}
Neykov, M. (2019).
\newblock Gaussian regression with convex constraints.
\newblock In {\em The 22nd International Conference on Artificial Intelligence
  and Statistics}, pages 31--38.

\bibitem[Neykov et~al., 2018]{neykov2018unified}
Neykov, M., Ning, Y., Liu, J.~S., and Liu, H. (2018).
\newblock A unified theory of confidence regions and testing for
  high-dimensional estimating equations.
\newblock {\em Statistical Science}, 33(3):427--443.

\bibitem[Ning and Liu, 2017]{ning2017general}
Ning, Y. and Liu, H. (2017).
\newblock A general theory of hypothesis tests and confidence regions for
  sparse high dimensional models.
\newblock {\em The Annals of Statistics}, 45(1):158--195.

\bibitem[Peng et~al., 2010]{peng2010regularized}
Peng, J., Zhu, J., Bergamaschi, A., Han, W., Noh, D.-Y., Pollack, J.~R., and
  Wang, P. (2010).
\newblock Regularized multivariate regression for identifying master predictors
  with application to integrative genomics study of breast cancer.
\newblock {\em The annals of applied statistics}, 4(1):53.

\bibitem[Plan and Vershynin, 2016]{plan2016generalized}
Plan, Y. and Vershynin, R. (2016).
\newblock The generalized lasso with non-linear observations.
\newblock {\em IEEE Transactions on information theory}, 62(3):1528--1537.

\bibitem[Polyak, 1967]{polyak1967general}
Polyak, B.~T. (1967).
\newblock A general method for solving extremal problems.
\newblock In {\em Doklady Akademii Nauk}, volume 174, pages 33--36. Russian
  Academy of Sciences.

\bibitem[Rao et~al., 2011]{rao2011tight}
Rao, N., Recht, B., and Nowak, R. (2011).
\newblock Tight measurement bounds for exact recovery of structured sparse
  signals.
\newblock {\em arXiv preprint arXiv:1106.4355}.

\bibitem[Robertson, 1988]{robertson1988order}
Robertson, T. (1988).
\newblock Order restricted statistical inference.
\newblock Technical report, No. 04; QA278. 7, R6.

\bibitem[Self and Liang, 1987]{self1987asymptotic}
Self, S.~G. and Liang, K.-Y. (1987).
\newblock Asymptotic properties of maximum likelihood estimators and likelihood
  ratio tests under nonstandard conditions.
\newblock {\em Journal of the American Statistical Association},
  82(398):605--610.

\bibitem[Shah and Samworth, 2013]{shah2013variable}
Shah, R.~D. and Samworth, R.~J. (2013).
\newblock Variable selection with error control: another look at stability
  selection.
\newblock {\em Journal of the Royal Statistical Society: Series B (Statistical
  Methodology)}, 75(1):55--80.

\bibitem[Shor, 2012]{shor2012minimization}
Shor, N.~Z. (2012).
\newblock {\em Minimization methods for non-differentiable functions},
  volume~3.
\newblock Springer Science \& Business Media.

\bibitem[Stucky and Van De~Geer, 2017]{stucky2017sharp}
Stucky, B. and Van De~Geer, S. (2017).
\newblock Sharp oracle inequalities for square root regularization.
\newblock {\em The Journal of Machine Learning Research}, 18(1):2256--2284.

\bibitem[Thrampoulidis et~al., 2014]{thrampoulidis2014simple}
Thrampoulidis, C., Oymak, S., and Hassibi, B. (2014).
\newblock Simple error bounds for regularized noisy linear inverse problems.
\newblock In {\em 2014 IEEE International Symposium on Information Theory},
  pages 3007--3011. IEEE.

\bibitem[Tibshirani, 1996]{tibshirani1996regression}
Tibshirani, R. (1996).
\newblock Regression shrinkage and selection via the lasso.
\newblock {\em Journal of the Royal Statistical Society: Series B
  (Methodological)}, 58(1):267--288.

\bibitem[Van~de Geer et~al., 2014]{van2014asymptotically}
Van~de Geer, S., B{\"u}hlmann, P., Ritov, Y., and Dezeure, R. (2014).
\newblock On asymptotically optimal confidence regions and tests for
  high-dimensional models.
\newblock {\em The Annals of Statistics}, 42(3):1166--1202.

\bibitem[Van~de Geer, 2008]{van2008high}
Van~de Geer, S.~A. (2008).
\newblock High-dimensional generalized linear models and the lasso.
\newblock {\em The Annals of Statistics}, 36(2):614--645.

\bibitem[Van~der Vaart, 2000]{van2000asymptotic}
Van~der Vaart, A.~W. (2000).
\newblock {\em Asymptotic statistics}, volume~3.
\newblock Cambridge university press.

\bibitem[Vershynin, 2012]{Vershynin2012Introduction}
Vershynin, R. (2012).
\newblock Introduction to the non-asymptotic analysis of random matrices.
\newblock In Eldar, Y.~C. and Kutyniok, G., editors, {\em Compressed Sensing:
  Theory and Applications}. Cambridge University Press.

\bibitem[Vershynin, 2018]{vershynin2018high}
Vershynin, R. (2018).
\newblock {\em High-dimensional probability: An introduction with applications
  in data science}, volume~47.
\newblock Cambridge University Press.

\bibitem[Wainwright, 2009]{wainwright2009information}
Wainwright, M.~J. (2009).
\newblock Information-theoretic limits on sparsity recovery in the
  high-dimensional and noisy setting.
\newblock {\em IEEE Transactions on Information Theory}, 55(12):5728--5741.

\bibitem[Wainwright, 2019]{wainwright2019high}
Wainwright, M.~J. (2019).
\newblock {\em High-dimensional statistics: A non-asymptotic viewpoint},
  volume~48.
\newblock Cambridge University Press.

\bibitem[Wasserman and Roeder, 2009]{wasserman2009high}
Wasserman, L. and Roeder, K. (2009).
\newblock High dimensional variable selection.
\newblock {\em Annals of statistics}, 37(5A):2178.

\bibitem[Xia and Yuan, 2021]{xia2021statistical}
Xia, D. and Yuan, M. (2021).
\newblock Statistical inferences of linear forms for noisy matrix completion.
\newblock {\em Journal of the Royal Statistical Society: Series B (Statistical
  Methodology)}, 83(1):58--77.

\bibitem[Zhang, 2010]{zhang2010nearly}
Zhang, C.-H. (2010).
\newblock Nearly unbiased variable selection under minimax concave penalty.
\newblock {\em The Annals of statistics}, 38(2):894--942.

\bibitem[Zhang and Zhang, 2014]{zhang2014confidence}
Zhang, C.-H. and Zhang, S.~S. (2014).
\newblock Confidence intervals for low dimensional parameters in high
  dimensional linear models.
\newblock {\em Journal of the Royal Statistical Society: Series B (Statistical
  Methodology)}, 76(1):217--242.

\bibitem[Zhao and Yu, 2006]{zhao2006model}
Zhao, P. and Yu, B. (2006).
\newblock On model selection consistency of lasso.
\newblock {\em Journal of Machine learning research}, 7(Nov):2541--2563.

\bibitem[Zhu and Bradic, 2018]{zhu2018linear}
Zhu, Y. and Bradic, J. (2018).
\newblock Linear hypothesis testing in dense high-dimensional linear models.
\newblock {\em Journal of the American Statistical Association},
  113(524):1583--1600.

\bibitem[Zou and Hastie, 2005]{zou2005regularization}
Zou, H. and Hastie, T.~J. (2005).
\newblock Regularization and variable selection via the elastic net.
\newblock {\em J. R. Stat. Soc. B}, 67(2):301--320.

\end{thebibliography}

%
%
\newpage

\newenvironment{changemargin}[2]{%
\begin{list}{}{%
\setlength{\topsep}{0pt}%
\setlength{\leftmargin}{#1}%
\setlength{\rightmargin}{#2}%
\setlength{\listparindent}{\parindent}%
\setlength{\itemindent}{\parindent}%
\setlength{\parsep}{\parskip}%
}%
\item[]}{\end{list}}

\begin{changemargin}{-1cm}{-1.1cm}

\setcounter{page}{1}

\title{Supplement-A to ``A New Perspective on Debiasing Linear Regressions'' }
\maketitle

\renewcommand{\thesection}{\Alph{section}}

\section*{Additional Simulation Results}
All the code for experiments can be found in: \url{https://github.com/Pythongoras/debiascvgV2}.

\begin{figure}[H]
\makebox[\textwidth]{\includegraphics[width=1.3\textwidth]{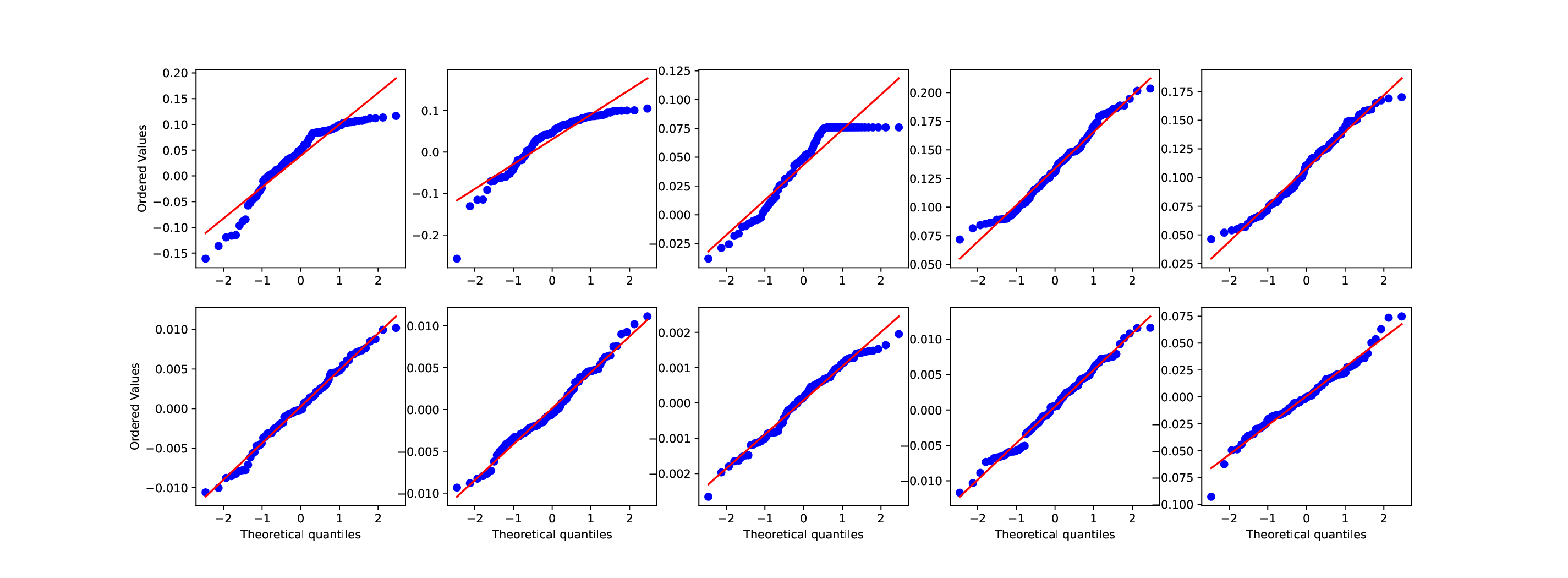}}
\caption{ The Q-Q Plot of $\frac{\sqrt{n}(\hat\bbeta_d^{(j)}-\bbeta^{*(j)})}{\hat\sigma\|\hat\bSigma\hat\bmeta\|}$ and $\frac{\hat\bbeta^{(j)}-\bbeta^{*(j)}}{sd(\hat\bbeta^{(j)}-\bbeta^{*(j)})}$ where $j=p$, against Standard Normal in the Bounded Eigenvalue Population Matrix Setting. {\footnotesize \textit{The Upper Row: $\frac{\hat\bbeta^{(j)}-\bbeta^{*(j)}}{sd(\hat\bbeta^{(j)}-\bbeta^{*(j)})}$ the scaled Difference between the Undebiased Estimator and the True Coefficient; the Lower Row: $\frac{\sqrt{n}(\hat\bbeta_d^{(j)}-\bbeta^{*(j)})}{\hat\sigma\|\hat\bSigma\hat\bmeta\|}$ the scaled Difference between the Debiased Estimator and the True Coefficient. From Left to Right: Monotone Cone Regression, Positive Monotone Cone Regression, Non-negative Regression, LASSO, SLOPE, Square-root SLOPE.} }}
\label{fig:qq_bddeig}
\end{figure}

\begin{figure}[H]
\makebox[\textwidth]{\includegraphics[width=1.3\textwidth]{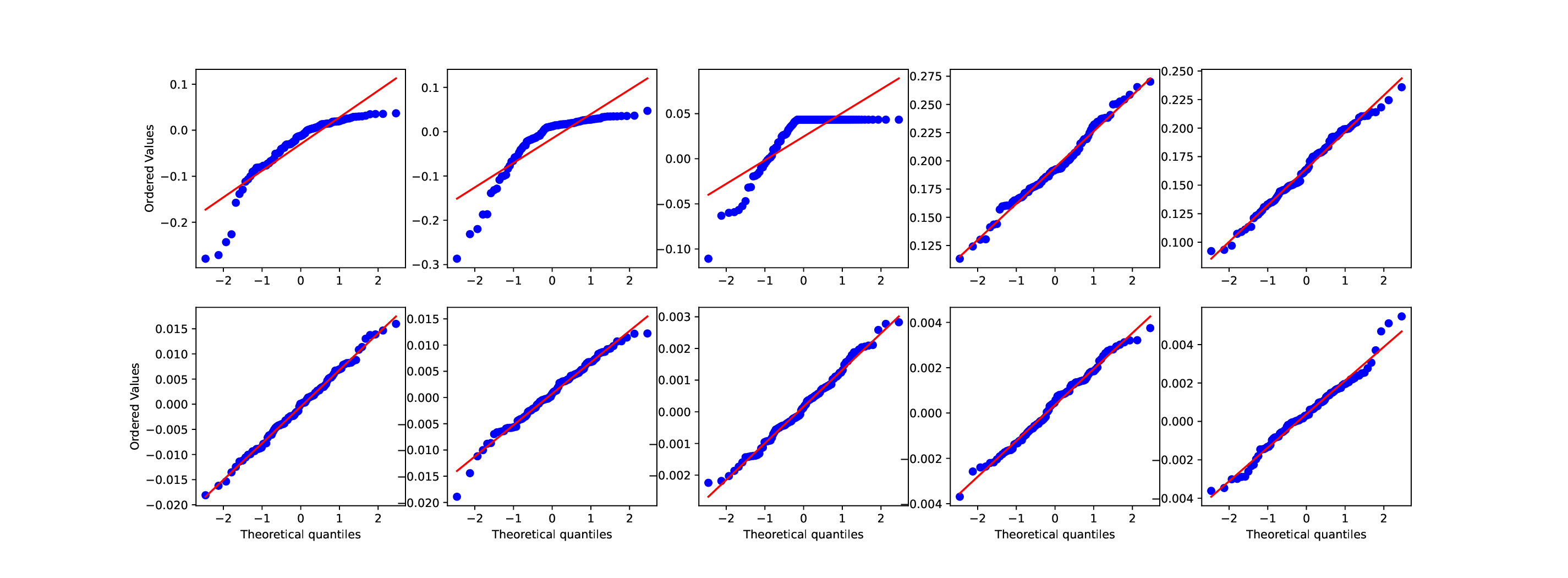}}
\caption{ The Q-Q Plot of $\frac{\sqrt{n}(\hat\bbeta_d^{(j)}-\bbeta^{*(j)})}{\hat\sigma\|\hat\bSigma\hat\bmeta\|}$ and $\frac{\hat\bbeta^{(j)}-\bbeta^{*(j)}}{sd(\hat\bbeta^{(j)}-\bbeta^{*(j)})}$ where $j=p$, against Standard Normal in the Toeplitz Population Matrix Setting. {\footnotesize \textit{The Upper Row: $\frac{\hat\bbeta^{(j)}-\bbeta^{*(j)}}{sd(\hat\bbeta^{(j)}-\bbeta^{*(j)})}$ the scaled Difference between the Undebiased Estimator and the True Coefficient; the Lower Row: $\frac{\sqrt{n}(\hat\bbeta_d^{(j)}-\bbeta^{*(j)})}{\hat\sigma\|\hat\bSigma\hat\bmeta\|}$ the scaled Difference between the Debiased Estimator and the True Coefficient. From Left to Right: Monotone Cone Regression, Positive Monotone Cone Regression, Non-negative Regression, LASSO, SLOPE, Square-root SLOPE.} }}
\label{fig:qq_ar1}
\end{figure}

%
%
\newpage
\title{Supplement B to ``A New Perspective on Debiasing Linear Regressions'' }
\maketitle
\setcounter{page}{1}
\section{Non-negative Least Squares}\label{non:negative:least:squares}

In this section we suppose that $K = \{\bbeta: \bbeta^{(i)} \geq 0 ~~ \forall i \in [p]\}$ is the non-negative orthant cone. Clearly, implementing the non-negative least squares can be done via a quadratic program, or with a projected gradient descent, where the projection onto the non-negative orthant is given by setting to $0$ any negative coefficients.

In order to implement \eqref{step2_cvs_ls} and find $\vb$ in step \ref{step_1}, we need to evaluate an upper bound on the Gaussian complexity of $\cT_K(\vb) \cap \mathbb{S}^{p-1}$ for any $\vb \in K$; see Lemma \ref{width:bound:nonneg:orthant}. 

\begin{lemma}\label{width:bound:nonneg:orthant}
If $K = \{\bbeta: \bbeta^{(i)} \geq 0 ~~ \forall i \in [p]\}$ is the non-negative orthant cone, for any $\vb\in K$ the following bound holds
\begin{align*}
    w(\cT_K(\vb) \cap \mathbb{S}^{p-1}) \leq \sqrt{p - |\{i : \vb^{(i)} = 0\}|/2}.
\end{align*}
\end{lemma}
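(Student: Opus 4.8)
The plan is to bound the Gaussian complexity by explicitly describing the tangent cone of the non-negative orthant at a point $\vb$ and then applying a standard projection identity. First I would identify the tangent cone. If $\vb \in K$ has support pattern given by $Z = \{i : \vb^{(i)} = 0\}$, then at a coordinate $i \notin Z$ (where $\vb^{(i)} > 0$) one can move in either direction and stay in $K$ for small enough $t$, so the tangent cone is unconstrained in that coordinate; at a coordinate $i \in Z$ (where $\vb^{(i)} = 0$) one can only move in the non-negative direction. Hence
\begin{align*}
    \cT_K(\vb) = \{\ub \in \RR^p : \ub^{(i)} \geq 0 \text{ for all } i \in Z\},
\end{align*}
a product of $|Z|$ non-negative half-lines and $p - |Z|$ full lines.

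Next I would use the identity relating the Gaussian complexity of a cone $C$ intersected with the sphere to the expected squared norm of the metric projection of a standard Gaussian vector onto $C$. Specifically, $w^2(\cT_K(\vb) \cap \mathbb{S}^{p-1}) \leq \EE\|\Pi_{\cT_K(\vb)}(\gb)\|^2$ for $\gb \sim N(0, \Ib_p)$, which is exactly the statistical dimension bound used throughout the convex-geometry literature (e.g.\ \cite{amelunxen2014living, chandrasekaran2012convex}). Because the tangent cone factors as a Cartesian product over coordinates, the projection decouples coordinatewise: for $i \notin Z$ the projection onto $\RR$ leaves $g_i$ unchanged, contributing $\EE g_i^2 = 1$; for $i \in Z$ the projection onto $[0,\infty)$ is $\max(g_i, 0)$, contributing $\EE \max(g_i,0)^2 = \tfrac{1}{2}$ by symmetry of the Gaussian. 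Summing gives
\begin{align*}
    \EE\|\Pi_{\cT_K(\vb)}(\gb)\|^2 = (p - |Z|) \cdot 1 + |Z| \cdot \tfrac{1}{2} = p - |Z|/2,
\end{align*}
and taking square roots yields the claimed bound.

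The only point requiring mild care is the passage from $\EE\|\Pi_C(\gb)\|^2$ to an upper bound on $w(C \cap \mathbb{S}^{p-1})$, so I would make sure to invoke the correct inequality direction: for a closed convex cone $C$ one has $w(C \cap \mathbb{S}^{p-1}) \leq \sqrt{\EE\|\Pi_C(\gb)\|^2}$ (via Jensen, since $\sup_{\ub \in C \cap \mathbb{S}^{p-1}}\langle \gb, \ub\rangle = \|\Pi_C(\gb)\|$ by the polar decomposition of $\gb$ relative to $C$, followed by $\EE\|\Pi_C(\gb)\| \leq \sqrt{\EE\|\Pi_C(\gb)\|^2}$). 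I do not anticipate a serious obstacle here; the main substantive step is correctly computing the tangent cone's support structure, and the rest is a clean coordinatewise Gaussian moment computation.
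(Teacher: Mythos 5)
Your proposal is correct and follows essentially the same route as the paper: identify the tangent cone as the product of half-lines (on the zero coordinates of $\vb$) and full lines, bound the Gaussian complexity by the square root of the statistical dimension $\sqrt{\EE\|\Pi_{\cT_K(\vb)}(\gb)\|^2}$ via the identity $\sup_{\ub \in C \cap \mathbb{S}^{p-1}}\langle \gb,\ub\rangle = \|\Pi_C(\gb)\|$ and Cauchy--Schwarz, and then compute the coordinatewise contributions $1$ and $\tfrac{1}{2}$. No gaps.
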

Then as in the monotone cone case, the problem \eqref{step2_cvs_ls} boils down to an optimization over finitely many candidates. Let $\vb_s$ be the projection of $\hat \bbeta$ onto the set of non-negative vectors with exactly $s$ zero coefficients. We then need to solve
\begin{align*}
    \hat s = \argmin_{s \in [0, p]} \|\hat \bbeta - \vb_s\| + \sqrt{\frac{p - s/2}{n}},
\end{align*}
and our final solution is $\vb = \vb_{\hat s}$. What is left to show is how to obtain a vector $\vb_s$, which is discussed in Lemma \ref{obvious:lemma:nonnegative:cone}. 
\begin{lemma}\label{obvious:lemma:nonnegative:cone} Let $S$ denote the index set of the $s$ smallest in magnitude coefficients of $\hat \bbeta$. The vector $\vb_s$ is given by
\begin{align*}
    \vb_s^{(i)} = \hat \bbeta^{(i)}\mathbbm{1}(i \in S^c).
\end{align*}
In other words $\vb_s$ greedily takes the largest $p-s$ entries in $\hat \bbeta$, where ties are broken arbitrarily. 
\end{lemma}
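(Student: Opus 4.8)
The plan is to exploit the separability of the squared Euclidean distance across coordinates, together with the fact that the pilot estimator $\hat\bbeta$, being the non-negative least squares solution, already lies in $K$ and hence satisfies $\hat\bbeta^{(i)} \geq 0$ for every $i$; in particular the magnitude of each coordinate coincides with its value. First I would recast the projection as an optimization over the support pattern: for any non-negative vector $\vb$ whose zero set $Z := \{i : \vb^{(i)} = 0\}$ has cardinality $s$, one has
\begin{align*}
\|\hat\bbeta - \vb\|^2 = \sum_{i \in Z}(\hat\bbeta^{(i)})^2 + \sum_{i \notin Z}(\hat\bbeta^{(i)} - \vb^{(i)})^2 \geq \sum_{i \in Z}(\hat\bbeta^{(i)})^2,
\end{align*}
since every summand in the second sum is non-negative. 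Equality holds precisely when $\vb^{(i)} = \hat\bbeta^{(i)}$ for all $i \notin Z$, which is a feasible (non-negative) choice exactly because $\hat\bbeta^{(i)} \geq 0$.

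Thus for a fixed zero set $Z$ of size $s$ the optimal attainable cost is $\sum_{i \in Z}(\hat\bbeta^{(i)})^2$, and the projection reduces to the combinatorial problem of choosing $Z$ with $|Z| = s$ so as to minimize $\sum_{i \in Z}(\hat\bbeta^{(i)})^2$. Because all the terms $(\hat\bbeta^{(i)})^2$ are non-negative, a standard exchange argument shows the minimizer places into $Z$ the $s$ indices carrying the smallest values of $(\hat\bbeta^{(i)})^2$, equivalently the $s$ smallest-in-magnitude coordinates of $\hat\bbeta$; this is exactly the set $S$. Any ties among equal magnitudes may be resolved arbitrarily since they leave the objective unchanged. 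Combining this with the equality condition from the first step yields $\vb_s^{(i)} = \hat\bbeta^{(i)}\mathbbm{1}(i \in S^c)$, as claimed.

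I do not anticipate a genuine obstacle here: the argument is an elementary coordinatewise decoupling followed by a greedy selection, and the only inputs are the separability of $\|\cdot\|^2$ and $\hat\bbeta \in K$. The two points requiring minor care are (i) justifying that the coordinatewise projection onto the non-negative orthant leaves $\hat\bbeta^{(i)}$ unchanged off $Z$, which is precisely where $\hat\bbeta^{(i)} \geq 0$ enters, and (ii) the bookkeeping when $\hat\bbeta$ already possesses zero coordinates or repeated magnitudes, which is absorbed by the arbitrary tie-breaking and does not affect the value of the projection. If one prefers to work with the closed set of non-negative vectors having \emph{at least} $s$ zeros, the identical computation applies and the monotonicity of $Z \mapsto \sum_{i \in Z}(\hat\bbeta^{(i)})^2$ forces $|Z| = s$ at the optimum, producing the same formula; this is the version directly relevant to \eqref{step2_cvs_ls}, where by Lemma \ref{width:bound:nonneg:orthant} the width bound depends on $\vb$ only through its number of zero coordinates.
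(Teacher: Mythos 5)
Your argument is correct and is precisely the elementary coordinatewise decoupling the authors had in mind when they declared the statement obvious and omitted the details. The only point worth noting is that your careful handling of feasibility (using $\hat\bbeta \geq 0$) and of the edge cases with zero or tied coordinates actually supplies more justification than the paper itself provides.
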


After we obtain $\vb$ in step \ref{step_1}, we also need to write down the explicit form of the projection $\Pi_{\cT_K(\vb)}$ to solve step \ref{step_2}. Such a projection is provided in Lemma \ref{projection:positive:cone}.

\begin{lemma}\label{projection:positive:cone} We have that 
\begin{align*}
    \Pi_{\cT_K(\vb)}(\xb) = \big(\xb^{(i)}\mathbbm{1}(\vb^{(i)} \neq 0) + (\xb^{(i)})_+\mathbbm{1}(\vb^{(i)} = 0)\big)_{i \in [p]}
\end{align*}
\end{lemma}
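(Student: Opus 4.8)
The plan is to identify the tangent cone $\cT_K(\vb)$ explicitly as a Cartesian product of one-dimensional cones, and then exploit the standard fact that Euclidean projection onto a product set decomposes coordinatewise. Write $I_+ = \{i : \vb^{(i)} \neq 0\}$ and $I_0 = \{i : \vb^{(i)} = 0\}$; since $\vb \in K$ we have $\vb^{(i)} > 0$ exactly on $I_+$. I claim that
\begin{align*}
\cT_K(\vb) = \big\{\db \in \RR^p : \db^{(i)} \geq 0 \text{ for all } i \in I_0\big\},
\end{align*}
i.e. the coordinates indexed by $I_+$ are unconstrained, while those indexed by $I_0$ are forced to be non-negative. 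Equivalently, $\cT_K(\vb)$ is the product $\prod_{i \in I_+}\RR \times \prod_{i \in I_0}\RR_{\geq 0}$.

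To prove this characterization I would verify both inclusions directly from the definition $\cT_K(\vb) = \{t(\wb - \vb) : t \geq 0,\, \wb \in K\}$. For the forward inclusion, any $\db = t(\wb - \vb)$ with $t \geq 0$ and $\wb \in K$ satisfies, for each $i \in I_0$, $\db^{(i)} = t\,\wb^{(i)} \geq 0$, because $\vb^{(i)} = 0$ and $\wb^{(i)} \geq 0$; the coordinates in $I_+$ carry no sign restriction. For the reverse inclusion I would use a scaling argument: given $\db$ with $\db^{(i)} \geq 0$ on $I_0$, choose any $t > 0$ large enough that $t \geq -\db^{(i)}/\vb^{(i)}$ for every $i \in I_+$ with $\db^{(i)} < 0$, and set $\wb = \vb + \db/t$. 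Then $\wb^{(i)} = \db^{(i)}/t \geq 0$ for $i \in I_0$ and $\wb^{(i)} = \vb^{(i)} + \db^{(i)}/t \geq 0$ for $i \in I_+$, so $\wb \in K$, while $t(\wb - \vb) = \db$ exhibits $\db \in \cT_K(\vb)$.

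With this identification in hand, the conclusion follows from the product structure. Since the squared Euclidean distance splits as a sum over coordinates, the projection onto a product set equals the product of the coordinatewise projections (the same principle invoked for the monotone cone in \eqref{proj_mnt_tgcone} via \cite[Equation B.2]{amelunxen2014living}). For $i \in I_+$ the relevant factor is all of $\RR$, so the projection is the identity and returns $\xb^{(i)}$; for $i \in I_0$ the relevant factor is $\RR_{\geq 0}$, whose projection is the positive-part map $\xb^{(i)} \mapsto (\xb^{(i)})_+$. Assembling these coordinatewise projections yields precisely the stated formula. The only step requiring genuine care is the reverse inclusion, where the scaling factor $t$ must be chosen appropriately rather than naively set to $1$ (which can fail when some $\db^{(i)}$ is negative on $I_+$); the remaining arguments are routine.
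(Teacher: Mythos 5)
Your proof is correct and follows essentially the same route as the paper: identify $\cT_K(\vb)$ as the set of vectors that are non-negative on the zero coordinates of $\vb$ (unconstrained elsewhere), then project coordinatewise. The paper simply asserts this characterization and calls the projection "straightforward," whereas you supply the verification of both inclusions, including the scaling argument for the reverse one — a worthwhile addition but not a different approach.
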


We summarize the procedure in Algorithm \ref{algo_nnls}.

\begin{algorithm}[H]
\caption{Debias the $j$\textsuperscript{th} Coefficient for Non-negative Least Squares}
\label{algo_nnls}
\begin{algorithmic}
\STATE \textbf{Input:} Two equal size partitions $(\overline\Xb, \overline\bY)$ and $(\tilde\Xb, \tilde\bY)$, $\hat\bbeta$ obtained by projected gradient descent with isotonic regression.\\
\STATE \textbf{Initialize:} Empirical Gram matrix of the second partition $\hat\bSigma=\frac{1}{n}\tilde\Xb\T\tilde\Xb$.
\begin{enumerate}
    \item Solve $\hat s = \argmin_{s \in [0, p]} \|\hat \bbeta - \vb_s\| + \sqrt{\frac{p - s/2}{n}}$.\\
    $\vb \leftarrow \vb_{\hat s}$.
    \item Run Algorithm \ref{algo_step3}. Compute $\Pi_{\cT_{K}(\vb)}(\cdot)$ by the result of Lemma \ref{projection:positive:cone}. For $\Pi_{-\cT_{K}(\vb)}(\cdot)$ use \eqref{neg:cone:calc}. 
    The debiased $j$\textsuperscript{th} coefficient equals $\hat\bbeta_{d}^{(j)} \leftarrow \vb^{(j)} + n^{-1}\hat\bmeta\T \tilde\Xb\T (\tilde\bY - \tilde\Xb\vb)$.
\end{enumerate}
\end{algorithmic}
\end{algorithm}

\subsection{Proofs}

\begin{proof}[Proof of Lemma \ref{projection:positive:cone}]
The proof is an elementary calculation and is omitted.
\end{proof}

\begin{proof}[Proof of Lemma \ref{width:bound:nonneg:orthant}] We first note that by Cauchy-Schwartz the Gaussian complexity is upper bounded by the statistical dimension, i.e.,

\begin{align*}
    \EE_{\gb \sim N(0, \mathbf{I})} \sup_{\xb \in \cT_K(\vb) \cap \mathbb{S}^{p-1}}\langle \gb, \xb\rangle = \EE_{\gb \sim N(0, \mathbf{I})} \|\Pi_{\cT_K(\vb)}(\gb)\| \leq \sqrt{ \EE_{\gb \sim N(0, \mathbf{I})} \|\Pi_{\cT_K(\vb)}(\gb)\|^2}
\end{align*}

Now by Lemma \eqref{projection:positive:cone} the projection 
\begin{align*}
   \EE \|\Pi_{\cT_K(\vb)}(\gb)\|^2 = \EE \sum_{i: \vb^{(i)} \neq 0} \gb^{(i)2} + \EE \sum_{i: \vb^{(i)} = 0} \gb^{(i)2}_+ = p - s/2,
\end{align*}
where $s = |\{i : \vb^{(i)} = 0\}|$.
\end{proof}

\begin{proof}[Proof of Lemma \ref{obvious:lemma:nonnegative:cone}]This statement is obvious and we omit the details. 
\end{proof}

\section{ Lower Bounds on Confidence Interval Length in Convex Constrained Least Squares}\label{lower:bounds:appendix}

We will now show that under certain conditions the $\frac{1}{\sqrt{n}}$-rate of the confidence intervals that we provide cannot be improved in a worst case sense. Of course one should not expect this is always the case. For example, if the set $K$ is a set of diameter $\ll \frac{1}{\sqrt{n}}$ the practitioner does not even need to debias their coefficients to be able to construct faster than $\frac{1}{\sqrt{n}}$ confidence intervals. In order to construct this lower bound we follow \cite{cai2017confidence} who proved lower bounds on the length of the confidence intervals under a sparse parameter space. We modify their argument and add an additional assumption in order to allow for the restriction $\bbeta^* \in K$. The lower bound is derived under the assumption that the design matrix $\bX_i\sim N(0, \bSigma)$ and the noise $\varepsilon_i\sim N(0, \sigma^2)$. 

Before we introduce Lemma \ref{minimax_length_ci}, we need to introduce several definitions regarding the construction of confidence intervals. First we denote with $CI_{\alpha}(\eb^{(j)\top}\bbeta, \Xb, \bY)$ a $(1-\alpha)$-level confidence interval on $\eb^{(j)\top}\bbeta$ with data $(\Xb, \bY)$, and let $L(CI_{\alpha}(\eb^{(j)\top}\bbeta, \Xb, \bY))$ be its length. We write $\eb^{(j)\top}\beta$ to stress the fact that $\eb^{(j)}$ can be substituted with any vector $\bgamma$ with $\|\bgamma\| \leq B < \infty$. Then we define the set of all $(1-\alpha)$-level confidence intervals for $\eb^{(j)\top}\bbeta$ over $\bbeta\in\Theta$ as
\begin{align*}
    \mathcal{I}_{\alpha}(\Theta) &= \Big\{ CI_{\alpha}(\eb^{(j)\top}\bbeta, \Xb, \bY)=[l(\Xb,\bY),u(\Xb,\bY)]: \\
    &\inf_{\bbeta\in\Theta} \mathbb{P}_{\bbeta}\big(l(\Xb,\bY)\leq\eb^{(j)\top}\bbeta\leq u(\Xb,\bY)\big) \geq 1-\alpha \Big\}.
\end{align*}
Finally, define the worst case expected confidence interval length over $\Theta$:
\begin{align*}
    L(CI_{\alpha}(\eb^{(j)\top}\bbeta, \Xb, \bY), \Theta) := \sup_{\bbeta \in \Theta}\EE_{\bbeta} L(CI_{\alpha}(\eb^{(j)\top}\bbeta, \Xb, \bY)).
\end{align*}
The above definitions are extracted from \cite{cai2017confidence} whose work forms the basis of our Lemma \ref{minimax_length_ci}. 
We need one final definition before we state the result.
\begin{definition} For a fixed upper bound $\overline w(\cT_K(\bbeta) \cap \mathbb{S}^{p-1})$ of $w(\cT_K(\bbeta) \cap \mathbb{S}^{p-1})$, and $\delta > 0$ let
\begin{align*}
    r_n := \inf_{\bbeta \in S} \frac{\overline w^2(\cT_K(\bbeta)\cap\mathbb{S}^{p-1})}{\sqrt{n}}, ~~~ S:= \bigg\{\bbeta \in K: \bbeta(+\text{ or }-)\frac{\delta\sigma\|\bSigma^{\frac{1}{2}}\|_{\operatorname{op}}^{-1}}{\sqrt{n}}\eb^{(j)}\in K\bigg\},
\end{align*}
where $r_n = \infty$ if $S = \varnothing$.  
\end{definition}
We suppress the dependence of $r_n$ on $\delta$, $K$ and $\overline w(\cT_K(\bbeta) \cap \mathbb{S}^{p-1})$ to ease the notation. In the above definition observe that the set $K$ and the dimension $p$ are also allowed to change with $n$. We have
\begin{lemma}
\label{minimax_length_ci}
Let $K\in\mathbb{R}^p$ be a convex set. For a fixed upper bound $\overline w(\cT_K(\vb) \cap \mathbb{S}^{p-1})$ of $w(\cT_K(\vb) \cap \mathbb{S}^{p-1})$ such that for all $\vb \in K$, $\overline w(\cT_K(\vb) \cap \mathbb{S}^{p-1}) \rightarrow \infty$ and a fixed $\delta > 0$, suppose that $r_n = o(1)$. For any sequence $R_n \geq 2r_n$ such that $R_n = o(1)$, define the parameter space
\begin{align}\label{H:Rn:def}
    \mathcal{H} := \mathcal{H}(R_n) & =\{\bbeta\in K:\, \|\bbeta-\vb\|^2 \leq R_n/\sqrt{n},\,\text{ for } \vb\in K \text{ and } \overline w^2(\cT_K(\vb)\cap\mathbb{S}^{p-1}) \leq R_n\sqrt{n}\}.
\end{align}
Then for any $\bbeta^*\in \mathcal{H}$ and sufficiently large $n$ we have 
\begin{align*}
    \inf_{CI_{\alpha}(\eb^{(j)\top}\bbeta^*, \Xb, \bY)\in\mathcal{I}_{\alpha}( \mathcal{H})} L(CI_{\alpha}(\eb^{(j)\top}\bbeta^*, \Xb, \bY), \cH) \geq \delta\Big(1-2\alpha-\sqrt{\exp(2\delta^2)-1}\Big)\frac{\sigma\|\bSigma^{\frac{1}{2}}\|_{\operatorname{op}}^{-1}}{\sqrt{n}}.
\end{align*}
\end{lemma}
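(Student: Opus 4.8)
The plan is to reduce the confidence-interval length bound to a two-point testing problem, in the spirit of \cite{cai2017confidence}. Since the functional being estimated is $\eb^{(j)\top}\bbeta$, the worst-case length $L(CI_{\alpha}(\eb^{(j)\top}\bbeta^*, \Xb, \bY), \cH) = \sup_{\bbeta \in \cH}\EE_{\bbeta} L(CI)$ does not actually depend on which $\bbeta^* \in \cH$ is fixed; so it is enough to exhibit two parameters inside $\cH$ whose functional values are well separated while their joint data laws are nearly indistinguishable, and then to invoke the standard testing-to-length reduction.

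First I would build the two points. Write $h := \delta\sigma\|\bSigma^{\frac{1}{2}}\|_{\operatorname{op}}^{-1}/\sqrt{n}$. The hypothesis $r_n = o(1)$ forces $S \neq \varnothing$, and since $r_n \le R_n/2 < R_n$ there is a near-minimizer $\bbeta_0 \in S$ with $\overline w^2(\cT_K(\bbeta_0)\cap\mathbb{S}^{p-1})/\sqrt{n} \le R_n$, i.e. $\overline w^2(\cT_K(\bbeta_0)\cap\mathbb{S}^{p-1}) \le R_n\sqrt{n}$; using $\bbeta_0$ as its own anchor shows $\bbeta_0 \in \cH$. Because $\bbeta_0 \in S$, one of $\bbeta_0 \pm h\eb^{(j)}$ lies in $K$; without loss of generality take $\bbeta_1 := \bbeta_0 + h\eb^{(j)} \in K$ (the other sign is symmetric), so that $\theta_1 := \eb^{(j)\top}\bbeta_1 = \theta_0 + h$ with $\theta_0 := \eb^{(j)\top}\bbeta_0$. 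Keeping $\bbeta_0$ as the anchor for $\bbeta_1$, we have $\|\bbeta_1 - \bbeta_0\|^2 = h^2 = \delta^2\sigma^2\|\bSigma^{\frac{1}{2}}\|_{\operatorname{op}}^{-2}/n \le R_n/\sqrt{n}$ for all large $n$ --- this uses $R_n\sqrt{n}\to\infty$, which holds because $\overline w \to \infty$ forces $r_n \gg 1/\sqrt{n}$ and hence $R_n \ge 2r_n \gg 1/\sqrt{n}$ --- so $\bbeta_1 \in \cH$ as well, with $\theta_1 - \theta_0 = h$.

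Next I would bound the statistical distance between the joint laws $P_{\bbeta_0}$ and $P_{\bbeta_1}$ of $(\Xb, \bY)$. Since the design law is common to both parameters and $\bY\mid\Xb \sim N(\Xb\bbeta, \sigma^2 I_n)$, the chi-square divergence factorizes as $1 + \chi^2(P_{\bbeta_1}\|P_{\bbeta_0}) = \EE_{\Xb}\exp(\|\Xb(\bbeta_1-\bbeta_0)\|^2/\sigma^2)$. Here $\|\Xb(\bbeta_1-\bbeta_0)\|^2 = h^2\|\Xb\eb^{(j)}\|^2$ with $\|\Xb\eb^{(j)}\|^2 \sim \bSigma_{jj}\,\chi^2_n$, so the Gaussian chi-square moment generating function gives $1 + \chi^2(P_{\bbeta_1}\|P_{\bbeta_0}) = (1 - 2h^2\bSigma_{jj}/\sigma^2)^{-n/2}$. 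Substituting $h^2/\sigma^2 = \delta^2\|\bSigma^{\frac{1}{2}}\|_{\operatorname{op}}^{-2}/n$ and using $\bSigma_{jj} \le \|\bSigma^{\frac{1}{2}}\|_{\operatorname{op}}^{2} = \lambda_{\max}(\bSigma)$ bounds the exponent parameter by $\delta^2/n$, whence $(1 - 2\delta^2/n)^{-n/2} \le \exp(2\delta^2)$ once $n \ge 4\delta^2$; this is precisely where the factor $2$ in $\exp(2\delta^2)$ enters. Therefore $\chi^2(P_{\bbeta_1}\|P_{\bbeta_0}) \le \exp(2\delta^2) - 1$.

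Finally I would run the two-point length argument. For any $[l,u] \in \mathcal{I}_{\alpha}(\cH)$, coverage at $\bbeta_0$ gives $P_{\bbeta_0}(l \le \theta_0) \ge 1-\alpha$ and coverage at $\bbeta_1$ gives $P_{\bbeta_1}(u \ge \theta_1) \ge 1-\alpha$; transferring the second event to $P_{\bbeta_0}$ at a cost of $\TV(P_{\bbeta_0},P_{\bbeta_1}) \le \sqrt{\chi^2(P_{\bbeta_1}\|P_{\bbeta_0})}$ yields $P_{\bbeta_0}(l \le \theta_0,\, u \ge \theta_1) \ge 1 - 2\alpha - \sqrt{\chi^2(P_{\bbeta_1}\|P_{\bbeta_0})}$. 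On that event $L = u - l \ge \theta_1 - \theta_0 = h$, so $\sup_{\bbeta\in\cH}\EE_\bbeta L \ge \EE_{\bbeta_0}L \ge h\big(1 - 2\alpha - \sqrt{\exp(2\delta^2)-1}\big)$, which is the claim after inserting $h = \delta\sigma\|\bSigma^{\frac{1}{2}}\|_{\operatorname{op}}^{-1}/\sqrt{n}$. I expect the main obstacle to be the construction step: checking that a single near-minimizing anchor $\bbeta_0$ can host both $\bbeta_0$ and its $h\eb^{(j)}$-shift inside the intricately defined $\cH$. This is exactly where the hypotheses $R_n \ge 2r_n$ and $R_n\sqrt{n}\to\infty$ are consumed, and it explains why $S$ is defined through a $\pm\,h\eb^{(j)}$ perturbation. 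The divergence computation and the testing-to-length reduction are otherwise routine.
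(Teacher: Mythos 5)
Your proposal is correct and follows essentially the same route as the paper's proof: a two-point reduction in the style of \cite{cai2017confidence}, with the pair $\bbeta_0,\ \bbeta_0+h\eb^{(j)}$ drawn from $S$, the $\chi^2$-divergence of the joint laws computed via the Gaussian design and the $\chi^2_n$ moment generating function, and the bound $\sqrt{\exp(2\delta^2)-1}$ on the total variation distance. The only cosmetic differences are that you derive the testing-to-length step explicitly rather than citing Lemma 1 of \cite{cai2017confidence}, and you use $(1-2\delta^2/n)^{-n/2}\leq\exp(2\delta^2)$ in place of the paper's $\tfrac{1}{1-x}\leq\exp(2x)$; neither changes the argument.
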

\begin{remark}

Notice that given a convex set $K$, our Algorithm \ref{algo_unknowncov} is able to perform debiasing asymptotically over the parameter space $\mathcal{H}$ according to Theorem \ref{main_rst_cvs_ls}. The result of Lemma \ref{minimax_length_ci} shows that the length of our confidence interval \eqref{ci} for a single coefficient $\bbeta^{*(j)}$ cannot be much improved asymptotically in a worst case sense, since its length times $\sqrt{n}$ is at least of the order of a constant (assuming $\bSigma$ has bounded spectrum). As mentioned earlier, we cannot expect that the sequence $r_n = o(1)$ for all convex sets $K$. But in all examples we consider in this work, $r_n = o(1)$ holds. 
For instance, if $K$ is a monotone cone or positive monotone cone as we will study later in Section \ref{case_mnt} and Section \ref{case_pos_mnt}, a monotone vector comprised of two constant pieces whose jump from the $(j-1)$-th coordinate to the $j$-th coordinate is greater than $\delta\sigma\|\bSigma^{1/2}\|_{\operatorname{op}}^{-1}/\sqrt{n}$ will produce $r_n \asymp \frac{2\log (ep/2)}{\sqrt{n}}$ \citep[see (1.19), (1.22), Proposition 3.1]{bellec2018sharp}. Also, if $K = \{\bbeta : \|\bbeta\|_1 \leq \|\bbeta^*\|_1\}$, there exists a $1$-sparse vector $\vb$ (with $j$-th coefficient equal to $ \|\bbeta^*\|_1$) which gives $r_n = o(1)$ whenever $1=o(\sqrt{n}/\log p)$ and $\|\bbeta^*\|_1 \geq \sigma\|\bSigma^{1/2}\|_{\operatorname{op}}^{-1}/(2\sqrt{n})$. If $K$ is the non-negative orthant cone, a vector of zeros with exception of its $j$-th coordinate being equal to $\sigma \|\bSigma^{\frac{1}{2}}\|^{-1}_{\operatorname{op}}/\sqrt{n}$ will yield $r_n \lesssim \frac{p}{\sqrt{n}}$ so when $p = o(\sqrt{n})$, $r_n = o(1)$.


\end{remark}


We end up this section with a result slightly stronger than Lemma \ref{minimax_length_ci} for the special case when $K$ is a polyhedral cone (i.e. $K = \{\xb \in \RR^n: \Ab \xb \geq 0\}$ for some matrix $\Ab$) as is the case when $K$ is the monotone or positive monotone cone or the non-negative orthant cone. It is well known that polyhedral cones are finitely-generated, i.e., there exists a $k \in \NN$ and unit norm vectors $\wb_1,\ldots, \wb_k$ such that $K = \{\sum_{i \in [k]}\alpha_i \wb_i : \alpha_i \geq 0\}$. We have the following

\begin{lemma}\label{polyhedral:cone:lower:bound} Fix a number $j \in [k]$. Let $\cH(R_n)$ be defined as in \eqref{H:Rn:def}, and set $\nu_n := 2R_n + 2 \frac{\delta^2\|\bSigma^{\frac{1}{2}}\|_{\operatorname{op}}^{-2}}{\sqrt{n}}$. Then
for any $\bbeta^*\in \cH(R_n)$ we have
\begin{align*}
    \inf_{CI_{\alpha}(\wb_j^{\top}\bbeta^*, \Xb, \bY)\in\mathcal{I}_{\alpha}( \mathcal{H}(\nu_n))} L(CI_{\alpha}(\wb_j^{\top}\bbeta^*, \Xb, \bY)) \geq \delta\Big(1-2\alpha-\sqrt{\exp(2\delta^2)-1}\Big)\frac{\sigma\|\bSigma^{\frac{1}{2}}\|_{\operatorname{op}}^{-1}}{\sqrt{n}}.
\end{align*}
\end{lemma}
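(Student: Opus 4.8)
The plan is to run a two-point (Le Cam) argument parallel to the one behind Lemma \ref{minimax_length_ci}, but to exploit the finitely-generated structure of $K$ so as to bypass the set $S$ and the quantity $r_n$ entirely. Fix $\bbeta^* \in \cH(R_n)$ and let $\vb \in K$ be a witness for its membership, so that $\|\bbeta^*-\vb\|^2 \le R_n/\sqrt n$ and $\overline w^2(\cT_K(\vb)\cap\mathbb{S}^{p-1}) \le R_n\sqrt n$. Set $c := \delta\sigma\|\bSigma^{\frac{1}{2}}\|_{\operatorname{op}}^{-1}/\sqrt n$ and define the two competing parameters $\bbeta_0 := \bbeta^*$ and $\bbeta_1 := \bbeta^* + c\,\wb_j$. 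The crucial observation is that, because $\wb_j$ is a generator of the polyhedral cone $K$ and $c \ge 0$, we automatically have $\bbeta_1 \in K$: moving along a generating ray never leaves the cone. This is exactly what the two-sided membership requirement defining $S$ was buying us in Lemma \ref{minimax_length_ci}; here it is free for the contrast direction $\wb_j$, which is why no hypothesis $r_n = o(1)$ is needed and why the result holds for an arbitrary $\bbeta^* \in \cH(R_n)$.

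Next I would verify that both $\bbeta_0$ and $\bbeta_1$ lie in the enlarged space $\cH(\nu_n)$. For $\bbeta_0$ this is immediate, since $\nu_n \ge R_n$ and the same witness $\vb$ works. For $\bbeta_1$ I would again use the witness $\vb$: the tangent-cone requirement is inherited unchanged, as $\overline w^2(\cT_K(\vb)\cap\mathbb{S}^{p-1}) \le R_n\sqrt n \le \nu_n\sqrt n$, so the perturbation does not inflate the complexity of the witness. The distance requirement follows from the triangle inequality, $\|\bbeta_1-\vb\|^2 \le 2\|\bbeta^*-\vb\|^2 + 2c^2 \le 2R_n/\sqrt n + 2c^2$, and the choice $\nu_n = 2R_n + 2\delta^2\|\bSigma^{\frac{1}{2}}\|_{\operatorname{op}}^{-2}/\sqrt n$ is calibrated precisely to absorb the perturbation term $2c^2$, giving $\|\bbeta_1-\vb\|^2 \le \nu_n/\sqrt n$. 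Hence any confidence interval valid over $\cH(\nu_n)$ must cover both $\wb_j^\top\bbeta_0$ and $\wb_j^\top\bbeta_1$ with probability at least $1-\alpha$, and these two target values are separated by $\wb_j^\top(\bbeta_1-\bbeta_0) = c\,\|\wb_j\|^2 = c = \delta\sigma\|\bSigma^{\frac{1}{2}}\|_{\operatorname{op}}^{-1}/\sqrt n$.

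The remaining ingredient is the statistical indistinguishability of the two data-generating laws, which I would lift essentially verbatim from the proof of Lemma \ref{minimax_length_ci}. The only quantity entering that computation is the whitened separation $\|\bSigma^{\frac{1}{2}}(\bbeta_1-\bbeta_0)\| = c\,\|\bSigma^{\frac{1}{2}}\wb_j\| \le c\,\|\bSigma^{\frac{1}{2}}\|_{\operatorname{op}} = \delta\sigma/\sqrt n$, which is identical, as an upper bound, to the one produced by the perturbation $c\,\eb^{(j)}$ used there. Consequently the same total-variation bound between the mixtures over the Gaussian design applies and produces the factor $\sqrt{\exp(2\delta^2)-1}$; feeding the gap $c$ and this factor into the standard confidence-interval length inequality yields $L \ge \delta\big(1-2\alpha-\sqrt{\exp(2\delta^2)-1}\big)\,\sigma\|\bSigma^{\frac{1}{2}}\|_{\operatorname{op}}^{-1}/\sqrt n$, as claimed.

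I expect the main obstacle to be bookkeeping rather than conceptual: one must check that the single witness $\vb$ simultaneously certifies membership of $\bbeta_1$ in $\cH(\nu_n)$ without inflating the tangent-cone complexity, and that the enlargement from $R_n$ to $\nu_n$ remains tight enough for the statement to have content. The genuinely new step — and the reason the result is strictly stronger than Lemma \ref{minimax_length_ci} — is the one-sided, always-feasible perturbation $\bbeta^* + c\,\wb_j$ along a generator, which removes the dependence on $S$ and on $r_n = o(1)$; everything downstream reuses the machinery already established.
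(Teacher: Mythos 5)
Your proposal is correct and follows essentially the same route as the paper's own proof: perturb $\bbeta^*$ along the unit-norm generator $\wb_j$ (which automatically stays in the cone), reuse the same witness $\vb$ with the triangle inequality and $(a+b)^2\le 2a^2+2b^2$ to place the perturbed point in $\cH(\nu_n)$, and then invoke the two-point $\chi^2$/total-variation argument of Lemma \ref{minimax_length_ci} verbatim. The only cosmetic discrepancy is the factor of $\sigma$ in the perturbation magnitude: your $c=\delta\sigma\|\bSigma^{1/2}\|_{\operatorname{op}}^{-1}/\sqrt{n}$ matches what the downstream indistinguishability computation actually requires, whereas the stated $\nu_n$ (and the paper's own proof) omit $\sigma$, so your version of the bookkeeping is, if anything, the more internally consistent one.
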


In other words, if one is interested in performing inference along a generating direction of the cone, the confidence interval length has to be at least $\frac{1}{\sqrt{n}}$ for any $\bbeta^* \in \cH(R_n)$ for all algorithms which return valid $(1-\alpha)$-confidence intervals for all vectors in $\cH(\nu_n)$. Note that since $\nu_n = o(1)$ our debiasing algorithm will produce $(1-\alpha)$-level confidence intervals on $\cH(\nu_n)$ asymptotically, and therefore the length of the confidence intervals for contrasts equal to generating directions of the cone cannot be improved. Unlike Lemma \ref{minimax_length_ci}, Lemma \ref{polyhedral:cone:lower:bound} is not a worst case result since we are not taking $\sup$ over all vectors in the parameter space. We now give concrete examples of sets $K$ for which our algorithm is fully implementable. 

\subsection{Proofs}

\begin{proof}[Proof of Lemma \ref{minimax_length_ci}]
This argument is mostly repeating an argument from \cite{cai2017confidence}. Before the proof, we need to introduce two definitions. The first is the $\chi^2$ distance between two density functions
\begin{align*}
    \chi^2(f_1, f_0) = \int \frac{(f_1(z)-f_0(z))^2}{f_0(z)}dz = \int \frac{f_1^2(z)}{f_0(z)}dz-1.
\end{align*}
The second is the total variation distance (with a scaling factor 2 in front) between two density functions
\begin{align*}
    TV(f_1,f_0) = \int|f_1(z)-f_0(z)|dz
\end{align*}
A well-known fact is that $TV(f_1,f_0)\leq\sqrt{\chi^2(f_1, f_0)}$.

Now we start the proof. Let $K\in\mathbb{R}^p$ be a convex set. The parameter space is defined as
\begin{align*}
    \mathcal{H}=\{\bbeta\in\mathbb{R}^p:\, \|\bbeta-\vb\|^2\leq \frac{R_n}{\sqrt{n}},\,\text{ for } \vb\in K \text{ and } \overline w^2(\cT_K(\vb)\cap\mathbb{S}^{p-1})=R_n\sqrt{n}\},
\end{align*}
which is the space we are able to perform inference on, asymptotically, via the debiasing procedure proposed in this paper. Suppose we want to debias the $j$-th coordinate.
\begin{enumerate}
    \item Let $\delta>0$ be a small positive constant such that $\delta\sigma\sqrt{\|\bSigma^{\frac{1}{2}}\|_{\operatorname{op}}^{-1}}=o(n^{\frac{1}{4}})$. Define
    \begin{align*}
        \mathcal{H}_0=\{\bbeta\in K:\,\,\overline w^2(\cT_K(\bbeta)\cap\mathbb{S}^{p-1})=2r_n\sqrt{n} \text{ and }\bbeta(+\text{ or }-)\frac{\delta\sigma\|\bSigma^{\frac{1}{2}}\|_{\operatorname{op}}^{-1/2}}{\sqrt{n}}\eb^{(j)}\in K\},
    \end{align*}
    and $\mathcal{H}_1=\mathcal{H}$. Since $R_n \geq 2r_n$, for and $r_n = o(1)$ for $n$ large enough it is not hard to see that $\mathcal{H}_0\subseteq\mathcal{H}$. In addition, by the definition of $r_n$, it follows that the set $\cH_0$ is not empty.
    For a given $\bbeta^*\in \mathcal{H}_0$, we find a $\bbeta'$ such that
    \begin{align*}
        \|\bbeta'-\bbeta^*\| = |\bbeta'_j-\bbeta^*_j| = \delta\frac{\sigma\|\bSigma^{\frac{1}{2}}\|_{\operatorname{op}}^{-1/2}}{\sqrt{n}}.
    \end{align*}
    According to the definition of $\mathcal{H}$, we always have $\bbeta'\in\mathcal{H}$. 
    
    \item Let $f_0(\bY|\Xb)$ be the density of $\bY$ given $\Xb$ with the parameter $\bbeta^*$, and $f_1(\bY|\Xb)$ be the density of $\bY$ given $\Xb$ with the parameter $\bbeta'$. Such a conditional distribution of $\bY$ is Gaussian since the noise has a Gaussian distribution with standard error $\sigma$. It can be shown that
    \begin{align*}
        \chi^2(f_1(\bY|\Xb), f_0(\bY|\Xb)) = \exp(\frac{1}{\sigma^2}\|\Xb(\bbeta'-\bbeta^*)\|^2) - 1.
    \end{align*}
    With the fact $\bX_i\sim N(\mathbf{0}, \bSigma)$, we have
    \begin{align*}
        \chi^2(f_1(\bY, \Xb), f_0(\bY, \Xb)) & = \mathbb{E}_{\Xb}\exp(\frac{1}{\sigma^2}\|\Xb(\bbeta'-\bbeta^*)\|^2) - 1\\
        & = \prod_{i=1}^n \mathbb{E}_{\Xb}\exp\Big(\frac{1}{\sigma^2}[\bX_i\T(\bbeta'-\bbeta^*)]^2\Big) - 1\\
        & = \prod_{i=1}^n \mathbb{E}_{\Xb}\exp\Big(\frac{1}{\sigma^2}[(\bSigma^{-\frac{1}{2}}\bX_i)\T\bSigma^{\frac{1}{2}}(\bbeta'-\bbeta^*)]^2\Big) - 1.
    \end{align*}
    Since $(\bSigma^{-\frac{1}{2}}\bX_i)\T\bSigma^{\frac{1}{2}}(\bbeta'-\bbeta^*)=\|\bSigma^{\frac{1}{2}}(\bbeta'-\bbeta^*)\|z_i$ where $z_i\sim N(0, 1)$, by the moment generating function of $\chi^2$ distribution, the above equation becomes
    \begin{align*}
        \chi^2(f_1(\bY, \Xb), f_0(\bY, \Xb)) & = \Big(1- \frac{2\|\bSigma^{\frac{1}{2}}(\bbeta'-\bbeta^*)\|^2}{\sigma^2} \Big)^{-\frac{n}{2}} - 1.
    \end{align*}
    If $\frac{2\|\bSigma^{\frac{1}{2}}(\bbeta'-\bbeta^*)\|^2}{\sigma^2}<\frac{\log2}{2}$, by the inequality $\frac{1}{1-x}\leq\exp(2x)$ for $x\in[0, \frac{\log2}{2}]$, we have
    \begin{align*}
        \chi^2(f_1(\bY, \Xb), f_0(\bY, \Xb)) & \leq \exp\Big(\frac{2n\|\bSigma^{\frac{1}{2}}(\bbeta'-\bbeta^*)\|^2}{\sigma^2}\Big) - 1.
    \end{align*}
    
    \item By Lemma 1 in \cite{cai2017confidence}, for any $CI_{\alpha}(\bbeta^j, \bY, \Xb)\in\mathcal{I}_{\alpha}(\cH)$ we have
    \begin{align*}
        L(CI_{\alpha}(\bbeta^j, \bY, \Xb)) & \geq \delta\frac{\sigma\|\bSigma^{\frac{1}{2}}\|_{\operatorname{op}}^{-1}}{\sqrt{n}} \Big(1-2\alpha-TV(f_1(\bY, \Xb), f_0(\bY, \Xb))\Big) \\
        & \geq \delta\frac{\sigma\|\bSigma^{\frac{1}{2}}\|_{\operatorname{op}}^{-1}}{\sqrt{n}} \Big(1-2\alpha-\sqrt{\exp(2\delta^2)-1}\Big).
    \end{align*}
\end{enumerate}
\end{proof}

\begin{proof}[Proof of Lemma \ref{polyhedral:cone:lower:bound}]
The proof is the same as that of Lemma \ref{minimax_length_ci} modulo some small changes. For any $\bbeta^* \in \cH(R_n)$ let $\bbeta = \bbeta^* + \frac{\delta \|\bSigma^{\frac{1}{2}}\|_{\operatorname{op}}^{-1}\wb_j}{\sqrt{n}}$. We now argue that $\bbeta \in \cH(\nu_n)$. It is clear that $\bbeta \in K$ by the definition of $K$. Let $\vb$ be such that $\|\bbeta^*-\vb\|^2 \leq R_n/\sqrt{n},\,\text{ for } \vb\in K \text{ and } \overline w^2(\cT_K(\vb)\cap\mathbb{S}^{p-1}) \leq R_n\sqrt{n}$. By the triangle inequality:
\begin{align*}
    \|\vb - \bbeta\| \leq \|\vb - \bbeta^*\| + \frac{\delta \|\bSigma^{\frac{1}{2}}\|_{\operatorname{op}}^{-1}}{\sqrt{n}} \leq \frac{\sqrt{R_n}}{\sqrt[4]{n}} + \frac{\delta \|\bSigma^{\frac{1}{2}}\|_{\operatorname{op}}^{-1}}{\sqrt{n}}.
\end{align*}
Squaring the inequality in the preceding display and using the elementary inequality $(a + b)^2 \leq 2 a^2 + 2 b^2$ shows that $\bbeta \in \cH(\nu_n)$. The rest of the proof is identical to that of Lemma \ref{minimax_length_ci} and we omit the details. 
\end{proof}

\section{Preliminaries Used in the Proofs}
We present several preliminary definitions and results which are needed in the proofs of the future sections. 
\begin{definition}\label{psi_norm} For a random variable $X \in \RR$, define its $\psi_\ell$ norm by
\begin{align*}
    \|X\|_{\psi_\ell} = \sup_{p \geq 1} p^{-1/\ell}(\EE |X|^p)^{1/p}.
\end{align*}
for $\ell \in \{1,2\}$. For a random vector $\bX \in \RR^d$ define
\begin{align*}
    \|\bX\|_{\psi_\ell} = \sup_{\vb \in \mathbb{S}^{d-1}}\|\bX\T \vb\|_{\psi_\ell}.
\end{align*}
\end{definition}

Next is Gordon's Escape Through Mesh which bounds the restricted operator norm of a Gaussian matrix over a convex set. Details can be found in \cite[Theorem~A]{gordon1988milman}.
\begin{lemma}
\label{gordon_escape}
(Gordon's Escape Through Mesh) Let $K\subset \mathbb{R}^n$ be a convex cone and $\Xb$ be an $n\times p$ standard Gaussian matrix. Then for every $t\geq0$,
\begin{align*}
    \mathbb{P}\Big\{
    \sup\limits_{\ub\in K\bigcap\mathbb{S}^{p-1}} \|\Xb\ub\| \geq \sqrt{n} + w(K\bigcap\mathbb{S}^{p-1}) + t
    \Big\} & \leq 
    e^{-\frac{t^2}{2}},\\
    \mathbb{P}\Big\{
    \inf\limits_{\ub\in K\bigcap\mathbb{S}^{p-1}} \|\Xb\ub\| \leq \sqrt{n-1} - w(K\bigcap\mathbb{S}^{p-1}) - t 
    \Big\} & \leq 
    e^{-\frac{t^2}{2}}.
\end{align*}
\end{lemma}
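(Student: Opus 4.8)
The plan is to realize both extremal quantities as min--max (respectively max--max) functionals of the canonical Gaussian process $X_{\ub,\vb}=\langle\vb,\Xb\ub\rangle$ and to control them through a Gaussian comparison inequality, before upgrading the resulting expectation bounds to tail bounds via concentration. Writing $\|\Xb\ub\|=\sup_{\vb\in\mathbb S^{n-1}}\langle\vb,\Xb\ub\rangle$, we have for the lower tail
\[
\inf_{\ub\in K\cap\mathbb S^{p-1}}\|\Xb\ub\|=\inf_{\ub\in K\cap\mathbb S^{p-1}}\ \sup_{\vb\in\mathbb S^{n-1}}X_{\ub,\vb},
\]
and for the upper tail the same expression with the $\inf$ replaced by a supremum. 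The process $X_{\ub,\vb}$ is centered Gaussian with $\mathbb E X_{\ub,\vb}^2=\|\ub\|^2\|\vb\|^2=1$ on the product of spheres.

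First I would introduce the decoupled comparison process $Y_{\ub,\vb}=\langle\tilde\gb,\vb\rangle+\langle\gb,\ub\rangle$ with $\gb\sim\mathcal N(0,\Ib_p)$ and $\tilde\gb\sim\mathcal N(0,\Ib_n)$ independent, and, to equalize variances, add an independent scalar $\xi\sim N(0,1)$, setting $\tilde X_{\ub,\vb}=X_{\ub,\vb}+\xi$ so that $\mathbb E\tilde X_{\ub,\vb}^2=\mathbb E Y_{\ub,\vb}^2=2$. The key algebraic fact, valid for unit vectors, is the factorization
\[
\mathbb E[Y_{\ub,\vb}Y_{\ub',\vb'}]-\mathbb E[\tilde X_{\ub,\vb}\tilde X_{\ub',\vb'}]=\bigl(1-\langle\ub,\ub'\rangle\bigr)\bigl(1-\langle\vb,\vb'\rangle\bigr)\ge 0,
\]
with equality whenever $\ub=\ub'$. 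This yields exactly the hypotheses of Gordon's min--max comparison inequality \cite[Theorem~A]{gordon1988milman}: the covariances agree for a common ``$\ub$'' and the $\tilde X$ covariance is dominated for distinct $\ub$, so $\mathbb E\inf_{\ub}\sup_{\vb}\tilde X_{\ub,\vb}\ge\mathbb E\inf_{\ub}\sup_{\vb}Y_{\ub,\vb}$; the same factorization shows the increments of $\tilde X$ are dominated by those of $Y$, so Sudakov--Fernique gives $\mathbb E\sup_{\ub,\vb}\tilde X_{\ub,\vb}\le\mathbb E\sup_{\ub,\vb}Y_{\ub,\vb}$.

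Next I would evaluate the comparison process. Because $\xi$ is mean zero it drops out of every expectation, and the decoupled structure splits $Y$ into a term depending only on $\ub$ and one only on $\vb$, giving $\sup_{\vb}Y_{\ub,\vb}=\|\tilde\gb\|+\langle\gb,\ub\rangle$. Hence $\mathbb E\inf_\ub\sup_\vb Y=\mathbb E\|\tilde\gb\|-\mathbb E\sup_{\ub\in K\cap\mathbb S^{p-1}}\langle\gb,\ub\rangle=\mathbb E\|\tilde\gb\|-w(K\cap\mathbb S^{p-1})$, while $\mathbb E\sup_{\ub,\vb}Y=\mathbb E\|\tilde\gb\|+w(K\cap\mathbb S^{p-1})$. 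Using $\sqrt{n-1}\le\mathbb E\|\tilde\gb\|\le\sqrt n$ (from $\mathbb E\|\tilde\gb\|^2=n$ and a standard Gamma-ratio bound) yields the two expectation estimates $\mathbb E\inf_\ub\|\Xb\ub\|\ge\sqrt{n-1}-w(K\cap\mathbb S^{p-1})$ and $\mathbb E\sup_\ub\|\Xb\ub\|\le\sqrt n+w(K\cap\mathbb S^{p-1})$.

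Finally I would pass from expectation to tails: both $\Xb\mapsto\inf_{\ub}\|\Xb\ub\|$ and $\Xb\mapsto\sup_{\ub}\|\Xb\ub\|$ are $1$-Lipschitz in the Frobenius norm (the difference of values at $\Xb,\Xb'$ is at most $\sup_\ub\|(\Xb-\Xb')\ub\|\le\|\Xb-\Xb'\|_{\operatorname{op}}\le\|\Xb-\Xb'\|_F$), so the Gaussian concentration inequality gives $e^{-t^2/2}$ deviation bounds around the means, which combined with the two estimates above produces exactly the claimed inequalities. I expect the main obstacle to be the comparison step: correctly orienting Gordon's inequality (it is easy to flip the direction) and verifying its variance/covariance hypotheses through the factorization above, together with the routine but necessary reduction from the continuous index sets to finite nets so that the finitary comparison inequality applies, after which one passes to the limit by separability and almost-sure continuity of the processes.
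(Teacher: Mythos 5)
The paper itself does not prove this lemma---it is imported verbatim from Gordon's 1988 paper (the text points to Theorem~A there)---so your proposal can only be judged on its own merits. Your architecture is the standard proof and is sound: write $\|\Xb\ub\|=\sup_{\vb\in\mathbb{S}^{n-1}}\langle\vb,\Xb\ub\rangle$, compare the coupled bilinear process against the decoupled process $Y_{\ub,\vb}=\langle\tilde\gb,\vb\rangle+\langle\gb,\ub\rangle$ (Gordon's min--max comparison for the infimum, Sudakov--Fernique for the supremum), evaluate the decoupled process exactly, and finish with Gaussian concentration via the $1$-Lipschitz property in Frobenius norm. The evaluation $\EE\inf_{\ub}\sup_{\vb}Y=\EE\|\tilde\gb\|-w(K\cap\mathbb{S}^{p-1})$, the bounds $\sqrt{n-1}\le\EE\|\tilde\gb\|\le\sqrt{n}$, and the concentration step are all correct.

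However, your ``key algebraic fact'' is stated with the sign reversed, and this is exactly the orientation issue you yourself identified as the crux. The correct identity is $\EE[\tilde X_{\ub,\vb}\tilde X_{\ub',\vb'}]-\EE[Y_{\ub,\vb}Y_{\ub',\vb'}]=\big(\langle\ub,\ub'\rangle\langle\vb,\vb'\rangle+1\big)-\big(\langle\ub,\ub'\rangle+\langle\vb,\vb'\rangle\big)=\big(1-\langle\ub,\ub'\rangle\big)\big(1-\langle\vb,\vb'\rangle\big)\ge 0$: the coupled process is the \emph{more} correlated one. As written (with $Y$ minus $\tilde X$ equal to the nonnegative product) the identity is false, and the two conclusions you draw do not follow from it---they follow from its negation. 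Concretely, your stated identity would make the increments of $\tilde X$ dominate those of $Y$, so Sudakov--Fernique would yield $\EE\sup Y\le\EE\sup\tilde X$, which is useless for the upper tail; and the hypothesis you cite for Gordon (``the $\tilde X$ covariance is dominated for distinct $\ub$'') is the opposite of what actually holds. The correct verification is: covariances agree when $\ub=\ub'$, and $\EE\tilde X_{\ub,\vb}\tilde X_{\ub',\vb'}\ge\EE Y_{\ub,\vb}Y_{\ub',\vb'}$ when $\ub\neq\ub'$, which are precisely the hypotheses under which Gordon's theorem gives $\EE\inf_{\ub}\sup_{\vb}\tilde X\ge\EE\inf_{\ub}\sup_{\vb}Y$, and under which the increment comparison goes the right way for Sudakov--Fernique. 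Once the sign is corrected the rest of your argument goes through verbatim, so this is a local but genuine error at the decisive step rather than a flaw in the overall route.
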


The next result Lemma \ref{corollary2.6} gives an upper bound of the estimation error of the convex constrained least squares, which is an analogy of Corollary 2.6 in \cite{neykov2019gaussian}. We give a proof here since the proof of Corollary 2.6 is omitted in \cite{neykov2019gaussian}. The proof is similar as the proof of \cite[Lemma 2.3]{neykov2019gaussian}. Lemma \ref{a1} is an intermediate result needed in the proof of Lemma \ref{corollary2.6}.
\begin{lemma}\cite[Lemma A.1]{neykov2019gaussian}
\label{a1}
For any $\vb\in K$ we have the following inequality
\begin{align*}
    \frac{1}{\sqrt{n}}\|\Xb(\hat\bbeta-\vb)\| \leq \frac{4}{\sqrt{n}}\|\Xb(\bbeta^*-\vb)\| + \sqrt{\Big(\frac{4}{n}\langle\Xb(\hat\bbeta-\vb),\epsilon\rangle - \frac{2}{n}\|\Xb(\hat\bbeta-\vb)\|^2 \Big)\vee 0}.
\end{align*}
\end{lemma}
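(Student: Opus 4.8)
The plan is to start from the \emph{first-order optimality} (variational) condition for the convex-constrained least squares estimator $\hat\bbeta$, rather than from the weaker ``basic inequality'' obtained by merely comparing objective values at $\hat\bbeta$ and $\vb$. Since $\hat\bbeta$ minimizes $\|\bY-\Xb\bbeta\|^2$ over the convex set $K$ and $\vb\in K$, convexity of $K$ yields $\langle \bY-\Xb\hat\bbeta,\,\Xb(\hat\bbeta-\vb)\rangle\geq 0$. Substituting $\bY-\Xb\hat\bbeta=\Xb(\bbeta^*-\hat\bbeta)+\epsilon$ and decomposing $\Xb(\bbeta^*-\hat\bbeta)=\Xb(\bbeta^*-\vb)-\Xb(\hat\bbeta-\vb)$, I would rearrange to obtain
\[
\|\Xb(\hat\bbeta-\vb)\|^2 \;\leq\; \langle \Xb(\hat\bbeta-\vb),\,\Xb(\bbeta^*-\vb)\rangle + \langle\Xb(\hat\bbeta-\vb),\epsilon\rangle .
\]
Writing the scalars $t=\|\Xb(\hat\bbeta-\vb)\|$, $b=\|\Xb(\bbeta^*-\vb)\|$, and $e=\langle\Xb(\hat\bbeta-\vb),\epsilon\rangle$, and applying Cauchy--Schwarz to the first inner product, this reduces to the single scalar relation $t^2\leq tb+e$.

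The remainder is purely scalar. I would split into two cases. If $t\leq 4b$, the claimed bound holds trivially since the square-root term is nonnegative. If $t>4b$, then from $e\geq t^2-tb$ I get $4e-2t^2\geq 2t^2-4tb=2t(t-2b)>0$, so the argument of the square root is genuinely positive; it then remains to verify $(t-4b)^2\leq 2t^2-4tb$. This rearranges to $t^2+4bt-16b^2\geq 0$, which holds for all $t\geq 4b$ because the quadratic is increasing in $t$ and nonnegative at $t=4b$. Chaining the bounds gives $(t-4b)^2\leq 2t^2-4tb\leq 4e-2t^2$, hence $t\leq 4b+\sqrt{(4e-2t^2)\vee 0}$; dividing through by $\sqrt{n}$ produces exactly the stated inequality.

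The main obstacle --- and the reason the statement is true at all --- is the factor $2$ in the term $-\tfrac{2}{n}\|\Xb(\hat\bbeta-\vb)\|^2$ inside the square root. The naive basic inequality $\|\bY-\Xb\hat\bbeta\|^2\leq\|\bY-\Xb\vb\|^2$ only delivers $t^2\leq 2tb+2e$, which is a factor of two too weak: when $\vb=\bbeta^*$ (so $b=0$) it gives $t^2\leq 2e$, whereas the claim requires $3t^2\leq 4e$. The sharp constant comes precisely from exploiting convexity of $K$ through the variational inequality, which supplies the improved relation $t^2\leq tb+e$ (yielding $t^2\leq e$ when $b=0$, so that $3t^2\leq 3e\leq 4e$). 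Once this sharper starting point is in hand, everything else is the routine case analysis above.
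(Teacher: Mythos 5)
Your proof is correct. Note that the paper itself does not prove this lemma --- it is imported by citation from Lemma~A.1 of \cite{neykov2019gaussian} --- so there is no in-paper argument to compare against; your derivation supplies a complete, self-contained proof. The key step, using the variational inequality $\langle \bY-\Xb\hat\bbeta,\,\Xb(\hat\bbeta-\vb)\rangle\geq 0$ from first-order optimality over the convex set $K$ (rather than the weaker basic inequality) to obtain $t^2\leq tb+e$, is exactly what makes the constants work out, and your subsequent case analysis ($t\leq 4b$ trivial; $t>4b$ via $(t-4b)^2\leq 2t^2-4tb\leq 4e-2t^2$) is airtight. This is also the standard route one would expect the cited source to take.
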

\begin{lemma}
\label{corollary2.6}
For matrix $\Xb$ and vector $\bvarepsilon$, let $\bX_i\sim N(0, \bSigma)$, and $\epsilon_i$ be a zero-mean stochastic noise with finite variance $\sigma^2$. Let $K\in\mathbb{R}^p$ be a convex cone. Fix any $\bbeta^*, \hat\bbeta$ and $\vb$ in $K$. 
Suppose $1\leq w(\bSigma^{\frac{1}{2}}\cT_K(\vb)\cap\mathbb{S}^{p-1})=o(\sqrt{n})$ and $\bbeta^*\in K$. Then with probability at least $1-e^{-w(\bSigma^{\frac{1}{2}}\cT_K(\vb)\cap\mathbb{S}^{p-1})}-3e^{-\frac{\big( w(\bSigma^{\frac{1}{2}}\mathcal{T}_K(\vb)\cap\mathbb{S}^{p-1})\big)^2}{2}}-\frac{\Var(\varepsilon_i^2)}{n\sigma^4}$ we have
\begin{align*}
    \|\bSigma^{\frac{1}{2}}(\bbeta^*-\hat\bbeta)\| \lesssim \|\bSigma^{\frac{1}{2}}(\bbeta^*-\vb)\| + \frac{\sigma w(\bSigma^{\frac{1}{2}}\cT_K(\vb)\cap\mathbb{S}^{p-1})}{\sqrt{n}}.
\end{align*}
\end{lemma}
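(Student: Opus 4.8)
The plan is to start from the basic inequality in Lemma~\ref{a1} and upgrade each of its terms to the $\bSigma^{\frac{1}{2}}$-geometry. Abbreviate $\omega := w(\bSigma^{\frac{1}{2}}\cT_K(\vb)\cap\mathbb{S}^{p-1})$, and note that since $\hat\bbeta,\bbeta^*,\vb\in K$, both increments $\hat\bbeta-\vb$ and $\bbeta^*-\vb$ belong to the convex cone $\cT_K(\vb)$. Consequently the three quantities $\|\Xb(\hat\bbeta-\vb)\|$, $\|\Xb(\bbeta^*-\vb)\|$ and $\langle\Xb(\hat\bbeta-\vb),\bvarepsilon\rangle$ appearing in Lemma~\ref{a1} can all be controlled by bounds that hold uniformly over $\ub\in\cT_K(\vb)$ and then specialized to the random direction $\hat\bbeta-\vb$. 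The endgame is to convert Lemma~\ref{a1} into a quadratic inequality for $N:=\|\bSigma^{\frac{1}{2}}(\hat\bbeta-\vb)\|$ in terms of $M:=\|\bSigma^{\frac{1}{2}}(\bbeta^*-\vb)\|$ and $\sigma\omega/\sqrt n$, and to finish with the triangle inequality $\|\bSigma^{\frac{1}{2}}(\bbeta^*-\hat\bbeta)\|\le M+N$.

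First I would pass from empirical to population norms. Because $\bX_i\sim N(0,\bSigma)$, the map $\ub\mapsto\|\Xb\ub\|$ has the same law as a standard Gaussian matrix applied to $\bSigma^{\frac{1}{2}}\ub$, so $\|\Xb\ub\|/\|\bSigma^{\frac{1}{2}}\ub\|$ is the norm of a standard Gaussian matrix evaluated on the unit sphere of the convex cone $\bSigma^{\frac{1}{2}}\cT_K(\vb)$. Applying Gordon's escape (Lemma~\ref{gordon_escape}) to both the infimum and the supremum gives, with high probability, $(\sqrt{n-1}-2\omega)\,\|\bSigma^{\frac{1}{2}}\ub\|\le\|\Xb\ub\|\le(\sqrt n+2\omega)\,\|\bSigma^{\frac{1}{2}}\ub\|$ simultaneously for all $\ub\in\cT_K(\vb)$; since $\omega=o(\sqrt n)$, both prefactors are of order $\sqrt n$. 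This turns $\tfrac{1}{\sqrt n}\|\Xb(\hat\bbeta-\vb)\|$ into a quantity $\gtrsim N$ and $\tfrac{4}{\sqrt n}\|\Xb(\bbeta^*-\vb)\|$ into one $\lesssim M$.

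The crucial step is the cross term. Conditioning on $\bvarepsilon$ (independent of $\Xb$), the row vector $\bvarepsilon\T\Xb$ is Gaussian, distributed as $\|\bvarepsilon\|\,\gb\T\bSigma^{\frac{1}{2}}$ with $\gb\sim N(0,\Ib_p)$, so that $\langle\Xb\ub,\bvarepsilon\rangle\le\|\bvarepsilon\|\,\|\bSigma^{\frac{1}{2}}\ub\|\,\sup_{\vb'\in\bSigma^{\frac{1}{2}}\cT_K(\vb)\cap\mathbb{S}^{p-1}}\gb\T\vb'$. The supremum concentrates around its mean $\omega$ by Gaussian (Borell--TIS) concentration, while $\|\bvarepsilon\|^2=\sum_i\varepsilon_i^2$ concentrates around $n\sigma^2$ by Chebyshev's inequality, which requires no distributional assumption on the noise beyond its moments and contributes the $\tfrac{\Var(\varepsilon_i^2)}{n\sigma^4}$ failure probability. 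With suitable deviation levels the two Gordon applications, the concentration of the supremum, and the Chebyshev bound have failure probabilities summing to $e^{-\omega}+3e^{-\omega^2/2}+\tfrac{\Var(\varepsilon_i^2)}{n\sigma^4}$, and on the complementary event $\langle\Xb(\hat\bbeta-\vb),\bvarepsilon\rangle\lesssim\sqrt n\,\sigma\,\omega\,N$.

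Substituting all three bounds into Lemma~\ref{a1} and using the lower Gordon bound to replace the quadratic term $\tfrac{2}{n}\|\Xb(\hat\bbeta-\vb)\|^2$ by something $\gtrsim N^2$, I would square (via $(a+b)^2\le 2a^2+2b^2$) to obtain an inequality of the form $c_1N^2\lesssim M^2+\tfrac{\sigma\omega}{\sqrt n}N$; solving this quadratic in $N$ yields $N\lesssim M+\sigma\omega/\sqrt n$, and the triangle inequality gives the claim. I expect the main obstacle to be precisely the cross-term estimate: one must whiten to the cone $\bSigma^{\frac{1}{2}}\cT_K(\vb)$, decouple the Gaussian design from the merely second-moment noise by conditioning, and keep the resulting bound linear in $N$ so that the final quadratic closes instead of degenerating.
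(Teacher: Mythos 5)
Your proposal is correct and follows essentially the same route as the paper's proof: both start from Lemma~\ref{a1}, whiten via $\Xb\bSigma^{-1/2}$ and apply Gordon's escape through mesh to pass between $\|\Xb\ub\|$ and $\sqrt{n}\,\|\bSigma^{1/2}\ub\|$ on the cone, handle the cross term by conditioning on $\bvarepsilon$ so that it becomes $\|\bvarepsilon\|$ times the supremum of a canonical Gaussian process over $\bSigma^{1/2}\cT_K(\vb)\cap\mathbb{S}^{p-1}$ (controlled by Borell--TIS plus Chebyshev for $\|\bvarepsilon\|$), and finish with the triangle inequality. The only cosmetic difference is in the final algebra: the paper absorbs the quadratic term via the inequality $-a^2+2ab\le b^2$ so that the empirical-process term $I$ is bounded outright, whereas you keep the cross term linear in $N=\|\bSigma^{1/2}(\hat\bbeta-\vb)\|$ and close a quadratic inequality $c_1N^2\lesssim M^2+(\sigma\omega/\sqrt{n})N$; both yield the same bound.
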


\begin{remark}\label{remark:after:important:lemma:gaussian:width}
    In the above Lemma, in the case when $\bSigma$ has bounded spectrum, one can substitute $w(\bSigma^{\frac{1}{2}}\cT_K(\vb)\cap\mathbb{S}^{p-1}) \leq \|\bSigma^{1/2}\|_{\operatorname{op}}\|\bSigma^{-1/2}\|_{\operatorname{op}}w(\cT_K(\vb)\cap\mathbb{S}^{p-1})$ (see Remark 1.7 \cite{plan2016generalized}). We may substitute $w(\bSigma^{\frac{1}{2}}\cT_K(\vb)\cap\mathbb{S}^{p-1})$ with any upper bound $\overline  w(\bSigma^{\frac{1}{2}}\cT_K(\vb)\cap\mathbb{S}^{p-1})$, and the statement (including the high-probability guarantee) continues to hold with $\overline w(\bSigma^{\frac{1}{2}}\cT_K(\vb)\cap\mathbb{S}^{p-1})$ in place of $w(\bSigma^{\frac{1}{2}}\cT_K(\vb)\cap\mathbb{S}^{p-1})$.
\end{remark}
\begin{proof}
Consider the ``empirical process'' term
\begin{align*}
    I & = \frac{2}{n}\langle\Xb(\hat\bbeta-\vb),\bvarepsilon\rangle - \frac{1}{n}\|\Xb(\hat\bbeta-\vb)\|^2.
\end{align*}
Note that the unit vector $\frac{\bSigma^{\frac{1}{2}}(\hat\bbeta-\vb)}{\|\bSigma^{\frac{1}{2}}(\hat\bbeta-\vb)\|}\in\bSigma^{\frac{1}{2}}\cT_K(\vb)\cap\mathbb{S}^{p-1}$, and $\Xb\bSigma^{-\frac{1}{2}}$ is a standard normal matrix. By Gordon's escape through mesh (Lemma \ref{gordon_escape}), with probability at least $1-e^{-\frac{t^2}{2}}$ we have
\begin{align}\label{gordon:lower:bound}
    \Big\| \Xb\bSigma^{-\frac{1}{2}}\frac{\bSigma^{\frac{1}{2}}(\hat\bbeta-\vb)}{\|\bSigma^{\frac{1}{2}}(\hat\bbeta-\vb)\|} \Big\| \geq 
    \inf\limits_{\wb\in\bSigma^{\frac{1}{2}}\cT_K(\vb)\cap\mathbb{S}^{p-1}}\|\Xb\bSigma^{-\frac{1}{2}}\wb\| \geq \sqrt{n-1}-w(\bSigma^{\frac{1}{2}}\cT_K(\vb)\cap\mathbb{S}^{p-1}) - t.
\end{align}
Then
{\footnotesize
\begin{align*}
    I = & \frac{2}{n}\langle\Xb\bSigma^{-\frac{1}{2}}\bSigma^{\frac{1}{2}}(\hat\bbeta-\vb),\bvarepsilon\rangle - \frac{1}{n}\|\Xb\bSigma^{-\frac{1}{2}}\bSigma^{\frac{1}{2}}(\hat\bbeta-\vb)\|^2 \\
    \leq & \frac{2}{n}(\sqrt{n-1}-w(\bSigma^{\frac{1}{2}}\cT_K(\vb)\cap\mathbb{S}^{p-1})-t)\|\bSigma^{\frac{1}{2}}(\hat\bbeta-\vb)\|\langle\frac{\bSigma^{\frac{1}{2}}(\hat\bbeta-\vb)}{\|\bSigma^{\frac{1}{2}}(\hat\bbeta-\vb)\|}, \frac{(\Xb\bSigma^{-\frac{1}{2}})\T\bvarepsilon}{\sqrt{n-1}-w(\bSigma^{\frac{1}{2}}\cT_K(\vb)\cap\mathbb{S}^{p-1})-t}\rangle \\
    & - \frac{1}{n}(\sqrt{n-1}-w(\bSigma^{\frac{1}{2}}\cT_K(\vb)\cap\mathbb{S}^{p-1})-t)^2\|\bSigma^{\frac{1}{2}}(\hat\bbeta-\vb)\|^2.
\end{align*}
}
Using the fact $-a^2+2ab \leq b^2$, with probability $1-e^{-\frac{t^2}{2}}$ we have
\begin{align*}
    I \leq \frac{\Big( \sup\limits_{\ub\in\bSigma^{\frac{1}{2}}\cT_K(\vb)\bigcap\mathbb{S}^{p-1}}\langle\ub, \frac{1}{\sqrt{n}}(\Xb\bSigma^{\frac{-1}{2}})\T\bvarepsilon\rangle \Big)^2}{(\sqrt{n-1}-w(\bSigma^{\frac{1}{2}}\cT_K(\vb)\cap\mathbb{S}^{p-1})-t)^2}.
\end{align*}
Note that conditioning on the error term $\bvarepsilon$, the vector $\frac{1}{\sqrt{n}}(\Xb\bSigma^{-\frac{1}{2}})\T\bvarepsilon \sim N(0, \Ib\frac{\|\bvarepsilon\|^2_2}{n})$. Let 
$$I_{up} = \sup\limits_{\ub\in\bSigma^{\frac{1}{2}}\cT_K(\vb)\bigcap\mathbb{S}^{p-1}}\langle\ub, \frac{1}{\sqrt{n}}(\Xb\bSigma^{\frac{-1}{2}})\T\bvarepsilon\rangle,$$
by a concentration inequality of Gaussian process with finite variance \cite[Theorem 5.8]{boucheron2013concentration}, we have
\begin{align*}
    \PP(I_{up}-\EE I_{up} \geq \sqrt{2t}\frac{\|\bvarepsilon\|}{\sqrt{n}})\leq e^{-t}.
\end{align*}
By the definition of Gaussian complexity $\EE I_{up} = w(\bSigma^{\frac{1}{2}}\cT_K(\vb)\cap\mathbb{S}^{p-1})\frac{\|\bvarepsilon\|}{\sqrt{n}}$ conditional on $\bvarepsilon$. Then with probability $1-e^{-t}$ we have
\begin{align*}
    I_{up} \leq (w(\bSigma^{\frac{1}{2}}\cT_K(\vb)\cap\mathbb{S}^{p-1})+\sqrt{2t})\frac{\|\bvarepsilon\|}{\sqrt{n}},
\end{align*}
thus with probability $1-e^{-t}-e^{-\frac{t^2}{2}}$ we have
\begin{align*}
    I \leq \frac{\Big( (w(\bSigma^{\frac{1}{2}}\cT_K(\vb)\cap\mathbb{S}^{p-1})+\sqrt{2t})\frac{\|\bvarepsilon\|}{\sqrt{n}} \Big)^2}{(\sqrt{n-1}-w(\bSigma^{\frac{1}{2}}\cT_K(\vb)\cap\mathbb{S}^{p-1})-t)^2}.
\end{align*}
Then by lemma \ref{a1} we have
\begin{align}
\label{bound_X(beta-v)}
    \frac{1}{\sqrt{n}}\|\Xb(\hat\bbeta-\vb)\| & \leq \frac{4}{\sqrt{n}}\|\Xb(\bbeta^*-\vb)\| + \sqrt{2I} \nonumber\\
    & \leq \frac{4}{\sqrt{n}}\|\Xb(\bbeta^*-\vb)\| + \frac{\sqrt{2}(w(\bSigma^{\frac{1}{2}}\cT_K(\vb)\cap\mathbb{S}^{p-1})+\sqrt{2t})\frac{\|\bvarepsilon\|}{\sqrt{n}}}{\sqrt{n-1}-w(\bSigma^{\frac{1}{2}}\cT_K(\vb)\cap\mathbb{S}^{p-1})-t}.
\end{align}
The terms can be rewritten as
\begin{align*}
    \|\Xb(\hat\bbeta-\vb)\| & = \Big\| \Xb\bSigma^{-\frac{1}{2}}\frac{\bSigma^{\frac{1}{2}}(\hat\bbeta-\vb)}{\|\bSigma^{\frac{1}{2}}(\hat\bbeta-\vb)\|} \Big\| \|\bSigma^{\frac{1}{2}}(\hat\bbeta-\vb)\|,\\
    \|\Xb(\bbeta^*-\vb)\| & = \Big\| \Xb\bSigma^{-\frac{1}{2}}\frac{\bSigma^{\frac{1}{2}}(\bbeta^*-\vb)}{\|\bSigma^{\frac{1}{2}}(\bbeta^*-\vb)\|} \Big\| \|\bSigma^{\frac{1}{2}}(\bbeta^*-\vb)\|.
\end{align*}
Observe that both $\bSigma^{\frac{1}{2}}(\hat\bbeta-\vb)$ and $\bSigma^{\frac{1}{2}}(\bbeta^*-\vb)$ belong to $\bSigma^{\frac{1}{2}}\cT_K(\vb)$. We can bound the terms $\Big\| \Xb\bSigma^{-\frac{1}{2}}\frac{\bSigma^{\frac{1}{2}}(\hat\bbeta-\vb)}{\|\bSigma^{\frac{1}{2}}(\hat\bbeta-\vb)\|} \Big\|$ and $\Big\| \Xb\bSigma^{-\frac{1}{2}}\frac{\bSigma^{\frac{1}{2}}(\bbeta^*-\vb)}{\|\bSigma^{\frac{1}{2}}(\bbeta^*-\vb)\|} \Big\|$ by Gordon's escape through mesh (Lemma \ref{gordon_escape}), then with probability at least$1-e^{-t}-3e^{-\frac{t^2}{2}}$ we have
\begin{align*}
    \|\bSigma^{\frac{1}{2}}(\hat\bbeta-\vb)\| \leq & \frac{4(\sqrt{n}+w(\bSigma^{\frac{1}{2}}\cT_K(\vb)\cap\mathbb{S}^{p-1})+t)}{(\sqrt{n-1}-w(\bSigma^{\frac{1}{2}}\cT_K(\vb)\cap\mathbb{S}^{p-1})-t)}\|\bSigma^{\frac{1}{2}}(\bbeta^*-\vb)\|\\
    & + \frac{\sqrt{2}(w(\bSigma^{\frac{1}{2}}\cT_K(\vb)\cap\mathbb{S}^{p-1})+\sqrt{2t})\frac{\|\bvarepsilon\|}{\sqrt{n}}}{\sqrt{n}\bigg( \sqrt{\frac{n-1}{n}}-\frac{w(\bSigma^{\frac{1}{2}}\cT_K(\vb)\cap\mathbb{S}^{p-1})+t}{\sqrt{n}} \bigg)^2}.
\end{align*}
Since $\mathbb{E}\frac{\|\bvarepsilon\|^2}{n}=\sigma^2$ and $Var\frac{\|\bvarepsilon\|^2}{n}=\frac{\Var(\bvarepsilon_i^2)}{n}$, by Chebyshev's inequality we have
\begin{align}
\label{chebyshev_sigma}
    \PP\Big( \Big|\frac{\|\bvarepsilon\|^2}{n}-\sigma^2\Big| \geq t \Big) \leq \frac{\Var(\bvarepsilon_i^2)}{nt^2}.
\end{align}
Plug in $t = \sigma^2$ to get $\frac{\|\bvarepsilon\|}{\sqrt{n}}\leq\sqrt{2\sigma}$ with probability at least $1-\frac{\Var(\bvarepsilon_i^2)}{n\sigma^4}$. And by the triangle inequality $\|\bSigma^{\frac{1}{2}}(\hat\bbeta-\bbeta^*)\| - \|\bSigma^{\frac{1}{2}}(\bbeta^*-\vb)\| \leq \|\bSigma^{\frac{1}{2}}(\hat\bbeta-\vb)\|$, we can get
\begin{align*}
    \|\bSigma^{\frac{1}{2}}(\hat\bbeta-\bbeta^*)\| \leq & \Big( \frac{4(\sqrt{n}+w(\bSigma^{\frac{1}{2}}\cT_K(\vb)\cap\mathbb{S}^{p-1})+t)}{(\sqrt{n-1}-w(\bSigma^{\frac{1}{2}}\cT_K(\vb)\cap\mathbb{S}^{p-1})-t)}+1\Big) \|\bSigma^{\frac{1}{2}}(\bbeta^*-\vb)\|\\
    & + \frac{2\sigma(w(\bSigma^{\frac{1}{2}}\cT_K(\vb)\cap\mathbb{S}^{p-1})+\sqrt{2t})}{\sqrt{n}\bigg( \sqrt{\frac{n-1}{n}}-\frac{w(\bSigma^{\frac{1}{2}}\cT_K(\vb)\cap\mathbb{S}^{p-1})+t}{\sqrt{n}} \bigg)^2},
\end{align*}
with probability at least $1-e^{-t}-3e^{-\frac{t^2}{2}}-\frac{\Var(\epsilon_i^2)}{n\sigma^4}$. Finally, given the assumption $w(\bSigma^{\frac{1}{2}}\cT_K(\vb)\cap\mathbb{S}^{p-1})=o(\sqrt{n})$, we plug in $t = w(\bSigma^{\frac{1}{2}}\cT_K(\vb)\cap\mathbb{S}^{p-1})$ to get
\begin{align*}
    \|\bSigma^{\frac{1}{2}}(\hat\bbeta-\bbeta^*)\| \lesssim \|\bSigma^{\frac{1}{2}}(\bbeta^*-\vb)\| + \frac{w(\bSigma^{\frac{1}{2}}\cT_K(\vb)\cap\mathbb{S}^{p-1})}{\sqrt{n}}\sigma,
\end{align*}
with probability at least $1-e^{-w(\bSigma^{\frac{1}{2}}\cT_K(\vb)\cap\mathbb{S}^{p-1})}-3e^{-\frac{\big( w(\bSigma^{\frac{1}{2}}\cT_K(\vb)\cap\mathbb{S}^{p-1}\big)^2}{2}}-\frac{\Var(\bvarepsilon_i^2)}{n\sigma^4}$.
\end{proof}
Next result bounds the supremum of a general covariance Gaussian process over a set $\cT_K(\vb)\cap \mathbb{S}^{p-1}$. Notice that Lemma \ref{connect_gs_comp} still holds if we replace $\cT_K(\vb)\cap \mathbb{S}^{p-1}$ by any other set in $\mathbb{S}^{p}$.
\begin{lemma}
\label{connect_gs_comp}
For a convex set $K\subseteq\RR^p$, $\vb\in K$, $g\sim N(\mathbf{0}, \Ib)$, and $\bSigma\in \RR^{p\times p}$, we have
\begin{align*}
    \EE \sup_{\ub \in \cT_K(\vb) \cap\mathbb{S}^{p-1}} |g\T \bSigma^{1/2}\ub| \leq  C \|\bSigma^{1/2}\|_{\operatorname{op}}\, w(\cT_K(\vb)\cap \mathbb{S}^{p-1}),
\end{align*}
where $C\in\RR$ is a constant. 
\end{lemma}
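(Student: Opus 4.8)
The plan is to read the left-hand side as the expected supremum of the centered Gaussian process $X_{\ub} := g\T\bSigma^{1/2}\ub$ indexed by $\ub \in T := \cT_K(\vb)\cap\mathbb{S}^{p-1}$, and to control it by comparison with the canonical process $Y_{\ub} := \|\bSigma^{1/2}\|_{\operatorname{op}}\,\langle g', \ub\rangle$, where $g'$ is an independent standard Gaussian vector; the latter has expected supremum exactly $\|\bSigma^{1/2}\|_{\operatorname{op}}\,w(T)$. The one computation that drives everything is the increment comparison: for any $\ub,\ub'$ we have $\EE(X_{\ub}-X_{\ub'})^2 = \|\bSigma^{1/2}(\ub-\ub')\|^2 \le \|\bSigma^{1/2}\|_{\operatorname{op}}^2\|\ub-\ub'\|^2 = \EE(Y_{\ub}-Y_{\ub'})^2$, so the increments of $X$ are dominated by those of $Y$. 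A one-sided Gaussian comparison (the Sudakov--Fernique inequality) then gives $\EE\sup_{\ub\in T} X_{\ub} \le \EE\sup_{\ub\in T} Y_{\ub} = \|\bSigma^{1/2}\|_{\operatorname{op}}\,w(T)$. Note this step uses no property of $T$ beyond boundedness, consistent with the remark that the result holds for an arbitrary subset of the sphere.

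The only wrinkle is the absolute value inside the supremum, since Sudakov--Fernique bounds $\EE\sup X_\ub$ and not $\EE\sup|X_\ub|$. I would remove it by symmetrizing the index set. Writing $S := T\cup(-T)$, one has $|X_\ub| = \max(X_\ub, -X_\ub) = \max(X_\ub, X_{-\ub})$, hence $\sup_{\ub\in T}|X_\ub| = \sup_{\wb\in S} X_{\wb}$, and likewise $\sup_{\ub\in T}|\langle g',\ub\rangle| = \sup_{\wb\in S}\langle g',\wb\rangle$. Since the increment bound $\EE(X_\wb - X_{\wb'})^2 \le \|\bSigma^{1/2}\|_{\operatorname{op}}^2\|\wb-\wb'\|^2$ holds verbatim for all $\wb,\wb'\in S$ (it only invoked $\|\bSigma^{1/2}(\cdot)\|\le\|\bSigma^{1/2}\|_{\operatorname{op}}\|\cdot\|$), applying Sudakov--Fernique over the symmetric index set $S$ yields $\EE\sup_{\ub\in T}|X_\ub| = \EE\sup_{\wb\in S}X_\wb \le \|\bSigma^{1/2}\|_{\operatorname{op}}\,\EE\sup_{\wb\in S}\langle g',\wb\rangle = \|\bSigma^{1/2}\|_{\operatorname{op}}\,\EE\sup_{\ub\in T}|\langle g',\ub\rangle|$.

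It then remains to match $\EE\sup_{\ub\in T}|\langle g',\ub\rangle|$ with (a constant multiple of) $w(T)$. Under the convention recorded after the definition of Gaussian complexity, where $w(T)$ is read as the expected maximum \emph{magnitude} of the canonical process, this quantity is exactly $w(T)$ and the claim holds with $C=1$; more generally one absorbs the gap between $\EE\sup_{\ub\in T}\langle g',\ub\rangle$ and $\EE\sup_{\ub\in T}|\langle g',\ub\rangle|$ into the universal constant $C$, which is immaterial in all of our applications since the relevant complexities are taken to diverge. I expect the handling of the absolute value --- that is, recasting $\sup|X_\ub|$ as the supremum of a genuine centered Gaussian process over the symmetrized set $S$ so that a one-sided comparison becomes applicable, and then reconciling the resulting expected absolute supremum with the definition of $w(T)$ --- to be the only real obstacle; the increment estimate itself is immediate from the operator-norm bound.
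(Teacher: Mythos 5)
Your proposal is correct and follows essentially the same route as the paper's proof: the same increment comparison $\EE(X_{\ub}-X_{\ub'})^2 \leq \|\bSigma^{1/2}\|_{\operatorname{op}}^2\|\ub-\ub'\|^2$, the same symmetrization of the index set to $\cT_K(\vb)\cup(-\cT_K(\vb))$ so that Sudakov--Fernique applies to the absolute supremum, and the same final step of identifying $\EE\sup_{\ub}|\langle g,\ub\rangle|$ with $w(\cT_K(\vb)\cap\mathbb{S}^{p-1})$ up to a universal constant. No gaps.
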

\begin{proof}
First note that 
\begin{align*}
\EE \sup_{\ub \in \cT_K(\vb) \cap\mathbb{S}^{p-1}} |g\T \bSigma^{1/2}\ub| = \EE \sup_{\ub \in (\cT_K(\vb) \cup -\cT_K(\vb)) \cap \mathbb{S}^{p-1}} g\T \bSigma^{1/2}\ub.
\end{align*}

Now we will compare the process $X_{\ub} = g\T \bSigma^{1/2}\ub$ to the process $Y_{\ub} = \|\bSigma^{1/2}\|_{\operatorname{op}}\, g\T \ub$. We have
\begin{align*}
\EE (X_{\ub} - X_{\ub'})^2 = \EE (g\T \bSigma^{1/2}\ub - g\T \bSigma^{1/2}\ub')^2 = (\ub-\ub')\T \bSigma (\ub - \ub') \leq \|\bSigma\|_{\operatorname{op}} \|\ub - \ub'\|^2,
\end{align*}
and
\begin{align*}
\EE (Y_{\ub} - Y_{\ub'})^2 = \|\bSigma^{1/2}\|_{\operatorname{op}}^2\, \EE (g\T \ub - g\T \ub')^2 = \|\bSigma\|_{\operatorname{op}} \|\ub - \ub'\|^2 \geq \EE (X_{\ub} - X_{\ub'})^2.
\end{align*}
Hence by Sudakov-Fernique's inequality \cite[Theorem 7.2.11]{vershynin2018high}, we can claim that
\begin{align*}
 \EE \sup_{\ub \in (\cT_K(\vb) \cup -\cT_K(\vb)) \cap\mathbb{S}^{p-1}} g\T \bSigma^{1/2}\ub & \leq  \|\bSigma^{1/2}\|_{\operatorname{op}}\, \EE \sup_{\ub \in (\cT_K(\vb) \cup -\cT_K(\vb)) \cap\mathbb{S}^{p-1}} g\T \ub\\
& =  \|\bSigma^{1/2}\|_{\operatorname{op}}\, \EE \sup_{\ub \in \cT_K(\vb) \cap\mathbb{S}^{p-1}} |g\T \ub|.
\end{align*}
Notice that the Gaussian complexity $w(\cT_K(\vb)\cap\mathbb{S}^{p-1}) = \EE \sup_{\ub \in \cT_K(\vb)\cap\mathbb{S}^{p-1}} g\T \ub$ has the same order as the quantity $\EE \sup_{\ub \in \cT_K(\vb)\cap\mathbb{S}^{p-1}} |g\T \ub|$ \cite[Exercise 7.6.9]{vershynin2018high}, so we get the desired result
\begin{align*}
    \EE \sup_{\ub \in \cT_K(\vb) \cap\mathbb{S}^{p-1}} |g\T \bSigma^{1/2}\ub|
    \leq C \|\bSigma^{1/2}\|_{\operatorname{op}}\, w(\cT_K(\vb)\cap\mathbb{S}^{p-1}).
\end{align*}
\end{proof}

The next result demonstrates a property of the projection of a vector $\yb\in\RR^p$ into the intersection of a convex cone $K$ and the unit sphere $\mathbb{S}^{p-1}$.
\begin{lemma}
\label{project_argsup}
Let $K$ be a closed convex cone, and $\mathbb{B}_2^p$ be the unit ball. For any vector $\yb\in\RR^p$, we have
\begin{align*}
    \argsup_{\ub \in K\cap \mathbb{B}_2^p} \yb\T \ub = \frac{\Pi_K(\yb)}{\|\Pi_K(\yb)\|},
\end{align*}
where for the right hand side we understand $0/0 = 0$.
\end{lemma}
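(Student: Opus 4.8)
The plan is to set $\wb := \Pi_K(\yb)$ and show that the normalized projection $\wb/\|\wb\|$ is the unique maximizer of $\ub \mapsto \langle \yb, \ub\rangle$ over $\ub \in K \cap \mathbb{S}^{p-1}$. The whole argument rests on two consequences of the variational characterization of the Euclidean projection onto a closed convex set, namely that $\langle \yb - \wb, \xb - \wb\rangle \le 0$ for every $\xb \in K$, sharpened by the fact that $K$ is a \emph{cone}.

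First I would extract an orthogonality relation. Since $K$ is a cone, both $\mathbf{0} \in K$ and $2\wb \in K$. Plugging $\xb = \mathbf{0}$ into the variational inequality gives $\langle \yb - \wb, \wb\rangle \ge 0$, while plugging $\xb = 2\wb$ gives $\langle \yb - \wb, \wb\rangle \le 0$; together these yield
\[
\langle \yb - \wb, \wb\rangle = 0, \qquad \text{hence} \qquad \langle \yb, \wb\rangle = \|\wb\|^2.
\]
Next, for an arbitrary $\xb \in K$ I would decompose $\langle \yb - \wb, \xb\rangle = \langle \yb - \wb, \xb - \wb\rangle + \langle \yb - \wb, \wb\rangle \le 0$, using the variational inequality for the first term and the orthogonality just proved for the second. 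This produces the pointwise domination $\langle \yb, \xb\rangle \le \langle \wb, \xb\rangle$ for all $\xb \in K$.

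With these two facts the optimization is immediate. For any $\ub \in K \cap \mathbb{S}^{p-1}$ I chain
\[
\langle \yb, \ub\rangle \le \langle \wb, \ub\rangle \le \|\wb\|\,\|\ub\| = \|\wb\|,
\]
where the first step is the domination above and the second is Cauchy--Schwarz. Assuming $\wb \neq \mathbf{0}$, the candidate $\ub^\star := \wb/\|\wb\|$ belongs to $K \cap \mathbb{S}^{p-1}$ (because $K$ is a cone) and attains $\langle \yb, \ub^\star\rangle = \langle \yb, \wb\rangle/\|\wb\| = \|\wb\|$, so the supremum equals $\|\wb\|$ and is realized there. Uniqueness of the maximizer follows because any maximizer $\ub$ satisfies $\langle \wb, \ub\rangle = \|\wb\|$, i.e. equality in Cauchy--Schwarz, which forces $\ub$ to be a nonnegative multiple of $\wb$; the unit-norm constraint then pins down $\ub = \wb/\|\wb\|$.

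The point I would flag as the main subtlety is that the argument genuinely uses the cone structure twice — once to guarantee $2\wb \in K$ (which gives the orthogonality) and once to normalize $\wb$ back into $K$ — so it would fail for a general convex set. The degenerate case $\wb = \mathbf{0}$ must also be treated separately, since then $\wb/\|\wb\|$ is undefined; in that case $\yb$ lies in the polar cone and $\langle \yb, \ub\rangle \le 0$ for all $\ub \in K \cap \mathbb{S}^{p-1}$. This situation does not arise in our applications, where Lemma \ref{project_argsup} is invoked with a nonzero projection (equivalently, at points where the optimum of the relevant program is not trivially attained).
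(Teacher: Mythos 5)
Your proof is correct and follows essentially the same route as the paper's: both derive the identity $\langle \yb, \Pi_K(\yb)\rangle = \|\Pi_K(\yb)\|^2$ by exploiting that positive scalings of the projection stay in the cone (you use $\mathbf{0}$ and $2\wb$ where the paper uses $\tfrac{1}{2}\Pi_K(\yb)$ and $2\Pi_K(\yb)$), then combine the resulting domination $\langle \yb,\ub\rangle \le \langle \Pi_K(\yb),\ub\rangle$ with Cauchy--Schwarz. Your explicit treatment of uniqueness and of the degenerate case $\Pi_K(\yb)=\mathbf{0}$ is a welcome addition but does not change the argument.
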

\begin{proof}
Arbitrarily pick $\ub\in K$. By the characterization of the projection on a closed convex set \cite[Proposition 1]{moreau1962decomposition}, 
\begin{align}
\label{cvx_proj_prop}
    (\ub - \Pi_K(\yb))(\yb - \Pi_K(\yb))\leq0.
\end{align}
Since $K$ is a convex cone, $2\Pi_K(\yb)$ and $\frac{1}{2}\Pi_K(\yb)$ are in $K$. Plug them into \eqref{cvx_proj_prop} get
\begin{align}
\label{proj_dotprod}
    \yb\T\Pi_K(\yb) = \|\Pi_K(\yb)\|^2.
\end{align}
Expand \eqref{cvx_proj_prop} and use the fact at \eqref{proj_dotprod} to get the following inequality
\begin{align*}
    \ub\T\yb \leq \ub\T\Pi_K(\yb),
\end{align*}
thus
\begin{align*}
    \sup_{\ub \in K\cap\mathbb{B}_2^p} \ub\T\yb \leq \sup_{\ub \in K\cap\mathbb{B}_2^p} \ub\T\Pi_K(\yb).
\end{align*}
By Cauchy-Schwartz inequality,
\begin{align*}
    \sup_{\ub \in K\cap\mathbb{B}_2^p} \ub\T\Pi_K(\yb) \leq \|\Pi_K(\yb)\|.
\end{align*}
Combine the above two inequalities with \eqref{proj_dotprod}, and the desired result is obtained
\begin{align*}
    \sup_{\ub \in K\cap\mathbb{B}_2^p} \ub\T\yb \leq \|\Pi_K(\yb)\| \leq \yb\T\frac{\Pi_K(\yb)}{\|\Pi_K(\yb)\|}.
\end{align*}
\end{proof}

\section{Proofs from Section \ref{algo:sec}}

\begin{proof}[Proof of Lemma \ref{feasible_point}]
We will use Corollary 1.10 of \cite{mendelson2016upper}. In his notation, it is sufficient to establish that $\xi_i = \bmeta\T \bX_i$ has a bounded $4$-th (say) moment,  that the variables $\bX_i\T\ub$ and $\bX_i\T (\ub - \ub')$ for $\ub, \ub' \in \cT_K(\vb) \cap \mathbb{S}^{p-1}$ are sub-Gaussian, and that the quantity $\sup_{\ub \in \cT_K(\vb) \cap \mathbb{S}^{p-1}} \ub\T \bSigma \ub$ is bounded. The latter is clearly true since $\bSigma$ has a bounded spectrum. For the first claim let us write
\begin{align*}
    \EE \xi_i^4 = \EE (\eb^{(j)\top}\bSigma^{-1} \bX_i)^4 < \infty,
\end{align*}
since the variable $\eb^{(j)\top}\bSigma^{-1} \bX_i$ is sub-Gaussian (this follows from the fact that $\|\eb^{(j)\top}\bSigma^{-1}\|$ is bounded and the fact that $\bX_i$ is sub-Gaussian). Finally the fact that $\bX_i\T\ub$ and $\bX_i\T (\ub - \ub')$ for $\ub, \ub' \in \cT_K(\vb) \cap \mathbb{S}^{p-1}$ is trivial (since $\bX_i$ are sub-Gaussian) and the proof is complete. For completeness we also state that 
\begin{align*}
    \sup_{\ub \in \cT_K(\vb) \cap\mathbb{S}^{p-1}} |(\bmeta\T \hat \bSigma - \eb^{(j)\top}) \ub| \lesssim  \frac{\overline w(\cT_K(\vb)\cap\mathbb{S}^{p-1})}{\sqrt{n}},
\end{align*}
holds with probability at least $1 - C'\log^4 n/n - \exp(-C''_{\bSigma} \overline w(\cT_K(\vb) \cap \mathbb{S}^{p-1}))$, according to Corollary 1.10 of \cite{mendelson2016upper}.
\end{proof}

\begin{proof}[Proof of Corollary \ref{nonempty_int_Q}]
By Lemma \ref{feasible_point}, we know that the vector $\bmeta\T = \eb^{(j)\top}\bSigma^{-1}$ is in $Q$ with high probability. Now the idea is to show that there exists a small $\delta>0$ such that $\BB_{\delta}(\eb^{(j)\top}\bSigma^{-1})$ is inside of $Q$ with high probability. Now let $\bx$ be a unit vector. We have
\begin{align*}
&\sup_{\ub \in \cT_K(\vb) \cap \mathbb{S}^{p-1}} |\big((\eb^{(j)\top}\bSigma^{-1} + \delta \bx\T) \Xb\T \Xb/n - \eb^{(j)\top} \big) \ub| \\
& < \frac{\rho\overline w(\cT_K(\vb)\cap\mathbb{S}^{p-1})}{\sqrt{n}} + \sup_{\bx \in \mathbb{S}^{p-1}, \ub \in \cT_K(\vb) \cap \mathbb{S}^{p-1}} \frac{\delta}{n}|\bx\T \Xb\T \Xb \ub| \\
& \leq \frac{\rho\overline w(\cT_K(\vb)\cap\mathbb{S}^{p-1})}{\sqrt{n}} + \frac{\delta}{n} \sup_{\bx \in \mathbb{S}^{p-1}}\|\Xb \bx\| \sup_{\ub \in \cT_K(\vb) \cap \mathbb{S}^{p-1}}\|\Xb\ub\|.
\end{align*}
If $\bX$ is bounded the above quantities are bounded with probability $1$ hence the conclusion follows. Next we consider the case when $\bX \sim N(0,\bSigma)$. Let $\tilde\bX$ be an $n \times p$ matrix with independent $N(0, 1)$ entries. The last two terms $\|\cdot\|$ are bounded as
\begin{align*}
    \sup_{\bx \in \mathbb{S}^{p-1}}\|\Xb \bx\| & = \sup_{\bx \in \mathbb{S}^{p-1}}\|\tilde\Xb \bSigma^{1/2} \bx\| \leq \|\tilde\Xb\|_{\operatorname{op}}\|\bSigma^{1/2}\|_{\operatorname{op}},\\
    \sup_{\ub \in \cT_K(\vb) \cap \mathbb{S}^{p-1}}\|\Xb\ub\| & = \sup_{\ub \in \cT_K(\vb) \cap \mathbb{S}^{p-1}}\|\tilde\Xb \bSigma^{1/2} \ub\| \leq \|\tilde\Xb\|_{\operatorname{op}}\|\bSigma^{1/2}\|_{\operatorname{op}}.
\end{align*}
By the tail bound of the operator norm of Gaussian matrix \cite[Corollary 7.3.3]{vershynin2018high}, $\|\tilde\Xb\|_{\operatorname{op}}$ is bounded by $\sqrt{n} + \sqrt{p}$ with high probability, so that
\begin{align*}
    \sup_{\ub \in \cT_K(\vb) \cap \mathbb{S}^{p-1}} |\big((\eb^{(j)\top}\bSigma^{-1} + \delta \bx\T) \Xb\T \Xb/n - \eb^{(j)\top} \big) \ub| & < \frac{\rho\overline w(\cT_K(\vb)\cap\mathbb{S}^{p-1})}{\sqrt{n}} + \frac{\delta}{n}\|\bSigma^{1/2}\|^2_{op}(\sqrt{n} + \sqrt{p})^2.
\end{align*}
Let $\epsilon=\frac{\delta}{n}\|\bSigma^{1/2}\|^2_{op}(\sqrt{n} + \sqrt{p})^2$. Since we can find such a $\delta$ for any $\epsilon > 0$, the ball $\BB_{\delta}(\eb^{(j)\top}\bSigma^{-1})$ is inside of $Q$. Thus $Q$ has a non-empty interior with high probability.
\end{proof}

\begin{proof}[Proof of Lemma \ref{sg_psi}]
Let $\lambda = \frac{\rho\overline w(\cT_K(\vb)\cap\mathbb{S}^{p-1})}{\sqrt{n}}$. We have
\begin{align*}
    \psi(\bmeta_n) & = \sup_{\ub \in \cT_K(\vb) \cap \mathbb{S}^{p-1}} |(\bmeta_n\T \hat \bSigma - \eb^{(j)\top}) \ub| - \lambda\\
    & = \sup_{\ub \in \cT_K(\vb) \cap \mathbb{B}_2^p} |(\bmeta_n\T \hat \bSigma - \eb^{(j)\top}) \ub| - \lambda\\
    & = \max\{ \sup_{\ub \in \cT_K(\vb) \cap \mathbb{B}_2^p} (\bmeta_n\T \hat \bSigma - \eb^{(j)\top}) \ub - \lambda, \sup_{\ub \in -\cT_K(\vb) \cap \mathbb{B}_2^p} (\bmeta_n\T \hat \bSigma - \eb^{(j)\top}) \ub - \lambda \}.
\end{align*}
Let $\psi_0(\bmeta) = \sup_{\ub \in \cT_K(\vb) \cap\mathbb{B}_2^p} (\bmeta\T \hat \bSigma - \eb^{(j)\top}) \ub - \lambda$, and $\psi_1(\bmeta) = \sup_{\ub \in -\cT_K(\vb) \cap \mathbb{B}_2^p} (\bmeta\T \hat \bSigma - \eb^{(j)\top}) \ub - \lambda$. 
The subgradient of $\psi_0(\bmeta)$is
\begin{align*}
    \partial\psi_0(\bmeta) = \hat\bSigma \argsup_{\ub \in \cT_K(\vb) \cap \mathbb{B}_2^p} (\bmeta\T \hat \bSigma - \eb^{(j)\top}) \ub,
\end{align*}
since for any $\yb\in\RR^{p}$,
\begin{align*}
    \psi_0(\yb) - \psi_0(\xb) & = \sup_{\ub \in \cT_K(\vb) \cap \mathbb{B}_2^p} (\yb\T \hat \bSigma - \eb^{(j)\top}) \ub - \sup_{\ub \in \cT_K(\vb) \cap \mathbb{B}_2^p} (\xb\T \hat \bSigma - \eb^{(j)\top}) \ub\\
    & \geq \langle \hat \bSigma \yb - \eb^{(j)}, \argsup_{\ub \in \cT_K(\vb) \cap \mathbb{B}_2^p} (\xb\T \hat \bSigma - \eb^{(j)\top}) \ub \rangle - \sup_{\ub \in \cT_K(\vb) \cap \mathbb{B}_2^p} (\xb\T \hat \bSigma - \eb^{(j)\top}) \ub\\
    & = \langle \yb - \xb, \hat\bSigma \argsup_{\ub \in \cT_K(\vb) \cap \mathbb{B}_2^p} (\xb\T \hat \bSigma - \eb^{(j)\top}) \ub \rangle.
\end{align*}
In the above observe that the ``$\argsup$'' is actually ``$\argmax$'' since the set $\cT_K(\vb) \cap \mathbb{B}_2^p$ is compact an the function $\ub \mapsto (\bmeta\T \hat \bSigma - \eb^{(j)\top}) \ub$ is continuous. Similarly, the subgradient of $\psi_1(\bmeta)$ is
\begin{align*}
    \partial\psi_1(\bmeta) = \hat\bSigma \argsup_{\ub \in -\cT_K(\vb) \cap \mathbb{B}_2^p} (\bmeta\T \hat \bSigma - \eb^{(j)\top}) \ub.
\end{align*}

By Lemma \ref{project_argsup}, the subgradient of $\psi_0$ and $\psi_1$ are equivalent to
\begin{align*}
    \partial\psi_0(\bmeta) &= \hat\bSigma \argsup_{\ub \in \cT_K(\vb) \cap \mathbb{B}_2^p} (\bmeta\T \hat \bSigma - \eb^{(j)\top}) \ub = \hat\bSigma \,\phi_0(\bmeta),\\
    \partial\psi_1(\bmeta) &= \hat\bSigma \argsup_{\ub \in -\cT_K(\vb) \cap \mathbb{B}_2^p} (\bmeta\T \hat \bSigma - \eb^{(j)\top}) \ub = \hat\bSigma \,\phi_1(\bmeta)
\end{align*}

By the pointwise maximum rule of subgradient \cite[Theorem 1.13]{shor2012minimization}, the subgradeint of $\psi$ at $\bmeta$ is $\partial\psi_0(\bmeta)$ if $\psi_0(\bmeta) > \psi_1(\bmeta)$, is $\partial\psi_1(\bmeta)$ otherwise. It is simple to see that $\psi_0(\bmeta) = (\bmeta\T \hat \bSigma - \eb^{(j)\top}) \phi_0(\bmeta)$ and similarly $\psi_1(\bmeta) = (\bmeta\T \hat \bSigma - \eb^{(j)\top}) \phi_1(\bmeta)$ even when $\phi_0(\bmeta)$ or $\phi_1(\bmeta)$ are zero vectors. This completes the proof.
\end{proof}

\begin{proof}[Proof of Lemma \ref{cvg_sbgrad}]
Let $\bmeta^* \in \argmin_{\bmeta\in Q} \|\hat\bSigma^{\frac{1}{2}}\bmeta\|$ be a constrained minima such that $\|\bmeta^*\|$ is the smallest. Note that this implies that $\bmeta^* \in \operatorname{col}(\hat\bSigma^{\frac{1}{2}})$. Let $\bmeta_1$ be the initial point with a finite $\ell_2$ norm. By Corollary \ref{nonempty_int_Q} there exists a strictly feasible point $\bmeta^{sf}$ such that $\psi(\bmeta^{sf})<0$. It is not hard to see that $\|\bmeta^*\|$ is bounded, since $\|\hat\bSigma^{\frac{1}{2}}\bmeta^*\|\leq\|\hat\bSigma^{\frac{1}{2}}\bmeta_1\|$ is bounded and $\|\bmeta^*\|\leq\|\hat\bSigma^{\frac{1}{2}}\bmeta^*\|(\lambda_{\min}^+(\hat \bSigma^{1/2}))^{-1}$ where $\lambda_{\min}^+(\hat \bSigma^{1/2})$ is the smallest positive eigenvalue of $\hat \bSigma^{1/2}$. The latter holds by the definition of $\bmeta^*$, and the fact that $\bmeta^* \in \operatorname{col}(\hat \bSigma^{1/2})$. Furthermore, there exists at least one $\bmeta^{sf}$ which is $\|\bmeta^{sf}\|$ bounded, since according to Corollary \ref{nonempty_int_Q} $\bmeta^{sf}=\eb^{(j)\top}\bSigma^{-1}$ is a choice of $\bmeta^{sf}$. Thus $\|\bmeta_1-\bmeta^*\|$ and $\|\bmeta_1-\bmeta^{sf}\|$ are bounded. Let $C_1$ be such a constant satisfying $\|\bmeta_1-\bmeta^*\|\leq C_1$ and $\|\bmeta_1-\bmeta^{sf}\|\leq C_1$. 

We also note that $\|\gb_n\| \leq \hat\bSigma^{\frac{1}{2}}\frac{\hat\bSigma^{\frac{1}{2}}\bmeta_n}{\|\hat\bSigma^{\frac{1}{2}}\bmeta_n\|} \leq \|\hat\bSigma^{\frac{1}{2}}\|_{\operatorname{op}}$ for $\bmeta_n\in Q$; and obviously $\|\gb_n\| \leq \|\hat\bSigma\|_{\operatorname{op}}$ for $\bmeta_n\notin Q$. Define a constant $C_2=\max\{\|\hat\bSigma^{\frac{1}{2}}\|_{\operatorname{op}}, \|\hat\bSigma\|_{\operatorname{op}}\}$, so that $\|\gb_n\| \leq C_2$.

Now we show that such a subgradient method converges in finite iterations. Let $f(\bmeta)\coloneqq\|\hat\bSigma^{\frac{1}{2}}\bmeta_i\|$. At every step of iteration, we record the best candidate found so far as
\begin{align*}
    \bmeta_n^{best} = \argmin \big\{f(\bmeta_i)\,\big|\,\bmeta_i\in Q,\,i\in[n]\big\}.
\end{align*}
Arbitrarily choose $\epsilon>0$. Let $k$ be the iteration number such that after $k$ the best value is $\epsilon$-suboptimal: $f(\bmeta_n^{best})<f(\bmeta^*)+\epsilon$ for $n>k$. Also the best value before $k$ is outside of the $\epsilon$-neighborhood: $f(\bmeta_k^{best})\geq f(\bmeta^*)+\epsilon$. Consequently $f(\bmeta_n)\geq f(\bmeta^*)+\epsilon$ for $n<k$ and $\bmeta_n\in Q$.
\begin{enumerate}
    \item Find a point $\tilde\bmeta$ and a constant $c>0$ such that $f(\tilde\bmeta)\leq f(\bmeta^*)+\epsilon/2,\text{ and } \psi(\tilde\bmeta)\leq-c$.\\
    Such a point $\tilde\bmeta$ can be chosen as $$\tilde\bmeta=(1-\theta)\bmeta^*+\theta\bmeta^{sf},$$
    where $\theta=\min\{1, (\epsilon/2)/(f(\bmeta^{sf})-f(\bmeta^*))\}$. One can see
    \begin{align*}
        f(\tilde\bmeta) & \leq(1-\theta)f(\bmeta^*)+\theta f(\bmeta^{sf}) \leq f(\bmeta^*)+\epsilon/2,\\
        \psi(\tilde\bmeta) & \leq(1-\theta)\psi(\bmeta^*)+\theta\psi(\bmeta^{sf}) \leq \theta\psi(\bmeta^{sf}).
    \end{align*}
    so the constant $c$ can be chosen as $c=-\theta\psi(\bmeta^{sf})$.
    
    \item Show that before $k$, for every iteration $\|\bmeta_{n+1}-\tilde\bmeta\|^2\leq\|\bmeta_n-\tilde\bmeta\|^2-h_n\delta+h_n^2\|\gb_n\|^2$ where $\delta=\min\{\epsilon, 2c\}$.\\
    If $\bmeta_n\in Q$, then $\gb_n=\partial f(\bmeta_n)$, and by the definition of subgradient we have $f(\tilde\bmeta)-f(\bmeta_n) \geq\gb_n\T(\tilde\bmeta-\bmeta_n)$. Since $f(\tilde\bmeta)\leq f(\bmeta^*)+\epsilon/2$ and $f(\bmeta_n)\geq f(\bmeta^*)+\epsilon$, we have $f(\bmeta_n)-f(\tilde\bmeta)\geq\epsilon/2$. Thus
    \begin{align*}
        \|\bmeta_{n+1}-\tilde\bmeta\|^2 & = \|\bmeta_{n}-h_{n}\bg_n-\tilde\bmeta\|^2\\
        & = \|\bmeta_n-\tilde\bmeta\|^2 - 2h_n\gb_n\T(\bmeta_n-\tilde\bmeta) + h_n^2\|\gb_n\|^2\\
        & \leq \|\bmeta_n-\tilde\bmeta\|^2 - 2h_n(f(\bmeta_n)-f(\tilde\bmeta)) + h_n^2\|\gb_n\|^2\\
        & \leq \|\bmeta_n-\tilde\bmeta\|^2 - h_n\epsilon + h_n^2\|\gb_n\|^2.
    \end{align*}
    If $\bmeta_n\notin Q$, then $\gb_n=\partial \psi(\bmeta_n)$, and by the definition of subgradient we have $\psi(\tilde\bmeta)-\psi(\bmeta_n) \geq \gb_n\T(\tilde\bmeta-\bmeta_n)$. Since $\psi(\tilde\bmeta)\leq -c$ and $\psi(\bmeta_n)>0$, we have $\psi(\bmeta_n)-\psi(\tilde\bmeta)\geq c$. Thus
    \begin{align*}
        \|\bmeta_{n+1}-\tilde\bmeta\|^2 & = \|\bmeta_{n}-h_{n}\gb_n-\tilde\bmeta\|^2\\
        & = \|\bmeta_n-\tilde\bmeta\|^2 - 2h_n\gb_n\T(\bmeta_n-\tilde\bmeta) + h_n^2\|\gb_n\|^2\\
        & \leq \|\bmeta_n-\tilde\bmeta\|^2 - 2h_n(\psi(\bmeta_n)-\psi(\tilde\bmeta)) + h_n^2\|\gb_n\|^2\\
        & \leq \|\bmeta_n-\tilde\bmeta\|^2 - 2h_n c + h_n^2\|\gb_n\|^2.
    \end{align*}
    Define $\delta=\min\{\epsilon, 2c\}$ we have
    \begin{align}
        \|\bmeta_{n+1}-\tilde\bmeta\|^2 \leq \|\bmeta_n-\tilde\bmeta\|^2 - h_n\delta + h_n^2\|\gb_n\|^2.
    \label{dist_each_iter}
    \end{align}
    
    \item 
    Recursively apply \eqref{dist_each_iter} to get
    \begin{align*}
        \|\bmeta_{n+1}-\tilde\bmeta\|^2 \leq \|\bmeta_1-\tilde\bmeta\|^2 - \delta\sum_{n=1}^k h_n + \sum_{n=1}^k h_n^2\|\gb_n\|^2,
    \end{align*}
    so that
    \begin{align*}
        0 \leq C_1^2 - \delta\sum_{n=1}^k h_n + C_2^2\sum_{n=1}^k h_n^2.
    \end{align*}
    When $\epsilon$ is chosen to be small, $\delta$ has the same order as $\epsilon$, since $\delta=\min\{\epsilon, 2c\}$ and $c=-\theta\psi(\bmeta^{sf})=\epsilon\,\frac{\psi(\bmeta^{sf})}{2(f(\bmeta^{sf})-f(\bmeta^*))}$. Thus we have
    \begin{align*}
        \epsilon \lesssim \frac{C_1^2 + C_2^2\sum_{n=1}^k h_n^2}{\sum_{n=1}^k h_n}.
    \end{align*}
\end{enumerate}
\end{proof}

\section{Proofs of Section \ref{asymp:sec}}

\begin{proof}[Proof of Theorem \ref{debiase_formula_applicable_unkowncov}]
The debiased estimator $\hat\bbeta_{d}$ is constructed as
\begin{align*}
    \hat\bbeta_{d} = \vb + n^{-1}\hat\bmeta \tilde\Xb\T (\tilde\bY - \tilde\Xb\vb).
\end{align*}
Using simple rearrangements the above can be seen to be equivalent to
\begin{align*}
\sqrt{n}(\hat\bbeta_{d} - \bbeta^*) = \frac{1}{\sqrt{n}}\hat\bmeta \tilde\Xb\T\bvarepsilon +  \sqrt{n}(\hat\bmeta \hat\bSigma - \Ib)(\bbeta^* - \vb).
\end{align*}
If we are interested in the $j$\textsuperscript{th} coefficient $\sqrt{n}(\hat\bbeta_{d}^{(j)} - \bbeta^{*(j)})$ we can multiply the above by $\eb^{(j)\top} = (0,\ldots,\underbrace{1}_j,\ldots,0)$ to obtain
\begin{align}\label{jth:equation}
\sqrt{n}(\hat\bbeta_{d}^{(j)} - \bbeta^{*(j)}) = \frac{1}{\sqrt{n}}\hat\bmeta\T \tilde\Xb\T\bvarepsilon +  \sqrt{n}(\hat \bmeta\T \hat \bSigma - \eb^{(j)\top})( \bbeta^* - \vb).
\end{align}
In \eqref{jth:equation}, we can see the first term is Gaussian conditional on $\overline\Xb, \overline\bY,\tilde\Xb$. The vector $\hat\bmeta$ depends on $\overline\Xb, \overline\bY$ since the constraint of the optimization \eqref{opt_step3} in step 2 involves $\vb$, which is obtained in step 1 and is dependent on $\overline\Xb, \overline\bY$. Since the noise $\bvarepsilon$ is assumed to be normal we have:
\begin{align*}
    Z_j = \frac{1}{\sqrt{n}}\hat\bmeta\T \tilde\Xb\T\bvarepsilon |\overline\Xb, \overline\bY, \tilde\Xb \sim N(0, \sigma^2\hat\bmeta\T\hat\bSigma\hat\bmeta).
\end{align*}
One can see that the solution of the optimization program \eqref{opt_step3} minimizes the variance of the first term in \eqref{jth:equation}. Next, we would like the second term in \eqref{jth:equation} to converge to zero in order to achieve the asymptotic distribution of the debiased coefficient. Notice that the vector $\frac{\bbeta^* - \vb}{\|\bbeta^* - \vb\|} \in \cT_K(\vb)\cap\mathbb{S}^{p-1}$, so the second term $\Delta_j$ can be bounded as
\begin{align}
\label{2nd_term_bound1}
|\Delta_j| = |\sqrt{n}(\hat \bmeta\T \hat \bSigma - \eb^{(j)\top})( \bbeta^* - \vb)| \leq \sqrt{n}\sup_{\ub \in \cT_K(\vb) \cap\mathbb{S}^{p-1}} |(\hat \bmeta\T \hat \bSigma - \eb^{(j)\top}) \ub| \|\vb - \bbeta^*\|.
\end{align}
Since $\hat \bmeta$ is chosen so that the constraint in \eqref{opt_step3} is satisfied, the above will be at most
\begin{align*}
    \sqrt{n} \frac{\rho\overline w(T_K(\vb)\cap\mathbb{S}^{p-1})}{\sqrt{n}} \|\vb - \bbeta^*\|.
\end{align*}
Since $\overline w(\cT_K(\vb)\cap \mathbb{S}^{p-1})\|\vb-\bbeta^*\|=o_p(1)$ as required in step 1, we have $\Delta_j=o_p(1)$.
\end{proof}

\begin{proof}[Proof of Theorem \ref{sigma_hat_rate}]
By the triangle inequality we have
\begin{align*}
\bigg|\frac{1}{n}\sum_{i \in [n]} (Y_i - \bX_i\T \hat \bbeta)^2 - \sigma^2\bigg| \leq \bigg| \frac{1}{n}\sum_{i \in [n]} \varepsilon_i^2 - \sigma^2 \bigg| + \bigg|\frac{1}{n}\sum_{i \in [n]} (Y_i - \bX_i\T \hat \bbeta)^2 - \frac{1}{n}\sum_{i \in [n]} \varepsilon_i^2\bigg|.
\end{align*}
Let $T_n = \frac{\sqrt{n}}{\sqrt{\Var(\epsilon_i^2)}}\bigg(\frac{1}{n}\sum_{i \in [n]} \varepsilon_i^2 - \sigma^2\bigg)$. Notice that $T_n$ converges to a standard normal distribution by central limit theorem. 
Suppose $\EE \epsilon^6 < +\infty$. Let $\rho = \frac{\EE|\epsilon_i^2-\sigma^2|^3}{\Var(\epsilon_i^2)^3}$, and $z\sim N(0,1)$. By the Berry-Esseen central limit theorem \cite[Theorem 2.1.3]{vershynin2018high}, we have
\begin{align*}
    & \Big| \PP\{T_n>\delta\} - \PP\{z>\delta\} \Big| \leq \frac{\rho}{\sqrt{n}}\\
    \Rightarrow\quad & \PP\{T_n>\delta\} \leq \PP\{z>\delta\} + \frac{\rho}{\sqrt{n}}.
\end{align*}
By a tail bound of a standard normal random variable \cite[Example 2.1]{wainwright2019high}, the above inequality can be written as
\begin{align*}
    \PP\{T_n>\delta\} \leq e^{\frac{-\delta^2}{2}} + \frac{\rho}{\sqrt{n}}.
\end{align*}
Thus plug in $T_n = \frac{\sqrt{n}}{\sqrt{\Var(\epsilon_i^2)}}\bigg(\frac{1}{n}\sum_{i \in [n]} \varepsilon_i^2 - \sigma^2\bigg)$ we get
\begin{align}
\label{pf3.1_part1}
    \PP\Big\{\frac{1}{n}\sum_{i \in [n]} \varepsilon_i^2 - \sigma^2 \geq \frac{\sqrt{\Var(\epsilon_i^2)}\,\delta}{\sqrt{n}}\Big\} \leq e^{\frac{-\delta^2}{2}} + \frac{\rho}{\sqrt{n}}.
\end{align}
The second term can be bounded as
\begin{align*}
\bigg|\frac{1}{n}\sum_{i \in [n]} (Y_i - \bX_i\T \hat \bbeta)^2 - \frac{1}{n}\sum_{i \in [n]} \varepsilon_i^2\bigg| & = \bigg| \frac{1}{n}\sum_{i \in[n]}\bigg((\bX_i\T\bbeta^* - \bX_i\T \hat \bbeta + \epsilon_i)^2-\epsilon_i^2\bigg) \bigg|\\
& = \bigg| \frac{1}{n}\sum_{i \in[n]}\bigg((\bX_i\T\bbeta^* - \bX_i\T\hat\bbeta)^2-2(\bX_i\T\bbeta^* - \bX_i\T\hat\bbeta)\epsilon_i\bigg) \bigg|\\
& \leq  \frac{1}{n} \|\Xb(\hat \bbeta - \bbeta^*)\|^2 + \frac{2}{n} \|\Xb (\hat \bbeta - \bbeta^*)\| \|\bvarepsilon\|.
\end{align*}
Since we have $\frac{1}{\sqrt{n}}\|\Xb(\hat \bbeta - \bbeta^*)\|\lesssim\frac{\sigma\delta}{\sqrt{n}}$, and $\frac{\|\epsilon\|}{\sqrt{n}}$ can be bounded by $\sqrt{2}\sigma$ according to \eqref{chebyshev_sigma}, so that
\begin{align*}
    \frac{1}{n}\|\Xb(\hat \bbeta - \bbeta^*)\|^2 \lesssim \frac{\sigma^2\delta^2}{n},
    \quad \text{and  }\frac{2}{n}\|\Xb(\hat \bbeta - \bbeta^*)\|\|\epsilon\| \lesssim \frac{\sigma^2\delta}{\sqrt{n}}.
\end{align*}
By the fact $\delta=o(\sqrt{n})$, we have $\delta^2/n\leq\delta/\sqrt{n}$. Thus with probability converging to one we have
\begin{align}
\label{pf3.1_part2}
    \bigg|\frac{1}{n}\sum_{i \in [n]} (Y_i - \bX_i\T \hat \bbeta)^2 - \frac{1}{n}\sum_{i \in [n]} \varepsilon_i^2\bigg| & \lesssim \frac{\sigma^2\delta}{\sqrt{n}}.
\end{align}
Combine \eqref{pf3.1_part1} and \eqref{pf3.1_part2}, with probability converging to one
\begin{align*}
    |\hat\sigma^2-\sigma^2| \lesssim \frac{(\sqrt{\Var(\epsilon_i^2)}\vee\sigma^2)\,\delta}{\sqrt{n}}.
\end{align*}
\end{proof}

\section{Proofs of Section \ref{cvs_ls:sec}}
\begin{proof}[Proof of Theorem \ref{main_rst_cvs_ls}]
In the optimization program \eqref{step2_cvs_ls}, $\vb$ is the minima, so by the fact $\vb'\in K$ is a feasible point, we have 
\begin{align*}
\|\hat \bbeta - \vb\| & \leq \|\hat \bbeta - \vb'\| +  \frac{\overline w(T_K(\vb')\cap\mathbb{S}^{p-1})}{\sqrt{n}} -  \frac{\overline w(T_K(\vb)\cap\mathbb{S}^{p-1})}{\sqrt{n}} \\
& \leq \|\hat \bbeta - \vb'\| + \frac{\overline w(T_K(\vb')\cap\mathbb{S}^{p-1})}{\sqrt{n}},
\end{align*}
and by triangle inequality 
\begin{align*}
    \|\hat \bbeta - \vb'\| \leq \|\hat \bbeta - \bbeta^*\| + \|\vb' - \bbeta^*\|.
\end{align*}
Plug in $\vb'$ in Lemma \ref{corollary2.6} (and use Remark \ref{remark:after:important:lemma:gaussian:width} after it), to obtain with probability at least $1-e^{-\overline w(\cT_K(\vb')\cap\mathbb{S}^{p-1})}-3e^{-\frac{\big( \overline  w(\cT_K(\vb')\cap\mathbb{S}^{p-1})\big)^2}{2}}-\frac{\Var(\epsilon_i^2)}{n\sigma^4}$ we have
$$\|\bSigma^{1/2}(\hat \bbeta - \bbeta^*)\| \lesssim \|\bSigma^{1/2}(\vb' - \bbeta^*)\| +  \frac{\sigma \overline w(\bSigma^{1/2}(T_K(\vb')\cap\mathbb{S}^{p-1}))}{\sqrt{n}}.$$
By Lemma \ref{connect_gs_comp}, Remark 1.7 of \cite{plan2016generalized} and the fact that $\bSigma$ has bounded spectrum we conclude that 
$$\|\hat \bbeta - \bbeta^*\| \lesssim \|\vb' - \bbeta^*\| +  \frac{\sigma\overline w(T_K(\vb')\cap\mathbb{S}^{p-1})}{\sqrt{n}},$$
so that 
\begin{align*}
    \|\hat \bbeta - \vb'\| \lesssim \|\vb' - \bbeta^*\| +  \frac{\sigma\overline w(T_K(\vb')\cap\mathbb{S}^{p-1})}{\sqrt{n}}
    \text{  and  }
    \|\hat \bbeta - \vb\| \lesssim \|\vb' - \bbeta^*\| +  \frac{(\sigma+1)\overline w(T_K(\vb')\cap\mathbb{S}^{p-1})}{\sqrt{n}}.
\end{align*}
Again by triangle inequality
\begin{align*}
    \|\vb - \bbeta^*\| & \leq \|\vb - \hat\bbeta\| + \|\hat \bbeta - \bbeta^*\|\\
    & \lesssim \|\vb' - \bbeta^*\| +  \frac{(\sigma+1)\overline w(T_K(\vb')\cap\mathbb{S}^{p-1})}{\sqrt{n}}.
\end{align*}
Obviously the order of $\overline w(T_K(\vb)\cap\mathbb{S}^{p-1})$ is also controlled by $\|\vb' - \bbeta^*\|$ and $\overline w(T_K(\vb')\cap\mathbb{S}^{p-1})$ since
\begin{align*}
     \frac{\overline w(T_K(\vb)\cap\mathbb{S}^{p-1})}{\sqrt{n}} & \leq \|\hat \bbeta - \vb'\| +  \frac{\overline w(T_K(\vb')\cap\mathbb{S}^{p-1})}{\sqrt{n}} - \|\hat \bbeta - \vb\|\\
     & \leq \|\hat \bbeta - \vb'\| +  \frac{\overline w(T_K(\vb')\cap\mathbb{S}^{p-1})}{\sqrt{n}}\\
     & \lesssim \|\vb' - \bbeta^*\| +  \frac{(\sigma+1)\overline w(T_K(\vb')\cap\mathbb{S}^{p-1})}{\sqrt{n}}.
\end{align*}
Finally
\begin{align*}
    \overline w(\cT_K(\vb)\cap \mathbb{S}^{p-1})\|\vb-\bbeta^*\| & \lesssim \frac{1}{\sqrt{n}}\Big[\sqrt{n}\|\vb' - \bbeta^*\| + (\sigma+1)\overline w(T_K(\vb')\cap\mathbb{S}^{p-1})\big]^2\\
    & \lesssim \sqrt{n}\|\vb'-\bbeta^*\|^2\vee\frac{(\sigma+1)^2\overline w^2(T_K(\vb')\cap\mathbb{S}^{p-1})}{\sqrt{n}}.
\end{align*}
According to the condition of $\|\vb'-\bbeta^*\|$ and $\overline w(T_K(\vb')\cap\mathbb{S}^{p-1})$, with probability at least $1-e^{-\overline  w(\cT_K(\vb')\cap\mathbb{S}^{p-1})}-3e^{-\frac{\big( \overline  w(\cT_K(\vb')\cap\mathbb{S}^{p-1}\big)^2}{2}}-\frac{\Var(\epsilon_i^2)}{n\sigma^4}$
\begin{align*}
    \overline w(\cT_K(\vb)\cap \mathbb{S}^{p-1})\|\vb-\bbeta^*\| = o_p(1).
\end{align*}
\end{proof}

\begin{proof}[Proof of Lemma \ref{sigma_hat_cvs_ls}]
By an intermediate result \eqref{bound_X(beta-v)} in the proof of Lemma \ref{corollary2.6}, with probability $1-e^{-t}-e^{-\frac{t^2}{2}}$ we have
\begin{align*}
    \frac{1}{\sqrt{n}}\|\Xb(\hat\bbeta - \vb')\| \leq \frac{4}{\sqrt{n}}\|\Xb(\vb' - \bbeta^*)\| + \frac{\sqrt{2}(w(\bSigma^{\frac{1}{2}}\cT_K(\vb')\cap\mathbb{S}^{p-1})+\sqrt{2t})\frac{\|\bvarepsilon\|}{\sqrt{n}}}{\sqrt{n-1}-w(\bSigma^{\frac{1}{2}}\cT_K(\vb')\cap\mathbb{S}^{p-1})-t}.
\end{align*}
Set $t = w(\bSigma^{\frac{1}{2}}\cT_K(\vb')\cap\mathbb{S}^{p-1})$, and $\frac{\|\bvarepsilon\|}{\sqrt{n}}$ can be bounded by $\sqrt{2}\sigma$ according to \eqref{chebyshev_sigma}. The above inequality becomes
\begin{align*}
    \frac{1}{\sqrt{n}}\|\Xb(\hat\bbeta - \vb')\| \leq \frac{4}{\sqrt{n}}\|\Xb(\vb' - \bbeta^*)\| + \frac{w(\bSigma^{\frac{1}{2}}\cT_K(\vb')\cap\mathbb{S}^{p-1})\sigma}{\sqrt{n}},
\end{align*}
and by triangle inequality
\begin{align*}
    \frac{1}{\sqrt{n}}\|\Xb(\hat \bbeta - \bbeta^*)\| & \leq \frac{1}{\sqrt{n}}\|\Xb(\hat \bbeta - \vb')\| + \frac{1}{\sqrt{n}}\|\Xb(\vb' - \bbeta^*)\| \\
    & \leq \frac{5}{\sqrt{n}}\|\Xb(\vb' - \bbeta^*)\| + \frac{w(\bSigma^{\frac{1}{2}}\cT_K(\vb')\cap\mathbb{S}^{p-1})\sigma}{\sqrt{n}}.
\end{align*}

Now what's left is to bound $\frac{1}{\sqrt{n}}\|\Xb(\vb' - \bbeta^*)\|$. For the Gaussian case $\bX_i\sim N(0, \bSigma)$, we can rewrite it as
\begin{align*}
    \|\Xb(\bbeta^*-\vb')\| & = \Big\| \Xb\bSigma^{-\frac{1}{2}}\frac{\bSigma^{\frac{1}{2}}(\bbeta^*-\vb')}{\|\bSigma^{\frac{1}{2}}(\bbeta^*-\vb')\|} \Big\| \|\bSigma^{\frac{1}{2}}(\bbeta^*-\vb')\|.
\end{align*}
By Gordon's escape through mesh (Lemma \ref{gordon_escape}), since $\frac{\bSigma^{\frac{1}{2}}(\bbeta^*-\vb')}{\|\bSigma^{\frac{1}{2}}(\bbeta^*-\vb')\|} \in\bSigma^{\frac{1}{2}}\cT_K(\vb')\cap\mathbb{S}^{p-1}$ with probability at least $1-e^{w^2(\bSigma^{\frac{1}{2}}\cT_K(\vb')\cap\mathbb{S}^{p-1})/2}$ we have
\begin{align*}
    \Big\| \Xb\bSigma^{-\frac{1}{2}}\frac{\bSigma^{\frac{1}{2}}(\bbeta^*-\vb')}{\|\bSigma^{\frac{1}{2}}(\bbeta^*-\vb')\|} \Big\| & \leq \sup_{\ub\in\bSigma^{\frac{1}{2}}\cT_K(\vb')\cap\mathbb{S}^{p-1}}\|\Xb\ub\|\\
    & \leq \sqrt{n} + 2w(\bSigma^{\frac{1}{2}}\cT_K(\vb')\cap\mathbb{S}^{p-1}).
\end{align*}
Thus
\begin{align*}
    \frac{1}{\sqrt{n}}\|\Xb(\bbeta^*-\vb')\| & \leq \frac{\sqrt{n} + 2w(\bSigma^{\frac{1}{2}}\cT_K(\vb')\cap\mathbb{S}^{p-1})}{\sqrt{n}}\|\bSigma^{\frac{1}{2}}\|_{\operatorname{op}}\|\vb'-\bbeta^*\|\\
    & \lesssim \|\vb'-\bbeta^*\|,
\end{align*}
consequently
\begin{align*}
    \frac{1}{\sqrt{n}}\|\Xb(\hat \bbeta - \bbeta^*)\| & \lesssim \|\vb'-\bbeta^*\| + \frac{w(\bSigma^{\frac{1}{2}}\cT_K(\vb')\cap\mathbb{S}^{p-1})\sigma}{\sqrt{n}}.
\end{align*}
By the fact $ w(\bSigma^{\frac{1}{2}}T_K(\vb')\cap\mathbb{S}^{p-1}) \leq \|\bSigma^{-\frac{1}{2}}\|_{\operatorname{op}}\|\bSigma^{\frac{1}{2}}\|_{\operatorname{op}} \overline w(T_K(\vb')\cap\mathbb{S}^{p-1})$ \citep[Remark 1.7]{plan2016generalized}, and $\bSigma$ has bounded eigenvalues we have
\begin{align*}
    \frac{1}{\sqrt{n}}\|\Xb(\hat \bbeta - \bbeta^*)\| \lesssim \|\vb'-\bbeta^*\| + \frac{\overline w(\cT_K(\vb')\cap\mathbb{S}^{p-1})\sigma}{\sqrt{n}},
\end{align*}
and
\begin{align*}
    \delta \asymp \frac{\sqrt{n}}{\sigma}\|\vb'-\bbeta^*\| + \overline w(\cT_K(\vb')\cap\mathbb{S}^{p-1}).
\end{align*}
To show that $\delta=o(\sqrt{n})$, since $\|\vb'-\bbeta^*\|^2=o(1/\sqrt{n})$ and $\sigma$ is finite, the first term is $o(\sqrt{n})$. The second term is $o(\sqrt{n})$ by the given condition.
\end{proof}

\begin{proof}[Proof of Lemma \ref{debiase_formula_applicable}]
Using simple rearrangement the equation
\begin{align*}
    \hat\bbeta_d = \hat\bbeta + n^{-1}\bSigma^{-1} \tilde\Xb\T (\tilde\bY - \tilde\Xb\hat \bbeta),
\end{align*}
can be seen to be equivalent to
\begin{align}
\label{debiase_formula_split}
\sqrt{n}(\hat\bbeta_d - \bbeta^*) = \frac{1}{\sqrt{n}}\bSigma^{-1}\tilde\Xb\T\bvarepsilon +  \sqrt{n}(\bSigma^{-1} \hat\bSigma - \Ib)(\bbeta^* - \hat\bbeta).
\end{align}
The first term is Gaussian condition on $\tilde\Xb$:
$$Z = \frac{1}{\sqrt{n}}\bSigma^{-1}\tilde\Xb\T\bvarepsilon |\tilde\Xb \sim N(0, \sigma^2\bSigma^{-1}\hat\bSigma\bSigma^{-1}).$$
What remains to show is that the second term in \eqref{debiase_formula_split} converges to zero with high probability. Let $\ub = \bbeta^* - \hat\bbeta$, and $\eb^{(j)\top} = (0,\ldots,\underbrace{1}_j,\ldots,0)$. The $j$\textsuperscript{th} coordinate of the second term can be written as
\begin{align*}
    \sqrt{n}(\eb^{(j)\top}\bSigma^{-1} \hat\bSigma - \eb^{(j)\top})\ub & = \frac{1}{\sqrt{n}}\sum_{i=1}^n (\eb^{(j)\top}\bSigma^{-1}\tilde\bX_i\tilde\bX_i\T\ub - u_j).
\end{align*}
Let $g_i = \eb^{(j)\top}\bSigma^{-1}\tilde\bX_i\tilde\bX_i\T\ub - u_j$. Notice that $(\tilde\Xb, \tilde\bY)$ is independent from $(\overline\Xb, \overline\bY)$, and $\ub$ is constant conditionally on $(\overline\Xb, \overline\bY)$, so $\EE(g_i|\overline\Xb, \overline\bY) = \eb^{(j)\top}\bSigma^{-1}\bSigma\ub - u_j = 0$. Moreover, $\eb^{(j)\top}\bSigma^{-1}\tilde\bX_i$ and $\tilde\bX_i\T\ub$ are Gaussian random variables condition on $(\overline\Xb, \overline\bY)$. Let $\|\cdot\|_{\psi_2}$ be the sub-gaussian norm defined in \cite[Definition 2.5.6]{vershynin2018high}. The sub-gaussian norm of a Gaussian random variable is up to a constant of its standard deviation \cite[Example 2.5.8]{vershynin2018high}, so we have
\begin{align*}
    \|\eb^{(j)\top}\bSigma^{-1}\tilde\bX_i\|_{\psi_2} & \leq C_1 \|\bSigma^{-\frac{1}{2}}\eb^{(j)}\|\\
    \|\tilde\bX_i\T\ub\|_{\psi_2} & \leq C_2 \|\bSigma^{\frac{1}{2}}\ub\|.
\end{align*}
Let $\|\cdot\|_{\psi_1}$ be the sub-exponential norm defined in \cite[Definition 2.7.5]{vershynin2018high}. The product of two sub-gaussian random variables is a sub-exponential random variable, and the corresponding sub-exponential norm is less than the product of sub-Gaussian norms \cite[Lemma 2.7.7]{vershynin2018high}. Thus
\begin{align*}
    \|\eb^{(j)\top}\bSigma^{-1}\tilde\bX_i\tilde\bX_i\T\ub\|_{\psi_1} & \leq \|\eb^{(j)\top}\bSigma^{-1}\tilde\bX_i\|_{\psi_2}\|\tilde\bX_i\T\ub\|_{\psi_2}\\
    & \leq C_1 C_2 \|\bSigma^{-\frac{1}{2}}\|_{\operatorname{op}} \|\bSigma^{\frac{1}{2}}\ub\|.
\end{align*}
Additionally, the sub-exponential norm of a centered sub-exponential random variable is up to a constant to the original one \cite[Exercise 2.7.10]{vershynin2018high}
\begin{align*}
    \|\eb^{(j)\top}\bSigma^{-1}\tilde\bX_i\tilde\bX_i\T\ub - u_j\|_{\psi_1} & \leq C_3 \|\eb^{(j)\top}\bSigma^{-1}\tilde\bX_i\tilde\bX_i\T\ub\|_{\psi_1}\\
    & \leq C_1 C_2 C_3 \|\bSigma^{-\frac{1}{2}}\|_{\operatorname{op}} \|\bSigma^{\frac{1}{2}}\ub\|.
\end{align*}
Let $C = C_1 C_2 C_3$. Given the sub-exponential norm of $g_i=\eb^{(j)\top}\bSigma^{-1}\tilde\bX_i\tilde\bX_i\T\ub - u_j$, use Bernstein's inequality \cite[Theorem 2.8.1]{vershynin2018high} to get the conditional concentration inequality
\begin{align*}
    \PP\bigg( \Big|\frac{1}{\sqrt{n}}\sum_{i=1}^n g_i\Big| \geq t \,\bigg|\, \overline\bX, \overline\bY \bigg) \leq 2 \exp{\Big[ -c\min\Big( \frac{t^2}{C^2\|\bSigma^{-\frac{1}{2}}\|_{\operatorname{op}}^2\|\bSigma^{\frac{1}{2}}\ub\|^2}, \frac{t\sqrt{n}}{C\|\bSigma^{-\frac{1}{2}}\|_{\operatorname{op}}\|\bSigma^{\frac{1}{2}}\ub\|}\Big) \Big]}.
\end{align*}
The unconditional concentration inequality can be obtained by
{\footnotesize
\begin{align*}
    \PP\bigg( \Big|\frac{1}{\sqrt{n}}\sum_{i=1}^n g_i\Big| \geq t \bigg) & = \int \PP\bigg( \Big|\frac{1}{\sqrt{n}}\sum_{i=1}^n g_i\Big| \geq t \,\bigg|\, \overline\bX, \overline\bY \bigg) d\mu(\overline\bX, \overline\bY) \\
    & \leq \int 2 \exp{\Big[ -c\min\Big( \frac{t^2}{C^2\|\bSigma^{-\frac{1}{2}}\|_{\operatorname{op}}^2\|\bSigma^{\frac{1}{2}}\ub\|^2}, \frac{t\sqrt{n}}{C\|\bSigma^{-\frac{1}{2}}\|_{\operatorname{op}}\|\bSigma^{\frac{1}{2}}\ub\|}\Big) \Big]} d\mu(\overline\bX, \overline\bY)\\
    & =  \int_{\|\bSigma^{\frac{1}{2}}\ub\|\leq\theta} 2 \exp{\Big[ -c\min\Big( \frac{t^2}{C^2\|\bSigma^{-\frac{1}{2}}\|_{\operatorname{op}}^2\|\bSigma^{\frac{1}{2}}\ub\|^2}, \frac{t\sqrt{n}}{C\|\bSigma^{-\frac{1}{2}}\|_{\operatorname{op}}\|\bSigma^{\frac{1}{2}}\ub\|}\Big) \Big]} d\mu(\overline\bX, \overline\bY) +\\ 
    & \quad \mathbb{P}\Big[ \|\bSigma^{\frac{1}{2}}\ub\| > \theta \Big].
\end{align*}
}
The threshold $\theta = \|\bSigma^{\frac{1}{2}}(\vb'-\bbeta^*)\|+\frac{\sigma  w(\bSigma^{\frac{1}{2}}\cT_K(\vb')\cap\mathbb{S}^{p-1})}{\sqrt{n}}$ is chosen according to the result of Lemma \ref{corollary2.6} in order to make the second term vanish. Apply Lemma \ref{corollary2.6} with $\vb=\vb'$, one can see the second term of RHS vanishes as $n\rightarrow\infty$. 

For the first term, take $t=\theta\|\bSigma^{-\frac{1}{2}}\|_{\operatorname{op}}\,a_n$, we can see that $\Big|\frac{1}{\sqrt{n}}\sum_{i=1}^n g_i\Big|$ is bounded as
\begin{align*}
    \PP\bigg( \Big|\frac{1}{\sqrt{n}}\sum_{i=1}^n g_i\Big| \geq \theta\|\bSigma^{-\frac{1}{2}}\|_{\operatorname{op}}\,a_n \bigg) 
    \leq 2 \exp{\Big[ -c\min\Big( \frac{a_n^2}{C^2}, \frac{\sqrt{n}a_n}{C}
    \Big)\Big]},
\end{align*}
where $a_n$ is picked such that $\theta\|\bSigma^{-\frac{1}{2}}\|_{\operatorname{op}}\,a_n=o(1)$ and $a_n\rightarrow\infty$. Specifically we have
\begin{align*}
    \|\bSigma^{\frac{1}{2}}(\vb'-\bbeta^*)\|\,a_n = o(1), \,\,  w(\bSigma^{\frac{1}{2}}\cT_K(\vb')\cap\mathbb{S}^{p-1})\,a_n = o(\sqrt{n}), \,\,a_n\rightarrow\infty.
\end{align*}
The first condition reduces to $\|\vb'-\bbeta^*\|a_n=o(1)$ since $\lambda_{\min}(\bSigma^{1/2})\|\vb'-\bbeta^*\| \leq \|\bSigma^{\frac{1}{2}}(\vb'-\bbeta^*)\| \leq \|\bSigma^{1/2}\|_{\operatorname{op}}\|\vb'-\bbeta^*\|$. The condition $ w(\bSigma^{\frac{1}{2}}\cT_K(\vb')\cap\mathbb{S}^{p-1})a_n=o(\sqrt{n})$ reduces to $\overline w(\cT_K(\vb')\cap\mathbb{S}^{p-1})a_n=o(\sqrt{n})$ by the fact $ w(\bSigma^{\frac{1}{2}}T_K(\vb')\cap\mathbb{S}^{p-1}) \leq \|\bSigma^{-\frac{1}{2}}\|_{\operatorname{op}}\|\bSigma^{\frac{1}{2}}\|_{\operatorname{op}} \overline w(T_K(\vb')\cap\mathbb{S}^{p-1})$ \citep[Remark 1.7]{plan2016generalized}.
\end{proof}

\begin{proof}[Proof of Proposition \ref{decomp_tan_pos}]
By definition, $\cT(M^{p+}(\vb)) = \{\ub - t\vb:\, t\geq0,\,\ub\in M^{p+}\}$. 
If $\vb$ is a non-zero constant, it is trivial that $\cT(M^{p+}(\vb)) = M^{p}$. Moreover if all the coordinates of $\vb$ are zeros, the positiveness is also preserved so that $\cT(M^{p+}(\vb)) = M^{p+}$. Now it is sufficient to consider the case where $\vb$ has at least two constant pieces. 

Firstly, suppose the first constant piece of $\vb$ doesn't consist of zeros. Within each constant piece, the monotonicity of $u_i - tv_i$ is preserved, but not necessarily the positiveness, so that $\cT(M^{p+}(\vb)) \subset M^{p_1} \times M^{p_2} \times \ldots \times M^{p_{l}}$. To show the other direction, arbitrarily choose $\xb\in M^{p_1} \times M^{p_2} \times \ldots \times M^{p_{l}}$. Let $\epsilon_1 = \min_{i\in S} (v_{i+1}-v_i)$, where $S = \{i : v_{i + 1} > v_i\}$ and $\epsilon_2 = 2\min_{i\in[p]}v_i$. Pick $t = \frac{2\|\xb\|_{\infty}}{\epsilon_1\wedge\epsilon_2}$, then for all $i\in[p]$ we have
\begin{align*}
    x_i + tv_i & \geq x_i + \|x\|_{\infty} \geq 0,
\end{align*}
and for $i \in S$:
\begin{align*}
     t(v_{i+1}-v_i) = \frac{2\|\xb\|_{\infty}}{\epsilon_1\wedge\epsilon_2}(v_{i+1}-v_i) \geq x_i - x_{i+1} \quad \Rightarrow \quad x_i + tv_i \leq x_{i+1} + tv_{i+1}.
\end{align*}
For $i \in [p-1]\setminus S$ we have $v_{i + 1} = v_i$ and $x_{i} \leq x_{i + 1}$ so that $x_i + tv_i \leq x_{i+1} + tv_{i+1}$ also holds. Thus for any $\xb\in M^{p_1} \times M^{p_2} \times \ldots \times M^{p_{l}}$ there is a $t$ such that $\xb + t\vb\in M^{p+}$. The direction $\cT(M^{p+}(\vb)) \supset M^{p_1} \times M^{p_2} \times \ldots \times M^{p_{l}}$ holds.

When the first constant piece of $\vb$ is zero valued, within it $u_i - tv_i = u_i$ is always positive and monotone. For the other constant pieces, $u_i - tv_i$ is still monotone, so that $\cT(M^{p+}(\vb)) \subset M^{p_1+} \times M^{p_2} \times \ldots \times M^{p_{l}}$. For the other direction, let $\epsilon_1 = \min_{i\in S} (v_{i+1}-v_i)$, and $\epsilon_2$ be two times the minimum non-zero $v_i$. Also let $t = \frac{2\|\xb\|_{\infty}}{\epsilon_1\wedge\epsilon_2}$. it is easy to verify that $\xb + t\vb\in M^{p+}$. 
\end{proof}

\begin{proof}[Proof of Lemma \ref{beta_proj_onto_s}]
By definition, $\vb_s \in \argmin_{\wb\in T}\|\wb - \hat \bbeta\|$. For brevity let $\vb'$ be any vector in $\argmin_{\wb\in T}\|\wb - \hat \bbeta\|$. First for each coordinate of $\vb'$, we have either sign$(\vb'_{(i)})\,=\,$sign$(\hat \bbeta_{(i)})$, or sign$(\vb_{(i)})\,=0$, because otherwise we can always reverse the sign to make the $\ell_2$-norm of difference $\|\vb' - \hat \bbeta\|$ smaller.

Fix a set $S'$ of $s$ coordinates which is the assumed support for the vector $\vb_{(i)}'$. Consider the following optimization problem 
\begin{align}\label{unrelaxed:problem}
    \min_{\vb'}\,\, \sum_{i\in S'}(|\hat \bbeta_{(i)}| - |\vb'_{(i)}|)^2 + \sum_{i\notin S'} \hat \bbeta_{(i)}^2
    \quad \text{subject to} \quad
    \sum_{i\in S'} |\vb'_{(i)}| = \|\bbeta^*\|_1.
\end{align}
Relax this to the following problem which can potentially get a smaller objective function value
\begin{align*}
    \min_{\vb'}\,\, \sum_{i\in S'}(|\hat \bbeta_{(i)}| - a_i)^2 + \sum_{i\notin S'} \hat \bbeta_{(i)}^2
    \quad \text{subject to} \quad
    \sum_{i\in S'} a_i = \|\bbeta^*\|_1,
\end{align*}
where $a_i \in \RR$ (here we lose the positivity of $a_i$ from problem \eqref{unrelaxed:problem}).

Use Lagrange multipliers we obtain the Lagrangian
\begin{align*}
    L = \sum_{i\in S'}(|\hat\bbeta_{(i)}| - a_i)^2 + \sum_{i\notin S'} \hat \bbeta_{(i)}^2 + \lambda\big(\sum_{i\in S'} a_i- \|\bbeta^*\|_1\big),
\end{align*}
and solve $\frac{\partial L}{\partial a_i} = 0$ to get
$$a_i = |\hat \bbeta_{(i)}| + \lambda \text{ for all } i\in S'.
$$
Combine it with the fact that $\sum_{i\in S'} a_i = \|\bbeta^*\|_1$, we have 
$$\lambda = \frac{\|\bbeta^*\|_1 - \sum_{i\in S'}|\hat \bbeta_i|}{s} > 0,$$
where the last inequality follows since $\|\bbeta^*\|_1 \geq \|\hat \bbeta\|_1$. It follows that $a_i \geq 0$, and thus the minimum for problem \eqref{unrelaxed:problem} is also achieved at the same point. 
Hence at the optimal point we have $\|\vb' - \hat \bbeta\| = \sqrt{s\lambda^2 + \sum_{i \not \in S'} \hat \bbeta_{(i)}^2}$. Note that when $S' = S$ is the set of indices of the $s$ most significant coordinates both $\lambda$ and $\sum_{i \not \in S'} \hat \bbeta_{(i)}^2$ are minimized. This completes the proof.
\end{proof}

\section{Proofs From Section \ref{minkowski:section}}

\begin{proof}[Proof of Lemma \ref{find:v:minkowski:lemma}]
We first begin by noticing that $\bbeta^*$ will be a feasible point to \eqref{find:v:minkowski}, since 
\begin{align*}
    \bigg(\frac{\bar s \overline w(\Xb K') }{n}\bigg) \bigg(\frac{n}{\bar s^2 \overline w(\Xb K') \overline w(K')}\bigg)^{\gamma} \gg \frac{\sigma \gamma_{K'}(\bbeta^*) \overline w(\Xb K') }{n}
\end{align*} 
(assuming $\sigma, \lambda_{\min}(\bSigma), \lambda_{\max}(\bSigma)$ do not scale with $n$), and by assumption $\bar \gamma_{K'}(\bbeta^*) \leq \bar s$. Hence we will have $\rho_{K'}(\vb) \geq \rho_{K'}(\bbeta^*)$ which ensures $\bbeta^* \in \rho_{K'}(\vb)K'$, and in addition 

\begin{align*}
    \|\vb - \bbeta^*\| w(\overline \cT_{K'}(\vb) \cap \mathbb{S}^{p-1}) \leq  \bigg(\frac{\bar s^2 \overline w(\Xb K') \overline w(K')}{n}\bigg) \bigg(\frac{n}{\bar s^2 \overline w(\Xb K')\overline w(K')}\bigg)^{\gamma} = o_p(1),
\end{align*}
where the last bound follows by the following logic:
We should note here that the quantity $w(\overline \cT_{K'}(\vb) \cap \mathbb{S}^{p-1}) \leq \gamma_{K'}(\vb)\overline w( K')$ is an upper bound on the Gaussian complexity of the tangent cone. This is so because 
    \begin{align*}
        w(\overline \cT_{K'}(\vb) \cap \mathbb{S}^{p-1})& = \EE \sup_{\wb \in \overline \cT_{K'}(\vb), \|\wb\| = 1} \gb\T \wb \leq \EE \sup_{\wb \in \overline \cT_{K'}(\vb), \|\wb\| = 1}  \rho_{K'}^*(\gb) \rho_{K'}(\wb) \\
        &  \leq \EE \rho_{K'}^*(\gb) \gamma_{K'}(\vb) \leq \overline w(K')\gamma_{K'}(\vb).
    \end{align*}
\end{proof}

\begin{proof}[Proof of Lemma \ref{lemma:find:v:trace:reg}]
Let $\vb$ has an SVD given by $\Ub \bLambda \Vb\T$. Using the formula
\begin{align*}
    \|\Ab\|^2_F = \tr(\Ab\T\Ab),
\end{align*}
we obtain that
\begin{align*}
    \|\hat \bbeta - \vb\|_F^2 = \sum_{i = 1}^{R} \lambda_i^2 + \sum_{i = 1}^{\hat r} \hat \lambda_i^2 - 2\sum_{i \in [R],j \in [\hat r]} (\vb_i\T \hat \vb_j) (\ub_i\T \hat \ub_j) \lambda_i \hat \lambda_j
\end{align*}
Above $R\leq \bar r$ denotes the rank of $\vb$ and $\hat r$ denotes the rank of $\hat \bbeta$. First, suppose $\hat r > \bar r$. We note that 
\begin{align*}
    2\sum_{i \in [R],j \in [\hat r]} (\vb_i\T \hat \vb_j) (\ub_i\T \hat \ub_j) \lambda_i \hat \lambda_j\leq \sum_{i \in [R],j \in [\hat r]} ((\vb_i\T \hat \vb_j)^2 +  (\ub_i\T \hat \ub_j)^2) \lambda_i \hat \lambda_j
\end{align*}

Also, it is clear that $\sum_i (\vb_i\T \hat \vb_j)^2 \leq 1$ and $\sum_i (\ub_i\T \hat \ub_j)^2 \leq 1$ and $\sum_j (\vb_i\T \hat \vb_j)^2 \leq 1$ and $\sum_j (\ub_i\T \hat \ub_j)^2 \leq 1$. Since $\hat r \geq R$, we can bound the above as
\begin{align*}
    \sum_{i \in [R],j \in [\hat r]} ((\vb_i\T \hat \vb_j)^2 +  (\ub_i\T \hat \ub_j)^2) \lambda_i \hat \lambda_j & \leq \sum_{i \in [R]} (\sum_{j = R}^{\hat r}(\vb_i\T \hat \vb_j)^2 + \sum_{j = R}^{\hat r}(\ub_i\T \hat \ub_j)^2) \lambda_i \hat \lambda_R \\
    & + \sum_{i \in [R],j < R} ((\vb_i\T \hat \vb_j)^2 +  (\ub_i\T \hat \ub_j)^2) \lambda_i \hat \lambda_j,
\end{align*}
where we are assuming $\hat \lambda_1 \geq \hat \lambda_2 \geq \ldots \geq \hat \lambda_{\hat r}$. Let $\alpha_{ij} = (\vb_i\T \hat \vb_j)^2$ for $i \in [R], j < R$ and $\alpha_{iR} = \sum_{j = R}^{\hat r}(\vb_i\T \hat \vb_j)^2$, and  $\alpha'_{ij} = (\ub_i\T \hat \ub_j)^2$ for $i \in [R], j < R$ and $\alpha'_{iR} = \sum_{j = R}^r(\ub_i\T \hat \ub_j)^2$. We have $\sum_i \alpha_{ij} \leq 1$ and $\sum_j \alpha_{ij} \leq 1$, and similarly for $\alpha_{ij}'$. Hence we can upper bound the summations by letting $\sum_i \alpha_{ij} = 1$ and $\sum_j \alpha_{ij} = 1$, and similarly for $\alpha_{ij}'$. Hence we have a doubly stochastic matrix. As is well known each doubly stochastic matrix is a convex combination of permutation matrices. By the rearrangement inequality it follows that the upper bound is sharpest if the permutation matrix respects the orders of the $\lambda_i$ and $\hat \lambda_i$. Hence, it is most beneficial to have $\Ub$ and $\Vb$ coincide with $\hat \Ub$ and $\hat \Vb$ on the set of the $R$-th largest singular values. Also whatever is left from the Frobenius norm, should be distributed equally to the singular values in order to maximize their sum.

Suppose now $\hat r < \bar r$. If $R < \hat r$ then the same proof as before applies. If $R \geq \hat r$, by reversing the roles of $R$ and $\hat r$ and $i$ and $j$ in the previous proof we realize that it is most beneficial to match the highest $\hat r$ of the singular values $\lambda_i$ with the $\hat \bbeta$ matrix. Whatever is left from the Frobenius norm should be distributed equally to maximize the value of the sum of the singular values.

Following the above strategy results in the following maximal values for each $R$. For $R < \hat r$ we have $\sum_{i = 1}^R \hat \lambda_i + \sqrt{U - \sum_{i = R + 1}^{\hat r} \hat \lambda_i^2}\sqrt{R}$, where $U = \bigg(\frac{2\sqrt{2\bar r} \overline w(\Xb K') }{n}\bigg) \bigg(\frac{n}{(2\sqrt{2\bar r})^2 \overline w(\Xb K') \overline w(K')}\bigg)^{\gamma}$. For $R > \hat r$ (we will be forced to select $R = 
\overline r$) we have $\sum_{i = 1}^{\hat r} \hat \lambda_i + \sqrt{U}\sqrt{\overline r}$. One can see, that in both cases, it is most beneficial to set $R = \overline r$ which completes the proof.
\end{proof}

\begin{proof}[Proof of Proposition \ref{trace:reg:projection:proposition}]
We have
\begin{align*}
    \|\Ab - \Bb\|_F^2 = \tr(\Ab\T \Ab) - 2 \tr(\Ab\T \Bb) + \tr(\Bb\T \Bb).
\end{align*}
We thus need to focus on minimizing 
\begin{align*}
    -2 t \tr(\Vb\T \Ab\T \Ub) - 2\sum_{i}(\tilde \vb_i\T \Ab\T \tilde \ub_i) \sigma_i + t^2 r + \sum_{i = 1}^{\min(p_1,p_2) - r} \sigma_i^2,
\end{align*}
where we wrote $\Bb = t \Ub \Vb\T + \Wb$ and $\Wb = \sum_{i = 1}^{\min(p_1, p_2) - r} \sigma_i \tilde \ub_i \tilde\vb_i\T$, where $\tilde \ub_i \in \col(\Ub)^{\perp}$ and $\tilde \vb_i \in \col(\Vb)^{\perp}$. The above is clearly the same as
\begin{align*}
    -2 t \tr(\Vb\T \Ab\T \Ub) - 2\sum_{i}(\tilde \vb_i\T \Pb_{\Vb^{\perp}}\Ab\T\Pb_{\Ub^{\perp}} \tilde \ub_i) \sigma_i + t^2 r + \sum_{i = 1}^{\min(p_1,p_2) - r} \sigma_i^2,
\end{align*}
Let $(\Pb_{\Vb^{\perp}}\Ab\T\Pb_{\Ub^{\perp}})\T = \sum_{i = 1}^{\min(p_1, p_2) - r} \bar \lambda_i \bar \ub_i \bar \vb_i\T$, for  $\bar \ub_i \in \col(\Ub)^{\perp}$ and $\bar \vb_i \in \col(\Vb)^{\perp}$.

Then the expression 
\begin{align*}
    \sum_{i}(\tilde \vb_i\T \Pb_{\Vb^{\perp}}\Ab\T\Pb_{\Ub^{\perp}} \tilde \ub_i) \sigma_i  = \sum_{i,j} (\tilde \vb_i\T\bar \vb_j)(\tilde \ub_i \bar \ub_i) \bar \lambda_j \sigma_i.
\end{align*}
Using the same logic as in the proof of Lemma \ref{find:v:minkowski:lemma}, we can upper bound this by $\sum_{i} \bar \lambda_i \sigma_i$ where we are assuming that $\bar\lambda_1 \geq \bar\lambda_2 \geq \ldots \bar\lambda_{\min(p_1,p_2) - r}$ and $\sigma_1 \geq \sigma_2 \geq \ldots \sigma_{\min(p_1,p_2) - r}$, and the $\tilde \vb_i, \tilde \ub_i$ must coincide with the corresponding $\bar \vb_i, \bar \ub_i$. It then follows that the projection reduces to the following problem
\begin{align*}
    -2 t \tr(\Vb\T \Ab\T \Ub) - 2\sum_{i}\bar\lambda_i \sigma_i + t^2 r + \sum_{i = 1}^{\min(p_1,p_2) - r} \sigma_i^2,\mbox{s.t. } t \geq \sigma_1 \geq \sigma_2 \geq \ldots \geq \sigma_{\min(p_1,p_2) - r},
\end{align*}
which is clearly equivalent to an isotonic regression problem, and can be solved by PAVA.
\end{proof}

\section{Proofs from Section \ref{minkowski:gauge:regularization:sec}}

\begin{proof}[Proof of Theorem \ref{funamental:theorem:minkowski:reg}] 

Consider solving the constrained optimization problem:
\begin{align*}
\hat \bbeta := \argmin_{\bbeta} n^{-1} \sum_{i \in [n]} (Y_i - \bX_i\T \bbeta)^2, \mbox{ s.t. } \rho_{K'}(\bbeta) \leq  \rho_{K'}(\wb).
\end{align*}
By duality theory we know that there exists a $\lambda'$ of the regularized problem such that the solution $\hat \bbeta_{\lambda'} \equiv \hat \bbeta$. In addition observe that if $\lambda$ and $\bar \lambda$ are two values such that $\lambda < \bar \lambda$, then $\rho_{K'}(\hat \bbeta_{\lambda}) \geq \rho_{K'}(\hat \bbeta_{\bar \lambda})$. This is so because
\begin{align*}
n^{-1} \sum_{i \in [n]} (Y_i - \bX_i\T \hat \bbeta_{\lambda})^2 + \lambda \rho_{K'}(\hat \bbeta_{\lambda}) \leq n^{-1} \sum_{i \in [n]} (Y_i - \bX_i\T \hat \bbeta_{\bar \lambda})^2 + \lambda \rho_{K'}(\hat \bbeta_{\bar \lambda})\\
  n^{-1}\sum_{i \in [n]} (Y_i - \bX_i\T \hat \bbeta_{\bar \lambda})^2 + \bar \lambda \rho_{K'}(\hat \bbeta_{\bar \lambda})\leq n^{-1} \sum_{i \in [n]} (Y_i - \bX_i\T \hat \bbeta_{\lambda})^2 + \bar \lambda \rho_{K'}(\hat \bbeta_{\lambda}),
\end{align*}
from where we have that 
\begin{align*}
\MoveEqLeft \frac{ n^{-1}\sum_{i \in [n]} (Y_i - \bX_i\T \hat \bbeta_{\bar \lambda})^2 -   n^{-1}\sum_{i \in [n]} (Y_i - \bX_i\T \hat \bbeta_{\lambda})^2}{ \lambda} \geq\\& \rho_{K'}(\hat \bbeta_{\lambda}) - \rho_{K'}(\hat \bbeta_{\bar \lambda}) \\
& \geq \frac{  n^{-1}\sum_{i \in [n]} (Y_i - \bX_i\T \hat \bbeta_{\bar \lambda})^2 -   n^{-1}\sum_{i \in [n]} (Y_i - \bX_i\T \hat \bbeta_{\lambda})^2}{\bar \lambda}.
\end{align*}
We conclude that $\rho_{K'}(\hat \bbeta_{\lambda}) \geq \rho_{K'}(\hat \bbeta_{\bar \lambda})$. This implies that for all values of $\lambda > \lambda'$ we will have $\rho_{K'}(\hat \bbeta_{\lambda}) \leq \rho_{K'}(\wb)$. This shows the existence of $\lambda_{\wb}^*$ (by taking the $\inf$ among all $\lambda$ for which $\rho_{K'}(\hat \bbeta_{\lambda}) \leq \rho_{K'}(\wb)$). We will now assume $\lambda \geq \lambda_{\wb}^*$ so that we have $\rho_{K'}(\hat \bbeta_\lambda) \leq \rho_{K'}(\wb)$.

 By the basic inequality
 \begin{align*}
     n^{-1}\sum_{i \in [n]} (Y_i - \bX_i\T \hat \bbeta)^2 + \lambda \rho_{K'}(\hat \bbeta) \leq 2 n^{-1}\sum_{i \in [n]} (Y_i - \bX_i\T  \wb)^2 + \lambda \rho_{K'}( \wb),
 \end{align*}
 we have
 \begin{align*}
     \MoveEqLeft n^{-1}\sum_{i \in [n]} (\bX_i\T (\hat \bbeta - \bbeta^*))^2 -   n^{-1}\sum_{i \in [n]} (\bX_i\T (\wb - \bbeta^*))^2  \\& \leq n^{-1}\sum_{i \in [n]} \varepsilon_i \bX_i\T (\hat \bbeta - \wb) + \lambda \rho_{K'}( \wb) - \lambda \rho_{K'}(\hat \bbeta),
 \end{align*}
 This is equivalent to 
 
\begin{align*}
     \MoveEqLeft n^{-1}\sum_{i \in [n]} (\bX_i\T (\hat \bbeta - \wb))^2  + 2n^{-1}\sum_{i \in [n]} (\bX_i\T (\hat \bbeta - \wb)) (\bX_i\T (\wb - \bbeta^*)) \\
     & \leq n^{-1}\sum_{i \in [n]} \varepsilon_i \bX_i\T (\hat \bbeta - \wb) + \lambda \rho_{K'}( \wb) - \lambda \rho_{K'}(\hat \bbeta),
 \end{align*}

 Using $2ab \geq -1/2 a^2 - 2 b^2$ we can lower bound $2n^{-1}\sum_{i \in [n]} (\bX_i\T (\hat \bbeta - \wb)) (\bX_i\T (\wb - \bbeta^*))  \geq -1/2n^{-1}\sum_{i \in [n]} (\bX_i\T (\hat \bbeta - \wb))^2 -2n^{-1}\sum_{i \in [n]} (\bX_i\T (\wb - \bbeta^*))^2$. It follows that
 \begin{align*}
n^{-1}\sum_{i \in [n]} (\bX_i\T (\hat \bbeta - \wb))^2 \lesssim n^{-1}\sum_{i \in [n]} (\bX_i\T (\wb - \bbeta^*))^2 + n^{-1}\sum_{i \in [n]} \varepsilon_i \bX_i\T (\hat \bbeta - \wb) + \lambda \rho_{K'}( \wb) - \lambda \rho_{K'}(\hat \bbeta).
\end{align*}
 Conditionally on the noise we then have $\sum_{i \in [n]} \varepsilon_i \bX_i \sim N(0, \bSigma \frac{\sum \varepsilon_i^2}{n^2})$. Since $\hat \bbeta - \wb \in \overline \cT_{K'} (\wb)$, it follows that $\frac{\bSigma^{\frac{1}{2}}(\hat \bbeta - \wb)}{\|\bSigma^{\frac{1}{2}}(\hat \bbeta - \wb)\|} \in \bSigma^{\frac{1}{2}}\overline \cT_{K'}(\wb) \cap \mathbb{S}^{p-1}$. Hence
 \begin{align*}
     n^{-1}\sum_{i \in [n]} \varepsilon_i \bX_i\T (\hat \bbeta - \wb) \leq \sqrt{\frac{\sum \varepsilon_i^2}{n^2}} \|\bSigma^{\frac{1}{2}}(\hat \bbeta - \wb)\|\sup_{\ub \in \bSigma^{\frac{1}{2}}\overline \cT_{K'}(\wb) \cap \mathbb{S}^{p-1}}\bZ\T \ub ,
 \end{align*}
 where $\bZ$ is a standard normal vector. Next note that $\EE \sup_{\ub \in \bSigma^{\frac{1}{2}}\overline \cT_{K'}(\wb) \cap \mathbb{S}^{p-1}}\bZ\T \ub = w(\bSigma^{\frac{1}{2}}\overline \cT_{K'}(\wb) \cap \mathbb{S}^{p-1})$ and by the concentration of Lipschitz functions of Gaussian Variables \cite[Theorem 2.26]{wainwright2019high}, we then obtain
 \begin{align}\label{lipschitz:concentration}
 \PP\Big(\sup_{\ub \in \bSigma^{\frac{1}{2}}\overline \cT_{K'}(\wb) \cap \mathbb{S}^{p-1}}\bZ\T \ub - \EE \sup_{\ub \in \bSigma^{\frac{1}{2}}\overline \cT_{K'}(\wb) \cap \mathbb{S}^{p-1}}\bZ\T \ub \geq t\Big) \leq \exp \Big( -\frac{t^2}{2 L^2} \Big),
 \end{align}
 where $L$ is the Lipschitz constant of the map $\bZ \mapsto \sup_{\ub \in \bSigma^{\frac{1}{2}}\overline \cT_{K'}(\wb) \cap \mathbb{S}^{p-1}}\bZ\T \ub$ which is $1$ in this case. Hence, by setting $t = \overline w(\overline \cT_{K'}(\wb) \cap \mathbb{S}^{p-1})$ we conclude that with probability at least $1 - \exp(-\overline w(\overline \cT_{K'}(\wb) \cap \mathbb{S}^{p-1})^2/2)$ we have 
 \begin{align*}
     n^{-1}\sum_{i \in [n]} \varepsilon_i \bX_i\T (\hat \bbeta - \wb) \leq \sqrt{\frac{\sum \varepsilon_i^2}{n^2}} \|\bSigma^{\frac{1}{2}}(\hat \bbeta - \wb)\| (\overline w(\overline \cT_{K'}(\wb) \cap \mathbb{S}^{p-1}) + w(\bSigma^{\frac{1}{2}}\overline \cT_{K'}(\wb) \cap \mathbb{S}^{p-1}))\\
     \lesssim \sqrt{\frac{\sum \varepsilon_i^2}{n^2}} \|\bSigma^{\frac{1}{2}}(\hat \bbeta - \wb)\| \overline w(\overline \cT_{K'}(\wb) \cap \mathbb{S}^{p-1}),
 \end{align*}
 where we used the bound in Remark 1.7 of \cite{plan2016generalized}, that $w(\bSigma^{\frac{1}{2}}\overline \cT_{K'}(\wb)\cap\mathbb{S}^{p-1}) \leq \|\bSigma^{1/2}\|_{\operatorname{op}}\|\bSigma^{-1/2}\|_{\operatorname{op}}\overline w(\overline \cT_{K'}(\wb)\cap\mathbb{S}^{p-1})$. Using the reasoning around inequality \eqref{chebyshev_sigma} we have
 \begin{align*}
 \frac{\|\bvarepsilon\|^2}{n} \leq 2 \sigma^2,
 \end{align*}
 with probability at least $1-\frac{\Var(\bvarepsilon_i^2)}{n\sigma^4}$. Therefore
 \begin{align*}
     n^{-1}\sum_{i \in [n]} \varepsilon_i \bX_i\T (\hat \bbeta - \wb) \lesssim \frac{\sigma \|\bSigma^{\frac{1}{2}}(\hat \bbeta - \wb)\|\overline w(\overline \cT_{K'}(\wb) \cap \mathbb{S}^{p-1})}{\sqrt{n}}.
 \end{align*}
 Now the quantity $\|\Xb (\hat \bbeta - \wb)\|^2$ can be bounded with Gordon's escape through the mesh result as in \eqref{gordon:lower:bound}. We obtain that with with probability at least $1-e^{-\frac{t^2}{2}}$ 
 \begin{align*}
     \Big\| \Xb\bSigma^{-\frac{1}{2}}\frac{\bSigma^{\frac{1}{2}}(\hat\bbeta-\wb)}{\|\bSigma^{\frac{1}{2}}(\hat\bbeta-\wb)\|} \Big\| \geq 
     \inf\limits_{\ub\in\bSigma^{\frac{1}{2}}\overline \cT_{K'}(\wb)\cap\mathbb{S}^{p-1}}\|\Xb\bSigma^{-\frac{1}{2}}\ub\| \geq \sqrt{n-1}-w(\bSigma^{\frac{1}{2}}\overline \cT_{K'}(\wb)\cap\mathbb{S}^{p-1}) - t.
 \end{align*}
 set $t = \overline w(\overline \cT_{K'}(\wb)\cap\mathbb{S}^{p-1})$ to obtain that with high probability 
 $\|\Xb (\hat \bbeta - \wb)\|^2 \geq \|\bSigma^{\frac{1}{2}}(\hat \bbeta - \wb)\|^2 (\sqrt{n-1} - w(\bSigma^{\frac{1}{2}}\overline \cT_{K'}(\wb)\cap\mathbb{S}^{p-1}) - \overline w(\overline \cT_{K'}(\wb)\cap\mathbb{S}^{p-1}))^2/n$. Under the assumption $\overline w(\overline \cT_{K'}(\wb)\cap\mathbb{S}^{p-1}) = o(\sqrt{n})$ (which implies $ w(\bSigma^{\frac{1}{2}}\overline \cT_{K'}(\wb)\cap\mathbb{S}^{p-1}) = o(\sqrt{n})$ assuming bounded spectrum of $\bSigma$) we conclude that
 \begin{align*}
     \|\bSigma^{\frac{1}{2}}(\hat \bbeta - \wb)\|^2 \lesssim \|\bSigma^{\frac{1}{2}}(\hat \bbeta - \wb)\| \frac{\sigma \overline w(\overline \cT_{K'}(\wb)\cap\mathbb{S}^{p-1}) }{\sqrt{n}} + \lambda\rho_K(\wb) - \lambda \rho_K(\hat \bbeta),
 \end{align*}
 where the sign $\lesssim$ hides absolute constants (that also depend on the assumption $w(\bSigma^{\frac{1}{2}}\overline \cT_{K'}(\wb)\cap\mathbb{S}^{p-1}) = o(\sqrt{n})$) and constants depending on the maximum and minimum eigenvalues of $\bSigma$. Next since we are assuming $\bSigma$ has a bounded spectrum we may write
 \begin{align*}
     \|\hat \bbeta - \wb\|^2 \lesssim n^{-1}\sum_{i \in [n]} (\bX_i\T (\wb - \bbeta^*))^2 + \|\hat \bbeta - \wb\| \frac{\sigma \overline  w(\overline \cT_{K'}(\wb)\cap\mathbb{S}^{p-1}) }{\sqrt{n}} + \lambda\rho_K(\wb) - \lambda \rho_K(\hat \bbeta).
 \end{align*}
 The term $n^{-1}\sum_{i \in [n]} (\bX_i\T (\wb - \bbeta^*))^2$ can be upper bounded by $\|\bSigma^{1/2}(\wb - \bbeta^*)\| \lesssim \|\wb - \bbeta^*\|$, with high probability since $n^{-1}\sum_{i \in [n]} (\bX_i\T (\wb - \bbeta^*))^2/\|\bSigma^{1/2}(\wb - \bbeta^*)\|^2$ is an average of i.i.d. chi-squared random variables which concentrate abound their mean value of $1$ with high probability \cite[see Lemma 1 of][e.g.]{Laurent00adaptive}. Hence
\begin{align*}
     \|\hat \bbeta - \wb\|^2 \lesssim \|\wb - \bbeta^*\|^2 + \|\hat \bbeta - \wb\| \frac{\sigma \overline w(\overline \cT_{K'}(\wb)\cap\mathbb{S}^{p-1}) }{\sqrt{n}} + \lambda\rho_K(\wb) - \lambda \rho_K(\hat \bbeta).
 \end{align*}
 Now, by assumption $\rho_K(\wb)  - \rho_K(\hat \bbeta) \leq s(\wb) \|\wb - \hat \bbeta\|$. It follows that either $\|\hat \bbeta - \wb\| \lesssim \|\wb - \bbeta^*\|$, or $\|\hat \bbeta - \wb\| \lesssim  \frac{\sigma \overline  w(\overline \cT_{K'}(\wb)\cap\mathbb{S}^{p-1}) }{\sqrt{n}} + \lambda s(\wb)$, which grants the conclusion of the theorem statement in both cases since for positive numbers $a,b$ we have $\max(a,b) \lesssim a + b \lesssim \max(a,b)$. 
 
 We now prove the final part of the theorem. First, if $\lambda^*_{\wb} = 0$ there is nothing to prove. Next, suppose that $\lambda_{\wb}^*$ is strictly larger than $0$. Note that this means that at $\lambda_{\wb}^*$ we must have $\rho_{K'}(\hat \bbeta_{\lambda_{\wb}^*}) = \rho_{K'}(\wb)$. By reapplying the previous part of the proof, and taking into account $\rho_{K'}(\hat \bbeta_{\lambda_{\wb}^*}) = \rho_{K'}(\wb)$ we conclude that at $\lambda_{\wb}^*$ we have:
 \begin{align}\label{some:useful:intermediate:inequality}
\|\hat \bbeta_{\lambda_{\wb}^*} - \bbeta^*\| \lesssim \|\bbeta^* - \wb\| + \frac{\sigma \overline w(\overline \cT_{K'}(\wb) \cap \mathbb{S}^{p-1})}{\sqrt{n}}.
\end{align}
We will now consider the difference 
\begin{align*}
\MoveEqLeft 0 \geq \sum_{i} (Y_i - \bX_i\T \hat \bbeta_{\lambda_{\wb}^*})^2 + \lambda_{\wb}^* \rho_{K'}(\hat \bbeta_{\lambda_{\wb}^*}) - \sum_{i} (Y_i - \bX_i\T \wb)^2 - \lambda_{\wb}^* \rho_{K'}(\wb) \\
& \geq \sum_{i} (Y_i - \bX_i\T \hat \bbeta_{\lambda_{\wb}^*})^2 - \sum_{i} (Y_i - \bX_i\T \wb)^2.
\end{align*}
This difference is negative, and by repeating the proof of the previous part of the theorem, and taking into account \eqref{some:useful:intermediate:inequality}, can be lower bounded by $\gtrsim -\|\bbeta^* - \wb\|^2 -\frac{\sigma^2 \overline w^2(\overline \cT_{K'}(\wb) \cap \mathbb{S}^{p-1})}{n}$ with high probability. On the other hand, let $\wb'$ be the vector that achieves $s(\wb, r')$ for $r' = c_0 (\|\bbeta^* - \wb\| + \frac{\overline w(\overline \cT_{K'}(\wb) \cap \mathbb{S}^{p-1})}{\sqrt{n}})$. Consider the difference
\begin{align*}
\MoveEqLeft \sum_{i} (Y_i - \bX_i\T \wb')^2 + \lambda_{\wb}^* \rho_{K'}(\wb') - \sum_{i} (Y_i - \bX_i\T \wb)^2 - \lambda_{\wb}^* \rho_{K'}(\wb) \\
& \leq \bigg|\sum_{i} (Y_i - \bX_i\T \wb')^2 - \sum_{i} (Y_i - \bX_i\T \wb)^2\bigg| + \lambda_{\wb}^* (\rho_{K'}(\wb') - \rho_{K'}(\wb))
\end{align*}
By repeating the first part of the proof the difference $\bigg|\sum_{i} (Y_i - \bX_i\T \wb')^2 - \sum_{i} (Y_i - \bX_i\T \wb)^2\bigg|$ can be upper bounded by $\lesssim \|\bbeta^* - \wb\|^2 + \frac{\sigma^2 \overline w^2(\overline \cT_{K'}(\wb) \cap \mathbb{S}^{p-1})}{n}$ with high probability, since $\wb' \in \overline \cT_{K'}(\wb)$ and $\|\wb' - \wb\| \leq r'$. However, the difference $ \lambda_{\wb}^* (\rho_{K'}(\wb') - \rho_{K'}(\wb)) = -\lambda_{\wb}^* s(\wb,r')$. Hence if $\lambda_{\wb}^* \geq C_0( \|\bbeta^* - \wb\|^2 + \frac{\sigma^2 \overline w^2(\overline \cT_{K'}(\wb) \cap \mathbb{S}^{p-1})}{n})/s(\wb,r')$ for some large $C_0$, the vector $\wb'$ will have a smaller cost function than the vector $\hat \bbeta_{\lambda_{\wb}^*}$ which is a contradiction. Hence with high probability $\lambda_{\wb}^* \lesssim (\|\bbeta^* - \wb\|^2 + \frac{\sigma^2 \overline w^2(\overline \cT_{K'}(\wb) \cap \mathbb{S}^{p-1})}{n})/s(\wb, r')$.




\end{proof}

\begin{proof}[Proof of Lemma \ref{lemma:group:lasso:svr}] We first show an upper bound on $s(\wb)$. Observe that if $\wb'$ has any non-zero entries outside of the group support of $\wb$, $\rho_{K'}(\wb')$ can be decreased, $\rho_{K'}(\wb) - \rho_{K'}(\wb')$ increased and $\|\wb - \wb'\|$ can be decreased if we make those elements zero. Hence we may assume that $\wb'$ has the same group support as $\wb$. Since $\|\cdot\|_{2,1}$ is a norm, $\rho_{K'}(\wb) - \rho_{K'}(\wb') \leq \rho_{K'}(\wb - \wb') = \|\wb - \wb'\|_{2,1}$. Using Cauchy-Schwartz we conclude that $s(\wb) \leq \sqrt{s}$.

For the second part, construct a vector $\wb'$ from $\wb$ in the following manner. For all $G \in S$ (where $S$ denotes the set of active groups in $\wb$) take $\wb'_G = \wb_G - \frac{\wb_G}{\|\wb_G\|}h$, where $h \leq \|\wb_G\|$, and take $\wb'_G = \mathbf{0}$ otherwise. It follows that $\rho_{K'}(\wb) - \rho_{K'}(\wb') = s h$. This completes the proof. 
\end{proof}


\section{Proofs of Section \ref{slope_sqrtslope:sec}}
\begin{proof}[Proof of Theorem \ref{main_rst_slope}]

For the SLOPE estimator, we combine the results in Corollary 6.2 in \cite{bellec2018slope}. 
With probability at least $1-\frac{1}{2}(\frac{s^u}{2p})^{\frac{s^u}{\vartheta^*}}$ we have
\begin{align}\label{slope_rate_vartheta}
    \|\hat\bbeta - \bbeta^*\| \lesssim \frac{\sigma }{\vartheta^*}\sqrt{\frac{s^u\log(2ep/s^u)}{n}}.
\end{align}
    
For the square-root SLOPE estimator, we use the result in \cite[Corollary 6.2]{derumigny2018improved}. 
With probability at least $1-(\frac{s^u}{p})^{s^u}-(1+e^2)e^{-n/24}$ we have the same rate as \eqref{slope_rate_vartheta}. 

It follows that when $C \gtrsim \sigma/\vartheta^*$, $\bbeta^*$ will be a feasible point. Hence $\bbeta^* \in K = \{\bbeta: \|\bbeta\|_1 \leq \|\vb\|_1\}$. Next, since $\|\vb - \hat \bbeta\| \leq C \sqrt{s^u \log (2ep/s^u)/n}$ is guaranteed in step 1, by triangle inequality we have
\begin{align*}
    \|\bbeta^* - \vb\| & \leq \|\hat \bbeta - \vb\| + \|\hat \bbeta - \bbeta^*\|\\
    & \lesssim (C+\sigma/\vartheta^*)\sqrt{\frac{s^u\log(2ep/s^u)}{n}}.
\end{align*}

For $\overline w(\cT_K(\vb)\cap\mathbb{S}^{p-1})$, since $K$ is constructed as $K=\{\bbeta: \|\bbeta\|_1 \leq \|\vb\|_1\}$ and $\vb$ is at least $s^u$ sparse, by \cite[Proposition 3.10]{chandrasekaran2012convex} we have
\begin{align*}
    \overline w(\cT_K(\vb)\cap\mathbb{S}^{p-1}) \lesssim \sqrt{s^u\log\frac{e p}{s^u}}.
\end{align*}
Finally since $s^u=o(\sqrt{n}/\log (ep/s^u))$ we have
\begin{align*}
    \overline w(\cT_K(\vb)\cap \mathbb{S}^{p-1})\|\vb-\bbeta^*\| \lesssim \frac{s^u \log ep/s^u}{\sqrt{n}}C= o_p(1),
\end{align*}
by assumption. This completes the proof.
\end{proof}

\begin{proof}[Proof of Lemma \ref{slope_unknownl1_step2_sol}]
Let $\vb_s$ be a vector the set of vectors with $s$ non-zero coordinates. Recall the optimization problem
\begin{align*}
    \argmax \|\vb\|_1, \mbox{ s.t. } \|\vb - \hat \bbeta\| \leq C \sqrt{\frac{s^u \log 2ep/s^u}{n}} \text{ and } \|\vb\|_0 \leq s^u.
\label{slope_unknownl1_step2}
\end{align*}

Let $\vb_s$ be an $s < s^u$ sparse vector candidate for being the solution of the program above. First for each coordinate of $\vb_s$, we have either  $\sign (\vb_s^{(i)})\,=\,\sign(\hat\bbeta_{(i)})$, or $\sign(\vb_s^{(i)})\,=0$, because otherwise we can always change that coordinate to $-\sgn(\vb_s^{(i)})(|\vb_s^{(i)}|+2|\hat\bbeta_{(i)}|)$ to make $\|\vb_s\|_1$ larger while keeping $\|\vb_s - \hat \bbeta\|$ unchanged.

Then we show that the non-zero indices in $\vb_s$ have the form $\vb_s^{(i)} = \hat \bbeta_{(i)} + \sign(\hat \bbeta_{(i)}) c$ for some $c \geq 0$. Let $S'$ with $|S'| = s$ be the set of non-zero coordinates of $\vb_s$. The optimization program becomes
\begin{align*}
    \argmax_{s} \sum_{i\in S'}|\vb_s^{(i)}| , \mbox{ s.t. } \sum_{i\in S'}(|\hat \bbeta_{(i)}| - |\vb_s^{(i)}|)^2 + \sum_{i\notin S'} \hat \bbeta_{(i)}^2\leq C \sqrt{\frac{s^u \log 2ep/s^u}{n}}.
\end{align*}
Relax the above problem to 
\begin{align*}
    \argmax_{s} \sum_{i\in S'}a_i , \mbox{ s.t. } \sum_{i\in S'}(|\hat \bbeta_{(i)}| - a_i)^2 + \sum_{i\notin S'} \hat \bbeta_{(i)}^2\leq C \sqrt{\frac{s^u \log 2ep/s^u}{n}},
\end{align*}
where $a_i$ need not be positive. Using Lagrange multipliers we obtain
\begin{align*}
    L = \sum_{i\in S'}a_i + \lambda \sum_{i\in S'}(|\hat \bbeta_{(i)}| - a_i)^2,
\end{align*}
and solve $\frac{\partial L}{\partial a_i} = 0$ to get
\begin{align*}
    a_i = |\hat \bbeta_{(i)}| + \frac{1}{2\lambda} \text{ for all } i\in S'.
\end{align*}
Let $c=\frac{1}{2\lambda}$. We have that 
\begin{align*}
    s c^2 + \sum_{i \not \in S'} \hat \bbeta^2_{(i)} \leq C \sqrt{\frac{s^u \log 2ep/s^u}{n}}.
\end{align*}
Hence the maximal value of $c$ satisfies $c^2 = \frac{C \sqrt{\frac{s^u \log 2ep/s^u}{n}} - \sum_{i \not \in S'} \hat \bbeta^2_{(i)}}{s} \geq 0$. The latter is $\geq 0$ if there exists a feasible point in the program. When $C \sqrt{\frac{s^u \log 2ep/s^u}{n}} < \sum_{i \not \in S'} \hat \bbeta^2_{(i)}$ then the vector with support $S'$ can never be feasible in any case.

Note that our objective function is
\begin{align*}
    \sum_{i \in S'} a_i = \sum_{i \in S'} |\hat \bbeta_{(i)}| + s c,
\end{align*}
which is maximized when $c = \sqrt{\frac{C \sqrt{\frac{s^u \log 2ep/s^u}{n}} - \sum_{i \not \in S'} \hat \bbeta^2_{(i)}}{s}}$. It is also clear that in the above, one should pick $S'$ which minimizes the coefficients of $\sum_{i \not \in S'} \hat \bbeta^2_{(i)}$ and at the same time, maximizes $\sum_{i \not \in S'} |\hat \bbeta_{(i)}|$. Clearly, this set corresponds to the maximal in magnitude elements in the vector $\hat \bbeta$. Since $a_i$ are positive then one can find the corresponding maximal values of $|\vb_{(i)}| = a_i$, and $\vb_{(i)} = \hat \bbeta_{(i)} + \sign(\hat \bbeta_{(i)})c$ on the set $S'$ where the largest $s$ coefficients of $\hat \bbeta$ are located. Furthermore, the bigger the $s$ is the bigger the objective function. Hence we take $s = s^u$. This completes the proof.
\end{proof}


\begin{proof}[Proof of Lemma \ref{sigma_hat_slope}]
According to the results in \cite[Corollary 6.2]{bellec2018slope} and \cite[Corollary 6.2]{derumigny2018improved}, with probability converging to 1, the quantity $\frac{1}{\sqrt{n}}\|\overline{\Xb}(\hat \bbeta - \bbeta^*)\|$ can be bounded as
    \begin{align*}
        \frac{1}{\sqrt{n}}\|\overline{\Xb}(\hat \bbeta - \bbeta^*)\| \lesssim \frac{\sigma}{\vartheta^*}\sqrt{\frac{s^u \log(2ep/s^u)}{n}}.
    \end{align*}
    conditional on $\overline{\Xb}$ satisfying the WRE with $\vartheta^*$, where $\vartheta^*$ is defined in the main text and is $\vartheta(s^u, 3)$ for the LASSO, and $\vartheta(s^u, 20)$ for square-root SLOPE. From \cite[Theorem 8.3]{bellec2018slope} and the assumptions of Remark \ref{slope:remark:important}, we know that $\vartheta^* \geq \kappa/\sqrt{2}$ with high probability and $\overline{\Xb}$ satisfies the WRE condition. This is what we wanted to show.
 \end{proof}

\section{Proofs of Section \ref{subGaussian:noise:section}}
\begin{proof}[Proof of Lemma \ref{non_empty_int_nonGaussian}]
It suffices to show that for a sufficiently large $\rho'$ the constraint $\|\tilde\Xb\bSigma^{-1}\eb^{(j)}\|_{\infty}\leq \rho'\sqrt{\log n}$ contains a $\delta$ $\ell_2$-ball, since we have proved that the other set contains a small ball around the point $\bSigma^{-1}\eb^{(j)}$ in Corollary \ref{nonempty_int_Q}.
\begin{enumerate}
    \item \textbf{Feasible Point:}\\
    We argue that $\bmeta = \bSigma^{-1}\eb^{(j)}$ is a feasible point since $\|\tilde\Xb\bSigma^{-1}\eb^{(j)}\|_{\infty}\leq \rho'\sqrt{\log n}$ with probability converging to one. Notice that each coordinate of $\tilde\Xb\bSigma^{-1}\eb^{(j)}$ is a sub-Gaussian variable since
    \begin{align*}
        \|(\tilde\Xb\bSigma^{-1}\eb^{(j)})_i\|_{\psi_2} & \leq \|\eb^{(j)\top}\bSigma^{-1}\|\|\tilde\bX_i\|_{\psi_2}\\
        & = \sqrt{\bSigma^{-2}_{jj}}\,\|\tilde\bX_i\|_{\psi_2} = O(1).
    \end{align*}
    Since $\bSigma$ has bounded eigenvalues so does $\bSigma^{-2}$, and hence all of its entries should be bounded, thus $\bSigma^{-2}_{jj}$ is bounded. And since $\tilde\bX_i$ is either a bounded or Gaussian, which both belong to the sub-Gaussian category, $\tilde\bX_i$ is sub-Gaussian. 
    Therefore, $\|(\tilde\Xb\bSigma^{-1}\eb^{(j)})_i\|_{\psi_2}$ is bounded for all $i\in[n]$, or in other words each coordinate of $\tilde\Xb\bSigma^{-1}\eb^{(j)}$ is sub-Gaussian. 
    By the concentration inequality of maximum sub-Gaussian variables \citep[p. 14]{duchi2017lecture}, with probability converging to one
    \begin{align*}
        \max_{i\in[n]} |(\tilde\Xb\bSigma^{-1}\eb^{(j)})_i| \lesssim \sqrt{\log n}.
    \end{align*}
    Thus for a sufficiently large $\rho'$ we have
    \begin{align*}
        \|\tilde\Xb\bSigma^{-1}\eb^{(j)}\|_{\infty}\leq \rho'\sqrt{\log n}.
    \end{align*}
    
    \item \textbf{Non-empty Interior:}\\
    We are able to find $\bmeta = \bSigma^{-1}\eb^{(j)}$ as a feasible point. Now the idea is to show that there exists a small $\delta>0$ such that $\BB_{\delta}(\eb^{(j)\top}\bSigma^{-1})$ is still inside of the feasible region with high probability. Now let $\bx$ be a unit vector. We have
    \begin{align*}
        \|\tilde\Xb(\bSigma^{-1}\eb^{(j)}+\delta\bx)\|_{\infty} & \leq \rho'\sqrt{\log n} + \delta\|\tilde\Xb\bx\|_{\infty}.
    \end{align*}
    Picking $\delta = \rho'\sqrt{\log n}/\sup_{\xb: \|\xb\| \le 1}\|\tilde \Xb \bx\|_{\infty}$ shows that for the value $2 \rho'$ the set has non-empty interior. This completes the proof.
\end{enumerate}
\end{proof}

\begin{proof}[Proof of Lemma \ref{sg_psi_prime}]

This fact follows by a direct calculation. We omit the details.

\end{proof}


\begin{proof}[Proof of Theorem \ref{non_gaussian}]


We state and prove the following result. Its proof rests on an argument from \cite[Lemma 3.1]{javanmard2014confidence}.
\begin{lemma} 
\label{jm_lemma3.1}
The following holds:
\begin{align*}
\|\hat \bSigma^{1/2}\hat\bmeta\|^2 \geq \sup_{\ub \in \cT_K(\vb) \cap \mathbb{S}^{p-1}}\frac{(|u_j| - \rho\lambda)^2\mathbbm{1}\{|u_j|\geq \rho\lambda\}}{\ub \T \hat \bSigma \ub } \geq \frac{(|u^*_j| - \rho\lambda)^2\mathbbm{1}\{|u_j|\geq \rho\lambda\}}{\ub^{*\top} \hat \bSigma \ub^*},
\end{align*}
where $\lambda = \frac{\bar w(\cT_K(\vb) \cap \mathbb{S}^{p-1})}{\sqrt{n}}$, and $\ub^* = \frac{\bbeta^* - \vb}{\|\bbeta^* - \vb\|}$.
\end{lemma}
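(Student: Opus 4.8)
The plan is to exploit the defining feasibility constraint of $\hat\bmeta$ together with the Cauchy--Schwarz inequality in the $\hat\bSigma$-induced inner product. Since $\hat\bmeta$ is a feasible (indeed optimal) point of \eqref{opt_step3}, for every $\ub \in \cT_K(\vb) \cap \mathbb{S}^{p-1}$ we have $|(\hat\bmeta\T\hat\bSigma - \eb^{(j)\top})\ub| \leq \rho\lambda$, where $\lambda = \overline w(\cT_K(\vb)\cap\mathbb{S}^{p-1})/\sqrt{n}$. Writing $\eb^{(j)\top}\ub = u_j$ for the $j$-th coordinate of $\ub$, this reads $|\hat\bmeta\T\hat\bSigma\ub - u_j| \leq \rho\lambda$. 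First I would apply the reverse triangle inequality to obtain the lower bound $|\hat\bmeta\T\hat\bSigma\ub| \geq |u_j| - \rho\lambda$, which is the crucial quantitative consequence of feasibility.

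Next I would bound $|\hat\bmeta\T\hat\bSigma\ub|$ from above by Cauchy--Schwarz applied to $\hat\bSigma^{1/2}\hat\bmeta$ and $\hat\bSigma^{1/2}\ub$, namely $|\hat\bmeta\T\hat\bSigma\ub| = |(\hat\bSigma^{1/2}\hat\bmeta)\T(\hat\bSigma^{1/2}\ub)| \leq \|\hat\bSigma^{1/2}\hat\bmeta\|\,\sqrt{\ub\T\hat\bSigma\ub}$. Chaining this with the lower bound from the previous step gives $|u_j| - \rho\lambda \leq \|\hat\bSigma^{1/2}\hat\bmeta\|\sqrt{\ub\T\hat\bSigma\ub}$. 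When $|u_j| \geq \rho\lambda$ both sides are nonnegative, so I can square and rearrange to get the pointwise estimate
\begin{align*}
\|\hat\bSigma^{1/2}\hat\bmeta\|^2 \geq \frac{(|u_j| - \rho\lambda)^2}{\ub\T\hat\bSigma\ub}.
\end{align*}
When $|u_j| < \rho\lambda$ the indicator $\mathbbm{1}\{|u_j| \geq \rho\lambda\}$ zeroes out the numerator, so the inequality with the indicator attached holds for every $\ub \in \cT_K(\vb)\cap\mathbb{S}^{p-1}$; taking the supremum over such $\ub$ yields the first claimed inequality.

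For the second inequality I would simply evaluate the supremand at the particular unit vector $\ub^* = (\bbeta^* - \vb)/\|\bbeta^* - \vb\|$. This is legitimate because $\bbeta^*, \vb \in K$ force $\bbeta^* - \vb \in \cT_K(\vb)$ (it is $t(\wb - \vb)$ with $t = 1$, $\wb = \bbeta^*$), and normalization places $\ub^*$ on $\mathbb{S}^{p-1}$ while staying in the cone; hence $\ub^* \in \cT_K(\vb)\cap\mathbb{S}^{p-1}$ and the supremum dominates its value at $\ub^*$. I expect no serious obstacle here: the only points requiring care are the nonnegativity bookkeeping that justifies squaring (handled by restricting to $|u_j|\geq\rho\lambda$) and the harmless abuse of notation whereby the indicator in the rightmost expression refers to the coordinate of $\ub^*$. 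The argument is otherwise a direct transcription of \cite[Lemma 3.1]{javanmard2014confidence} adapted to the tangent-cone constraint in place of the $\ell_\infty$ constraint used there.
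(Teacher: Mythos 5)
Your proof is correct, but it takes a genuinely different and more elementary route than the paper's. The paper follows the Lagrangian (weak-duality) argument of \cite[Lemma 3.1]{javanmard2014confidence}: it splits the feasibility constraint into the two one-sided inequalities $u_j-\langle\ub,\hat\bSigma\bmeta\rangle\leq\rho\lambda$ and $-\rho\lambda\leq u_j-\langle\ub,\hat\bSigma\bmeta\rangle$, introduces a multiplier $c\geq0$, observes that the penalized objective is minimized at $\bmeta=c\ub/2$, and then optimizes over $c$ to recover the quadratic lower bound $(u_j-\rho\lambda)^2/(\ub\T\hat\bSigma\ub)$ in each case separately before combining them into the absolute-value form. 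Your argument reaches the same pointwise bound in two lines: the reverse triangle inequality applied to the feasibility constraint gives $|\hat\bmeta\T\hat\bSigma\ub|\geq|u_j|-\rho\lambda$, and Cauchy--Schwarz in the $\hat\bSigma$-inner-product gives $|\hat\bmeta\T\hat\bSigma\ub|\leq\|\hat\bSigma^{1/2}\hat\bmeta\|\sqrt{\ub\T\hat\bSigma\ub}$; squaring (legitimate on the event $|u_j|\geq\rho\lambda$, and the indicator disposes of the complementary event) yields the claim. The two computations are equivalent in content --- completing the square in the Lagrangian is exactly Cauchy--Schwarz in disguise --- but yours is shorter, handles both signs of $u_j-\langle\ub,\hat\bSigma\bmeta\rangle$ simultaneously, and makes the role of feasibility transparent. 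The second inequality is handled identically in both proofs (evaluation at $\ub^*\in\cT_K(\vb)\cap\mathbb{S}^{p-1}$), and you are right that the indicator in the rightmost expression of the statement should read $\mathbbm{1}\{|u_j^*|\geq\rho\lambda\}$. The only point neither argument addresses explicitly is the degenerate case $\ub\T\hat\bSigma\ub=0$, where feasibility forces $|u_j|\leq\rho\lambda$ so the indicator vanishes anyway; this is at the level of rigor of the paper's own proof and is not a gap.
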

\begin{proof}

Let $\lambda=\frac{\overline w(\cT_K(\vb)\cap\mathbb{S}^{p-1})}{\sqrt{n}}$. The constraint $\sup_{\ub \in \cT_K(\vb) \cap \mathbb{S}^{p-1}} |(\bmeta\T \hat \bSigma - \eb^{(j)\top}) \ub|  \leq \rho\frac{\overline w(\cT_K(\vb)\cap\mathbb{S}^{p-1})}{\sqrt{n}}$ implies 
\begin{align*}
    & u_j-\langle \ub, \hat\bSigma\bmeta\rangle \leq \rho\lambda,\quad \ub\in\cT_K(\vb)\cap \mathbb{S}^{p-1}\,\,\text{or}\\
    & -\rho\lambda\leq u_j-\langle \ub, \hat\bSigma\bmeta\rangle,\quad \ub\in\cT_K(\vb)\cap \mathbb{S}^{p-1}.
\end{align*}
Consider the first case $u_j-\langle \ub, \hat\bSigma\bmeta\rangle \leq \rho\lambda$. Then for any feasible $\tilde\bmeta$ and $c\geq0$, when $\ub\in\cT_K(\vb)\cap \mathbb{S}^{p-1}$ we have
\begin{align*}
    \tilde\bmeta\hat\bSigma\tilde\bmeta & \geq \tilde\bmeta\hat\bSigma\tilde\bmeta + c(u_j-\rho\lambda) - c\langle \ub, \hat\bSigma\tilde\bmeta\rangle\\
    & \geq \min_{\bmeta:\,\, \sup_{\ub \in \cT_K(\vb) \cap \mathbb{S}^{p-1}} (u_j-\rho\lambda) - \langle \ub, \hat\bSigma\bmeta\rangle \leq0} [\bmeta\hat\bSigma\bmeta + c(u_j-\rho\lambda) - c\langle \ub, \hat\bSigma\bmeta\rangle].
\end{align*}
Thus the optimal value of the optimization \eqref{optimization:algo6} in step \ref{step2_non_gaussian_noise} satisfies 
\begin{align*}
    \|\hat\bSigma^{1/2}\hat\bmeta\|^2 \geq \min_{\bmeta:\,\, \sup_{\ub \in \cT_K(\vb) \cap \mathbb{S}^{p-1}} (u_j-\rho\lambda) - \langle \ub, \hat\bSigma\bmeta\rangle \leq 0} [\bmeta\hat\bSigma\bmeta + c(u_j-\rho\lambda) - c\langle \ub, \hat\bSigma\bmeta\rangle].
\end{align*}
When $\bmeta=c\ub/2$, the RHS is minimized. Thus
\begin{align*}
    \|\hat\bSigma^{1/2}\hat\bmeta\|^2 \geq c(u_j-\rho\lambda) - \frac{c^2}{4}\ub\T\hat\bSigma\ub,\quad\text{ if }u_j-\rho\lambda\leq \frac{c}{2}\ub\T\hat\bSigma\ub.
\end{align*}
We then optimize over $c$. When $c=2(u_j-\rho\lambda)/\ub\T\hat\bSigma\ub$, the condition $u_j-\rho\lambda\leq \frac{c}{2}\ub\T\hat\bSigma\ub$ holds for any $\ub$. And since we need $c\geq0$, the condition $u_j\geq \rho\lambda$ should hold. Plug in the value of $c$ to the RHS, we get
\begin{align*}
    \|\hat\bSigma^{1/2}\hat\bmeta\|^2 \geq \frac{(u_j-\rho\lambda)^2}{\ub\T\hat\bSigma\ub}\mathbbm{1}\{u_j\geq \rho\lambda\}.
\end{align*}
Similarly for the second case $-\rho\lambda\leq u_j-\langle \ub, \hat\bSigma\bmeta\rangle$ we will get
\begin{align*}
    \|\hat\bSigma^{1/2}\hat\bmeta\|^2 \geq \frac{(-u_j-\rho\lambda)^2}{\ub\T\hat\bSigma\ub}\mathbbm{1}\{-u_j\geq \rho\lambda\}.
\end{align*}
Finally 
\begin{align*}
    \|\hat\bSigma^{1/2}\hat\bmeta\|^2 \geq \frac{(|u_j|-\rho\lambda)^2}{\ub\T\hat\bSigma\ub}\mathbbm{1}\{|u_j|\geq \rho\lambda\}.
\end{align*}
\end{proof}

\begin{lemma}
\label{bdd_u*Sigu*}
Suppose $\bX_i$ has a covariance matrix $\bSigma$, and the eigenvalues of $\bSigma$ are bounded. $\ub^*$ is defined as in Lemma \ref{jm_lemma3.1}. Then conditionally on $\overline{\Xb}$ we have that $ \lambda_{\min}(\bSigma)/2 \leq \ub^{*\top}\hat \bSigma \ub^* \leq 3/2\|\bSigma\|_{\operatorname{op}}$ with high probability.
\end{lemma}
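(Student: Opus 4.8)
The plan is to reduce the claim to a standard concentration statement for a quadratic form along a \emph{fixed} direction. Conditionally on the first sample split $(\overline\Xb, \overline\bY)$ (which determines $\vb$, and hence $\ub^* = \frac{\bbeta^* - \vb}{\|\bbeta^* - \vb\|}$), the vector $\ub^*$ is a deterministic unit vector that is independent of the second split $\tilde\Xb = (\tilde\bX_1,\ldots,\tilde\bX_n)\T$ defining $\hat\bSigma$. Under this conditioning I would write
\[
\ub^{*\top}\hat\bSigma\ub^* = \frac{1}{n}\sum_{i=1}^n (\tilde\bX_i\T\ub^*)^2,
\]
an average of i.i.d. nonnegative random variables with common mean $\EE[(\tilde\bX_i\T\ub^*)^2] = \ub^{*\top}\bSigma\ub^*$. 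Since the eigenvalues of $\bSigma$ are bounded and $\ub^*$ is a unit vector, this mean satisfies $\lambda_{\min}(\bSigma) \leq \ub^{*\top}\bSigma\ub^* \leq \|\bSigma\|_{\operatorname{op}}$, so the problem reduces to controlling the deviation of the empirical average from its expectation.

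Next I would verify that each summand $(\tilde\bX_i\T\ub^*)^2$ is sub-exponential. In the Gaussian case $\tilde\bX_i\T\ub^* \sim N(0,\ub^{*\top}\bSigma\ub^*)$, so the summand is a scaled $\chi^2_1$ variable and I can invoke the same $\chi^2_1$ tail estimate used in the proof of Theorem \ref{feasible_point}. In the bounded case $\tilde\bX_i\T\ub^*$ is bounded, hence sub-Gaussian with $\psi_2$-norm of order $\|\bSigma^{1/2}\|_{\operatorname{op}}$, and its square is therefore sub-exponential with $\psi_1$-norm of the same order. In either case Bernstein's inequality for sums of i.i.d. sub-exponential variables gives, for a suitable constant $c>0$ and all sufficiently small $t>0$,
\[
\PP\!\left(\left|\frac{1}{n}\sum_{i=1}^n (\tilde\bX_i\T\ub^*)^2 - \ub^{*\top}\bSigma\ub^*\right| \geq t \,\Big|\, \overline\Xb, \overline\bY\right) \leq 2\exp(-c\,n\,t^2).
\]

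Finally I would take $t = \lambda_{\min}(\bSigma)/2$, which is an order-one constant because $\bSigma$ has bounded spectrum; the right-hand side then tends to $0$ as $n\to\infty$. On the complementary event,
\[
\ub^{*\top}\bSigma\ub^* - \tfrac{1}{2}\lambda_{\min}(\bSigma) \leq \ub^{*\top}\hat\bSigma\ub^* \leq \ub^{*\top}\bSigma\ub^* + \tfrac{1}{2}\lambda_{\min}(\bSigma).
\]
Combining with $\lambda_{\min}(\bSigma) \leq \ub^{*\top}\bSigma\ub^* \leq \|\bSigma\|_{\operatorname{op}}$ and the trivial inequality $\lambda_{\min}(\bSigma)/2 \leq \|\bSigma\|_{\operatorname{op}}/2$ yields $\ub^{*\top}\hat\bSigma\ub^* \geq \lambda_{\min}(\bSigma)/2$ and $\ub^{*\top}\hat\bSigma\ub^* \leq \tfrac{3}{2}\|\bSigma\|_{\operatorname{op}}$, as claimed.

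I do not expect a genuine obstacle here; the only point requiring care is the conditioning argument. The concentration must be applied to the single fixed direction $\ub^*$, which is legitimate precisely because $\vb$ (and thus $\ub^*$) is built from the first split while $\hat\bSigma$ is built from the independent second split. If one instead tried to argue uniformly over the sphere, one would only obtain operator-norm control and weaker constants, so it is important to exploit the independence granted by the sample splitting and treat $\ub^*$ as deterministic.
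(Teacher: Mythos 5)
Your argument is correct and is essentially identical to the paper's proof: both condition on the first split so that $\ub^*$ is a fixed unit vector independent of $\hat\bSigma$, note that $(\tilde\bX_i\T\ub^*)^2$ is sub-exponential, apply Bernstein's inequality, and take $t=\lambda_{\min}(\bSigma)/2$ together with $\lambda_{\min}(\bSigma)\leq\ub^{*\top}\bSigma\ub^*\leq\|\bSigma\|_{\operatorname{op}}$. Your write-up is simply a more detailed version of the same reasoning.
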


\begin{proof}
Since conditionally on $\overline{\Xb}$ we have that $\vb$ is independent of $\hat \bSigma$, and $(\ub^{*\top} \bX_i)^2$ is a sub-exponential random variable (with norm less than $K := \|\Xb_i\|^2_{\psi_2}$ which is bounded by assumption), we can use a Bernstein type of concentration inequality to claim that 
\begin{align*}
    \PP\bigg(\bigg|\frac{1}{n} \sum_{i \in [n]} (\ub^{*\top} \bX_i)^2 - \EE [(\ub^{*\top} \bX_i)^2| \overline{\Xb}]\bigg| \geq t\bigg) \leq \exp(-c n t^2/K^2 \wedge t/K).
\end{align*}
Choose $t = \lambda_{\min}(\bSigma)/2$, and note that $\lambda_{\max}(\bSigma) \geq \EE [(\ub^{*\top} \bX_i)^2 | \overline{\Xb}] \geq \lambda_{\min}(\bSigma)$, completing the proof.
\end{proof}

\begin{theorem}[Lindeberg-Feller CLT]\citep[p. 901]{greene2003econometric}
\label{lindeberg_feller_clt}
Let $\bX_1,\ldots,\bX_n$ be independent but not necessarily identically distributed random variables with $\mathbb{E}[\bX_i]=\mu_i$ and $\var(\bX_i)=\sigma_i^2<\infty$. Define $\overline\mu_n=n^{-1}\sum_{i=1}^n\mu_i$ and $\overline\sigma_n^2=n^{-1}\sum_{i=1}^n\sigma_i^2$. Suppose
\begin{align*}
    \lim_{n\rightarrow\infty}\frac{\max_i\sigma_i^2}{n\overline\sigma^2_n}=0, \quad 
    \lim_{n\rightarrow\infty}\overline\sigma_n^2 < \infty.
\end{align*}
Then
\begin{align*}
    \sqrt{n}(\frac{\bar\bX-\overline\mu_n}{\overline\sigma_n}) \xrightarrow{\text{d}} Z \sim N(0,1).
\end{align*}
\end{theorem}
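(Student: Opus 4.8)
The plan is to establish this as the classical Lindeberg--Feller central limit theorem by the method of characteristic functions together with L\'evy's continuity theorem. First I would center and normalize: set $Y_i = \bX_i - \mu_i$, so that $\EE Y_i = 0$ and $\var(Y_i) = \sigma_i^2$, and write $s_n^2 := \sum_{i=1}^n \sigma_i^2 = n\overline\sigma_n^2$. A direct rearrangement shows that the quantity of interest is exactly the normalized partial sum
\begin{align*}
T_n := \sqrt{n}\,\frac{\bar\bX - \overline\mu_n}{\overline\sigma_n} = \frac{1}{s_n}\sum_{i=1}^n Y_i,
\end{align*}
so it suffices to prove $T_n \xrightarrow{\text{d}} N(0,1)$. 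By L\'evy's continuity theorem this reduces to showing that the characteristic function of $T_n$ converges pointwise to $e^{-t^2/2}$ for every $t \in \RR$.

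Next I would exploit independence to factor the characteristic function as $\EE e^{it T_n} = \prod_{i=1}^n \varphi_i(t)$, where $\varphi_i(t) = \EE\exp(it Y_i/s_n)$. Using the elementary bound $|e^{ix} - 1 - ix + x^2/2| \le \min(|x|^3/6,\, x^2)$ together with $\EE Y_i = 0$ and $\EE Y_i^2 = \sigma_i^2$, each factor admits the second-order expansion
\begin{align*}
\varphi_i(t) = 1 - \frac{t^2 \sigma_i^2}{2 s_n^2} + R_i(t),
\end{align*}
with an explicit remainder obeying $|R_i(t)| \le \EE[\min(|tY_i|^3/(6 s_n^3),\, t^2 Y_i^2/s_n^2)]$. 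The target is the matching product $\prod_{i=1}^n \exp(-t^2\sigma_i^2/(2 s_n^2)) = \exp(-t^2/2)$, since $\sum_i \sigma_i^2 = s_n^2$.

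The two estimates that close the argument are: (i) the telescoping bound $|\prod_i z_i - \prod_i w_i| \le \sum_i |z_i - w_i|$, valid for complex numbers of modulus at most one, which lets me compare $\prod_i \varphi_i(t)$ to $\prod_i \exp(-t^2\sigma_i^2/(2s_n^2))$ after absorbing the quadratic discrepancy $\sum_i (t^2\sigma_i^2/(2s_n^2))^2 \le \tfrac{t^4}{4}\max_i (\sigma_i^2/s_n^2)$, which vanishes by the stated condition $\max_i \sigma_i^2/(n\overline\sigma_n^2) \to 0$; and (ii) control of $\sum_i |R_i(t)|$ by splitting on $\{|Y_i| \le \epsilon s_n\}$ versus its complement, producing a bound of the form $\tfrac{|t|^3 \epsilon}{6} + t^2 s_n^{-2}\sum_i \EE[Y_i^2 \mathbbm{1}\{|Y_i| > \epsilon s_n\}]$ that one makes arbitrarily small by first sending $n \to \infty$ and then $\epsilon \to 0$.

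The main obstacle is step (ii): the Feller-type condition stated in the theorem governs the product estimate (i) but does not by itself force the truncated second moment in (ii) to vanish; that is precisely the Lindeberg condition $s_n^{-2}\sum_i \EE[Y_i^2\mathbbm{1}\{|Y_i| > \epsilon s_n\}] \to 0$. Under the stated hypotheses this is supplied by a uniform higher-moment bound $\sup_i \EE|\bX_i - \mu_i|^{2+\gamma} < \infty$ for some $\gamma > 0$ together with $\overline\sigma_n^2$ bounded away from zero, since then the Lyapunov ratio satisfies $s_n^{-(2+\gamma)}\sum_i \EE|Y_i|^{2+\gamma} \lesssim n^{-\gamma/2} \to 0$, and Lyapunov's condition implies Lindeberg's. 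Such a moment bound is available in every setting where we invoke the theorem, so verifying it there rather than re-deriving the abstract result is the only work left. With the Lindeberg condition in hand, combining (i) and (ii) yields $\prod_i \varphi_i(t) \to e^{-t^2/2}$ pointwise, and L\'evy's continuity theorem then delivers the claimed convergence in distribution.
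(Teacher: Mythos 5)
Your proposal is correct, but it is worth noting at the outset that the paper does not prove this statement at all: it is imported verbatim (with citation) from Greene's textbook and used as a black box in the proof of Theorem \ref{non_gaussian}. So your characteristic-function argument --- centering, factoring $\EE e^{itT_n}=\prod_i\varphi_i(t)$, the second-order expansion with remainder controlled by $\min(|x|^3/6,x^2)$, the telescoping bound for products of complex numbers of modulus at most one, and the $\epsilon$-truncation to handle $\sum_i|R_i(t)|$ --- is the standard textbook proof, and every step you give checks out (in particular $\sum_i(t^2\sigma_i^2/(2s_n^2))^2\le \tfrac{t^4}{4}\max_i\sigma_i^2/s_n^2$ correctly uses $\sum_i\sigma_i^2=s_n^2$).

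More importantly, the obstacle you identify in step (ii) is genuine and reflects a real defect in the statement as quoted, not in your proof. The Feller condition $\max_i\sigma_i^2/(n\overline\sigma_n^2)\to 0$ together with bounded $\overline\sigma_n^2$ is necessary but not sufficient for asymptotic normality: taking $\bX_i=\pm i$ with probability $1/(2i^2)$ each and $0$ otherwise gives $\sigma_i^2=1$, so both displayed hypotheses hold, yet by Borel--Cantelli only finitely many summands are nonzero almost surely and $\sqrt{n}\,\bar\bX/\overline\sigma_n\to 0$ in probability rather than to $N(0,1)$. The Lindeberg condition $s_n^{-2}\sum_i\EE[Y_i^2\mathbbm{1}\{|Y_i|>\epsilon s_n\}]\to 0$ must be assumed (or derived), and your Lyapunov patch --- a uniform $(2+\gamma)$-moment bound plus $\overline\sigma_n^2$ bounded away from zero, yielding $s_n^{-(2+\gamma)}\sum_i\EE|Y_i|^{2+\gamma}\lesssim n^{-\gamma/2}$ --- is a valid way to supply it. Your patch is also consistent with how the theorem is actually invoked in Theorem \ref{non_gaussian}: there the summands are $w_i\varepsilon_i$ with identically distributed sub-Gaussian $\varepsilon_i$ (all moments finite) and weights $w_i=(\tilde\Xb\hat\bmeta)_i$ satisfying $\|\tilde\Xb\hat\bmeta\|_\infty/\|\tilde\Xb\hat\bmeta\|=o_p(1)$, under which the Lindeberg condition does hold. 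In short: your proof is sound, and your diagnosis that the cited statement is incomplete as written is accurate; the paper's reliance on the citation silently inherits this imprecision, while its concrete application remains valid.
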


The proof of Theorem \ref{non_gaussian} starts here. We divide the proof into two cases in terms of the scale of $\|\hat\bSigma^{1/2}\hat\bmeta\|$. A sufficiently large $\|\hat\bSigma^{1/2}\hat\bmeta\|$ is required if one would like to use Lindeberg-Feller CLT to derive the limiting distribution of $\sqrt{n}(\hat\bbeta_d^{(j)} - \bbeta^{*(j)})$. Let $a_n = o(1)$ be any slowly converging to $0$ rate such that $\frac{1}{a_n}=o(\frac{n}{\log n})$. 
\begin{enumerate}
    \item 
    Suppose now that $$\|\hat\bSigma^{1/2}\hat\bmeta\| \leq C_1 \frac{\sqrt{\log n}/\sqrt{(\|\bbeta^* - \vb\|\sqrt{\log n}) \vee a_n}}{\sqrt{n}},$$ 
    for some constant $C_1$. Then by Lemma \ref{jm_lemma3.1} and Lemma \ref{bdd_u*Sigu*}, for some constant $C'$ we have
    \begin{align*}
        (|u_j^*| - \rho\lambda)\mathbbm{1}(|u_j^*| > \rho\lambda) \leq C' \frac{\sqrt{\log n}/\sqrt{(\|\bbeta^* - \vb\|\sqrt{\log n}) \vee a_n}}{\sqrt{n}}.
    \end{align*}
    Plug in $\ub^* = \frac{\bbeta^* - \vb}{\|\bbeta^* - \vb\|}$ to get
    \begin{align*}
        |\bbeta^*_j - \vb_j| \leq \|\bbeta^* -\vb\|C'\frac{\sqrt{\log n}/\sqrt{(\|\bbeta^* - \vb\|\sqrt{\log n})\vee a_n}}{\sqrt{n}} + \|\bbeta^* -\vb\|\rho\lambda.
    \end{align*}
    Given that $\|\bbeta^* -\vb\|\max\{\overline w(\cT_K(\vb)\cap \mathbb{S}^{p-1}), \sqrt{\log n}\} = o_p(1)$, we have 
    $$|\bbeta^*_j - \vb_j|=o_p(1/\sqrt{n}),$$ 
    so $\vb_j$ is more precise than what we need already.
    
    Then we show that the debiased estimator $\hat\bbeta_{d}^{(j)} \leftarrow \eb^{(j)\top}\vb + n^{-1}\hat\bmeta\T \tilde\Xb\T (\tilde\bY - \tilde\Xb\vb)$ is still $o_p(1/\sqrt{n})$ close to $\bbeta^*_j$ since the correction term
    \begin{align*}
        n^{-1}\hat\bmeta\T \tilde\Xb\T (\tilde\bY - \tilde\Xb\vb)=o_p(1/\sqrt{n}).
    \end{align*}
    We have
\begin{align*}
    n^{-1}\hat\bmeta\T \tilde\Xb\T (\tilde\bY - \tilde\Xb\vb) & \leq \hat\bmeta\T\frac{\tilde \Xb\T\tilde \Xb}{n}(\bbeta^*-\vb) + \hat\bmeta\T\frac{\tilde \Xb\T\bvarepsilon}{n}\\
    & \leq |\hat\bmeta\T \hat \bSigma \ub^*|\,\|\bbeta^* - \vb\| + 
    \frac{1}{\sqrt{n}}\|\hat\bSigma^{1/2}\hat\bmeta\| \big|\sum_{i \in n}\frac{(\tilde \Xb\hat\bmeta)_i}{\|\tilde \Xb\hat\bmeta\|} \varepsilon_i\big|,
\end{align*}
where $\ub^* = \frac{\bbeta^* - \vb}{\|\bbeta^* - \vb\|}$.

The first term, can be bounded as follows. The first line uses the first constraint in step \ref{step2_non_gaussian_noise}, and the second line uses Lemma \ref{jm_lemma3.1} and Lemma \ref{bdd_u*Sigu*}. Suppose the upper bound of $\sqrt{{\ub^*}\T \hat \bSigma \ub^*}$ is $C_3$ for a constant $C_3>0$.
\begin{align*}
    |\hat\bmeta\T \hat \bSigma \ub|\|\bbeta^* - \vb\| & \leq (\rho\lambda + |u^*_j|)\|\bbeta^* - \vb\| \\
    & \leq (\rho\lambda + C_3(\|\hat \bSigma^{1/2}\hat\bmeta\| + \rho\lambda))\|\bbeta^* - \vb\|,
\end{align*}
Since $\|\bbeta^* - \vb\|\overline w(\cT_K(\vb)\cap \mathbb{S}^{p-1})=o_p(1)$, we have $\lambda\|\bbeta^* - \vb\|=o_p(1/\sqrt{n})$. And by the condition of $\|\hat\bSigma^{1/2}\hat\bmeta\|$ we have $\|\hat \bSigma^{1/2}\bmeta\|\|\bbeta^* - \vb\| = o_p(1/\sqrt{n})$ as well. Thus the above quantity is $o_p(1/\sqrt{n})$. 

For the second term, by the condition $\frac{1}{a_n}=o(\frac{n}{\log n})$ we have $\|\hat\bSigma^{1/2}\hat\bmeta\|=o_p(1)$. Notice that $\big|\sum_{i \in n}\frac{(\tilde\Xb\hat\bmeta)_i}{\|\tilde\Xb\hat\bmeta\|} \varepsilon_i\big|=O_p(1)$ since it is sub-Gaussian condioned on $\tilde \Xb$. This is because $\epsilon_i$ is sub-Gaussian, it is independent of $\tilde \Xb$ and the coefficients satisfies $\sum_{i \in n}\big(\frac{(\tilde \Xb\hat\bmeta)_i}{\|\tilde \Xb\hat\bmeta\|}\big)^2 = 1$.

Hence we have established that 
$$\hat\bbeta_d^{(j)}-\bbeta^*_j = o_p(1/\sqrt{n}),$$ so any confidence interval centering at $\hat\bbeta_d^{(j)}$ with length $O(1/\sqrt{n})$ will contain $\bbeta^*_j$. Even though such a confidence interval might not be very efficient since $\hat\bbeta_d^{(j)}$ converges faster than the rate $1/\sqrt{n}$.

To make sure the confidence interval is of the length $O(1/\sqrt{n})$, one can pick some small constant $c > C'(\log n)^{1/2}/\sqrt{(\|\bbeta^* - \vb\|\sqrt{\log n}) \vee a_n}$ and make the confidence intervals as \eqref{ci_non_gaussian_noise}.
    
    \item
    Suppose now that $$\|\hat \bSigma^{1/2}\bmeta\| \geq C_1 \frac{\sqrt{\log n}/\sqrt{(\|\bbeta^* - \vb\|\sqrt{\log n}) \vee a_n}}{\sqrt{n}}.$$ 
    In that case it follows 
    $$\|\tilde\Xb\hat\bmeta\|_{\infty}/(\sqrt{n}\|\hat \bSigma^{1/2}\hat\bmeta\|) \lesssim (\|\bbeta^* - \vb\|\sqrt{\log n}) \vee a_n = o_p(1),$$ 
    so we can apply the Lindeberg-Feller CLT (Theorem \ref{lindeberg_feller_clt}). 
    Let $Z_j = \frac{1}{\sqrt{n}}\hat\bmeta\T \tilde\Xb\T\bvarepsilon$, we have
\begin{align*}
    \sqrt{n}(\hat\bbeta_d^{(j)} - \bbeta^{*(j)}) = Z_j + \Delta_j,\quad
    \Delta_j = \sqrt{n}(\hat \bmeta\T \hat \bSigma - \eb^{(j)\top})(\bbeta^* - \vb).
\end{align*}
$\Delta_j$ converges to zero with probability converging to one since $\|\bbeta^* - \vb\|\overline w(\cT_K(\vb)\cap \mathbb{S}^{p-1})=o_p(1)$. And $Z_j$ is Gaussian conditional on $\overline \Xb, \overline \bY,\tilde \Xb$ by the Lindeberg-Feller CLT (Theorem \ref{lindeberg_feller_clt})
\begin{align*}
     \frac{Z_j}{\sigma\|\hat \bSigma^{1/2} \hat\bmeta\|} \xrightarrow{\text{d}}N(0,1).
\end{align*}
Thus the confidence interval \eqref{ci_non_gaussian_noise} also applies in this case.
\end{enumerate}
\end{proof}

\section{Time Series}

In this section we provide time series examples of positive monotone cone regression.

\subsection{Real Data Example}

In this example we look at temperatures from different cities in the month of January. The dataset used is freely available online on Kaggle (it is entitled ``Daily Temperature of Major Cities'') or alternatively provide a link \href{https://www.kaggle.com/sudalairajkumar/daily-temperature-of-major-cities}{here}. The dataset was first cleaned of missing values by dropping observations. Next we selected $5$ large US cities for the experiment. These are Los Angeles, Chicago, Philadelphia, Seattle and Las Vegas. For each city we observe $26$ years of temperature data --- one daily average temperature for each day of the year from 1995 to 2020. We look at the regression of the average daily temperature on January 5 $\sim$ the average daily temperatures from January 1 through January 4 in each year. This results in a $\bY$ value of size $26 \times 1$ and an $\Xb$ predictor matrix of dimension $26 \times 4$. The consecutive days in January were selected to avoid seasonality effects, and we considered only $5$ days in order to be able to also fit an unrestricted linear model. We believe that in this example it is reasonable to assume that the vector $\bbeta$ belongs to a positive monotone cone, since we expect the average daily temperatures in closer dates to January 5 to predict the average daily temperature on January 5 better, and in addition intuitively one would not expect the coefficients in such a regression to admit negative values. Below we show two tables. The first table, Table \ref{our:algo:table} contains the p-values (of two sided tests against $0$) generated by our algorithm for each of the cities and dates. The second table, Table \ref{lm:pvals:table} contains p-values from running an unrestricted linear model. The reported p-values are raw --- there has not been any multiple testing adjustment. We can see that both approaches provide p-values which are largely in agreement. Our approach may appear slightly more conservative, which may be explained by the fact that we are splitting the data.

\begin{table}[ht]
\centering
\begin{tabular}{rrrrr}
  \hline
 & Jan 1 & Jan 2 & Jan 3 & Jan 4 \\ 
  \hline
Los Angeles & 0.72 & 0.95 & 0.93 & 0.00 \\ 
  Chicago & 0.30 & 0.86 & 0.75 & 0.00 \\ 
  Philadelphia & 0.78 & 0.12 & 0.82 & 0.07 \\ 
  Seattle & 0.71 & 0.96 & 0.42 & 0.01 \\ 
  Las Vegas & 0.76 & 0.93 & 0.21 & 0.00 \\ 
   \hline
\end{tabular}
  \caption{p-values of the positive monotone cone regression}
  \label{our:algo:table}
 \end{table}

\begin{table}[ht]
\centering
\begin{tabular}{rrrrr}
  \hline
 & Jan 1 & Jan 2 & Jan 3 & Jan 4 \\ 
  \hline
Los Angeles & 0.71 & 0.37 & 0.07 & 0.00 \\ 
  Chicago & 0.37 & 0.18 & 0.82 & 0.00 \\ 
  Philadelphia & 0.53 & 0.04 & 0.94 & 0.06 \\ 
  Seattle & 0.80 & 0.68 & 0.16 & 0.00 \\ 
  Las Vegas & 0.99 & 0.51 & 0.09 & 0.00 \\ 
   \hline
\end{tabular}
  \caption{p-values of the unrestricted linear model}
\label{lm:pvals:table}
\end{table}
\subsection{Sufficient Conditions}

In this section we show some sufficient conditions under which a covariance matrix of stationary time series with monotone $\bbeta^*$ coefficients has bounded spectrum.

Suppose we observe
\begin{align*}
    Y = \sum_{i = 1}^p\beta_i^* X_i + \varepsilon,
\end{align*}
where $\varepsilon \sim N(0, \sigma^2)$, and each $X_i$ has the same distribution as $Y$ (since the series is stationary). Here, in order to match the notation from the main text, the closer the index $i$ is to $p$ the more recent an observation from the series is. According to \cite{lutkepohl2005new} the autocovariance of such a time series is 
\begin{align*}
    \bSigma = \EE \bX\bX\T = \sum_{i = 0}^{\infty} \sigma^2 \Ab^i \eb_1 \eb_1\T \Ab^{i^\top},
\end{align*}
where $\Ab = \begin{pmatrix}
     \beta_p^* &  \beta_{p-1}^* & \ldots & \beta^*_{2} & \beta^*_1 \\
     1 & 0 & \ldots & 0 & 0\\
     0 & 1 & \ddots & 0 & 0\\
     0 & 0 &\ldots & 1 & 0
\end{pmatrix}$, and $\eb_i = (0,\ldots0,\underbrace{1}_{i},0\ldots,0)\T$. The process is stable (which implies that it is stationary \citep{lutkepohl2005new}) if the equation $1 - \sum a_i \lambda^i$ has solution only outside of the unit disk. It is simple to see that this is implied when $\sum_i |\beta^*_i| \leq 1$. Since $\bSigma = \sigma^2 \eb_1 \eb_1\T + \sigma^2 \Ab \bSigma \Ab\T$ it is easy to verify that $\bSigma$ satisfies the identities
\begin{align*}
    \eb_i\T \bSigma \eb_i = \sigma^2 + \sigma^2 \bbeta_{\downarrow}^{*\top} \bSigma \bbeta_{\downarrow}^{*}
\end{align*}
and for $j > i$
\begin{align*}
    \eb_i\T \bSigma \eb_j = \sigma^2 \bbeta_{\downarrow}^* \bSigma \eb_{j-i} = \sigma^2 \bbeta_{\downarrow}^{*\top} \Ab^{j-i}\bSigma \bbeta_{\downarrow}^*,
\end{align*}
where $\bbeta^*_{\downarrow} = (\beta_p^*, \beta_{p-1}^*,\ldots,\beta_1^*)$.

For simplicity we now restrict to the case when all $\beta^*_i \geq 0$ and WLOG we assume $\sigma^2 = 1$. By Gershgorin's disk theorem we have that 
\begin{align*}
    \lambda_{\max}(\bSigma) \leq 1 + \bbeta_{\downarrow}^{*\top} \bSigma \bbeta_{\downarrow}^{*} + \sum_{j \neq i} \bbeta_{\downarrow}^{*\top} \Ab^{|j-i|}\bSigma \bbeta_{\downarrow}^* \leq 1 + 2\bbeta_{\downarrow}^{*\top} (\mathbb{I} - \Ab)^{-1}\bSigma \bbeta_{\downarrow}^*,
\end{align*}
where the last inequality follows since all coefficients in the expression $\bbeta_{\downarrow}^{*\top}  + \sum_{j \neq i} \bbeta_{\downarrow}^{*\top} \Ab^{|j-i|}\leq 2 \sum_{j = 0}^{\infty}\bbeta_{\downarrow}^{*\top}\Ab^{j}$ are positive (and we are assuming the matrix $\mathbb{I} - \Ab$ is invertible).

It is simple to see that
\begin{align*}(\mathbb{I} -\Ab)^{-1} = \begin{pmatrix}
         \frac{1}{1 - \sum_i \beta_i^*} &  \frac{1-\beta^*_p}{1 - \sum_i \beta_i^*} - 1 & \frac{1-\beta^*_p -\beta^*_{p-1}}{1 - \sum_i \beta_i^*} - 1& \ldots & \frac{1-\sum_{i=2}^{p}\beta_i^*}{1 - \sum_i \beta_i^*} - 1\\
         \frac{1}{1 - \sum_i \beta_i^*} & \frac{1-\beta^*_p}{1 - \sum_i \beta_i^*} & \frac{1-\beta^*_p -\beta^*_{p-1}}{1 - \sum_i \beta_i^*} - 1 & \ldots & \frac{1-\sum_{i=2}^{p}\beta_i^*}{1 - \sum_i \beta_i^*} - 1\\
         \frac{1}{1 - \sum_i \beta_i^*} & \frac{1-\beta^*_p}{1 - \sum_i \beta_i^*} & \frac{1-\beta^*_p -\beta^*_{p-1}}{1 - \sum_i \beta_i^*}  & \ldots & \frac{1-\sum_{i=2}^{p}\beta_i^*}{1 - \sum_i \beta_i^*} - 1\\
         \vdots & \ddots & \ddots & \ldots & \vdots\\
         \frac{1}{1 - \sum_i \beta_i^*} & \frac{1-\beta^*_p }{1 - \sum_i \beta_i^*} & \frac{1-\beta^*_p -\beta^*_{p-1}}{1 - \sum_i \beta_i^*}  & \ldots & \frac{1-\sum_{i=2}^{p}\beta_i^*}{1 - \sum_i \beta_i^*}\\
         \end{pmatrix}
\end{align*}
         
It is easy to verify that $\|\bbeta_{\downarrow}^{*\top} (\mathbb{I} -\Ab)^{-1}\|^2 = \frac{\sum_{i = 1}^p (s - \sum_{j = p - i + 2}^{p} \beta^*_j)^2}{(1-s)^2}$, where $s = \sum_i \beta_i^* < 1$. Assume now that $\bbeta_{\downarrow}^*$ is such that $\|\bbeta_{\downarrow}^{*\top} (\mathbb{I} -\Ab)^{-1}\|\|\bbeta^*_{\downarrow}\| \leq \epsilon$ where $\epsilon < \frac{1}{4}$. Then
\begin{align*}
    \lambda_{\max}(\bSigma) \leq 1/(1 - 2\epsilon).
\end{align*}
In addition, once again by Gershgorin's circle theorem we have
\begin{align*}
    \lambda_{\min}(\bSigma) & \geq 1 + \bbeta_{\downarrow}^{*\top} \bSigma \bbeta_{\downarrow}^{*} - 2\bbeta_{\downarrow}^{*\top} (\mathbb{I} - \Ab)^{-1}\bSigma \bbeta_{\downarrow}^* \geq 1 - 2\|\bbeta_{\downarrow}^{*\top} (\mathbb{I} -\Ab)^{-1}\|\|\bbeta^*_{\downarrow}\|\lambda_{\max}(\bSigma)\\
    & \geq 1 - \frac{2\epsilon}{1 - 2 \epsilon}.
\end{align*}

We give two simple examples where the above conditions can be met. The first one is when $\|\bbeta^*_{\downarrow}\|_2 \leq \frac{1}{C\sqrt{p}}$ for some large $C$. In that case we have $s^2 \leq \frac{1}{C^2}$. Also,  $\|\bbeta_{\downarrow}^{*\top} (\mathbb{I} -\Ab)^{-1}\|^2 = \frac{\sum_{i = 1}^p (s - \sum_{j = p - i + 2}^{p} \beta^*_j)^2}{(1-s)^2} \leq \frac{p s^2}{(1-s)^2}$, so that $\|\bbeta_{\downarrow}^{*\top} (\mathbb{I} -\Ab)^{-1}\|\|\bbeta_{\downarrow}\| \leq \frac{s}{C(1-s)} < 1/(C(C-1))$. 

Next suppose that only the first few entries (say $\ell$) of $\bbeta_{\downarrow}^*$ are non-zero and $s < c/\sqrt[4]{\ell}$ for some small $c$. Then  $\|\bbeta_{\downarrow}^{*\top} (\mathbb{I} -\Ab)^{-1}\|^2 = \frac{\sum_{i = 1}^p (s - \sum_{j = p-i+2}^{p} \beta^*_j)^2}{(1-s)^2} \leq \ell s^2/(1-s)^2$, while $\|\bbeta_{\downarrow}^*\|^2 \leq s^2$. Thus $\|\bbeta_{\downarrow}^{*\top} (\mathbb{I} -\Ab)^{-1}\|\|\bbeta_{\downarrow}^*\| \leq \sqrt{\ell} \frac{s^2}{1 - s} \leq c/(1 - c/\sqrt[4]{\ell})$.

The above are only crude sufficient conditions, which are by no means necessary. In fact there are many more examples which we have confirmed have bounded spectrum via numerical verification. 
\end{changemargin}

\end{document}